%% LyX 2.0.3 created this file.  For more info, see http://www.lyx.org/.
%% Do not edit unless you really know what you are doing.
\documentclass[12pt,a4paper,english,intoc,bibliography=totoc,index=totoc,BCOR10mm,captions=tableheading,titlepage,fleqn]{scrbook}
\usepackage{lmodern}

\usepackage[T1]{fontenc}
\usepackage[latin9]{inputenc}
\usepackage{fancyhdr}
\pagestyle{fancy}
\setcounter{secnumdepth}{3}
\setlength{\parskip}{\medskipamount}
\setlength{\parindent}{0pt}
\usepackage{babel}
\usepackage{array}
%JOSEPTHESIS. Afegit package url
\usepackage{url}
\usepackage{float}
\usepackage{units}
\usepackage{amsthm}
\usepackage{amsmath}
\usepackage{amssymb}
\usepackage{graphicx}
\usepackage{esint}
\usepackage[unicode=true,pdfusetitle,
 bookmarks=true,bookmarksnumbered=true,bookmarksopen=true,bookmarksopenlevel=1,
 breaklinks=false,pdfborder={0 0 0},backref=false,colorlinks=false]
 {hyperref}
\hypersetup{
 pdfpagelayout=OneColumn, pdfnewwindow=true, pdfstartview=XYZ, plainpages=false}

\makeatletter

%%%%%%%%%%%%%%%%%%%%%%%%%%%%%% LyX specific LaTeX commands.
\pdfpageheight\paperheight
\pdfpagewidth\paperwidth

\newcommand{\lyxmathsym}[1]{\ifmmode\begingroup\def\b@ld{bold}
  \text{\ifx\math@version\b@ld\bfseries\fi#1}\endgroup\else#1\fi}

%% Because html converters don't know tabularnewline
\providecommand{\tabularnewline}{\\}
\floatstyle{ruled}
\newfloat{algorithm}{tbp}{loa}[chapter]
\providecommand{\algorithmname}{Algorithm}
\floatname{algorithm}{\protect\algorithmname}

%%%%%%%%%%%%%%%%%%%%%%%%%%%%%% Textclass specific LaTeX commands.
  \theoremstyle{definition}
  \newtheorem{defn}{\protect\definitionname}
 \theoremstyle{definition}
  \newtheorem{example}{\protect\examplename}
  \theoremstyle{plain}
  \newtheorem{lem}{\protect\lemmaname}
  \theoremstyle{plain}
  \newtheorem{cor}{\protect\corollaryname}
\theoremstyle{plain}
\newtheorem{thm}{\protect\theoremname}
  \theoremstyle{plain}
  \newtheorem{prop}{\protect\propositionname}

\@ifundefined{date}{}{\date{}}
%%%%%%%%%%%%%%%%%%%%%%%%%%%%%% User specified LaTeX commands.
\newcommand{\N}{Y}

\newfloat{protocol}{tbp}{loa}
\providecommand{\protocolname}{Protocol}
\floatname{protocol}{\protect\protocolname}

% increases link area for cross-references and autoname them
%\AtBeginDocument{\renewcommand{\ref}[1]{\mbox{\autoref{#1}}}}
%\addto\extrasenglish{%
% \renewcommand*{\equationautorefname}[1]{}
% \renewcommand{\sectionautorefname}{sec.\negthinspace}
% \renewcommand{\subsectionautorefname}{sec.\negthinspace}
% \renewcommand{\subsubsectionautorefname}{sec.\negthinspace}
% \renewcommand{\figureautorefname}{Fig.\negthinspace}
% \renewcommand{\tableautorefname}{Tab.\negthinspace}
%}

% in case somebody want to have the label "Equation"
%\renewcommand{\eqref}[1]{Equation~(\negthinspace\autoref{#1})}

% that links to image floats jumps to the beginning
% of the float and not to its caption
\usepackage[figure]{hypcap}

% the pages of the TOC is numbered roman
% and a pdf-bookmark for the TOC is added
\let\myTOC\tableofcontents
\renewcommand\tableofcontents{%
  \frontmatter
  \pdfbookmark[1]{\contentsname}{}
  \myTOC
  \mainmatter }

% makes caption labels bold
\setkomafont{captionlabel}{\bfseries}
\setcapindent{1em}

% enables calculations
\usepackage{calc}

% fancy page header/footer settings

% increases the bottom float placement fraction

% avoids that floats are placed above its sections
\let\mySection\section\renewcommand{\section}{\suppressfloats[t]\mySection}

\makeatother

  \providecommand{\definitionname}{Definition}
  \providecommand{\examplename}{Example}
  \providecommand{\lemmaname}{Lemma}
  \providecommand{\propositionname}{Proposition}
\providecommand{\corollaryname}{Corollary}
\providecommand{\theoremname}{Theorem}

\begin{document}

%\subject{Dissertation submitted to the Department of Computer Engineering
%and Mathematics in partial fulfillment of the requirements of the
%degree of Doctor of Philosophy in Computer Science}
%
%
%\title{Improving data utility in differential privacy and k-anonymity}
%
%
%\author{Jordi Soria-Comas}
%
%
%\publishers{\includegraphics[width=5cm]{images/logo_urv_big}\vspace{\baselineskip}\\
%Universitat Rovira i Virgili\\
%Department of Computer Engineering and Mathematics\\
%\vspace{-3cm}
%}
%
%
%\lowertitleback{\textbf{Thesis Advisor}\smallskip{}
%\\
%Prof. Dr. Josep Domingo-Ferrer\bigskip{}
%}
\begin{titlepage}
\begin{center}
\vspace*{2cm}
\noindent {\large \textbf{Jordi Soria Comas}} \\
\vspace*{1cm}
\noindent {\LARGE \textbf{Improving data utility}} \\
\noindent {\LARGE \textbf{in differential privacy}}\\
\noindent {\LARGE \textbf{and k-anonymity}} \\
\vspace*{1cm}
\noindent \LARGE \textbf{DOCTORAL THESIS} \\
\vspace*{1cm}
\noindent \Large \textbf{\textsc{Supervised by Dr. Josep Domingo-Ferrer}}\\
\vspace*{1cm}
\noindent \LARGE \textbf{Department of\\}
\noindent \LARGE \textbf{Computer Engineering and Mathematics\\}
\vspace*{3cm}
\includegraphics[width=5cm]{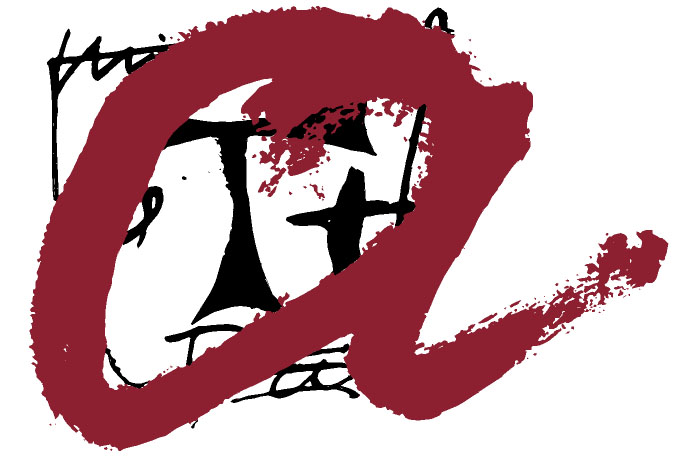}\\
\vspace*{1cm}
\noindent \Large {Tarragona} \\
\noindent \Large {2013} \\
\vspace*{0.8cm}
\end{center}

\newpage
\thispagestyle{empty}
~

\newpage
\thispagestyle{empty}
\begin{flushleft}
\includegraphics[width=8cm]{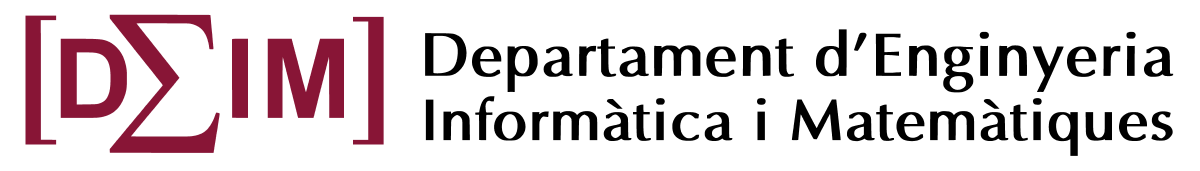}\\
\textsf{Av. Països Catalans, 26}\\
\textsf{Campus Sescelades}\\
\textsf{43007 Tarragona}\\
\textsf{Tel. (+34) 977 559 703}\\
\textsf{Fax (+34) 977 559 710}
\end{flushleft}
\vspace*{3cm}
I STATE that the present study, entitled ``Improving data utility in 
differential privacy and k-anonymity'', presented by Jordi Soria Comas
for the award of the degree of Doctor, has been carried out under my 
supervision at the Department of Computer Engineering and Mathematics
of this university, and that it fulfils all the requirements to be 
eligible for the European Doctorate Award.\\
\\
\begin{flushleft}
\vspace*{3cm}
Tarragona, 24 Apr 2013\\
\vspace*{1cm}
Doctoral Thesis Supervisor\\
\vspace*{3cm}
Dr. Josep Domingo-Ferrer
\end{flushleft}

\end{titlepage}
\sloppy

%\maketitle
%\pagebreak{}\thispagestyle{empty}
%
%\noindent \begin{center}
%\begin{minipage}[c][1\totalheight][t]{0.5\columnwidth}%
%\vspace{4cm}
%I certify that I have read this dissertation and that in my opinion
%it is fully adequate, in scope and quality, as a dissertation for
%the degree of Doctor of Philosophy in Computer Science.
%
%\vspace{2cm}
%
%
%\rule[0.5ex]{1\columnwidth}{1pt}
%
%\noindent \begin{center}
%Prof. Dr. Josep Domingo-Ferrer
%\par\end{center}
%
%\vspace{5cm}
%
%
%Approved by the University Committee on Graduate Studies:
%
%\vspace{2cm}
%
%
%\rule[0.5ex]{1\columnwidth}{1pt}%
%\end{minipage}
%\par\end{center}

\cleardoublepage{}

\lhead{\rightmark}

\rhead[\leftmark]{}

\lfoot[\thepage]{}

\cfoot{}

\rfoot[]{\thepage}

\tableofcontents{}

\cleardoublepage{}

\pagestyle{plain}

\chapter*{Abstract}

%JOSEPTHESIS. Canviat a tots els abstracts "privacy criterion" -> 
%"privacy model"

\addcontentsline{toc}{chapter}{Abstract} 

%JOSEPTHESIS. Corregit tot el resum.
Data about individuals are collected on a regular basis by governments
and companies for a variety of purposes. These data stores are valuable
resources, and there is a growing demand to access them. However,
the dissemination of data about individuals is a 
%JOSEPTHESIS. contrived -> controversial
controversial task. On 
the one side, there is a demand to access accurate data; on the other side,
there is a risk of disclosing confidential information about specific
individuals. Protecting individuals' privacy usually entails some
degree of data modification, which decreases the utility of
the output. Finding a good balance between privacy and utility is
of the utmost importance in data dissemination.

The suitability of an anonymization method depends on several aspects
of the data release: the type of
release (\emph{e.g.} microdata
file, statistical table, on-line database), the specificities of the data
(\emph{e.g.} numerical, nominal, ordinal), and the desired level of disclosure
limitation. Regarding the level of disclosure limitation, 
the privacy guarantees offered by anonymization
methods have evolved with time. 
The initial approach by the statistical community focused
on masking confidential data (using either a perturbative or a non-perturbative
masking), but no formal privacy guarantees were offered. Later, the
computer science community developed several privacy models that
offer more abstract privacy guarantees; for instance, by hiding each
individual within groups of indistinguishable individuals, or by limiting
the information that may be gained from accessing the released data.

We take the approach of the computer science community. The focus
lies on two mainstream privacy models: 
$k$-anonymity and $\varepsilon$-differential
privacy. Once a privacy model has been selected, the goal is 
to enforce it while preserving as much data utility as possible.
The main objective of this thesis is to improve the
data utility in $k$-anonymous and $\varepsilon$-differentially private
data releases.

$k$-Anonymity is a widely accepted privacy model for the anonymization
of microdata sets; however, it has several drawbacks. On the disclosure
limitation side, there is a lack of protection against attribute disclosure
and against informed intruders. On the data utility side, dealing
with a large number of 
%JOSEPTHESIS. quasi-identifiers -> quasi-identifier attributes
quasi-identifier attributes is problematic.
The first contribution of this
thesis is a relaxation of $k$-anonymity that improves 
protection against informed intruders, as well as data utility
in case of multiple quasi-identifier attributes.

Differential privacy limits disclosure risk through noise addition.
The Laplace distribution is commonly used for the random noise. We
show that the Laplace distribution is not optimal: the same disclosure
limitation guarantee can be attained by adding less noise. 
In this thesis, optimal
univariate and multivariate noises are characterized and constructed. 

Differential privacy seeks to limit the contribution of any single
individual on the response to a query. However, the expected response
usually depends on the user's prior knowledge. Common mechanisms
to attain differential privacy do not take into account 
the users' prior
knowledge; they implicitly assume zero initial knowledge about the
query response. As a consequence, the response provided 
%JOSEPTHESIS. Frases canviades a sota
may not
be very accurate for users with substantial initial knowledge.
We propose a mechanism that focuses on limiting
the knowledge gain over the prior knowledge.

$k$-Anonymity and $\varepsilon$-differential privacy are often seen
as opposed privacy notions. Supporters of $\varepsilon$-differential
privacy present $k$-anonymity as an old-fashioned privacy model 
that offers only poor disclosure limitation guarantees, while supporters
of $k$-anonymity claim that the damage done to the original data
when enforcing $\varepsilon$-differential privacy is too large.
The last contribution of this thesis shows that
microaggregation-based $k$-anonymity and $\varepsilon$-differential
privacy can be combined to produce microdata releases with the strong
privacy guarantees of $\varepsilon$-differential privacy and improved
data accuracy.

\chapter*{Resum}

\addcontentsline{toc}{chapter}{Resum} 

%JOSEPTHESIS. Corregit tot el resum.
Els governs i les corporacions recullen de manera habitual dades
sobre individus per a una varietat de propòsits. Aquestes dades són
un recurs valuós i hi ha una demanda creixent per accedir-hi. 
Amb tot,
la disseminació de dades sobre individus és una tasca controvertida.
D'una banda hi ha una demanda d'accés a dades acurades; de l'altra,
cal tenir present el risc de revelar informació confidencial sobre
algun dels individus. La protecció de la privadesa dels individus
implica una modificació de les dades abans de llur publicació, 
cosa que en redueix la utilitat. 
És fonamental trobar un equilibri adequat entre
privadesa i utilitat.

La conveniència d'un mètode d'anonimització depèn de diversos aspectes:
el tipus de publicació (microdades, taules, base de dades interactives), 
les especificitats pròpies de les dades (numèriques,
nominals, ordinals, etc.) i el nivell de protecció desitjat. Pel que
fa al nivell de protecció, les garanties que ofereixen els
mètodes d'anonimització han anat evolucionat. Inicialment, el procediment
proposat per la comunitat estadística buscava emmascarar les dades
confidencials (mitjançant tècniques pertorbatives o no pertorbatives),
però sense oferir garanties formals de privadesa. Més tard, la comunitat
informàtica va desenvolupar diversos models
que ofereixen garanties de privadesa més abstractes; per exemple,
amagar els individus dins de grups d'individus indistingibles, 
o limitar la contribució
que cada individu pot tenir en la resposta a una consulta.

Aquesta tesi adopta el punt de vista de la comunitat informàtica. 
Ens centrem en dos models de privadesa àmpliament
acceptats: el $k$-anonimat i la privadesa $\varepsilon$-diferencial. Un
cop triat el model de privadesa, l'objectiu passa
a ser complir-ne els requisits, alhora que preservar
la màxima utilitat possible en les dades resultants. L'objectiu
principal d'aquesta tesi és la millora de la utilitat en la publicació
de dades $k$-anònimes i $\varepsilon$-diferencialment privades.

El $k$-anonimat és un model de privadesa per a fitxers de microdades
àmpliament acceptat. No obstant, presenta alguns problemes. Pel que
fa al risc de revelació, no protegeix contra la revelació d'atributs
ni contra intrusos informats. Pel que fa a la utilitat de les dades,
tractar amb fitxers amb un nombre elevat d'atributs quasi-identificadors pot
ser problemàtic. Proposem un nou model basat en la relaxació dels
estrictes requeriments d'indistingibilitat que estableix el $k$-anonimat
però que, alhora, manté la mateixa probabilitat de re-identificació.
Aquest nou model permet de millorar la protecció contra intrusos informats,
alhora que millora la utilitat de les dades en presència de múltiples
atributs quasi-identificadors.

La privadesa diferencial limita el risc de revelació afegint un soroll
aleatori al resultat de les consultes. Habitualment,
es fa servir la distribució de Laplace per al soroll aleatori. 
A la tesi, mostrem que aquesta distribució
no és òptima: es poden complir els requeriments de la privadesa
$\varepsilon$-diferencial afegint sorolls més petits. 
A més, caracteritzem i construïm
les distribucions òptimes (univariant i multivariant).

La privadesa diferencial busca limitar l'efecte que cada individu
té sobre la resposta a una consulta. La resposta que un usuari espera
depèn del coneixement previ que té de la base de dades.
Malgrat això, els mecanismes habituals per obtenir privadesa diferencial
no tenen en compte el possible coneixement previ dels usuaris; 
implícitament, se'ls suposa
un coneixement nul. Per a un usuari amb un coneixement previ elevat,
la resposta obtinguda pot ser poc precisa. Proposem un
%JOSEPTHESIS. Frase canviada.
mecanisme basat a limitar el guany
de coneixement de l'usuari respecte del seu coneixement inicial.

%JOSEPTHESIS. Canviat
El $k$-anonimat i la privadesa $\varepsilon$-diferencial es presenten
sovint
com a models contraposats. D'una banda, els partidaris de
la privadesa $\varepsilon$-diferencial presenten el $k$-anonimat
com un model ja superat que ofereix unes garanties de privadesa
pobres; d'altra banda, els qui recolzen el $k$-anonimat argumenten
que la privadesa diferencial provoca danys massa importants a les
dades. La darrera contribució d'aquesta tesi mostra que 
la privadesa $\varepsilon$-diferencial i el
$k$-anonimat no són conceptes completament inconnexos: si es pren
com a punt de partida per obtenir privadesa $\varepsilon$-diferencial
un conjunt de dades $k$-anònim (obtingut mitjançant un cert 
tipus de microagregació),
la quantitat de soroll necessari 
es veu reduïda significativament.

\chapter*{Resumen}

\addcontentsline{toc}{chapter}{Resumen} 

Los gobiernos y las corporaciones recogen regularmente datos sobre
individuos para gran variedad de propósitos. Estos almacenes de datos son
unos recursos valiosos, cosa que provoca una creciente demanda de
acceso a los datos. Sin embargo, la diseminación de datos sobre individuos
es una tarea controvertida. Por un lado, hay una demanda de acceso
a datos precisos; por otro lado, existe el riesgo de revelar información
confidencial sobre algún individuo específico. La protección de la
privacidad de los individuos acarrea normalmente la modificación
de los datos originales, reduciéndose así la utilidad de los datos
publicados. Es primordial
encontrar un equilibrio adecuado entre privacidad y utilidad.

La conveniencia de un método de anonimización depende de varios aspectos:
el tipo de publicación (microdatos, datos agregados, bases de datos
interactivas), las especificidades propias de los datos (numéricos, nominales,
ordinales) y el nivel de protección deseado. En relación al nivel
de protección, ha habido una evolución en las garantías que
ofrecen los métodos de anonimización. Inicialmente, el procedimiento
propuesto por la comunidad estadística se centraba en enmascarar los
datos confidenciales (mediante técnicas perturbativas o no perturbativas),
pero sin ofrecer garantías formales de privacidad. Más tarde, la comunidad
informática desarrolló varios modelos que
ofrecen unas garantías de privacidad más abstractas; por ejemplo,
esconder a los individuos en grupos formados por varios individuos
indistinguibles,
o limitar el incremento de información que proporcionan los datos
publicados.

Adoptamos aquí el proceder de la comunidad informática y 
nos ocupamos de 
dos de los principales modelos de privacidad: $k$-anonimato y 
privacidad $\varepsilon$-diferencial. Una vez seleccionado 
un modelo de privacidad,
el objetivo pasa a ser cumplir con sus requisitos, a la vez que se trata
de preservar
la máxima utilidad posible para los datos. El objetivo principal de
la presente tesis es la mejora de la utilidad de los datos en publicaciones
$k$-anónimas y $\varepsilon$-diferencialmente privadas.

El $k$-anonimato es un modelo de privacidad para ficheros de microdatos
ampliamente aceptado; sin embargo, presenta algunos problemas. En
relación a la limitación del riesgo de revelación, no protege contra
la revelación de atributos, ni contra intrusos informados. En relación
a la utilidad de los datos, tratar con ficheros que tienen un número
elevado de atributos 
cuasi-identificadores es problemático. En esta tesis proponemos
un nuevo modelo basado en la relajación del requisito de indistinguibilidad
que establece el $k$-anonimato pero que mantiene la misma probabilidad
de re-identificación. Este nuevo modelo nos permite aumentar la
protección contra intrusos informados, a la vez que mejora la utilidad
de los datos en presencia de múltiples
atributos cuasi-identificadores.

La privacidad diferencial limita el riesgo de revelación añadiendo
un ruido aleatorio al resultado de las consultas. 
Habitualmente se utiliza la distribución de Laplace
para generar dicho ruido. 
En esta tesis mostramos
que la distribución de Laplace no es óptima para obtener privacidad
diferencial: los requisitos de la privacidad diferencial se pueden
cumplir introduciendo menos ruido. Asimismo, caracterizamos y 
construimos
las distribuciones óptimas (univariante y multivariante).

La privacidad diferencial busca limitar el efecto que cada individuo
tiene en la respuesta a una consulta. La respuesta que los usuarios
esperan depende del conocimiento previo que tienen. Sin embargo, lo
mecanismos usuales para obtener privacidad diferencial no tienen en
cuenta este conocimiento previo; implícitamente,
se supone un conocimiento
nulo. Como consecuencia, la respuesta puede ser poco precisa
cuando el usuario tiene un conocimiento previo elevado sobre ella. 
Proponemos
un mecanismo para obtener privacidad diferencial orientado a 
%JOSEPTHESIS. Canviada frase.
limitar la ganancia de conocimiento del usuario con respecto
a su conocimiento previo.

El $k$-anonimato y la privacidad $\varepsilon$-diferencial son
a menudo presentados como nociones de privacidad contrapuestas.
Por un lado, quienes apoyan la privacidad $\varepsilon$-diferencial
presentan el $k$-anonimato como un modelo de privacidad obsoleto que
ofrece unas garantías pobres; por otro lado, quienes apoyan el $k$-anonimato
argumentan que la privacidad diferencial daña demasiado los datos.
En la última contribución de esta tesis, mostramos que 
la privacidad $\varepsilon$-diferencial y el $k$-anonimato
no son nociones completamente inconexas: tomando como datos de partida
para obtener $\varepsilon$-privacidad diferencial un conjunto de
datos $k$-anónimo (construido mediante un cierto tipo de microagregación)
se reduce la cantidad de ruido necesaria y se mejora la utilidad de la
información.

\cleardoublepage{}

\pagestyle{fancy}

\lhead[\chaptername~\thechapter]{\rightmark}

\lhead[\chaptername~\thechapter]{\rightmark}

\rhead[\leftmark]{}

\lfoot[\thepage]{}

\cfoot{}

\rfoot[]{\thepage}

\chapter{Introduction}

\section{Motivation}

The collection of personal information has traditionally been limited
to surveys (where information about a specific topic is collected
from a sample population) and client-provider relationships (where
transactions carried out are recorded). One remarkable characteristic
of such situations is that the individual whose information is collected
is aware of it. Nowadays, the advances in information technologies
have dramatically changed the state of things. Information gathering
has become pervasive: vast amounts of data are collected by governments
and corporations on a daily basis, most of the times without the consent
of individuals who may even be unaware of it. For instance, Internet
stores gather data from everything that happens in their sites~\cite{eco_clicking,eco_data,KrishB2010};
not only do they keep track of the items you buy, but also of the
ones you browse but do not buy. Their objective is the generation
of a detailed profile of each individual; they can exploit this information
to guide personalized commercial communication actions, but also to
guide the strategic planning of the firm. Internet firms have long
recognized that the information they collect from customer interaction
offers them a competitive advantage over traditional firms. 

As a valuable resource, there is a growing demand to access the collected
data. For instance, many firms base their marketing and strategic
plans on publicly released census data~\cite{census}. However, when
data about individuals or entities are to be disseminated for secondary
use, special care must be taken to avoid privacy violations. Some
popular attacks against publicly released data include: the uncovering
of the medical records of the governor of Massachusetts in the data
released by the Group Insurance Commission (GIC)~\cite{Sweeney2002},
the uncovering of identities in a de-identified data set containing
a list of 20 million web search queries collected by AOL~\cite{Barbaro2006},
and the de-anonymization attacks conducted against the Netflix Prize
data set~\cite{Narayanan2008}. 

The goal of \emph{Statistical Disclosure Control} (SDC) or \emph{Statistical
Disclosure Limitation} (SDL) is to allow the release of data while
preserving the privacy of individuals. SDC techniques work by masking
the original data or statistics to be released. While reducing the
risk of disclosure, the masking also reduces the utility of the published
data. This is a fundamental trade-off that cannot be avoided: finding
a balance between privacy and utility, so that individuals' privacy
is protected and data are still useful, is the primary objective of
disclosure limitation techniques. 

SDC has traditionally evaluated the level of 
disclosure limitation experimentally;
for instance, by trying to re-identify records in the released data.
In the last few years, the computer science community has proposed
several privacy 
%JOSEPTHESIS. criteria -> models
models that try to bring formal privacy guarantees
into the field. Usually these privacy models seek to introduce uncertainty
in the outcome of the attacks against the privacy of individuals.
The suitability of such privacy models depends on several aspects
of the data dissemination under consideration: the type of data being
released, the required level of disclosure risk limitation, etc. When
%JOSEPTHESIS. Frase refeta a sota.
a privacy model is judged to offer enough disclosure limitation,
the next goal is to generate a data set that satisfies the selected
model and maximizes data utility. In this thesis, we focus on two
mainstream privacy models: $k$-anonymity, a model used 
to limit the risk of re-identification in microdata
releases; and $\varepsilon$-differential privacy, a privacy model
%JOSEPTHESIS. dynamic -> interactive
for interactive databases that seeks to limit the knowledge gain that
can be extracted from query responses. We mainly focus on data utility:
we aim at providing methods to satisfy those models, while offering improved
data utility;
%JORDI. for the $k$-anonymity case, we also seek to overcome
%some limitations regarding the disclosure limitation guarantee.
but we also aim at finding a link between those models.

\section{Contributions}

We revisit two mainstream privacy models, $k$-anonymity and $\varepsilon$-differential
privacy, and we propose several improvements. These are 
mainly on the data utility
side, but also on the disclosure limitation guarantees for the case
of $k$-anonymity. Our main contributions are:
\begin{enumerate}
\item \emph{Probabilistic $k$-anonymity}. The $k$-anonymity model,
although widely accepted, suffers from certain limitations that affect
both data utility and disclosure limitation. We propose a relaxation
of the $k$-anonymity model where the requirement for indistinguishability
of records in terms of quasi-identifiers is removed, but the same
probability of uncovering a confidential attribute in the released
data set is retained. The new proposal offers two advantages. First
of all, by removing the indistinguishability requirement, the range
of feasible methods widens, and we can thus search for a method
that offers improved data utility. Apart from the improvement in
data utility, the fact that we no longer have a fixed partition in
sets of indistinguishable records opens the door to improvements on
disclosure limitation against informed intruders.
\item \emph{Optimal data-independent noise for $\varepsilon$-differential
privacy}. $\varepsilon$-Differential privacy is an output perturbation
methodology; therefore, to improve the accuracy of the responses,
the magnitude of the perturbation must be reduced. We focus on data-independent
noises, which are more frequently used due to their simplicity, and
state a strict optimality criterion for the perturbation in terms
of the concentration of the probability mass around the zero. To show
the validity of our optimality criterion, we justify that a noise
that is optimal under this criterion must be optimal under any sensible
criterion (those that prefer that less distortion is introduced). We
show that the commonly used Laplace distribution is not optimal, and
optimal univariate and multivariate distributions are built.
%JOSEPTHESIS. Canviat titol.
\item \emph{Considering prior knowledge in $\varepsilon$-differential
privacy}. $\varepsilon$-Differential privacy guarantees that the
knowledge gain that can be extracted from the response to any query
is limited by a factor of $\exp(\varepsilon)$. Such guarantee must
be enforced independently of the prior knowledge that a particular
user has. The usual approach is to assume that the user has zero prior
knowledge, and to limit the knowledge gain to $\exp(\varepsilon)$
over it. While doing so, the knowledge gain is limited to $\exp(\varepsilon)$
independently of the prior knowledge that a particular user may have.
For a user with some prior knowledge, the response may be 
less than optimal
in terms of accuracy. We propose a novel approach towards $\varepsilon$-differential
privacy where, for each query, database users also send their prior
knowledge; a knowledge gain of $\exp(\varepsilon)$ is then enforced
over prior knowledge. We also show that the greater interaction
between the database and the users that results from the communication
of the prior knowledge does not open the door for any attack.
\item {\em Improving the 
utility of $\varepsilon$-differentially private data releases
by prior micro\-ag\-grega\-tion-based $k$-anonymity}. 
Although it was introduced as a disclosure
limitation methodology for interactive databases, $\varepsilon$-differential
privacy is general enough to be used in microdata releases. However,
due to the large amount of noise introduced, general-purpose mechanisms
to generate $\varepsilon$-differentially private data have
not been developed; the focus has been on the generation of data sets
that preserve the utility for specific families of functions. A general
approach towards the construction of $\varepsilon$-differential private
data sets consists in querying for the attributes' value of each individual;
however, due to the large sensitivity of such queries, this general
approach turns out to be infeasible. Our proposal employs a prior
microaggregation step to reduce the sensitivity of those queries.
Not all microaggregation algorithms offer the reduction in the sensitivity
that we seek; we provide a characterization of those which do. 
%JORDI. Afegit item
\item {\em Differential Privacy via $t$-Closeness in Data Publishing}.
Differential privacy and $k$-anonymity are often presented as 
antagonistic privacy models. The guarantees offered by such models are
quite different: whereas $k$-anonymity seeks to limit
re-identification, $\varepsilon$-differential privacy
seeks to limit the knowledge gain that users get from query responses.
However, $t$-cloness, an improvement over $k$-anonymity to limit
attribute disclosure, offers privacy guarantees that are closer
to those of differential privacy. We show that under specific 
conditions (using a specific distance function for $t$-closeness and
given a specific users' prior knowledge) $t$-closeness
implies $\varepsilon$-differential privacy. A method to attain
$t$-cloness for such conditions (and thus also 
$\varepsilon$-differential privacy) is provided. It is worth 
noting that unlike other approaches to differential privacy,
which output a random sample from a differentially private distribution, our
proposal fits the distribution in each of the $k$-anonymous
groups of records to the differentially private distribution 
by selecting the individuals that must belong to each
of the groups. Thus not only we achieve $\varepsilon$-differential privacy,
but also preserve the thruthfulness of the data inside each of the 
$k$-anonymous groups.

\end{enumerate}

\lhead[\chaptername~\thechapter]{\rightmark}

\rhead[\leftmark]{}

\lfoot[\thepage]{}

\cfoot{}

\rfoot[]{\thepage}

\chapter{Background}

\section{The right to privacy: a brief history}

Although nowadays it is considered a fundamental right~\cite{Glancy1979,Terstegge2007},
the \emph{``right to privacy''} is a quite recent concept. It
was coined by Warren and Brandeis, back in 1890, in an article~\cite{Warren1890}
published at the Harvard Law Review. Warren and Brandeis presented
laws as dynamic systems for the protection of individuals whose evolution
is triggered by social, political, and economic changes. In particular,
the conception of the right to privacy is triggered by the technical advances
and new business models of the time. To quote Warren and Brandeis: 
\begin{quote}
Instantaneous photographs and newspaper enterprise have invaded the
sacred precincts of private and domestic life; and numerous mechanical
devices threaten to make good the prediction that \textquotedbl{}what
is whispered in the closet shall be proclaimed from the house-tops.\textquotedbl{}
\end{quote}
Warren and Bradeis argue that the ``right to privacy'' was already
existent in many areas of the common law; they only gathered all these
sparse legal concepts, and put them into focus under their common
denominator. Within the legal framework of the time, the ``right
to privacy'' was part of the right to life, one of the three
fundamental individual rights recognized by the U.S. Constitution.

Privacy concerns revived again with the invention of the computers~\cite{Feistel1973}
and information exchange networks, which skyrocketed information collection,
storage and processing capabilities. The generalization of
population surveys was a consequence. 
The focus was now on data protection.

Nowadays, the concept of privacy has gained recognition and applies
to a wide range of situations such as: avoiding external meddling
at home, 
%JOSEPTHESIS. Canviada frase a sota.
limiting the use of surveillance technologies, 
controlling processing and dissemination of personal data, etc.
Privacy is widely considered a fundamental right, and it is 
supported by international
treaties and many constitutional laws. 
%JOSEPTHESIS. Frase afegida
For example, the Universal Declaration of Human Rights (1948) devotes its
Article 12 to privacy.

For a more comprehensive plot of key events in the history of privacy,
see~\cite{Timeline1,Timeline2}. In~\cite{Timeline1} key privacy-related
events between 1600 (when it was a civic duty to keep an eye on your
neighbors) and 2008 (after the USA PATRIOT Act and the inception of
Facebook) are listed. In~\cite{Timeline2} key moments that have shaped privacy
related laws are depicted.

As far as the protection of individuals' data is concerned, privacy legislation
is based on several principles~\cite{OECD1980,Terstegge2007}: collection
limitation, purpose specification, use limitation, data quality, security
safeguards, openness, individual participation, and accountability.

Among all the aspects that relate to data privacy, we are especially
interested in data dissemination. Data dissemination is, for instance,
a primary task of National Statistical Offices. These 
aim at offering an accurate picture of society; to that end, 
they collect and publish statistical data on a wide range
of aspects such as economy, population, etc. Legislation usually assimilates 
privacy violations on data dissemination to individual identifiability~\cite{Directive95/46/EC,FR_2007};
for instance, Title 13 Chapter 1.1 of the U.S. Code states that ``no
individual should be re-identifiable in the released data''.

\section{Types of data}

Among all privacy-related aspects, we are mainly concerned with disclosure
risk arising from data dissemination. The type of data being released
determines the potential threat to privacy as well as the most suitable
methods to limit it. Three types of data releases are considered:
\begin{description}
\item [{Microdata releases}] The term ``microdata'' refers to a record
that contains information related to a specific individual. A microdata
release aims at publishing raw data; that is, a set of microdata records.
This kind of data release offers the greatest level of flexibility
among all types of data releases: data recipients are not limited
to a specific prefixed view of data; they are able to carry any kind
of custom analysis on the released data. However, microdata releases
incur in the greatest threat to individuals' privacy. 
\end{description}
Microdata releases seek to allow data recipient on carrying custom
data analysis; however, if strong privacy guarantees are to be provided,
data utility may be greatly lowered, which may turn the released data
unsuitable for specific analysis that require accurate data. In order
to be able to meet the requirements for accurate data, NSO sometimes
generate two data sets: a publicly accessible data set where privacy
is prioritized over accuracy, and a data set that offers improved
data accuracy, but accessible only to restricted to a set of users
(committed to non-disclosure agreements).
\begin{description}
\item [{Aggregated data releases}] The data released do not refer to a
single individual but to a group of individuals. Contingency tables,
the traditional output of NSO, belong to this category. As only aggregated
data is published, threats to individuals' privacy are diminished
in comparison to microdata releases, but data analysis is limited
to the aggregated values being published.
\item [{Dynamic Databases}] Both microdata and aggregated data releases
offer a static view of the collected data. A specific data recipient
may not be interested in all the published data, but just on a subset
of them. The problem with static approaches is that, even if not used,
all the published data accounts when dealing with disclosure risk;
in other words, if a particular data recipient were only given access
to the data that are relevant to him, improved accuracy could be provided.
This is the idea that underlies dynamic databases: the user is allowed
to submit queries to the database, and data is only provided for the
submitted queries.
\end{description}
In the present thesis we deal with microdata releases (contributions
1 and 4) and dynamic databases (contributions 2 and 3).

\section{Microdata sets}

A microdata set can be modeled as a table where each row refers to
a different individual and each column contains information regarding
one of the attributes collected. We use the notation $T(A_{1},\ldots,A_{n})$
to denote a microdata set with information about attributes $A_{1},\ldots,A_{n}$.

The attributes in a microdata set are usually classified in the following
non-exclusive categories, according to the sensitivity of the information
they convey and the risk of record re-identification they imply: 
\begin{itemize}
\item \emph{Identifiers}. An attribute is an identifier if it provides unambiguous
re-identif\-ica\-tion of the individual to which the record refers. Some
examples of identifier attributes are the social security number,
the passport number, etc. If a record contains an identifier, any
sensitive information contained in other attributes may immediately
be linked to a specific individual. To avoid direct re-identification
of an individual, identifier attributes must be removed or encrypted.
We assume in the present thesis that, when dealing with microdata
releases, identifier attributes have previously been removed; that
is, we assume that $T(A_{1},\ldots,A_{n})$ does not contain any identifier
attribute. 
\item \emph{Quasi-identifiers}. Unlike an identifier, a quasi-identifier
attribute alone does not lead to record re-identification. However,
in combination with other quasi-identifier attributes, it may allow
unambiguous re-identification of some individuals. For example, \cite{Sweeney_2000}
shows that 87\% of the population in the U.S. can be unambiguously
identified by combining a 5-digit ZIP code, birth date and sex. Removing
quasi-identifier attributes, as proposed for the identifiers, is not
possible, because quasi-identifiers are most of the times required
to perform any useful analysis of the data. Deciding whether a specific
attribute should be considered a quasi-identifier is a 
%JOSEPTHESIS. contrived -> thorny
thorny issue. In practice, any information an intruder has about an individual
can be used in record re-identification. For uninformed intruders,
only the attributes available in an external non-anonymized data set
should be classified as quasi-identifiers; in presence of informed
intruders any attribute may potentially be a quasi-identifier. Thus,
to make sure all quasi-identifiers have been removed, one should remove
all attributes (!). 
\item \emph{Confidential attributes}. Confidential attributes hold sensitive
information on the individuals that took part in the data collection
process (\emph{e.g.} salary, health condition, sex orientation, etc.).
The primary goal of microdata protection techniques is to prevent
intruders from learning confidential information about a specific
individual. This goal involves not only preventing the intruder from
determining the exact value that a confidential attribute takes for
some individual, but preventing inferences on the value of that attribute
(such as bounding it). 
\item \emph{Non-confidential attributes}. Non-confidential attributes are
those that do not belong to any of the previous categories. As they
do not contain sensitive information about individuals and cannot
be used for record re-identification, they do not affect our discussion
on disclosure limitation for microdata sets. Therefore, we assume
that none of the attributes in $T(A_{1},\ldots,A_{n})$ belong to
this category. 
\end{itemize}
When publishing a microdata file, the data collector must guarantee
that no sensitive information about specific individuals is disclosed.
To do so, the data collector does not publish the original microdata
set $T(A_{1},\ldots,A_{n})$, but a modified version $T'(A_{1},\ldots,A_{n})$
where the quasi-identifiers and/or the confidential attributes have
been masked. Disclosure can be classified into two categories~\cite{hundepool2012}: 
\begin{itemize}
\item \emph{Identity disclosure}. The intruder is able to determine the
true identity of the individual corresponding to a record in the microdata
file. After re-identification, the intruder associates the values
of the confidential attributes for the record to the re-identified
individual. 
\item \emph{Attribute disclosure}. Even if identity disclosure does not
happen, it may be possible for an intruder to infer some information
for a specific individual based on the published microdata set. For
example, imagine that the salary is one of the confidential attributes
and the job is a quasi-identifier attribute; if an intruder is interested
in a specific individual whose job he knows to be ``accountant''
and there are several accountants in the data set (including the target
individual), the intruder will be unable to re-identify the individual's
record based only on her job, but he will be able to lower-bound and
upper-bound the individual's salary (which lies between the minimum
and the maximum salary of accountants in the data set).
\end{itemize}

\section{Approaches to disclosure limitation}

Given a data set that contains information about individuals ---where
an individual is a person, household, company, etc.---, the goal is
to provide statistical information (or the means to extract statistical
information, in the case of microdata releases) about the population
or a subset of individuals, 
%JOSEPTHESIS. Canviada frase
without disclosing confidential data of specific individuals.

Disclosure limitation technologies were initially developed under
the umbrella of National Statistical Institutes (NSIs), which still remain
a primary player, with the denomination of \emph{Statistical Disclosure
Control} (SDC) or \emph{Statistical Disclosure Limitation} (SDL).
Initially for tabular data releases, and later for microdata releases,
the statistical community has proposed many methods for limiting disclosure
risk. The preservation of the statistical properties of the original
data has also been on the focus of the statistical community since
the very beginning of statistical disclosure control. Good 
reference literature on statistical disclosure control are~\cite{Adam1989,Domi2001,hundepool2012}.
For an update on the current practices in
statistical disclosure limitation at
%JOSEPTHESIS. Afegida referencia.
NSIs see~\cite{Zayatz2007,Zayatz2009,hundepool2012}.

Disclosure limitation also became a topic of interest in the
computer science research community. Within the computer science community,
the terms \emph{Privacy Preserving Data Publishing} (PPDP) and \emph{Privacy
Preserving Data Mining} (PPDM) are more commonly used. Privacy Preserving
Data Mining~\cite{Agrawal2000,Aggarwal2008} brings privacy protection
concerns into traditional data mining tasks: only the results of the
data mining are released; the original data are kept secret. A prevalent
characteristic among PPDP methods is that they are tightly coupled
to the underlying data mining task. On the other side, Privacy Preserving
Data Publishing~\cite{Fung2010} focuses on the publication of data
about individuals (\emph{microdata}). PPDP allows data users to carry any
kind of analysis on the released data. Although PPDM and PPDP seem to take 
completly different approaches to disclosure limitation, they may take 
advantage of the same anonymity models; for instance, $k$-anonymity can be 
used in both the generation of anonymous microdata sets and in the 
anonymization of the results of data mining tasks~\cite{Ciriani2008}.

Although both pursue the same objective, the approaches towards disclosure
limitation taken by the statistical and computer science communities
are not coincident. The common understanding~\cite{Drescgler2011}
is that the statistical community is usually more concerned with the
statistical validity of the data (valid inferences should be obtainable)
but offers only vague privacy guarantees (no formal privacy guarantees
are provided; the level of protection is evaluated {\em a posteriori}
for each specific data set). In contrast, methods developed
by the computer science community seek to attain a predefined
notion of privacy; thus, they offer {\em a priori} privacy guarantees. 
In this work we follow the path of the computer science community by
focusing on two mainstream privacy
%JOSEPTHESIS. criteria -> models.
models.

\section{Privacy models\label{sub:Privacy-criteria}}

The first attempt to come up with a formal definition for privacy
was done by Dalenius in~\cite{Dalenius1977}. Dalenius stated that
access to the released data should not allow any attacker to increase
his knowledge about confidential information related to a specific
individual. This is a very strict notion of privacy; in fact, it was
shown in~\cite{Dwork2006} that Dalenius's view of privacy is not
feasible in presence of background information (if any utility is
to be provided). Privacy criteria used in practice offer only limited
disclosure limitation guarantees.

Two main notions are used when talking about privacy in data releases:
anonymity (it should not be possible to re-identify any individual
in the published data), and confidentiality or secrecy (access to
the released data should not allow an attacker to increase its knowledge
about confidential information related to any specific individual).
Privacy models used in practice focus on one of those two 
notions (anonymity or confidentiality) and offer certain guarantees.

Preservation of individuals' privacy entails some loss on the utility
of the protected data, in comparison to the original data. For the
data to remain useful, the privacy guarantees offered are limited. Some
assumptions on the side knowledge available to potential attackers
are made, and the privacy preservation guarantees offered hold only
for such attackers.

In this thesis we focus on two mainstream privacy models: 
$k$-anonymity~\cite{Samarati1998a,Samarati1998b},
which, based on the anonymity principle, seeks to hide individuals
within groups of indistinguishable records; and $\varepsilon$-differential
privacy~\cite{Dwork2006a,Dwork2006}, which, based on confidentiality,
seeks to limit the knowledge gain provided by the output data.

Despite the fact that $k$-anonymity is solely based on anonymity,
and $\varepsilon$-differential privacy is solely based on secrecy,
other privacy models may mix both anonymity and secrecy. This is the
case, for instance, of $l$-diversity~\cite{Machanavajjhala2007}
and $t$-closseness~\cite{Li2007} that, similarly to $k$-anonymity,
seek to hide each individual among a group of individuals, but, unlike
$k$-anonymity, they also require the confidential information of
the individuals in the group to be sufficiently diverse to improve
secrecy.

A great number of privacy criteria have been proposed. They differ
in the kind and strength of the disclosure limitation guarantees 
they offer,
and in the suitability for a certain type of data release. For a thorough
review of privacy models see~\cite{Venkatasubramanian2008}.

\section{The privacy-utility tradeoff}

Disclosure limitation in a public data release involves some degree
of modification of the data to be released. Instead of publishing
the original data $D$, a masked version $D'$ is published. The masking
improves privacy but reduces the utility of the published data, in
comparison to the original data. This tension between privacy and
utility is unavoidable: privacy and utility are two different views
of the same thing, the amount of information published. By reducing
the amount of information published, privacy improves but utility
decreases; and the other way round. Two extreme cases are: publish
the original data, which offers the greatest utility but the least
privacy; and publish 
%JOSEPTHESIS. prefixed -> encrypted
encrypted or random data, which incurs no
disclosure risk at all, but offers no utility.

Disclosure limitation technologies seek an equilibrium between privacy
and utility: the disclosure risk must be limited, but the data need to
remain useful. Sometimes the required equilibrium between privacy
and utility does not exist; for instance, when access to very accurate
and sensitive data is required by some data recipient. As the publication
of such a data set is not feasible, data providers must rely on other
mechanisms such as data access restriction and non-disclosure agreements.

\section{Measuring utility}

Disclosure limitation entails some modifications of the original data,
which decreases the utility of the protected data; therefore, it is
important to be able to assess the quality of the protected data.

Measuring the utility of the released data is a tough task. Currently,
no single utility measure is broadly accepted~\cite{Bertino2008}.
The main problem with utility measures is related to the relativity
of the term ``data utility''~\cite{Tayi1998}: ``data utility''
can be seen as ``fitness for use''. In other words, a data set may
be useful for some kind of analysis, but not for others. The measurement
of the data utility based on the intended data usage is usually preferred~\cite{Bertino2008},
as then utility evaluation focuses on the particular type of knowledge
that is to be extracted. Often, data protection cannot be performed
with a specific data use in mind~\cite{hundepool2012} (\emph{e.g}
data uses may be very diverse or even hard to identify at the time
of data release). For such cases, a generic measure of data utility
is required to help the data collector in assessing the damage inflicted
during the disclosure limitation process.

The suitability of a utility measure also depends on the type of data
release. Measures suitable for microdata releases may not be suitable
in assessing data utility in an interactive database environment. For instance,
in a microdata release we may evaluate how well the correlation between
attributes or marginal distributions are preserved; but these utility
measures are not appropriate for interactive databases. See~\cite{hundepool2012}
for a thorougher review of utility measures used for microdata
releases, and~\cite{Bertino2008} for utility measures used in privacy
preserving data mining.

\section{\label{sub:k_anonymity}$k$-Anonymity}

A de-identified data set is a data set that has had identifier
attributes removed. Removal of identifiers is essential to
hide the individuals' identity; however, it is usually not enough. For
instance, \cite{Sweeney_2000} shows that 87\% of the population in
the United States can be uniquely identified by combining 5-digit
ZIP, gender, date of birth. 

To re-identify a record in a published data set, the intruder performs
a record linkage attack. In a record linkage attack the intruder tries
to link the records in the released data set to the records in a 
non-anonymous external data set; that is,
the intruder seeks to associate identities to the records
in the released data set. This linkage is done by matching the values
of the common attributes (the quasi-identifiers).
%JOSEPTHESIS. El metode de la distancia minima es un tipus de matching.
%or by minimizing the distance between them. 
If the linkage is correct, the
attack succeeds and the intruder learns the value of the confidential
attributes for the re-identified individual.

For an attribute to be a quasi-identifier, it must be externally available
in a non-anonymous data set; otherwise it cannot 
be used for re-identification
of records in the released data set.
\begin{defn}[Quasi-identifier]
A quasi-identifier $QI$ of $T$ is a subset of the set of attributes
$\{A_{1},\ldots,A_{n}\}$ that is available in an external, non-anonymous
data set.
\end{defn}

A common approach to prevent record linkage attacks is to hide each
individual within a group of individuals. This is the approach that
$k$-anonymity~\cite{Samarati1998b, Ciriani2008} takes: $k$-anonymity requires each record in the published
microdata set to be indistinguishable from $k-1$ other records based
on the quasi-identifiers. This way, an intruder with access to an external
non-anonymous data set that contains the quasi-identifiers in the
released data set $T'(A_{1},\ldots,A_{n})$ is unable to perform an
exact re-identification. For any individual in the external data set,
the intruder can at most determine a set of $k$ records in the published
data set that contains the target individual.
\begin{defn}[$k$-Anonymity~\cite{Samarati1998b, Ciriani2008}]
A microdata set $T'(A_{1},\ldots,A_{n})$ is said to satisfy $k$-anonymity
if, for each record $t\in T'$, there are at least $k-1$ other records
sharing the same values for all the quasi-identifier attributes. 
\end{defn}
The determination of the attributes that are available externally
in a non-anonymous data set is a key point for $k$-anonymity to provide
the desired protection against re-identification. It was already acknowledged
in the original proposal of $k$-anonymity that it is not possible
for the data holder to determine the knowledge that each of the data
recipients may have; thus, the data holder may misjudge which attributes
need to be considered as quasi-identifier 
attributes. In such cases the released
data may be less anonymous than initially intended. Proposed solutions~\cite{Sweeney1997}
rely on policies, laws, and contracts.

The original method to generate a $k$-anonymous data set~\cite{Samarati1998a}
was based on generalization and suppression, which continue to be
the dominant techniques to achieve $k$-anonymity. Generalization
reduces the granularity of the information contained in the quasi-identifier
attributes, thus increasing the chance of several records sharing the 
values of these attributes.
A generalization hierarchy is defined for each of the quasi-identifier
attributes. Generalization is usually performed at the attribute level;
that is, either all or none of the records are generalized. Suppression
removes tuples from the original data set so that they are not released.
Suppression is usually applied to remove outlier records before applying
generalization. Suppresion seeks to reduce the amount of generalization
required to generate the $k$-anonymous data set.

The use of generalization and suppression to enforce $k$-anonymity
produces a data set that is truthful, but less precise than the original
data set. The objective is 
to obtain a $k$-anonymous data set where information
loss is minimized. Usually the goal is a minimal generalization that
produces a $k$-anonymous data set for a given level of suppression
that is considered to be acceptable. It was shown in~\cite{Meyerson2004}
that finding an optimal $k$-anonymization via generalization and
suppression is a NP-hard problem. A large number of algorithms to
attain $k$-anonymity have been proposed~\cite{Samarati1998b,Bayardo2005,LeFevre2005,LeFevre2006,conf/icdt/AggarwalFKMPTZ05};
they rely on properties of $k$-anonymous data sets or heuristics
to reduce the amount of search, or search for sub-optimal solution.

A different approach towards achieving $k$-anonymity is based on
microaggregation~\cite{Domingo2005}. Microaggregation~\cite{Domi02}
is a family of anonymization algorithms for data sets that works in
two stages: 
\begin{itemize}
\item First, the set of records in a data set is clustered in such a way
that: i) each cluster contains at least $k$ records; ii) records
within a cluster are as similar as possible. 
\item Second, records within each cluster are replaced by a representative
of the cluster, typically the centroid record. 
\end{itemize}
Clearly, when microaggregation is applied to the projection of records
on their quasi-identifier attributes, the resulting data set is $k$-anonymous.
In~\cite{Domingo2005} a simple microaggregation heuristic called
MDAV is described, in which all clusters have exactly $k$ records,
except the last one, which has between $k$ and $2k-1$ records. As
the internals of MDAV will be required in Section~\ref{sec:dp_kanon},
we recall the MDAV algorithm (See Algorithm~\ref{alg:MDAV}).

\begin{algorithm}[h]
\caption{\label{alg:MDAV}Maximum distance to average record (MDAV) }

\textbf{let} $X$ be the original data set

\textbf{let} $k$ be the minimal cluster size

\vspace{0.2cm}

\textbf{while} $|X|\ge3k$ \textbf{do}

\hspace{0.5cm}$\overline{x}\leftarrow$average record of $X$

\hspace{0.5cm}$x_{1}\leftarrow$most distant record to $\overline{x}$
in $X$

\hspace{0.5cm}$x_{2}\leftarrow$most distant record to $x_{1}$ in
$X$

\hspace{0.5cm}Form a cluster with $x_{1}$ and its $k-1$ closest
records

\hspace{0.5cm}Form a cluster with $x_{2}$ and its $k-1$ closest
records

\hspace{0.5cm}Remove the clustered records from $X$

\textbf{end while}

\vspace{0.2cm}

\textbf{if} $|X|\ge2k$ \textbf{then}

\hspace{0.5cm}$\overline{x}\leftarrow$average record of $X$

\hspace{0.5cm}$x_{1}\leftarrow$most distant record to $\overline{x}$
in $X$

\hspace{0.5cm}Form a cluster with $x_{1}$ and its closest $k-1$
records

\hspace{0.5cm}Remove the clustered records from $X$

\textbf{end if}

\vspace{0.2cm}

Form a new cluster with the remaining records.

\vspace{0.2cm}
 Within each formed cluster, replace the values of each quasi-identifier
attribute with the average value of the attribute over the cluster. 
\end{algorithm}

Despite being a widely accepted privacy model, $k$-anonymity
suffers from certain limitations. 
The most common criticism against
$k$-anonymity refers to the lack of protection against attribute disclosure:
if all the individuals within a group of indistinguishable records
share same value for a confidential attribute, then the intruder learns
the confidential attribute, even without re-identification. Some refinements
to the basic $k$-anonymity model have been proposed to improve
the protection against attribute disclosure: $l$-diversity~\cite{Machanavajjhala2007}
requires the presence of $l$ different well-represented values for
the confidential attribute in every group of records sharing the same
quasi-identifier values; $t$-closeness~\cite{Li2007} requires the
distribution of the confidential attribute in any group of records
sharing the quasi-identifier values to be close to the distribution
in the overall data set.

\section{\label{back_dp}$\varepsilon$-Differential Privacy}

Most disclosure limitation mechanisms are specifically designed to
avoid releasing information that is known to be disclosive. Such mechanisms
are instructed with the kind of data releases that may lead to a privacy
breach, and are designed to avoid them. To determine the data releases
that may lead to a privacy breach, a guess on the amount of side information
available to the intruders is usually made. As long as this guess
is accurate, the disclosure limitation mechanism accomplishes its
duty, but a privacy breach may happen if there are intruders with
greater amounts information. 

The approach of differential privacy towards disclosure limitation 
is different. Instead of enforcing a pre-specified set of rules
that seek to limit disclosure risk, it limits the effect of the
presence or absence of any single individual on any information
that can be extracted from the database.

The disclosure limitation guarantee provided by $\varepsilon$-differential
privacy is similar to that of Dalenius 
%JOSEPTHESIS. Afegit Section
(see Section~\ref{sub:Privacy-criteria}),
being the difference that, while Dalenius compared the knowledge before
and after accessing the released data, differential privacy compares
the knowledge before and after a single individual contributes her
data. In other words, instead of limiting the knowledge provided by
the data set, it limits the knowledge provided by each individual
in the data set.

Differential privacy was introduced as an interactive (or query-response)
mechanism, where the database is held by a trusted party that catches
the queries sent by the database users and outputs a sanitized response.
Let $D$ be the database, and assume that a user wants to compute
the value of a function $f$ over $D$. The trusted party computes
the real response to the query (that is, the value of $f(D)$) and
masks it before release. The end user receives $\kappa_{f}(D)$, the
masked response. The usual way to compute the perturbed value $\kappa_{f}(D)$
is to add a random noise to $f(D)$ that depends on the variability
of the query response.

Differential privacy assumes that each record in the data set refers
to a different individual; thus, comparing the output of a query before
and after an individual has contributed her data is
equivalent to comparing
the output of that query between data sets that differ in one record.
Data sets that differ in one record are known as neighbor data sets.
Strictly speaking, a database is a data set plus some software allowing
the data to be accessed and managed. However, unless there is risk
of ambiguity, in the sequel we will use database and data set as equivalent
terms.
\begin{defn}
{[}$\varepsilon$-differential privacy, \cite{Dwork2006}{]} \label{def:dp_dwork}A
randomized function $\kappa$ gives $\varepsilon$-differential privacy
if, for all data sets $D$, $D'$ that differ in one record, and all
$S\subset Range(\kappa)$ 
\begin{equation}
P(\kappa(D)\in S)\leq e^{\varepsilon}\times P(\kappa(D')\in S)\label{eq:dp_dwork}
\end{equation}

\end{defn}
The randomized function $\kappa$ in the definition represents the
output the user gets from the database as response to the submitted
query; actually, $\kappa(D)$ is the value resulting from adding random
noise to the real query response. Inequality~(\ref{eq:dp_dwork})
can be interpreted as a bound on the knowledge gain between the responses
obtained when performing the same query on data sets $D$ and $D'$.

Two approaches to the concept of ``neighbor data sets'' are found
in the literature on differential privacy: in~\cite{Dwork2006} two
data sets are said to be neighbors if one can be obtained from the
other by adding or removing a single record; in~\cite{Nissim2008}
two data sets are said to be neighbors if one can be obtained from
the other by modifying a single record.

Let us shed some light on the disclosure risk limitation provided
by differential privacy. Assume that the data sets $D$ and $D'$
can be obtained from one another by adding or removing one record;
the case of data sets $D$ and $D'$ that can be obtained from one
another by modifying a record is similar. Let $D'=D\setminus\{r\}$;
that is, $D$ contains the record $r$ contributed by individual $i_{r}$,
but $D'$ does not. Since $D'$ does not contain $i_{r}$'s data,
the level of privacy for $i_{r}$ when querying $D'$ is maximum;
even if disclosure for individual $i_{r}$ happens, it seems unreasonable
to blame the data set $D'$ (it does not contain $i_{r}$'s data).
As differential privacy guarantees that the knowledge gain between
data sets $D$ and $D'$ is limited, the disclosure risk for $i_{r}$
is limited.

To improve the accuracy of $\varepsilon$-differentially private responses,
the magnitude of the noise must be minimized. Several methods for
calibrating the noise have been proposed. We classify them
in two categories,
according to their 
dependency on the data set: data-independent
methods, such as~\cite{Dwork2006a}, and data-dependent methods,
such as~\cite{Nissim2007}. When calibrating to a data-independent
noise, the distribution of the noise is constant across data sets;
on the other side, when calibrating to data-dependent noises the distribution
of the noise is adjusted for each data set. In general, using a data-independent
noise is simpler, but data-dependent noises provide a better adjustment
of the noise to different degrees of variability of the query function
between neighbor data sets.

For data-independent noises, a Laplace distribution is typically used.
The mean parameter is set to zero (for the expected value of the noise
to be zero), and the scale parameter is adjusted to the largest variability
of the query function between neighbor data sets. Specifically, the
density function of the Laplace noise is 
\[
p(x)=\frac{\varepsilon}{2\Delta(f)}e^{-|x|\varepsilon/\Delta(f)}
\]

To refer to the largest change of a function between neighbor data
sets, the notion of $L_{1}$-sensitivity is introduced.
\begin{defn}
{[}$L_{1}$-sensitivity{]} \label{def:sensitivity-1} The $L_{1}$-sensitivity
of $f:\mathcal{D}\rightarrow\mathbb{R}^{d}$ is 
\begin{equation}
\Delta f=\max_{D,D'}\left\Vert f(D)-f(D')\right\Vert _{1}\label{eq:sensitivity-1}
\end{equation}
for all neighbor data sets $D$,$D'$. 
\end{defn}
Using Laplace-distributed noise with zero mean and $\Delta f/\varepsilon$
scale parameter provides $\varepsilon$-differential privacy~\cite{Dwork2006a}.
This result holds independently of the number of components of $f$.
An independent Laplace-distributed noise with zero mean and $\Delta f/\varepsilon$
scale must be added to each of the components. 

%JOSEPTHESIS. Refeta frase.
In~\cite{Dinur2003,Dwork2004,Blum2005}, it was proven that
if accurate responses
are returned for a sufficiently large number of count queries, 
then the original
database can be reconstructed with great accuracy.
Initially, these results raised the belief 
that the generation of protected microdata sets that preserve the
utility for a large number of queries was unfeasible. In particular,
this motivated the presentation of differential privacy as an an interactive
query-response mechanism. However, it was later shown in~\cite{Blum2008,Dwork2009,Hardt2010,Chen11}
that differential privacy could also be enforced in the non-interactive
setting and, indeed, that the generated microdata set could preserve
the utility for an arbitrary large number of queries. 

There is a lack of methods to generate general-purpose $\varepsilon$-differentially
private data sets. Current proposals preserve utility only for restricted
classes of queries (typically count queries). This contrasts with
the general-purpose utility-preserving data release offered by the
$k$-anonymity model.

\lhead[\chaptername~\thechapter]{\rightmark}
\rhead[\leftmark]{}
\lfoot[\thepage]{}
\cfoot{}
\rfoot[]{\thepage}

\chapter{Probabilistic $k$-anonymity\label{chap:Probabilistic--anonymity}}

We propose a privacy model that, similarly to $k$-anonymity, protects
against identity disclosure (the probability of determining the true identity
for a specific value of a confidential attribute is $1/k$), but 
offers improved data accuracy (in particular, it behaves
well in presence of multiple quasi-identifier attributes). Our proposal
is based on a relaxation of the indistinguishability requirement of
$k$-anonymity. Instead of requiring records to be indistinguishable
within sets of $k$ records as far as quasi-identifiers are concerned,
we focus on the probability of re-identification. By requiring the
probability of re-identification to be $1/k$ at most, we achieve
the same level of protection against re-identification provided by
$k$-anonymity, but the range of applicable methods
to implement our model is wider and hence
the information loss can be reduced.

%JOSEPTHESIS. Afegit
The contents of this chapter have been published in~\cite{fuzzieee,recsi}.

\section{\label{sub:Limitations}Limitations of the $k$-anonymity model for
disclosure limitation}

Although $k$-anonymity is a popular privacy criterion, some criticism
has been raise against it~\cite{Domingo2008}. $k$-Anonymity seeks
to prevent identity disclosure (re-identification is only possible
with probability $1/k$), but confidential information can be revealed
even if re-identification is not feasible. For example, let a medical
data set contain quasi-identifier attributes Age, Gender, Zipcode
and Race, and confidential attribute AIDS (whose values can be Yes
or No). Imagine that we 3-anonymize this data set, but a group of
three records sharing a certain combination of quasi-identifier attribute
values also shares the confidential attribute value AIDS=Yes. In this
case, if the intruder can establish that her target respondent's record
lies within that group (because it is the only group with compatible
Age, Gender, Zipcode and Race), the intruder learns that the target
respondent suffers from AIDS. This kind of disclosure is known as
\emph{attribute disclosure} and arises from the lack of variability
of the confidential attribute inside a group of indistinguishable
records. Several fixes/alternatives to $k$-anonymity which are also based on
the partitioning of the data set in groups of indistinguishable records
have appeared: $l$-diversity~\cite{Machanavajjhala2007}, $t$-closeness~\cite{Li2007},
etc. However, none of those alternatives is free from shortcomings
(see~\cite{Domingo2008}). 

On the data utility side, $k$-anonymity has been shown to provide
reasonably useful anonymized results, especially for small $k$, but
utility degrades rapidly if the number of quasi-identifiers is increased.
This is a fundamental drawback that affects any method that is based
on the partitioning of the data set in groups of indistinguishable
records. Even more dramatic is the effect of increasing the number
of quasi-identifiers on the utility. This issue is know as ``the
curse of dimensionality''~\cite{Aggarwal2005}. 

There is yet another serious concern on the disclosure limitation
provided by $k$-anonymity: the attack model considered 
is weak. $k$-Anonymity assumes that the data holder is capable of
discerning between quasi-identifier attributes and non-quasi-identifier
attributes; that is, the data holder is supposed to be
able to determine which attributes
may be available externally in a non-anonymous data set. It was already
recognized, when $k$-anonymity was first introduced~\cite{Samarati1998b},
that this is a quite stringent assumption. The proposed solution was
to rely on policies, laws, and contracts, but this is not feasible
if we aim at releasing the data
%JOSEPTHESIS. Afegit "openly"
openly.

For example, consider a data set $T$ that holds the attributes 
%JOSEPTHESIS. Capitalitzada inicial dels noms dels atributs, per coherencia
%amb mes amunt.
Zipcode, Gender, Age, Income, and Disease, where the Income and Disease 
attributes
hold confidential information. As Income and Disease are confidential,
they should not be available in an external non-anonymous data set;
therefore, by following the usual approach, we would take Zipcode, Gender,
and Age as the quasi-identifiers. However, even if not available in
an external non-anonymous data set, Income and Disease may be available
to an informed intruder, which could use that knowledge to improve
the accuracy of the re-identification. For instance, let us assume that Alice
knows that Bob is in the released table $T$. By using Zipcode, Gender,
and Age, Alice is able to determine a group of $k$ records that contains
Bob's data. Now, let us assume that, as Alice and Bob are friends, Alice
knows the value of the Disease attribute for Bob's record. By using
this knowledge, Alice can perform a more precise re-identification, 
thus learning more about Bob's income than initially intended. The
extreme case happens when nobody else in the group of $k$ individuals
shares Bob's disease; Alice is then able to determine the exact value
of Bob's income with total certainty.

An even more insidious intruder can be imagined. Imagine that Alice
does not know Bob's disease, but knows the disease of some of the other
individuals that share Bob's combination 
of quasi-identifier attribute values ---{\em e.g.} Alice works
in an hospital, and happens to meet those individuals. By using the
quasi-identifiers, Alice determines a set of $k$ records that must
contain Bob's data; by using her knowledge about the Disease
attribute, Alice is then able to perform a more precise re-identification
of Bob's record than initially intended. 

For $k$-anonymity to offer protection against intruders with confidential
information, we have to assume that all the attributes can be used
in the re-identification; in other words, all attributes are quasi-identifiers.
But we have already commented that increasing the number of quasi-identifiers
has a deep negative effect on the utility of the released data. We
will show that probabilistic $k$-anonymity offers improved data accuracy
in case of multiple quasi-identifier attributes; thus, probabilistic
$k$-anonymity is able to provide disclosure limitation against informed
intruders with reasonable data accuracy. The improved data quality
comes from the ability to use multiple partitions of the data set.

%JOSEPTHESIS. Canviat titol seccio pq era el mateix que el del capitol.
\section{The probabilistic $k$-anonymity model}

$k$-Anonymity guarantees that, for any combination of values of quasi-identifier
attributes in the published microdata set $T'(A_{1},\ldots,A_{n})$,
there are at least $k$ records sharing that combination of values.
Therefore, given an individual in an external non-anonymous data set,
the probability of performing the right linkage back to the corresponding
record in the published microdata set, and thus the probability of
learning its confidential attributes, is at most $1/k$. It is in
this sense that probabilistic $k$-anonymity is defined.

A similar relaxation of the notion of $k$-anonymity was presented
in~\cite{Zhang2007}, which partitioned the data set and applied
a permutation inside each of the partition components. This is the
same strategy that we will apply in Section~\ref{sec:data-swapping}
to achieve probabilistic $k$-anonymity. However, probabilistic $k$-anonymity
is a more general framework; it is not limited to permutations,
although permutations
are a convenient choice to simplify probability calculations.
Moreover, \cite{Zhang2007} did not address the issues described in
Section~\ref{sub:Limitations}, which probabilistic $k$-anonymity
does address.

\begin{defn}[Probabilistic $k$-anonymity]
Let $T'(A_{1},\ldots,A_{n})$ be a published data set generated from
an original data set $T(A_{1},\ldots,A_{n})$ using an anonymization
mechanism $M$. The data set $T'$ is said to satisfy probabilistic
$k$-anonymity if, for any non-anonymous external data set $E$, the
probability for an intruder $I$ knowing $T'$, $M$ and $E$ to correctly
link any record $x\in E$ and its corresponding record (if any) in
$T'$ is at most $1/k$. 
\end{defn}
Note than any method used to achieve $k$-anonymity also leads to
probabilistic $k$-anonymity. In this sense, it may be said that $k$-anonymity
provides a stronger guarantee. However, from the point of view of
the probability of re-identification, both provide the same level
of protection.
%JOSEPTHESIS. Reescrites frases de sota pq era fosca
Note that stating that $k$-anonymity is stronger
does not contradict the fact that a distinguishing feature
of probabilistic $k$-anonymity
is to protect against informed intruders knowing some confidential
attribute values. Indeed, $k$-anonymity can also provide such
protection, but it needs to take all attributes as quasi-identifiers.

The advantage of probabilistic $k$-anonymity in comparison to $k$-anonymity
is that, by relaxing the indistinguishability requirements
within groups of $k$ records, the range of eligible methods to enforce
probabilistic $k$-anonymity is wider, 
and therefore we may expect a reduction in the information
loss.

\begin{figure*}[ht]
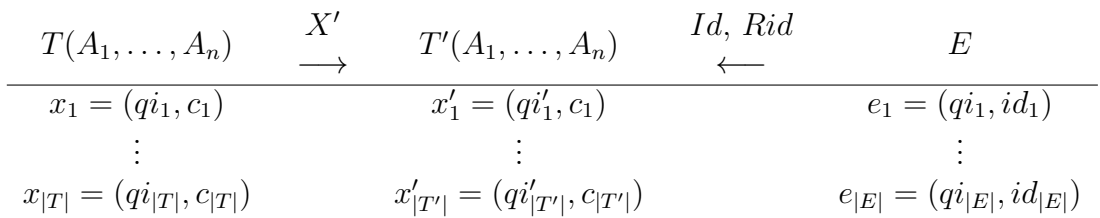

\begin{centering}
\begin{tabular}{ccccc}
$T(A_{1},\ldots,A_{n})$  & $\begin{array}{c}
X'\\
\longrightarrow
\end{array}$  & $T'(A_{1},\ldots,A_{n})$  & $\begin{array}{c}
Id,\, Rid\\
\longleftarrow
\end{array}$  & $E$\tabularnewline
\hline 
$x_{1}=(qi_{1},c_{1})$  &  & $x'_{1}=(qi'_{1},c_{1})$  &  & $e_{1}=(qi_{1},id_{1})$\tabularnewline
$\vdots$  &  & $\vdots$  &  & $\vdots$\tabularnewline
$x_{|T|}=(qi_{|T|},c_{|T|})$  &  & $x'_{|T'|}=(qi'_{|T'|},c_{|T'|})$  &  & $e_{|E|}=(qi_{|E|},id_{|E|})$\tabularnewline
\end{tabular}\caption{\label{not}Notations for probabilistic $k$-anonymity}

\par\end{centering}

\centering{} 
\end{figure*}

We start by analyzing probabilistic $k$-anonymity in presence of
non-informed intruders: confidential attributes are not available
externally, so they need not be considered as quasi-identifiers.
As probabilistic $k$-anonymity is expressed in terms of probability
of re-identification, it is natural to think of the released data
set $T'(A_{1},\ldots,A_{n})$ as a perturbation of $T(A_{1},\ldots,A_{n})$.
We use the notations in Figure~\ref{not}. The records $x_{i}$ in
$T$ have been split in two parts: the quasi-identifier attributes
$qi_{i}$, and the confidential attributes $c_{i}$. The records in
$T'$ are obtained by applying a random perturbation to the corresponding
record in $T$: $x_{i}'=X(x_{i})$. This perturbation affects only
the quasi-identifier attributes.

For the sake of simplicity, we assume that the released records in
$T'$ correspond to the first $|T'|$ records in $T$. If $|T|=|T'|$,
then all the records are released. The data set $E$ links the quasi-identifiers
$qi_{i}$ to the identifier $id_{i}$. The functions $Id$ and $Rid$
assign a record in $T'$ to the records in $E$, thus performing the
re-identification of the records in $T'$. The function $Rid$ is
the re-identification function used by the intruder, while $Id$ is
assumed to be the correct re-identification function. If there is
no record in $T'$ corresponding to the identity (\emph{i.e.} the
identified record) $e_{i}\in E$, then $Id$ returns the empty set.

The goal of probabilistic $k$-anonymity is to limit the probability
of performing the right linkage to at most $1/k$. With
the above notations this requirement can be stated as: for all $e_{i}\in E$
and for all $Rid()$ 
\[
P(Rid(e_{i})=Id(e_{i}))\le\frac{1}{k}
\]
 This formula captures the essence of the definition of probabilistic
$k$-anonymity: the probability of performing the right re-identification
must not be greater than $1/k$. However, by having the intruder use any
possible function $Rid()$ to perform the re-identification, the details
on how a rational intruder will proceed are hidden. Given a record
$e_{i}$, a rational intruder selects the record $x_{r}$ in $T'$
that has the greatest probability given the knowledge of $T'$, $E$
and $M$. The following examples will clarify how a rational intruder
acts. All examples assume that $E$ contains identities for all records
in $T$, which is the best possible knowledge that an intruder can
have.
\begin{example}
\label{ex1}Let us assume that $T$ contains two records, and that
only the first one is included in the anonymized data set. This situation
is shown in Table~\ref{figex1}. From the intruder's point of view,
$x_{1}'$ corresponds to either the individual in $e_{1}$ or $e_{2}$.
The best the intruder can do is to select the one that has the greatest
probability given the knowledge of $T'$, $E$, and the mechanism
$M$ used to generate $T'$ from $T$.

The probability that $x_{1}'$ corresponds to $e_{i}$ equals the
probability of obtaining $qi_{1}'$ from $qi_{i}^{E}$, over the total
probability of obtaining $qi'_{1}$ from any other record in $E$:
\[
P(X'(qi_{i}^{E})=qi_{1}'|T',E,M)
\]
\[
=\frac{P(X'(qi_{i}^{E})=qi_{1}'|M)}{\sum_{(qi_{j}^{E},id_{j})\in E}P(X'(qi_{j}^{E})=qi_{1}'|M)}
\]
 The intruder selects $e_{1}$ as his guess if $P(X'(qi_{1}^{E})=qi_{1}'|T',E,M)\ge P(X'(qi_{2}^{E})=qi_{1}'|T',E,M)$,
and $e_{2}$ otherwise. 
\end{example}
\begin{table}[ht]
\centering{}\caption{Data sets in Example~\ref{ex1}}
\label{figex1} %
\begin{tabular}{|c|c|c|}
\hline 
$T$  & $T'$  & $E$\tabularnewline
\hline 
$x_{1}=(qi_{1},c_{1})$  & $x_{1}'=(qi_{1}',c_{1})$  & $e_{1}=(qi_{1}^{E},id_{1})$\tabularnewline
$x_{2}=(qi_{2},c_{2})$  &  & $e_{2}=(qi_{2}^{E},id_{2})$\tabularnewline
\hline 
\end{tabular}
\end{table}

In the previous example we have seen that, given a record in $E$,
the linkage is performed to the record in $T'$ that has greatest
probability. If that probability is smaller than $1/k$, then the
probability of performing the right linkage will also be smaller than
$1/k$, as any other linkage will indeed result in a yet smaller probability.
Therefore, to achieve probabilistic $k$-anonymity, we must have for
all $qi^{E}\in E$ and all $qi'\in T'$ 
\begin{equation}
P(X'(qi^{E})=qi'|T',E,M)\le\frac{1}{k}\label{reid}
\end{equation}

\begin{example}
\label{ex2}In this example the amount of information in $T'$ has
been increased, by adding the record $x_{2}'$. The new data sets
are shown in Table~\ref{figex2}. As $E$ is assumed to exactly contain
the identities for the individuals in $T$, the intruder knows that
if one identity in $E$ corresponds to a specific record in $T'$,
the other identity in $E$ must correspond to the other record in
$T'$. This must be taken into account when computing the probabilities.
For example, the probability $P(X'(qi_{1}^{E})=qi_{1}'|T',E,M)$ that
$qi_{1}^{E}$ corresponds to $qi_{1}'$ equals $P(X'(qi_{1}^{E})=qi_{1}',X'(qi_{2}^{E})=qi_{2}'|T',E,M)$,
which can be computed as 
\[
\frac{P(X'(qi_{1}^{E})=qi_{1}',X'(qi_{2}^{E})=qi_{2}'|M)}{\sum_{\{i,j\}=\{1,2\}}P(X'(qi_{i}^{E})=qi_{1}',X'(qi_{j}^{E})=qi_{2}'|M)}
\]

\end{example}
\begin{table}[ht]
\caption{Data sets in Example~\ref{ex2}}

\centering{}\label{figex2} %
\begin{tabular}{|c|c|c|}
\hline 
$T$  & $T'$  & $E$\tabularnewline
\hline 
$x_{1}=(qi_{1},c_{1})$  & $x_{1}'=(qi_{1}',c_{1})$  & $e_{1}=(qi_{1}^{E},id_{1})$\tabularnewline
$x_{2}=(qi_{2},c_{2})$  & $x_{2}'=(qi_{2}',c_{2})$  & $e_{2}=(qi_{2}^{E},id_{2})$\tabularnewline
\hline 
\end{tabular}
\end{table}

The next example shows how the correct re-identification probability
would be computed in the most general case. 
\begin{example}
\label{ex3}Assume data sets $T$, $T'$ and $E$ as in Table~\ref{figex3}.
Contrary to Example~\ref{ex2}, fixing a correspondence between a
record in $T'$ and a record in $E$ does not completely fix the rest
of the correspondences. We still have to consider all the possible
combinations. The probability $P(X'(qi_{1}^{E})=qi_{1}'|T',E,M)$
that $qi_{1}^{E}$ corresponds to $qi_{1}'$ equals $\sum P(X'(qi_{1}^{E})=qi_{1}',X'(qi_{i_{2}}^{E})=qi_{j_{2}}',\ldots,X'(qi_{i_{M}}^{E})=qi_{j_{m}}'|T',E,M)$,
where $1<i_{2}<\ldots<i_{M}\le N$, and $\{j_{2},\cdots,j_{M}\}=\{2,\cdots,M\}$.
This probability can be computed as 
\[
\frac{\sum P(X'(qi_{1}^{E})=qi_{1}',X'(qi_{i_{2}})=qi_{j_{2}}'\ldots X'(qi_{i_{M}})=qi_{j_{M}}'|M)}{\sum P(X'(qi_{r_{1}})=qi_{s_{1}}',\ldots,X'(qi_{r_{M}})=qi_{s_{M}}'|M)}
\]
 where $1\le r_{2}<\ldots<r_{m}\le N$, and $\{s_{2},\cdots,s_{M}\}=\{2,\cdots,M\}$.
\end{example}
\begin{table}[ht]
\caption{Data sets in Example~\ref{ex3}}

\centering{}\label{figex3} %
\begin{tabular}{|c|c|c|}
\hline 
$T$  & $T'$  & $E$\tabularnewline
\hline 
$x_{1}=(qi_{1},c_{1})$  & $x_{1}'=(qi_{1}',c_{1})$  & $e_{1}=(qi_{1}^{E},id_{1})$\tabularnewline
$\vdots$  & $\vdots$  & $\vdots$\tabularnewline
$x_{N}=(qi_{N},c_{N})$  & $x_{M}'=(qi_{M}',c_{M})$  & $e_{N}=(qi_{N}^{E},id_{N})$\tabularnewline
\hline 
\end{tabular}
\end{table}

We have said that, to have probabilistic $k$-anonymity, Inequality
(\ref{reid}) must hold. However, the previous examples show that
the computation of the re-identification probability in Inequality
(\ref{reid}) for an arbitrary mechanism $M$ may be complex. In the
following section, we propose to use data swapping as $M$, which
has the advantage of making the computation of the re-identification
probability very simple.

\section{Probabilistic $k$-anonymity via microaggregation and swapping\label{sec:data-swapping}}

The proposed method consists of two main steps: (i) partition the
records in $T$ into groups of size $k$ and (ii) apply a permutation
to the quasi-identifier attributes within each of the groups. This
method can accommodate many variations, depending on how the partition
step (i) is done.

Note that, as the same permutation is applied to all quasi-identifier
attributes, the identity of the individual is not masked. However,
the quasi-identifier attributes are dissociated from the confidential
attributes, and therefore intruders can only guess the actual values
corresponding to a confidential attribute with probability at most
$1/k$. If leaking the mere presence of an individual in the data
set is itself disclosive, then some of the quasi-identifier attributes
must be considered confidential, which takes us to the informed intruder
scenario.

We introduce first the method that offers protection against uninformed
intruders. In other words, we assume that the attributes may be quasi-identifier
attributes or confidential attributes, but not both. Later we extend
our proposal to protect against informed intruders; assuming that
confidential attributes can be employed in the re-identification.

\subsection{Uninformed intruders\label{MDAV-SWAP}}

In presence of uninformed intruders there is a clear separation between
quasi-identifier and confidential attributes. Assuming that all records
in $T$ are masked and included in $T'$, we have the data sets in
Table~\ref{figex4}.

\begin{table}[ht]
\caption{Data sets in the uninformed intruder scenario}

\centering{}\label{figex4} %
\begin{tabular}{|c|c|c|}
\hline 
$T$  & $T'$  & $E$\tabularnewline
\hline 
$x_{1}=(qi_{1},c_{1})$  & $x_{1}'=(qi_{1}',c_{1})$  & $e_{1}=(qi_{1}^{E},id_{1})$\tabularnewline
$\vdots$  & $\vdots$  & $\vdots$\tabularnewline
$x_{N}=(qi_{N},c_{N})$  & $x_{N}'=(qi_{N}',c_{N})$  & $e_{N}=(qi_{N}^{E},id_{N})$\tabularnewline
\hline 
\end{tabular}
\end{table}

Selecting a random sample from $T$ to create $T'$ is a sensible
approach, as it introduces uncertainty on whether an individual whose
data was collected has been included in the published data set. However,
by assuming that all the individuals in $T$ have been included in
$T'$, we provide the intruder with the best information available.
Therefore, if we achieve probabilistic $k$-anonymity in this scenario,
then we will also achieve it in a scenario where a random sample from
$T$ is selected.

It is easy to see that the partition and swapping method described
above satisfies probabilistic $k$-anonymity because 
\[
P(X'(qi_{i}^{E})=qi|T',E,M)=\begin{cases}
\nicefrac{1}{k} & \mbox{if }qi\in G(id(qi_{i}^{E}))\\
0 & \mbox{otherwise}
\end{cases}
\]
where $G(id(qi_{i}^{E}))$ is the group of records of $T$ that contains
the record corresponding to $qi_{i}^{E}$.

The key point in the method is the partition step. A first approach
is to partition the data set $T$ into random groups. This leads indeed
not only to probabilistic $k$-anonymity, but to probabilistic $|T|$-anonymity,
as the quasi-identifiers of a record can be swapped with the quasi-identifiers
of any other record. Moreover, the risk of attribute disclosure is
small. However, the impact on data quality can be substantial, because
very different records may be swapped.

To achieve better data quality, the groups of records must be selected
to be as homogeneous as possible, although this increases the risk
of attribute disclosure. Our proposal is to generate the groups using
a microaggregation algorithm (\cite{Domi02,Domingo2005}) over the
quasi-identifier attributes. Microaggregation is a cardinality-constrained
form of clustering in which the number of clusters (groups) is not
fixed beforehand but the minimum cardinality of each group is required
to be $k$. In the section devoted to informed intruders, there are
some experimental results obtained by using the MDAV microaggregation
algorithm (\cite{Domingo2005,uARGUS}); MDAV attempts to maximize
intra-group homogeneity using the least squares criterion and it yields
groups with size $k$, except perhaps one group which has size between
$k$ and $2k-1$.

Other options in the selection of the groups of records are possible.
For example, a variant of MDAV, known as V-MDAV (\cite{Solanas2006,Solanas2010}),
may be used that performs clustering in groups of variable size and
that is known to reduce the information loss in clustered data sets.
The $\mu$-Approx microaggregation heuristic~\cite{Domingo2008b}
offers also variable-sized groups and is proven to yield a clustering
within a bound of the optimal clustering. Another possibility is to
select the groups of records in such a way that the risk of attribute
disclosure is reduced, by ensuring a certain diversity in the values
of the confidential attributes within each group.

\subsection{MDAV microaggregation for informed intruders\label{IR-SWAP}}

Consider a data set with attributes: $A_{0},A_{1},\ldots,A_{n}$,
with $A_{0}$ being a non-confidential quasi-identifier attribute,
and $A_{1},\ldots,A_{n}$ being confidential quasi-identifier attributes.
We assume the presence of several informed intruders, each of them
having knowledge of all confidential attributes except by one, whose
value wants to determine. To be more specific, intruder $I_{i}$,
for $i=1$ to $n$, is assumed to know the values of all attributes
except $A_{i}$. This is not the most stringent scenario. In the worst
case scenario, intruder $I_{i}$ would also have knowledge some of
the values of attribute $A_{i}$. However, we judge that the proposed
intruders are already strong enough. Note that the stronger the intruders,
the lower the data utility of the protected data set.

To achieve the desired level of protection against all informed intruders,
we apply the method presented for uninformed intruders once for each
informed intruder, in order to dissociate the value of the confidential
attribute unknown to this intruder from the rest of attributes. For
each informed intruder, we use the quasi-identifiers and the confidential
attribute shown in Table~\ref{tabin}.

\begin{table}[ht]
\caption{Quasi-identifiers and confidential attribute for each informed intruder}

\centering{}\label{tabin} %
\begin{tabular}{|c|c|c|}
\hline 
Intruder  & Quasi-identifier attributes  & Confidential attribute\tabularnewline
\hline 
$I_{1}$  & $A_{0},A_{2},\ldots,A_{n}$  & $A_{1}$\tabularnewline
\hline 
$I_{2}$  & $A_{0},A_{1},A_{3}\ldots,A_{n}$  & $A_{2}$\tabularnewline
\hline 
$\vdots$  & $\vdots$  & $\vdots$\tabularnewline
\hline 
$I_{n}$  & $A_{0},A_{1},\ldots,A_{n-1}$  & $A_{n}$\tabularnewline
\hline 
\end{tabular}
\end{table}

One difficulty that we face with the previous approach is that dealing
with informed intruders in sequence requires applying different permutations
over different but overlapping sets of attributes of the original
data set $T$ (the quasi-identifiers for each informed intruder).
To overcome this difficulty we take the reverse approach: instead
of performing the permutation over the quasi-identifier attributes,
we apply the reverse permutation to the single confidential attribute
unknown to the current intruder. In this way, each permutation acts
over a different attribute and there are no overlaps.

\subsection{Individual ranking microaggregation for informed intruders}

The above observation regarding the application of the inverse permutation
on the single unknown confidential attribute leads to single-attribute
microaggregation, also called individual ranking microaggregation.
Instead of multivariate microaggregation of quasi-identifier attributes,
we do individual ranking microaggregation on the unknown confidential
attribute. By doing so, the data quality of the published data set
is increased, as the confidential
%JOSEPTHESIS. IMPORTANT. Aqui es parla d'un sol atribut, per tant
%he posat "attribute values" en compes de "attributes"
%attributes 
attribute values
are only swapped across
records with similar values (see~\cite{Domingo2001} on the low information
loss caused by individual ranking microaggregation). It may be argued
that there is an increase in the attribute disclosure risk; however,
this increase can be mitigated by increasing $k$.

One extra benefit of this approach is that, since microaggregation
is performed on a single attribute, there is no need to normalize
attributes as required by multivariate microaggregation to avoid scale
problems.

\section{Experimental results\label{experimental}}

We have implemented the following three methods: 
\begin{itemize}
\item \emph{MDAV-ID}. MDAV microaggregation is run on the quasi-identifier
attributes to partition the data set in groups of size $k$ records.
Within each group, quasi-identifiers are replaced by the group centroid
in order to have identical quasi-identifiers for all records in the
group. This is the procedure suggested in~\cite{Domingo2005} and
it achieves the standard notion of $k$-anonymity proposed in~\cite{Samarati1998b}
in the sense that all quasi-identifiers within a group are made indistinguishable. 
\item \emph{MDAV-SWAP}. This is the method described in Section~\ref{MDAV-SWAP}
for probabilistic $k$-anonymity: MDAV microaggregation on the quasi-identifier
attributes plus swapping within groups. 
\item \emph{IR-SWAP}. This is the method described in Section~\ref{IR-SWAP}
above for probabilistic $k$-anonymity: individual ranking microaggregation
on each confidential attribute plus swapping within groups. 
\end{itemize}
The above methods have been tested with the ``Census'' and ``EIA''
reference data sets proposed in the European project CASC~\cite{refcasc}.

\subsection{``Census'' data set}

The ``Census'' data set contains 1080 records with 13 continuous
attributes. Following the approach in~\cite{Domingo2005} we consider
the first 6 attributes in ``Census'' to be non-confidential quasi-identifiers,
and the last 7 attributes to be confidential.

To assess the data quality, we evaluate the correlations from all
attributes to the confidential attributes. As the proposed methods
for probabilistic $k$-anonymity do not modify non-confidential attributes,
correlations between the latter have the same value as in the original
data set. Means and variances also remain unchanged for all attributes,
because swapping does not change the values taken by each original
attribute.

As an example, we computed the correlations for: i) the original data
set (see Table~\ref{table:corr_original}); ii) the $k$-anonymous
data set resulting from MDAV-ID with $k=12$ (see Table~\ref{table:corr_k-anonymity});
iii) the probabilistically $k$-anonymous data set resulting from
MDAV-SWAP with $k=12$ (see Table~\ref{table:correlation_qi_12});
and the probabilistically $k$-anonymous data set resulting from IR-SWAP
with $k=12$ (see Table~\ref{table:correlation_conf_12}). The values
in these tables must be taken with caution: they are results from
a single execution of the algorithms, and may change in another execution.
Despite these words of caution, we observe that MDAV-SWAP and IR-SWAP
result in correlation values closer to the original data set than
those obtained with MDAV-SWAP. The results of IR-SWAP are closest
to the original correlations.

\begin{table}[ht]
\noindent \begin{centering}
\caption{\label{table:corr_original} Correlations to the confidential attributes in the original ``Census''
data set}

\par\end{centering}

\begin{centering}

\par\end{centering}

\centering{}%
\begin{tabular}{cccccccc}
 & $A_{7}$  & $A_{8}$  & $A_{9}$  & $A_{10}$  & $A_{11}$  & $A_{12}$  & $A_{13}$\tabularnewline
\hline 
$A_{1}$  & .0038  & -.027  & -.024  & .031  & .032  & .039  & .036\tabularnewline
\hline 
$A_{2}$  & .98  & .14  & .2  & .73  & .71  & .72  & .7\tabularnewline
\hline 
$A_{3}$  & .44  & -.12  & -.058  & .56  & .55  & .56  & .55\tabularnewline
\hline 
$A_{4}$  & .98  & .2  & .28  & .73  & .69  & .71  & .69\tabularnewline
\hline 
$A_{5}$  & .78  & .27  & .27  & .9  & .85  & .88  & .86\tabularnewline
\hline 
$A_{6}$  & .79  & .13  & .22  & .59  & .57  & .57  & .56\tabularnewline
\hline 
$A_{7}$  & 1  & .17  & .23  & .72  & .7  & .71  & .69\tabularnewline
\hline 
$A_{8}$  &  & 1  & .45  & -.17  & -.19  & -.17  & -.17\tabularnewline
\hline 
$A_{9}$  &  &  & 1  & .072  & .061  & .70  & .075\tabularnewline
\hline 
$A_{10}$  &  &  &  & 1  & .96  & .98  & .96\tabularnewline
\hline 
$A_{11}$  &  &  &  &  & 1  & .91  & .89\tabularnewline
\hline 
$A_{12}$  &  &  &  &  &  & 1  & .97\tabularnewline
\hline 
$A_{13}$  &  &  &  &  &  &  & 1\tabularnewline
\hline 
\end{tabular}
\end{table}

\begin{table}[ht]
\noindent \begin{centering}
\caption{\label{table:corr_k-anonymity}Correlations to the confidential attributes in the data set obtained
using MDAV-ID with $k=12$ (``Census'' data set)}

\par\end{centering}

\begin{centering}
 
\par\end{centering}

\centering{}%
\begin{tabular}{cccccccc}
 & $A_{7}$  & $A_{8}$  & $A_{9}$  & $A_{10}$  & $A_{11}$  & $A_{12}$  & $A_{13}$\tabularnewline
\hline 
$A_{1}$  & -.0035  & -.035  & -.055  & .034  & .035  & .042  & .04\tabularnewline
\hline 
$A_{2}$  & 1  & .18  & .39  & .8  & .81  & .8  & .78\tabularnewline
\hline 
$A_{3}$  & .79  & -.17  & .084  & .89  & .9  & .89  & .89\tabularnewline
\hline 
$A_{4}$  & .99  & .23  & .45  & .82  & .8  & .81  & .8\tabularnewline
\hline 
$A_{5}$  & .86  & .18  & .4  & .94  & .92  & .94  & .93\tabularnewline
\hline 
$A_{6}$  & .95  & .2  & .43  & .77  & .76  & .76  & .75\tabularnewline
\hline 
$A_{7}$  & 1  & .2  & .41  & .8  & .8  & .79  & .78\tabularnewline
\hline 
$A_{8}$  &  & 1  & .68  & -.15  & -.18  & -.15  & -.16\tabularnewline
\hline 
$A_{9}$  &  &  & 1  & .18  & .14  & .17  & .16\tabularnewline
\hline 
$A_{10}$  &  &  &  & 1  & .98  & 1  & .99\tabularnewline
\hline 
$A_{11}$  &  &  &  &  & 1  & .97  & .97\tabularnewline
\hline 
$A_{12}$  &  &  &  &  &  & 1  & 1\tabularnewline
\hline 
$A_{13}$  &  &  &  &  &  &  & 1\tabularnewline
\hline 
\end{tabular}
\end{table}

\begin{table}[ht]
\noindent \begin{centering}
\caption{\label{table:correlation_qi_12}Correlations to the confidential attributes in the probabilistically
$k$-anonymous data set obtained using MDAV-SWAP with $k=12$ (``Census''
data set)}

\par\end{centering}

\centering{} %
\begin{tabular}{cccccccc}
 & $A_{7}$  & $A_{8}$  & $A_{9}$  & $A_{10}$  & $A_{11}$  & $A_{12}$  & $A_{13}$\tabularnewline
\hline 
$A_{1}$  & -.0011  & -.028  & -.034  & .032  & .033  & .036  & .032\tabularnewline
\hline 
$A_{2}$  & .81  & .089  & .17  & .69  & .67  & .69  & .67\tabularnewline
\hline 
$A_{3}$  & .42  & -.020  & .091  & .48  & .47  & .48  & .43\tabularnewline
\hline 
$A_{4}$  & .77  & .093  & .18  & .68  & .65  & .68  & .67\tabularnewline
\hline 
$A_{5}$  & .72  & .086  & .16  & .80  & .76  & .79  & .77\tabularnewline
\hline 
$A_{6}$  & .64  & .086  & .14  & .54  & .52  & .54  & .52\tabularnewline
\hline 
$A_{7}$  & 1  & .12  & .17  & .69  & .67  & .66  & .65\tabularnewline
\hline 
$A_{8}$  &  & 1  & .19  & -.013  & -.022  & -.042  & -.011\tabularnewline
\hline 
$A_{9}$  &  &  & 1  & .11  & .10  & .10  & .13\tabularnewline
\hline 
$A_{10}$  &  &  &  & 1  & .76  & .81  & .87\tabularnewline
\hline 
$A_{11}$  &  &  &  &  & 1  & .72  & .70\tabularnewline
\hline 
$A_{12}$  &  &  &  &  &  & 1  & .77\tabularnewline
\hline 
$A_{13}$  &  &  &  &  &  &  & 1\tabularnewline
\hline 
\end{tabular}
\end{table}

\begin{table}[ht]
\noindent \begin{centering}
\caption{\label{table:correlation_conf_12}Correlations to the confidential attributes in the probabilistically
$k$-anonymous data set obtained using IR-SWAP with $k=12$ (``Census''
data set)}

\par\end{centering}

\centering{} %
\begin{tabular}{cccccccc}
 & $A_{7}$  & $A_{8}$  & $A_{9}$  & $A_{10}$  & $A_{11}$  & $A_{12}$  & $A_{13}$\tabularnewline
\hline 
$A_{1}$  & .0041  & -.017  & -.018  & .031  & .038  & .039  & .038\tabularnewline
\hline 
$A_{2}$  & .98  & .13  & .20  & .73  & .71  & .72  & .70\tabularnewline
\hline 
$A_{3}$  & .44  & -.12  & -.041  & .56  & .55  & .56  & .55\tabularnewline
\hline 
$A_{4}$  & .98  & .19  & .27  & .73  & .68  & .71  & .69\tabularnewline
\hline 
$A_{5}$  & .78  & .26  & .26  & .90  & .85  & .88  & .86\tabularnewline
\hline 
$A_{6}$  & .79  & -.12  & .21  & .59  & .57  & .57  & .56\tabularnewline
\hline 
$A_{7}$  & 1  & .16  & .23  & .72  & .69  & .71  & .69\tabularnewline
\hline 
$A_{8}$  &  & 1  & .42  & -.17  & -.19  & -.17  & -.17\tabularnewline
\hline 
$A_{9}$  &  &  & 1  & .077  & .063  & .075  & .080\tabularnewline
\hline 
$A_{10}$  &  &  &  & 1  & .95  & .98  & .96\tabularnewline
\hline 
$A_{11}$  &  &  &  &  & 1  & .91  & .89\tabularnewline
\hline 
$A_{12}$  &  &  &  &  &  & 1  & .97\tabularnewline
\hline 
$A_{13}$  &  &  &  &  &  &  & 1\tabularnewline
\hline 
\end{tabular}
\end{table}

To obtain results with more statistical significance, we ran MDAV-ID,
MDAV-SWAP and IR-SWAP 100 times. In Table~\ref{table:abs_diff_variability}
we report the mean and the standard deviation of the absolute value
of the difference between the correlations to the confidential attributes
in the anonymized data set and the original data set. The better the
data quality of the anonymized data set, the closer the mean and standard
deviation to zero. A value close to one for the mean means that most
of the dependencies between attributes have been lost.

Table~\ref{table:abs_diff_variability} confirms what had been observed
from the previous tables based on a single run: MDAV-SWAP offers better
quality than MDAV-ID, but IR-SWAP clearly offers the best quality
among the three methods compared. For example, for the data set tried,
similar data quality is obtained using MDAV-ID with $k=11$, MDAV-SWAP
with $k=25$ and IR-SWAP with $k=300$. Hence, probabilistic $k$-anonymity
turns out to be much more information-preserving than $k$-anonymity.

\begin{table}[ht]
\begin{centering}
\caption{\label{table:abs_diff_variability} Mean and standard deviation of the absolute value of the difference
between the correlations in the original and the anonymized data sets
(``Census'' data set)}

\par\end{centering}

\begin{centering}

\par\end{centering}

\centering{}%
\begin{tabular}{|c|c|c|c|c|c|c|}
\multicolumn{1}{c}{} & \multicolumn{2}{c}{MDAV-ID} & \multicolumn{2}{c}{MDAV-SWAP} & \multicolumn{2}{c}{IR-SWAP}\tabularnewline
\hline 
k  & mean  & st.dev.  & mean  & st.dev.  & mean  & st.dev.\tabularnewline
\hline 
5  & .055  & .064  & .037  & .045  & .0021  & .0041\tabularnewline
\hline 
7  & .062  & .071  & .048  & .056  & .0022  & .0039\tabularnewline
\hline 
9  & .069  & .078  & .055  & .064  & .0028  & .0049\tabularnewline
\hline 
11  & .078  & .085  & .061  & .070  & .0038  & .0068\tabularnewline
\hline 
25  & .11  & .11  & .091  & .093  & .0061  & .012\tabularnewline
\hline 
50  & .14  & .13  & .13  & .12  & .010  & .020\tabularnewline
\hline 
100  & .17  & .15  & .19  & .17  & .020  & .030\tabularnewline
\hline 
200  & .29  & .27  & .31  & .28  & .044  & .047\tabularnewline
\hline 
300  & .38  & .39  & .37  & .34  & .087  & .071\tabularnewline
\hline 
\end{tabular}
\end{table}

\subsection{``EIA'' data set}

%JOSEPTHESIS. IMPORTANT. Aqui no hi ha space constraints. Ho trec.
%Ho justifico dient que la interpretacio es paral·lela al cas CENSUS,
%pero si volguessis t'hi podries esplaiar mes. No se si cal, pero.
%Due to space constraints, 
Empirical results for the ``EIA'' data
set are more succinctly presented, because their
interpretation is parallel to the one of the ``Census'' results. 
Table~\ref{table:abs_diff_variability_EIA}
reports an evaluation for the ``EIA'' data set analogous to the
one reported in Table~\ref{table:abs_diff_variability} for the ``Census''
data set. Like before, we observe that MDAV-SWAP performs better than
MDAV-ID, but IR-SWAP is clearly the best of the three methods.

\begin{table}[ht]
\begin{centering}
\caption{\label{table:abs_diff_variability_EIA} Mean and standard deviation of the absolute value of the difference
between the correlations in the original and the anonymized data sets
(``EIA'' data set)}

\par\end{centering}

\begin{centering}

\par\end{centering}

\centering{}%
\begin{tabular}{|c|c|c|c|c|c|c|}
\multicolumn{1}{c}{} & \multicolumn{2}{c}{MDAV-ID} & \multicolumn{2}{c}{MDAV-SWAP} & \multicolumn{2}{c}{IR-SWAP}\tabularnewline
\hline 
k  & mean  & st.dev.  & mean  & st.dev.  & mean  & st.dev.\tabularnewline
\hline 
5  & .018  & .017  & .017  & .035  & .00064  & .00075\tabularnewline
\hline 
7  & .02  & .017  & .024  & .05  & .0012  & .0018\tabularnewline
\hline 
9  & .034  & .031  & .028  & .053  & .0015  & .0018\tabularnewline
\hline 
11  & .039  & .036  & .029  & .052  & .0019  & .0023\tabularnewline
\hline 
25  & .085  & .078  & .043  & .081  & .0063  & .0072\tabularnewline
\hline 
50  & .13  & .12  & .053  & .089  & .011  & .011\tabularnewline
\hline 
100  & .15  & .14  & .058  & .092  & .029  & .037\tabularnewline
\hline 
200  & .19  & .18  & .09  & .11  & .093  & .074\tabularnewline
\hline 
300  & .2  & .18  & .12  & .13  & .14  & .091\tabularnewline
\hline 
\end{tabular}
\end{table}

\section{Conclusions\label{conc}}

$k$-Anonymity is a broadly used privacy property that focuses on
protection against identity disclosure. In a $k$-anonymous data set,
for each record there are at least $k-1$ other records sharing the
same values for all the quasi-identifier attributes. Hence, enforcing
$k$-anonymity implies variability loss and therefore, quality loss.
This is especially serious in a scenario with informed intruders
who know the values of some confidential attributes: the confidential
attributes known by the informed intruder can be viewed as additional
quasi-identifiers. The more quasi-identifier attributes, the more
data quality loss is caused by $k$-anonymity.

To mitigate the above problem, we have introduced the notion of probabilistic
$k$-anonymity. Like standard $k$-anonymity, probabilistic $k$-anonymity
guarantees that the probability of correct re-identification is  $1/k$ at
most, but without explicitly requiring that the quasi-identifier
attributes take identical values within each group of $k$ records.
We have presented two computational methods to reach probabilistic
$k$-anonymity, based on microaggregation and swapping. Experimental
work shows that, for a fixed re-identification probability $1/k$,
the new methods are much more quality-preserving than standard $k$-anonymity
enforcement.

The method based on individual ranking microaggregation is particularly
interesting. It builds on the fact that applying a permutation over
the quasi-identifiers and leaving the confidential attributes unmodified
is equivalent to applying the opposite permutation to the 
confidential attributes
and leaving the quasi-identifiers unmodified. Switching the focus
to confidential attributes has several important benefits. First,
it prevents
informed intruders from using confidential information to improve
the re-identification; the value of each confidential attribute must
be dissociated from all the other attributes. This becomes possible
after switching the focus to confidential attributes because the permutation
only affects the attribute being protected. Second, it allows using 
a different partition for  
each confidential attribute, thereby boosting accuracy and utility. 
Obviously, the reduction in
the diversity in the confidential attribute increases the chances
of attribute disclosure. Selecting a non-optimal partition (as 
done in $k$-anonymity) does not seem to be the proper approach. To
increase the variability we advocate to increase $k$, or enforce
additional criteria such as $l$-diversity or $t$-closeness. Third, 
some attributes are usually more disclosive than others. The ability
to generate a different partition for each confidential attribute
offers the possibility of selecting a different level of disclosure
limitation (the $k$ parameter) for each of the confidential attributes.

While $k$-anonymity is, in principle, only concerned with the cloaking
of individuals within groups of $k$ or more individuals (thus preventing
re-identification), the level of disclosure limitation for the confidential
attribute derives from the variability within the groups of indistinguishable
records. The level of variability is not determined by the parameter
$k$ selected; it may even happen that all the records in a group
share the same value for a confidential attribute. The criterion to
generate the partition in $k$-anonymity is based on the values of
the quasi-identifier attributes, but there is no way to determine
the optimal partition for an arbitrary user: a user
may be very interested in preserving one specific attribute that may
be meaningless for another user. When using individual ranking for
probabilistic $k$-anonymity, we advocate for the best partition for
each confidential attribute (grouping records according to the value
of the confidential attribute), even if that means that we get the
least variability (the least protection). It is obvious that the parameter
$k$ must be much larger that in regular $k$-anonymity to prevent
disclosing confidential information; however, this approach has a
great advantage: the value of $k$ is related to the level of confidentiality.

Future research will combine probabilistic $k$-anonymity with other
properties like $l$-diversity or $t$-closeness in view of reducing
the quality loss incurred to protect against attribute disclosure.
As we deal with each confidential attribute separately, the 
enforcement of additional properties
($l$-diversity and $t$-closeness) is relatively easy to achieve.

\lhead[\chaptername~\thechapter]{\rightmark}

\rhead[\leftmark]{}

\lfoot[\thepage]{}

\cfoot{}

\rfoot[]{\thepage}

\chapter{Optimal data-independent noise for $\varepsilon$-differential privacy}

To maximize the utility of the results provided by $\varepsilon$-differential
privacy, the magnitude of the random noise should be as small as possible.
Some criticisms have appeared to the data utility that results from
using Laplace noise addition as the mechanism to obtain differential
privacy~\cite{Muralidhar2010,Sarathy2010,Sarathy2011}. The question
of the optimality of Laplace noise addition arises: is it possible
to achieve $\varepsilon$-differential privacy with substantially
more data utility using other noise distributions?

Our goal is to determine the optimal distribution to achieve differential
privacy with data-independent random noise. We will limit our discussion
to absolutely continuous random noise distributions, as they provide
the greatest level of generality. Similar results can also be obtained
for discrete random noise; however, this type of noise is only applicable
in very specific circumstances.

By using an optimal noise, the distortion required to achieve a certain
level $\varepsilon$ of differential privacy is minimized. This may
lead to under-protection if the disclosure limitation offered by $\varepsilon$-differential
privacy is measured by how much noise is added to the data (as in
traditional noise addition for disclosure control, see~\cite{hundepool2012}),
rather than by the theoretical guarantee offered by differential privacy
in terms of $\varepsilon$. In what follows, we assume that a protection
level $\varepsilon$ is chosen such that the theoretical guarantee
provides sufficient protection.

We propose a general optimality criterion based on the concentration
of the probability mass of the noise distribution around zero, and
we show that any noise optimal under this criterion must be optimal
under any other sensible criterion. We also show that the Laplace
distribution, commonly used for noise in $\varepsilon$-differential
privacy, is not optimal, and we build the optimal data-independent
noise distribution. We compare the Laplace and the optimal data-independent
noise distributions. For univariate query functions, both introduce
a similar level of distortion; for multivariate query functions, optimal
data-independent noise offers responses with substantially better
data quality.

%JOSEPTHESIS. Afegit
%The contents of this chapter are undergoing
%the second review for publication~\cite{infsciences}.
The contents of this chapter have been accepted for publication in~\cite{infsciences}.

\section{Optimal data-independent noise\label{sec:Optimal-Random-Noise}}

To improve the utility of the outputs provided by an $\varepsilon$-differentially
private access mechanism, the random noise must be adjusted to minimize
the distortion to the real query result. When using Laplace noise,
the scale parameter is set to $\Delta f/\varepsilon$ (see Section~\ref{back_dp});
this yields a noise distribution optimal within the class of Laplacian
noises, because a smaller scale parameter would no longer satisfy
$\varepsilon$-differential privacy. However, the question of the
optimality of the Laplace distribution itself within all possible
noise distributions has not been addressed in the literature: can
we improve the utility of the output by using a different noise distribution?
The answer to this question is deferred until Section~\ref{sec:Laplace_not_optimal}.
In this section we tackle a more fundamental issue: the concept of
optimality for a random noise.

Deciding which among a pair of random noises, $Y_{1}$ and $Y_{2}$,
leads to greater utility is a question that may depend on the users'
preferences. The goal of this section is to come up with an optimality
notion that is independent from the users' preferences: if $Y_{1}$
is better than $Y_{2}$ according to our criterion, any rational user
must prefer $Y_{1}$ to $Y_{2}$. Later, in Section~\ref{sec:Optimal_univariate},
we will determine the form of all optimal random noises that provide
$\varepsilon$-differential privacy to a given query function. 

Let $Y_{1}$ and $Y_{2}$ be two random noise distributions. If $Y_{1}$
can be constructed from $Y_{2}$ by moving some of the probability
mass towards zero (but without going beyond zero), then $Y_{1}$ must
always be preferred to $Y_{2}$. The reason is that the probability
mass of $Y_{1}$ is more concentrated around zero, and thus the distortion
introduced by $Y_{1}$ is smaller. A rational user always prefers
less distortion and, therefore, prefers $Y_{1}$ to $Y_{2}$.

We use the notation $\langle0,\alpha\rangle$, where $\alpha\in\mathbb{R}$,
to denote the interval $[0,\alpha]$ when $\alpha\ge0$, and the interval
$[\alpha,0]$ when $\alpha\le0$. If $Y_{1}$ can be constructed from
$Y_{2}$ by moving some of the probability mass towards zero, it must
be $P(Y_{1}\in\langle0,\alpha\rangle)\ge P(Y_{2}\in\langle0,\alpha\rangle)$
for any $\alpha\in\mathbb{R}$: otherwise, some of the probability
mass that $Y_{2}$ had in $\langle0,\alpha\rangle$ would have been
moved outside $\langle0,\alpha\rangle$, which is not possible. This
leads to the following definition.
\begin{defn}
\label{def:smaller_noise}Let $Y_{1}$ and $Y_{2}$ be two random
noise distributions on $\mathbb{R}$. We say that $Y_{1}$ is smaller
(or better) than $Y_{2}$, denoted by $Y_{1}\le Y_{2}$, if $P(Y_{1}\in\langle0,\alpha\rangle)\ge P(Y_{2}\in\langle0,\alpha\rangle)$
for any $\alpha\in\mathbb{R}$. We say that $Y_{1}$ is strictly smaller
than $Y_{2}$, denoted by $Y_{1}<Y_{2}$, if some of the previous
inequalities are strict. 
\end{defn}
For $\alpha=(\alpha_{1},\ldots,\alpha_{d})\in\mathbb{R}^{d}$, we
use $\langle0,\alpha\rangle$ to denote the set $\langle0,\alpha_{1}\rangle\times\ldots\times\langle0,\alpha_{d}\rangle$.
Consider a set $S\subset\mathbb{R}^{d}$ such that for every point
$x\in S$ we have $\langle0,x\rangle\subset S$, and a pair of random
noises $Y_{1}=(Y_{1}^{1},\ldots,Y_{d}^{1})$ and $Y_{2}=(Y_{1}^{2},\ldots,Y_{d}^{2})$
such that $Y_{1}$ can be constructed from $Y_{2}$ by moving some
probability mass towards zero. It is obvious that we must have $P(Y_{1}\in S)\ge P(Y_{2}\in S)$:
if that was not the case, it would mean that some of the probability
mass that $Y_{2}$ had in $S$ has been moved outside $S$, which
is not possible because of the form of $S$. This leads to the definition
for the multivariate case. 
\begin{defn}
\label{def:smaller_multivariate_noise} Let $Y_{1}$ and $Y_{2}$
be two random noise distributions on $\mathbb{R}^{d}$. We say that
$Y_{1}$ is smaller (or better) than $Y_{2}$, denoted by $Y_{1}\le Y_{2}$,
if $P(Y_{1}\in S)\ge P(Y_{2}\in S)$ for every set $S\subset\mathbb{R}^{d}$
such that for any $x\in S$ we have $\langle0,x\rangle\subset S$.
We say that $Y_{1}$ is strictly smaller than $Y_{2}$, denoted by
$Y_{1}<Y_{2}$, if some of the previous inequalities are strict. 
\end{defn}
Definitions~\ref{def:smaller_noise} and~\ref{def:smaller_multivariate_noise}
induce an order relationship between random noises. We use that order
relationship to define the concept of optimal random noise. 
\begin{defn}
\label{def:optimal_noise} A random noise distribution $Y_{1}$ is
optimal within a class $\mathcal{C}$ of random noise distributions
if $Y_{1}$ is minimal within $\mathcal{C}$; in other words, there
is no other random $Y_{2}\in\mathcal{C}$ such that $Y_{2}<Y_{1}$. 
\end{defn}
As stated in the previous definition, the concept of optimality is
relative to a specific class $\mathcal{C}$ of random noise distributions.
In Section~\ref{sec:Optimal_univariate} we will determine the form
of all optimal symmetric random noise distributions that provide $\varepsilon$-differential
privacy to a specific query function $f$; to do so, we will take
$\mathcal{C}$ to be the class of all symmetric random noise distributions
that provide $\varepsilon$-differential privacy for $f$.

\section{Characterization of differential privacy in terms of the density
function\label{sec:Characterization}}

To build the optimal data-independent random noise distribution satisfying
differential privacy, we will have to analyze the properties that
such a distribution must satisfy. The first step to perform this analysis
is to express the condition in the definition of differential privacy
in terms of the random noise. Assuming a data-independent random noise
$\N$, if we let $\kappa=f+\N$ then Inequality~(\ref{eq:dp_dwork})
becomes 
\[
P\left(\N\in S-f(D)\right)\leq e^{\varepsilon}P\left(\N\in S-f(D')\right)
\]
As this inequality holds for all $S$, we can think of $S$ as being
of the form $S+f(D)$. 
\begin{equation}
P\left(\N\in S\right)\leq e^{\varepsilon}P\left(\N\in S+(f(D)-f(D'))\right)\label{eq:2}
\end{equation}

For the case of absolutely continuous random noise, the characterization
in Inequality~(\ref{ref:eq:2}) can be expressed in terms of the
density function $f_{\N}$ of $\N$. To simplify the notation, we
will assume that $\N$ takes values in $\mathbb{R}$. Consider that
$f_{\N}$ is continuous except for a finite or countable set of removable
discontinuities and a finite or countable set of jump discontinuities.
If the set of jump discontinuities is countable, we will assume that
it has no accumulation points; that is, around any jump discontinuity
point in $\mathbb{R}$ we assume we can find an interval with no other
jump discontinuity points. If $f_{\N}$ has removable discontinuities
we will modify $f_{\N}$ to remove them. As we are modifying $f_{\N}$
in at most a countable set, the modification will not affect the distribution
of $\N$.

Let $x$ be a continuity point of $f_{\N}$ such that $x+d$ is also
a continuity point, where $d=f(D)-f(D')$ for some data sets $D$
and $D'$ that differ in one row. Let $I$ be an interval of size
$m$ centered at $x$ such that $f_{\N}$ is continuous in $I$ and
$I+d$. We know that such $I$ exists because there are no accumulation
points in the set of jump discontinuities. We can upper- and lower-bound
the integrals by multiplying the maximum and minimum by the size of
the interval: 
\[
\begin{array}{c}
m\times\inf_{I}(f_{\N})\leq\int_{I}f_{\N}\leq m\times\sup_{I}(f_{\N})\\
m\times\inf_{I+d}(f_{\N})\leq\int_{I+d}f_{\N}\leq m\times\sup_{I+d}(f_{\N})
\end{array}
\]
As $f_{\N}$ is continuous in $I$, the limit of $\inf_{I}(f_{\N})$
and $\sup_{I}(f_{\N})$ as the size $m$ of $I$ goes to zero is $f_{\N}(x)$.
In the same way, as $f_{\N}$ is continuous in $I+d$, the limit of
$\inf_{I+d}(f_{\N})$ and $\sup_{I+d}(f_{\N})$ as $m$ tends to 0
is $f_{\N}(x+d)$. Dividing both expressions by $m$ and taking limits
as $m$ goes to zero, we have

\[
\begin{array}{c}
f_{\N}(x)\leq lim_{m\rightarrow0}\frac{\int_{I}f_{\N}}{m}\leq f_{\N}(x)\\
f_{\N}(x+d)\leq lim_{m\rightarrow0}\frac{\int_{I+d}f_{\N}}{m}\leq f_{\N}(x+d)
\end{array}
\]
Hence, combining the above limits and Expression~(\ref{ref:eq:2})
we have 
\[
\begin{array}{ccc}
\frac{\int_{I}f_{\N}}{m} & \leq & e^{\varepsilon}\times\frac{\int_{I+d}f_{\N}}{m}\\
\downarrow &  & \downarrow\\
f_{\N}(x) &  & e^{\varepsilon}\times f_{\N}(x+d)
\end{array}
\]
Thus for all $x\in\mathbb{R}$ continuity point of $f_{\N}$, if $x+d$
is a continuity point we have 
\begin{equation}
f_{\N}(x)\leq e^{\varepsilon}\times f_{\N}(x+d),\quad d=f(D)-f(D')\label{eq:3}
\end{equation}

It is immediate to see that, if Inequality~(\ref{eq:3}) holds, by
integrating it over a set we recover Inequality~(\ref{ref:eq:2}).
Hence, Inequality~(\ref{eq:3}) is in fact an equivalent definition
of differential privacy for the case of a.c. random noise.

\section{Non-optimality of the Laplace noise\label{sec:Laplace_not_optimal}}

Since the inception of differential privacy up to now~\cite{Dwork2006a,Dwork2011},
Laplace noise addition has been proposed as a method to achieve $\varepsilon$-differential
privacy for an arbitrary function $f$ in terms of its $L_{1}$-sensitivity.
Also, as we said in the introduction, this practice has raised some
criticisms.

In this section we show, for a univariate function $f$ with values
in $\mathbb{R}$, that the Laplace distribution is not optimal in
the sense of Definition~\ref{def:optimal_noise}. To that end, we
build another distribution, based on the Laplace distribution, that
still fulfills the conditions of differential privacy and has its
probability mass more concentrated towards zero, that is, it is strictly
smaller than Laplace according to Definition~\ref{def:smaller_noise}.
Although the distribution we build is optimal, we leave the formal
proof of this assertion for Section~\ref{sec:Optimal_univariate}.

The basic idea is to concentrate the probability mass around 0 as
much as possible. This can only be done to a certain extent, because
Inequality~(\ref{eq:3}) limits our capability to do so. For example,
increasing the value of the density at a point $x$ may increase the
minimum value that $f_{\N}$ may take in the interval $[x-\Delta f,\, x+\Delta f]$.

In the construction of the distribution we will split the domain of
$f_{\N}$ into intervals of the form $[i\Delta f,\,(i+1)\Delta f]$
where $i\in\mathbb{Z}$. For each interval we will redistribute the
probability mass that $f_{X}$ assigns to that interval. The new density
function $\tilde{f_{\N}}$ will take only two values (see \figurename \ref{fig:laplace_no_optimal}):
$max_{[i\Delta f,\,(i+1)\Delta f]}\, f_{X}$ at the portion of the
interval closer to zero and $min{}_{[i\Delta f,\,(i+1)\Delta f]}\, f_{X}$
at the portion of the interval farther from zero. The result is an
absolutely continuous distribution where the probability mass has
clearly been moved towards zero. We still have to check that it fulfills
the conditions of $\varepsilon$-differential privacy.

\begin{figure}[ht]
\begin{centering}
%JOSEPTHESIS. Engrandida figura
\includegraphics[width=10cm]{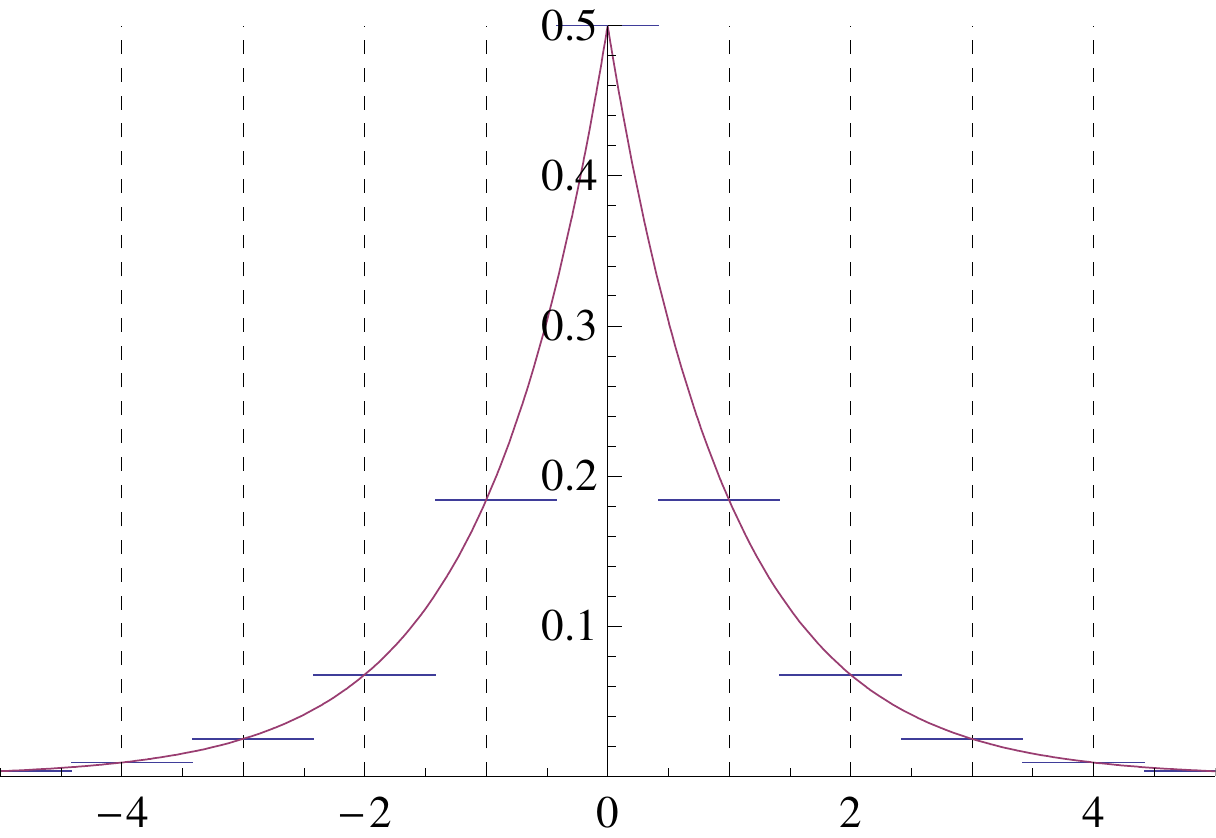} 
\par\end{centering}

\caption{Construction of the new distribution based on the Laplace(0,1) distribution\label{fig:laplace_no_optimal}}
\end{figure}

To simplify, we will detail the argument only for intervals at the
right of zero (positive reals); the argument for intervals at the
left of zero is symmetrical. The probability mass at $[i\Delta f,\,(i+1)\Delta f]$
is $e^{-i\varepsilon}\frac{1-e^{-\varepsilon}}{2}$. The maximum value
of the density of the Laplace distribution, $\frac{\varepsilon e^{-i\varepsilon}}{2\Delta f}$,
occurs at the beginning of the interval and the minimum, $\frac{\varepsilon e^{-(i+1)\varepsilon}}{2\Delta f}$,
occurs at the end. Let us determine the size $m_{i}$ of the interval
portion where the new density will be set to the maximum.

Since the probability mass of the interval must be preserved, we have
\[
\frac{\varepsilon e^{-i\varepsilon}}{2\Delta f}m_{i}+\frac{\varepsilon e^{-(i+1)\varepsilon}}{2\Delta f}(\Delta f-m_{i})=e^{-i\varepsilon}\frac{1-e^{-\varepsilon}}{2}
\]
By solving for $m_{i}$ in the above equality, we obtain: 
\[
m_{i}=\frac{\Delta f}{\varepsilon(1-e^{-\varepsilon})}(1-e^{-\varepsilon}-\varepsilon e^{-\varepsilon})
\]

The important fact about $m_{i}$ is that it does not depend on $i$.
Also, note that the maximum density of the current interval is equal
to the minimum density of the previous interval. This way, by joining
the portion of the previous interval which evaluates to the minimum
with the portion of the current interval which evaluates to the maximum,
we obtain an interval of size $(\Delta f-m_{i-1})+m_{i}=(\Delta f-m_{i})+m_{i}=\Delta f$
which evaluates to a constant density value (such joined intervals
are depicted as horizontal segments in \figurename~\ref{fig:laplace_no_optimal}.
Thus, except for the maximum of the first interval, we have split
the domain of the density function into intervals of size $\Delta f$
such that the density function evaluates to $\frac{\varepsilon e^{-i\varepsilon}}{2\Delta f}$.
This clearly satisfies the density-based characterization of differential
privacy specified by Inequality~(\ref{eq:3}).

\section{Optimal noise for univariate queries\label{sec:Optimal_univariate}}

Section~\ref{sec:Laplace_not_optimal} has shown that the Laplace
noise distribution is not optimal to achieve differential privacy.
A new distribution has been built that satisfies differential privacy
and has its probability mass more concentrated towards zero. This
section will determine the optimal data-independent absolutely continuous
random noise distribution to achieve $\varepsilon$-differential privacy
for any univariate function with finite $L_{1}$-sensitivity. Optimal
noise distributions need not be symmetric; however, we focus on the
symmetric case, because it is the most usual one.

Showing that optimal absolutely continuous noise distributions are
of a certain form requires using some properties that will be stated
as lemmata. Some of the proofs place additional regularity requirements
on the noise distribution, beyond being absolutely continuous. These
additional requirements are hardly a limitation as they are satisfied
by any practical distribution, and can be overlooked if the reader
is not interested in the proofs. In particular, we restrict the discussion
to absolutely continuous random noises, $\N$, whose density function,
$f_{\N}$, is continuous except for a finite or countable set of jump
or removable discontinuities, with the set of jump discontinuities
having no accumulation points. To avoid being unnecessarily cumbersome,
we will not mention this again in the sequel.

It was shown in Section~\ref{sec:Characterization} that for a.c.
noise distributions the definition of $\varepsilon$-differential
privacy can be stated in terms of the density function. Now we show
that if the inequality in terms of the probability function is satisfied
at the extreme, it also must be the case for the inequality in terms
of density functions.
\begin{lem}
\label{lem:2}Let $\N$ be an a.c. noise random variable that provides
$\varepsilon$-differential privacy to a function $f$ with a given
$L_{1}$-sensitivity. Consider an interval $I=\left[i_{0},i_{1}\right]\subset\mathbb{R}$.
Then $P(\N\in I+\Delta f)=e^{-\varepsilon}P(\N\in I)$ if and only
if $f_{\N}(x+\Delta f)=e^{-\varepsilon}f_{\N}(x),\;\forall x\in I$,
except at those points $x\in I$ such that $f_{\N}$ is not continuous
at $x$ or at $x+\Delta f$. Similarly, $P(\N\in I-\Delta f)=e^{-\varepsilon}P(\N\in I)$
if and only if $f_{\N}(x-\Delta f)=e^{-\varepsilon}f_{\N}(x),\;\forall x\in I$,
except at those points $x\in I$ such that $f_{\N}$ is not continuous
at $x$ or at $x-\Delta f$. \end{lem}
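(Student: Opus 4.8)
The plan is to work entirely at the level of the density $f_{\N}$, using the characterization of $\varepsilon$-differential privacy from Section~\ref{sec:Characterization} (Inequality~(\ref{eq:3})), and to reduce the statement to the elementary fact that a nonnegative integrable function with zero integral vanishes almost everywhere; the only real work is upgrading that ``almost everywhere'' to the pointwise statement in the lemma. First I would record the one-sided inequality that comes for free. Since $f$ is univariate with $L_{1}$-sensitivity $\Delta f$, there are neighbour data sets $D,D'$ with $f(D)-f(D')=\Delta f$, and (swapping $D$ and $D'$) also a pair realizing $-\Delta f$; substituting $d=\pm\Delta f$ into Inequality~(\ref{eq:3}) gives $f_{\N}(x+\Delta f)\ge e^{-\varepsilon}f_{\N}(x)$ and $f_{\N}(x-\Delta f)\ge e^{-\varepsilon}f_{\N}(x)$ at every $x$ that is, together with its relevant translate, a continuity point of $f_{\N}$. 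Because the removable discontinuities of $f_{\N}$ have already been deleted and its jump discontinuities have no accumulation point, the exceptional set is countable, hence Lebesgue-null; writing $g(x):=f_{\N}(x+\Delta f)-e^{-\varepsilon}f_{\N}(x)$, we thus have $g\ge 0$ almost everywhere (and similarly for the $-\Delta f$ version, with $\tilde g(x):=f_{\N}(x-\Delta f)-e^{-\varepsilon}f_{\N}(x)$).

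The ``if'' direction is then immediate: if $f_{\N}(x+\Delta f)=e^{-\varepsilon}f_{\N}(x)$ for all $x\in I$ outside a countable set, this holds a.e.\ on $I$, and integrating (using translation invariance, $\int_{I+\Delta f}f_{\N}=\int_{I}f_{\N}(x+\Delta f)\,dx$) gives $P(\N\in I+\Delta f)=e^{-\varepsilon}P(\N\in I)$. For the ``only if'' direction, assume $P(\N\in I+\Delta f)=e^{-\varepsilon}P(\N\in I)$; the same change of variables yields
\[
\int_{I}g\;=\;P(\N\in I+\Delta f)-e^{-\varepsilon}P(\N\in I)\;=\;0 ,
\]
and, since $g\ge 0$ a.e.\ on the bounded interval $I$, we conclude $g=0$ a.e.\ on $I$.

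Finally I would upgrade this to the pointwise conclusion by continuity. Fix $x_{0}\in I$ at which $f_{\N}$ is continuous both at $x_{0}$ and at $x_{0}+\Delta f$; then $g$ is continuous at $x_{0}$. If $g(x_{0})>0$, continuity forces $g>g(x_{0})/2$ on a subinterval of $I$ of positive length (a one-sided subinterval if $x_{0}$ is an endpoint of $I$), contradicting $g=0$ a.e.\ on $I$; hence $g(x_{0})=0$, i.e.\ $f_{\N}(x_{0}+\Delta f)=e^{-\varepsilon}f_{\N}(x_{0})$, which is exactly what is claimed for $I+\Delta f$. The case of $I-\Delta f$ is identical with $\tilde g$ in place of $g$. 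I expect the only delicate point to be precisely this last step: it is here that the regularity hypotheses on $f_{\N}$ (no removable discontinuities, isolated jumps) are used, to make $g$ continuous off a null set so that the a.e.\ identity propagates to every admissible point; the attainment of $\Delta f$ by some neighbour pair and the translation-invariance of the Lebesgue integral are routine.
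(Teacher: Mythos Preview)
Your proof is correct and follows essentially the same approach as the paper's. The paper argues the contrapositive of the ``only if'' direction directly---if the density equality fails at some continuity point $a$, continuity (together with the assumption that jump discontinuities have no accumulation points) yields a subinterval on which the density inequality is strict, and integrating over $I$ with this subinterval singled out gives $P(\N\in I+\Delta f)>e^{-\varepsilon}P(\N\in I)$---whereas you first pass through the intermediate step ``$\int_I g=0$ with $g\ge 0$ a.e.\ forces $g=0$ a.e.'' and then upgrade to the pointwise statement; these are the same argument in slightly different packaging.
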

\begin{proof}
We will prove the first claim; the second one is completely symmetric.
The proof of $(\Longleftarrow)$ is straightforward by computing the
probability as the integral of the density function. We will focus
on the $\left(\Longrightarrow\right)$ implication. By the $\varepsilon$-differential
privacy condition we know that $f_{\N}(x+\Delta f)\geq e^{-\varepsilon}f_{\N}(x)$.
Assuming that the implication does not hold, a continuity point $a\in I$
exists such that $f_{\N}(a+\Delta f)>e^{-\varepsilon}f_{\N}(a)$.
Because of the constraints on the set of discontinuity points, an
interval $\left[a_{0},a_{1}\right]\subseteq I$ exists such that $f_{\N}(x+\Delta f)>e^{-\varepsilon}f_{\N}(x)\,\forall x\in\left[a_{0},a_{1}\right]$.
Now we can decompose the probability as follows 
\[
P(\N\in I)=\int_{i_{0}}^{a_{0}}f_{\N}(x)dx+\int_{a_{0}}^{a_{1}}f_{\N}(x)dx+\int_{a_{1}}^{i_{1}}f_{\N}(x)dx
\]
\[
P(\N\in I+\Delta f)=\int_{i_{0}}^{a_{0}}f_{\N}(x+\Delta f)dx+\int_{a_{0}}^{a_{1}}f_{\N}(x+\Delta f)dx+\int_{a_{1}}^{i_{1}}f_{\N}(x+\Delta f)dx
\]
Since $f_{\N}(a+\Delta f)\geq e^{-\varepsilon}f_{\N}(a)$ and, for
$x\in[a_{0},a_{1}]$, $f_{\N}(a+\Delta f)>e^{-\varepsilon}f_{\N}(a)$,
we have $P(\N\in I+\Delta f)>e^{-\varepsilon}P(\N\in I)$ , which
is a contradiction that comes from the assumption that a continuity
point $a\in I$ exists such that $f_{\N}(a+\Delta f)>e^{-\varepsilon}f_{\N}(a)$. 
\end{proof}
We are trying to find the optimal a.c. noise distribution that provides
$\varepsilon$-differential privacy. The goal is to concentrate as
much probability mass around the mean as possible; $\varepsilon$-differential
privacy limits our capability to do so. We will see how the probability
mass must be distributed to achieve the optimal random noise. 
\begin{lem}
\label{lem:3}Let $\N$ be a symmetric a.c. noise random variable
with zero mean that satisfies $\varepsilon$-differential privacy
for a function $f$. If $\N$ is optimal at providing $\varepsilon$-differential
privacy, then for all $i\in\mathbb{\N}$ 
\begin{eqnarray*}
P(\N\in[(i+1)\Delta f,\,(i+2)\Delta f]) & = & e^{-\varepsilon}P(\N\in[i\Delta f,\,(i+1)\Delta f])\\
P(\N\in[-(i+2)\Delta f,\,-(i+1)\Delta f]) & = & e^{-\varepsilon}P(\N\in[-(i+1)\Delta f,\,-i\Delta f])
\end{eqnarray*}

\end{lem}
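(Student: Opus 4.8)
The plan is to argue by contradiction. Suppose $\N$ is optimal within the class of symmetric a.c.\ noises providing $\varepsilon$-differential privacy for $f$, but that the first displayed equality fails for some index $i_0$; since $\N$ is symmetric, the second family of equalities is equivalent to the first, so this is the only case to treat. I will build a symmetric noise $\tilde\N$ that also provides $\varepsilon$-differential privacy and satisfies $\tilde\N<\N$ in the sense of Definition~\ref{def:smaller_noise}, contradicting optimality.

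First I would record the one direction we already have. By the density characterization of $\varepsilon$-differential privacy from Section~\ref{sec:Characterization} (already used in Lemma~\ref{lem:2}), $f_{\N}(x+\Delta f)\ge e^{-\varepsilon}f_{\N}(x)$ at continuity points, and integrating over an interval gives $P(\N\in I+\Delta f)\ge e^{-\varepsilon}P(\N\in I)$; in particular the displayed relations always hold with ``$\ge$'', so the assumed failure means the block $B:=[(i_0+1)\Delta f,(i_0+2)\Delta f]$ carries \emph{strictly} more probability mass than differential privacy forces on it. The crucial step is to turn this into usable pointwise slack: applying Lemma~\ref{lem:2} with $I=[i_0\Delta f,(i_0+1)\Delta f]$ in contrapositive form, the strict inequality $P(\N\in I+\Delta f)>e^{-\varepsilon}P(\N\in I)$ rules out $f_{\N}(x+\Delta f)=e^{-\varepsilon}f_{\N}(x)$ holding throughout $I$, hence (with the pointwise ``$\ge$'') there is a subinterval $J\subseteq I$ of positive length, with $f_{\N}$ continuous on $J$ and on $J+\Delta f$, on which $f_{\N}(x)<e^{\varepsilon}f_{\N}(x+\Delta f)$ \emph{strictly}.

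Next, fix a short subinterval $J'\subseteq J$ and a small $\delta>0$ and define $\tilde\N$ by putting $\tilde{f_{\N}}=f_{\N}+\delta$ on $J'$ and on $-J'$, $\tilde{f_{\N}}=f_{\N}-\delta$ on $J'+\Delta f$ and on $-(J'+\Delta f)$, and $\tilde{f_{\N}}=f_{\N}$ elsewhere. This keeps the total mass equal to $1$ and keeps the density symmetric, and it transports probability mass from the block $B$ one block inward, toward the origin. Strict smallness is then easy: since $J'$ lies entirely to the left of $J'+\Delta f$, for every $\alpha\ge0$ the set $\langle0,\alpha\rangle$ meets $J'$ at least as much as it meets $J'+\Delta f$, so $P(\tilde\N\in\langle0,\alpha\rangle)\ge P(\N\in\langle0,\alpha\rangle)$, with strict inequality for $\alpha$ that catches part of $J'$ but not all of $J'+\Delta f$; the case $\alpha\le0$ is symmetric. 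Hence $\tilde\N<\N$.

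The part I expect to be the real obstacle is verifying that $\tilde\N$ still satisfies $\varepsilon$-differential privacy, i.e.\ the shifted density inequalities for $\tilde{f_{\N}}$. Each such inequality is either unchanged or relaxed, except those of the form $\tilde{f_{\N}}(x)\le e^{\varepsilon}\tilde{f_{\N}}(x+\Delta f)$ with $x\in J'$, where the left side is raised and the right side lowered; these survive for $\delta$ small precisely because of the strict slack $f_{\N}(x)<e^{\varepsilon}f_{\N}(x+\Delta f)$ on $J'$ established above. An inequality in which the density is raised on both sides is harmless because $e^{\varepsilon}\ge1$; shifts by $d$ other than $\pm\Delta f$, and the interaction between the positive- and negative-side perturbations, are handled by taking $J'$ short and $\delta$ small and invoking continuity of $f_{\N}$. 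Since $\tilde\N$ then provides $\varepsilon$-differential privacy and $\tilde\N<\N$, optimality of $\N$ is contradicted, so every relation in the statement must be an equality.
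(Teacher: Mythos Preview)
Your contradiction strategy is sound, and the first part of your argument (locating a subinterval $J$ with strict slack at shift $\Delta f$ via Lemma~\ref{lem:2}) is correct. The gap is in the final paragraph, where you claim the perturbed density still satisfies $\varepsilon$-differential privacy.

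The density inequality (\ref{eq:3}) must hold for \emph{every} $d=f(D)-f(D')$, which in general ranges over all of $[-\Delta f,\Delta f]$; the paper's own proof explicitly checks ``for all $x\in[0,\Delta f]$ and all $\delta\in[-\Delta f,\Delta f]$''. Your perturbation raises $\tilde f_{\N}$ on $J'$ and lowers it on $J'+\Delta f$, but you have only secured strict slack for the single shift $d=\Delta f$. For $x\in J'$ and some intermediate $d_0\in(0,\Delta f)$ with $x+d_0$ outside every modified region, the required inequality is $f_{\N}(x)+\delta\le e^{\varepsilon}f_{\N}(x+d_0)$; if the original constraint happens to be tight there (nothing rules this out), it fails for every $\delta>0$. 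The same problem occurs on the lowering side: for $y$ unmodified with $y+d\in J'+\Delta f$ you would need $f_{\N}(y)\le e^{\varepsilon}(f_{\N}(y+d)-\delta)$, again requiring slack you have not established. Taking $J'$ short and invoking continuity does not help, because continuity controls values of $f_{\N}$ at nearby points, not the tightness of the constraint for a fixed far-away shift $d_0$.

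This is exactly why the paper's construction is more elaborate: instead of a local add/subtract, it redistributes the entire excess $e_x=f_{\N}(x)-\tilde f_{i_0}(x)$ geometrically across $\{x-i\Delta f\}_{i}$ so that the rebuilt density satisfies $f_{i_0}(x+\Delta f)=e^{-\varepsilon}f_{i_0}(x)$ throughout. That exact geometric relation reduces the check of (\ref{eq:3}) for all $d\in[-\Delta f,\Delta f]$ to the base interval $[0,\Delta f]$, where the new density is sandwiched between $\tilde f_{i_0}$ and $f_{\N}$ and the inequality follows from the original one. Your local perturbation cannot create such a reduction, so to rescue your approach you would need an independent argument that on some positive-length subinterval of $J$ the constraint $f_{\N}(x)\le e^{\varepsilon}f_{\N}(x+d)$ is strictly slack \emph{uniformly in $d\in[-\Delta f,\Delta f]$}; this is not automatic and is the missing ingredient.
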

The second claim is completely symmetric to the first one; a symmetric
distribution that satisfies the first claim will also satisfy the
second one. We will show that, if the claims do not hold, we can build
another distribution that fulfills $\varepsilon$-differential privacy
and has its probability mass more concentrated towards zero.
\begin{proof}
We will assume that the claim for $\N$ does not hold and we will
build another distribution $\widetilde{\N}$ that provides $\varepsilon$-differential
privacy and has $\widetilde{\N}\le\N$. If the claim held, by Lemma~\ref{lem:2},
it would be $f_{\N}(x+\Delta f)=e^{-\varepsilon}f_{\N}(x)$, $\forall x\in\mathbb{R}$
where $x$ and $x+\Delta f$ are continuity points. Let $i_{0}\geq0$
be the index of the first interval $\left[i\Delta f,\left(i+1\right)\Delta f\right]$
such that $f_{\N}(x+\Delta f)=e^{-\varepsilon}f_{\N}(x)$ does not
hold for all $x$ in the interval. Let $\tilde{f_{i_{0}}}$ be the
function defined as follows 
\[
\tilde{f_{i_{0}}}\left(x\right)=\begin{cases}
e^{-\varepsilon}f_{\N}(x+\Delta f) & x\in\left[-\left(i_{0}+1\right)\Delta f,-\Delta f\right]\\
f_{\N}(x) & x\in\left[-\Delta f,\,+\Delta f\right]\\
e^{-\varepsilon}f_{\N}(x-\Delta f) & x\in\left[\Delta f,\left(i_{0}+1\right)\Delta f\right]
\end{cases}
\]
Since $\tilde{f_{i_{0}}}$ has been defined in such a way that the
decrease of the density between points at distance $\Delta f$, as
we move away from zero, is maximum, it is clear that we will have
$f_{\N}>\tilde{f}_{i_{0}}$. As both $f_{\N}$ and $\tilde{f}_{i_{0}}$
are symmetric, we will only consider the points on the right of zero;
the same transformations must be applied to the points on the left.
For each $x\in\left[\Delta f,\left(i_{0}+1\right)\Delta f\right]$
we will consider $e_{x}=f_{\N}\left(x\right)-\tilde{f_{i_{0}}}\left(x\right)$,
the excess density of $f_{\N}$ over $\tilde{f_{i_{0}}}$. We will
build another function $f_{i_{0}}$ by distributing $e_{x}$ among
the points $\{x+i\Delta f:\,0\le i\le i_{0}\}$ in such a way that
the new function concentrates as much as possible around the mean,
and $\varepsilon$-differential privacy is satisfied. The density
added to $\tilde{f}_{i_{0}}$ at $x+i\Delta f$ will be $\alpha_{x}e^{-i\varepsilon}$
where $\alpha_{x}$ is determined by imposing $\sum_{i=0,\ldots,i_{0}}\alpha_{x}e^{-i\varepsilon}=e_{x}$.
Note that $f_{i_{0}}$ still satisfies that images of points at distance
$\Delta f$ exponentially decrease as we move away from zero, that
is $f_{i_{0}}(x+\Delta f)=e^{-\varepsilon}f_{i_{0}}(x)$.

It is important to note that the new function $f_{i_{0}}$ satisfies
$\varepsilon$-differential privacy in the range $\left[-i_{0}\Delta f,i_{0}\Delta f\right]$.
We will show that $\varepsilon$-differential privacy is satisfied
in the interval $[-\Delta f,\Delta f]$; then by using that the images
by $f_{i_{0}}$ of points at distance $\Delta f$ exponentially decrease
as we move away from zero, $\varepsilon$-differential privacy will
be satisfied in $[-i_{0}\Delta f,i_{0}\Delta f]$. In fact we will
only check that $\varepsilon$-differential privacy is satisfied in
$[0,\Delta f]$; if it is so, by the symmetry of $f_{i_{0}}$, differential
privacy will be satisfied in the whole interval $[-\Delta f,\Delta f]$.

We must check that $f_{i_{0}}(x+\delta)\leq e^{\varepsilon}\times f_{i_{0}}(x)$
for all $x\in[0,\Delta f]$ and all $\delta\in[-\Delta f,\Delta f]$.
Let us assume that there exist $x\in\left[0,\Delta f\right]$ and
$\delta\in[-\Delta f,\Delta f]$ such that the condition is not satisfied,
that is, $f_{i_{0}}\left(x+\delta\right)>e^{\varepsilon}f_{i_{0}}\left(x\right)$.
If $x+\delta\in\left[\Delta f,2\Delta f\right]$, by multiplying by
$e^{-\left(i_{0}-1\right)\varepsilon}$ we have that $x+(i_{0}-1)\Delta f$,
the corresponding point in the interval $\left[\left(i_{0}-1\right)\Delta f,i_{0}\Delta f\right]$,
does not fulfill the $\varepsilon$-differential privacy condition,
but this is not possible as we had $f_{\N}(x+i_{0}\Delta f)\le e^{\varepsilon}f_{\N}(x+(i_{0}-1)\Delta f)$
and when building $f_{0}$ we have increased the value at $x+(i_{0}-1)\Delta f$
and decreased the value at $x+i_{0}\Delta f$. If $x+\delta\in\left[0,\Delta f\right]$,
by multiplying by $e^{-i_{0}\varepsilon}$ we have that the corresponding
point in the interval $\left[i_{0}\Delta f,\left(i_{0}+1\right)\Delta f\right]$
does not satisfy the differential privacy condition. This is impossible
as we know that $\tilde{f}_{i_{0}}$ and $f_{\N}$ do satisfy it and
that $f_{i_{0}}$ lies between them; therefore $f_{i_{0}}$ must also
satisfy the differential privacy condition. In the case $x+\delta\in\left[-\Delta f,0\right]$,
the justification is different. The point $-x-\delta$ belongs to
the interval $[0,\Delta f]$ and, by the symmetry of $f_{i_{0}}$,
we have $f_{i_{0}}(-x-\delta)=f_{i_{0}}(x+\delta)$; therefore, as
we have already checked that the condition is satisfied when $x+d\in\left[0,\,\Delta f\right]$,
it must also be satisfied when $x+d\in\left[-\Delta f,0\right]$.

Now we iterate this process and define functions $f_{i},\, i\in\mathbb{N}$.
To be able to do this, it is important to note that, when defining
$f_{i}$, we are reducing the density amount in the interval $\left[i\Delta f,\left(i+1\right)\Delta f\right]$
and that $\tilde{f}_{i+1}$ is defined in $\left[\left(i+1\right)\Delta f,\left(i+2\right)\Delta f\right]$
by reducing the value in the previous interval as much as possible
while still satisfying $\varepsilon$-differential privacy. This means
that $f_{\N}>\tilde{f}_{i+1}$ at $\left[\left(i+1\right)\Delta f,\left(i+2\right)\Delta f\right]$
and thus we can compute the excess and distribute it among the corresponding
points in the previous intervals.

The resulting $\tilde{f}_{\infty}$ satisfies the $\varepsilon$-differential
privacy condition. By construction it also satisfies $f_{\N}(x+\Delta f)=e^{-\varepsilon}f_{\N}(x)\,\forall x\in\mathbb{R}$
which by integration over the desired intervals leads to the claim
of the lemma. Moreover, as all the probability mass translation has
been done towards zero, we have $\widetilde{\N}\le\N$.\end{proof}
\begin{cor}
\label{cor:1}Let $\N$ be a symmetric a.c. noise random variable
with zero mean that provides $\varepsilon$-differential privacy to
a function $f$. If $\N$ is optimal at providing $\varepsilon$-differential
privacy then 
\[
\begin{array}{c}
f_{\N}(x+\Delta f)=e^{-\varepsilon}f_{\N}(x)\quad\forall x\geq0\\
f_{\N}(x-\Delta f)=e^{-\varepsilon}f_{\N}(x)\quad\forall x\leq0
\end{array}
\]
when the points $x$ and $x+\Delta f$ in the first equality above
and $x$ and $x-\Delta f$ in the second equality are continuity points
of $f_{\N}$.\end{cor}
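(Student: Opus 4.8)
The plan is to deduce the pointwise density identities directly from the interval-level identities already established in Lemma~\ref{lem:3}, using Lemma~\ref{lem:2} as the bridge between probabilities of intervals and values of the density $f_{\N}$.

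First I would fix an integer $i\ge 0$ and set $I_i=[i\Delta f,(i+1)\Delta f]$. Since $\N$ is symmetric, has zero mean, and is optimal, Lemma~\ref{lem:3} gives
\[
P\bigl(\N\in I_i+\Delta f\bigr)=P\bigl(\N\in[(i+1)\Delta f,(i+2)\Delta f]\bigr)=e^{-\varepsilon}\,P\bigl(\N\in I_i\bigr).
\]
Lemma~\ref{lem:2}, applied with $I=I_i$, is precisely the tool that turns this extremal identity into a statement about the density: it yields $f_{\N}(x+\Delta f)=e^{-\varepsilon}f_{\N}(x)$ for every $x\in I_i$ such that $x$ and $x+\Delta f$ are continuity points of $f_{\N}$. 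Letting $i$ run over the nonnegative integers and using $\bigcup_{i\ge 0}I_i=[0,\infty)$, we obtain the first claimed equality for all $x\ge 0$ at continuity points.

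For the second equality I would argue by symmetry instead of repeating the computation. Given $x\le 0$, put $y=-x\ge 0$; by the part just proved, $f_{\N}(y+\Delta f)=e^{-\varepsilon}f_{\N}(y)$ whenever $y$ and $y+\Delta f$ are continuity points. Since $f_{\N}$ is symmetric we have $f_{\N}(y)=f_{\N}(x)$ and $f_{\N}(y+\Delta f)=f_{\N}(-y-\Delta f)=f_{\N}(x-\Delta f)$, and a point is a continuity point of $f_{\N}$ iff its reflection is; substituting gives $f_{\N}(x-\Delta f)=e^{-\varepsilon}f_{\N}(x)$ for all $x\le 0$ at continuity points. (Alternatively, one may rerun the first argument with the intervals $[-(i+1)\Delta f,-i\Delta f]$ and the second identity of Lemma~\ref{lem:3}.)

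This is essentially a bookkeeping argument; Lemmas~\ref{lem:2} and~\ref{lem:3} carry all the real content, so I do not expect a genuinely hard step. The only points needing care are: verifying that the intervals $I_i$ genuinely exhaust the ray $[0,\infty)$, and in particular that the base interval $[0,\Delta f]$ is covered (which relies on Lemma~\ref{lem:3} holding for the index $i=0$; if one prefers to read the index set as starting at $1$, the interval $[0,\Delta f]$ can instead be recovered from the symmetric identity of Lemma~\ref{lem:3}); and observing that the exceptional discontinuity sets produced by Lemma~\ref{lem:2} cause no difficulty, because the corollary is stated only at continuity points of $f_{\N}$.
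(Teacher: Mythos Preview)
Your proposal is correct and is exactly the argument the paper has in mind: the paper's own proof is the single line ``The proof follows from Lemmata~\ref{lem:2} and~\ref{lem:3},'' and you have simply unpacked that reference by applying Lemma~\ref{lem:3} interval by interval and then invoking Lemma~\ref{lem:2} on each $I_i$. Your handling of the index range and the symmetry argument for $x\le 0$ are both fine and more careful than the paper bothers to be.
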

\begin{proof}
The proof follows from Lemmata~\ref{lem:2} and~\ref{lem:3}.
\end{proof}
Now we will show that for any symmetric a.c. noise distribution that
provides $\varepsilon$-differential privacy for a function $f$ we
can find another noise distribution, similar to the one used in the
proof that the Laplace distribution is not optimal, that performs
at least as well according to Definition~\ref{def:smaller_noise}. 
\begin{thm}
\label{thm:1}Let $\N$ be an a.c. noise random variable with zero
mean that provides $\varepsilon$-differential privacy to a query
function $f$. Then there exists a noise random variable $\widetilde{\N}$
with density function $f_{\widetilde{\N}}$ of the form 
\[
f_{\widetilde{\N}}\left(x\right)=\begin{cases}
M_{0}e^{-i\varepsilon} & x\in\left[-d-\left(i+1\right)\Delta f,-d-i\Delta f\right],\, i\in\mathbb{N}\\
M_{0} & x\in\left[-d,0\right]\\
M_{0} & x\in\left[0,d\right]\\
M_{0}e^{-i\varepsilon} & x\in\left[d+i\Delta f,d+\left(i+1\right)\Delta f\right],\, i\in\mathbb{N}
\end{cases}
\]
that provides $\varepsilon$-differential privacy to $f$ and satisfies
$\widetilde{\N}\le\N$ as per Definition~\ref{def:smaller_noise}.\end{thm}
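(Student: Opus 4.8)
The plan is to produce $\widetilde{\N}$ in two steps — first reshaping $\N$ into a density with a ``geometric'' inter-block structure, then collapsing that density into a staircase — and afterwards to verify the two required properties directly. Throughout we may assume $\N$ symmetric: if it is not, one first replaces it by a symmetric noise enjoying the same guarantees, which is the only point at which the zero-mean hypothesis intervenes. For the first step I would apply to $\N$ the mass-transport construction from the proof of Lemma~\ref{lem:3} (the one defining $\tilde f_{i_{0}}$, then $f_{i_{0}}$, and iterating to $\tilde f_{\infty}$). That construction uses only that its input is absolutely continuous, symmetric and $\varepsilon$-differentially private for $f$ — it does not use optimality — so it applies to our arbitrary $\N$ and yields a noise $\N_{1}$ that is still $\varepsilon$-differentially private for $f$, satisfies $\N_{1}\le\N$ (all the mass is transported towards $0$), and obeys $f_{\N_{1}}(x+\Delta f)=e^{-\varepsilon}f_{\N_{1}}(x)$ for every $x\ge 0$ with its mirror image for $x\le 0$. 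Hence $f_{\N_{1}}$ is determined by its restriction $\phi:=f_{\N_{1}}|_{[0,\Delta f]}$, with $\max\phi\le e^{\varepsilon}\min\phi$ by the density characterisation~(\ref{eq:3}) and $\int_{0}^{\Delta f}\phi=(1-e^{-\varepsilon})/2$ by normalisation.

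For the second step I would collapse $f_{\N_{1}}$ into a step density, exactly as the Laplace density was handled in Section~\ref{sec:Laplace_not_optimal} but now for a general $\phi$. On each block $[i\Delta f,(i+1)\Delta f]$ replace $f_{\N_{1}}$ by a two-valued function whose larger level is $M_{0}e^{-i\varepsilon}$, placed on the sub-block adjacent to the origin, and whose smaller level is $M_{0}e^{-(i+1)\varepsilon}$, placed on the remainder, the common width $d$ of the larger-level sub-blocks being fixed by asking that the mass of each block be preserved. With this choice the larger level on block $i$ equals the smaller level on block $i-1$, so gluing the low part of block $i-1$ to the high part of block $i$ leaves, for each $i$, a length-$\Delta f$ interval of constant density; the result is precisely a density of the stated form — constant $M_{0}$ on a central block of half-width $d$ and decaying by a factor $e^{-\varepsilon}$ on every further length-$\Delta f$ block — with $M_{0}$ (equivalently $d$) to be pinned down so that domination of $\N_{1}$ holds. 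Call this noise $\widetilde{\N}$.

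It then remains to verify three things. First, $f_{\widetilde{\N}}$ has the claimed form, which is immediate. Second, $\widetilde{\N}$ provides $\varepsilon$-differential privacy to $f$: on $[0,\infty)$ its density is non-increasing and decreases by at most a factor $e^{-\varepsilon}$ over any interval of length $\Delta f$, which, together with symmetry, is exactly the density-based characterisation~(\ref{eq:3}). Third, $\widetilde{\N}\le\N_{1}$: since inside every length-$\Delta f$ block the redistribution moved mass only towards the origin, an induction over the blocks gives $P(\widetilde{\N}\in\langle 0,\alpha\rangle)\ge P(\N_{1}\in\langle 0,\alpha\rangle)$ for all $\alpha$, so $\widetilde{\N}\le\N_{1}$ in the sense of Definition~\ref{def:smaller_noise}; composing with $\N_{1}\le\N$ and the transitivity of $\le$ yields $\widetilde{\N}\le\N$.

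The hard part is the second step, and in particular the joint feasibility of its requirements: one must show that the block-wise two-valued redistribution can be made mass-preserving on each length-$\Delta f$ block, mass-transporting towards $0$ (which is what makes $\widetilde{\N}\le\N_{1}$ work, and which constrains how large $M_{0}$ may be taken), and still such that the resulting density satisfies the $\varepsilon$-differential privacy inequality \emph{across} consecutive blocks so that the glued pieces genuinely line up into the announced staircase. In the Laplace case this was the short explicit computation of Section~\ref{sec:Laplace_not_optimal}; for a general $\phi$ one re-runs that argument, now leaning on both $\max\phi\le e^{\varepsilon}\min\phi$ and the geometric relation between consecutive blocks inherited from the first step. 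Once that is in place, the three verifications above, and the reduction to the symmetric case, are routine.
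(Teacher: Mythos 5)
Your plan is essentially the paper's own proof, and in two places it is actually tighter. The paper proves the theorem by assuming the given noise optimal and invoking Corollary~\ref{cor:1}, which does not match the hypothesis of the statement; your first step --- running the mass-transport construction from the proof of Lemma~\ref{lem:3} directly on an arbitrary symmetric a.c.\ $\varepsilon$-differentially private noise --- is the right way to obtain $\N_{1}\le\N$ with $f_{\N_{1}}(x+\Delta f)=e^{-\varepsilon}f_{\N_{1}}(x)$ for $x\ge0$, and you are correct that this construction nowhere uses optimality. Likewise, prescribing the two levels of the block-wise collapse to be $M_{0}e^{-i\varepsilon}$ and $M_{0}e^{-(i+1)\varepsilon}$ (rather than the block maximum and minimum, as in Section~\ref{sec:Laplace_not_optimal} and in the paper's proof) is what makes the glued pieces line up for a general base-block profile $\varphi:=f_{\N_{1}}|_{[0,\Delta f]}$; with the max/min choice the announced form is obtained only when $\max\varphi=e^{\varepsilon}\min\varphi$, which holds for Laplace but not in general (a $\varphi$ taking two values with ratio strictly between $1$ and $e^{\varepsilon}$ is left unchanged by that collapse and is not a staircase).

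Two points must still be repaired. First, the ``hard part'' you defer is not closed by simply re-running the Laplace computation: you must say how $M_{0}$ is pinned down and why mass moves toward zero. The clean choice is any $M_{0}$ with $\max\varphi\le M_{0}\le e^{\varepsilon}\min\varphi$, a nonempty range precisely because of the within-block bound implied by Inequality~(\ref{eq:3}). Then per-block mass preservation determines a single $d\in[0,\Delta f]$ for all blocks, since block masses scale exactly by $e^{-\varepsilon}$ and $e^{-\varepsilon}M_{0}\le\frac{1}{\Delta f}\int_{0}^{\Delta f}\varphi\le M_{0}$; the glued density is exactly of the stated form, hence $\varepsilon$-differentially private; and $\widetilde{\N}\le\N_{1}$ holds block by block because on each block the new density dominates the old one pointwise on the sub-block nearer zero ($\varphi\le M_{0}$) and is dominated by it on the farther sub-block ($\varphi\ge e^{-\varepsilon}M_{0}$) while block masses coincide, so the cumulative mass measured from zero can only increase. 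Second, drop the symmetrization remark: symmetrizing an asymmetric noise preserves differential privacy but not the relation $\le$ of Definition~\ref{def:smaller_noise} with respect to the original noise, so the chain $\widetilde{\N}\le\N_{1}\le\N_{\mathrm{sym}}$ says nothing about $\N$; indeed $\widetilde{\N}\le\N$ already forces $P(\N\ge0)=P(\N\le0)=1/2$, which zero mean alone does not guarantee. Like the paper, you should simply take symmetry of $\N$ as a standing assumption of this section.
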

\begin{proof}
We will assume that $\N$ is optimal and that its density function
is not of the form of $f_{\widetilde{\N}}$ for any $M_{0}$ and $d$.
The goal is to build another distribution $\widetilde{\N}$ from $\N$
such that the density $f_{\widetilde{\N}}\left(x\right)$ is as stated
above and satisfies $\widetilde{\N}\leq\N$. Note that, from the definition
of $f_{\widetilde{\N}}(x)$, the condition of $\varepsilon$-differential
privacy immediately holds for $f$.

Since $\N$ fulfills the conditions of Corollary~\ref{cor:1}, we
have 
\[
\begin{array}{c}
f_{\N}(x+\Delta f)=e^{-\varepsilon}f_{\N}(x)\quad\forall x\geq0\\
f_{\N}(x-\Delta f)=e^{-\varepsilon}f_{\N}(x)\quad\forall x\leq0
\end{array}
\]
Now we apply the same procedure we used in Section~\ref{sec:Laplace_not_optimal}
for the Laplace noise. First we split the domain of $f_{\N}$ into
intervals of the form $[i\Delta f,(i+1)\Delta f]$ where $i\in\mathbb{Z}$.
At a given interval, we redistribute the probability mass that $f_{\N}$
assigns to that interval. The new density function $f_{\widetilde{\N}}(x)$
takes only two values: $max_{[i\Delta f,(i+1)\Delta f]}\, f_{\N}$
at the portion of the interval closer to zero and $min_{[i\Delta f,(i+1)\Delta f]}\, f_{\N}$
at the portion of the interval farther from zero. The result is an
absolutely continuous distribution $\widetilde{\N}$ with $\widetilde{\N}\le\N$.

To make sure that the distribution $\widetilde{\N}$ has the specified
form, and thus satisfies $\varepsilon$-differential privacy, it remains
to check that the length of the interval where we assign maximum value
is constant across intervals.

The probability mass at $[i\Delta f,(i+1)\Delta f]$ is $e^{-i\varepsilon}\frac{1-e^{-\varepsilon}}{2}$.
It is clear from $f_{\N}(x+\Delta f)=e^{-\varepsilon}f_{\N}(x)$,
$\forall x\geq0$, that the maximum and the minimum of each interval,
$M_{i}$ and $m_{i}$ respectively, satisfy $M_{i}=e^{-i\varepsilon}M_{0}$
and $m_{i}=e^{-i\varepsilon}m_{o}$. Let $d_{i}$ be the size of the
interval where the new density evaluates to the maximum. We have 
\[
e^{-i\varepsilon}M_{0}\times d_{i}+e^{-i\varepsilon}m_{o}\times(\Delta f-d_{i})=e^{-i\varepsilon}\frac{1-e^{-\varepsilon}}{2}
\]
This formula leads to $d_{i}=\frac{1-e^{-\varepsilon}-2m_{0}\Delta f}{2\left(M_{0}-m_{0}\right)}$
which does not depend on $i$, as we wanted to see.
\end{proof}
Theorem~\ref{thm:1} states that, for any random noise that provides
$\varepsilon$-differential privacy to $f$, we can find another random
noise distribution, of the specified form, that is smaller. However,
we still have to prove that such a distribution is optimal. 
\begin{thm}
\label{thm:2} Let $\N$ be a random noise distribution with a density
function $f_{\N}$ of the form specified in Theorem~\ref{thm:1}.
Then $\N$ is optimal at providing $\varepsilon$-differential privacy. \end{thm}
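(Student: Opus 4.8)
The plan is to show that a noise $\N$ whose density has the staircase form of Theorem~\ref{thm:1} is \emph{minimal} for the order of Definition~\ref{def:smaller_noise} inside the class $\mathcal{C}$ of symmetric a.c.\ noises that give $\varepsilon$-differential privacy to $f$; by Definition~\ref{def:optimal_noise} this is precisely optimality. The first step is a reduction that leans on Theorem~\ref{thm:1} itself: if some $\N'\in\mathcal{C}$ satisfied $\N'\le\N$, then Theorem~\ref{thm:1} produces a staircase-form $\widetilde{\N'}\in\mathcal{C}$ with $\widetilde{\N'}\le\N'\le\N$. So it is enough to prove that any staircase-form noise $\widetilde{\N'}$ with $\widetilde{\N'}\le\N$ has the \emph{same} distribution as $\N$: then $\N\le\N'\le\N$, and $\N$ is not strictly dominated.

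For the staircase-versus-staircase comparison I would pass to $[0,\infty)$ by symmetry and work with the right tail $T(t)=P(\cdot\ge t)$; for symmetric noises $\N_1\le\N_2$ is equivalent to $T_{\N_1}(t)\le T_{\N_2}(t)$ for all $t\ge0$, and $\le$ is antisymmetric up to equality in distribution. Two facts do the work. First, a staircase of flat-top half-width $d$ and peak density $M_0$ (the two bound together by $\int f=1$, which makes $M_0=M_0(d)$ strictly decreasing) obeys the exact recursion $f_{\N}(x+\Delta f)=e^{-\varepsilon}f_{\N}(x)$ for $x\ge0$ --- the content of Corollary~\ref{cor:1} --- hence $T_{\N}(t+\Delta f)=e^{-\varepsilon}T_{\N}(t)$. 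Second, the density characterization of privacy (Inequality~(\ref{eq:3}), taken for neighbour data sets whose query values differ by the extreme amount $\pm\Delta f$) gives $f_{\N'}(x+\Delta f)\ge e^{-\varepsilon}f_{\N'}(x)$ for every $\N'\in\mathcal{C}$, so $T_{\N'}(t+\Delta f)\ge e^{-\varepsilon}T_{\N'}(t)$: no member of $\mathcal{C}$ has a tail that decays faster than a staircase's.

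The decisive point is a one-point comparison between $\widetilde{\N'}$ (half-width $d'$) and $\N$ (half-width $d$, with $0<d<\Delta f$). If $d'>d$, the competitor has strictly smaller peak density, hence its tail decreases strictly more slowly near $0$, so $T_{\widetilde{\N'}}(t)>T_{\N}(t)$ for small $t>0$, contradicting $\widetilde{\N'}\le\N$. If $d'<d$, then since $d-d'<\Delta f$ the point $t=d$ lies in the first geometrically-decaying step of $\widetilde{\N'}$; writing out $T_{\widetilde{\N'}}(d)$ and $T_{\N}(d)$ explicitly --- at the end of each flat top both equal peak density times $e^{-\varepsilon}\Delta f/(1-e^{-\varepsilon})$, and they are then linear on the first step --- and substituting the normalization relation, the inequality $T_{\widetilde{\N'}}(d)\le T_{\N}(d)$ collapses to $d\ge\Delta f$, which is false. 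Hence $d'=d$, the two staircases coincide, and $\N$ is minimal, i.e.\ optimal.

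The main obstacle I anticipate is not the reduction but making the crux airtight. One must use that $\N$ itself has $0<d<\Delta f$ --- this is what Theorem~\ref{thm:1}'s construction always yields (on each length-$\Delta f$ block it replaces the density by its maximum on a proper nonempty sub-interval and its minimum on the nonempty complement), and it is essential, because the degenerate cases $d\to0$ and $d=\Delta f$ coincide and give a distribution that is in fact strictly dominated by every staircase with intermediate $d$. And one must carry out the passage from the pointwise privacy inequality at the extreme shift $\pm\Delta f$ to the tail recursion, together with the bookkeeping of $M_0=M_0(d)$, carefully enough that the final equivalence ``$T_{\widetilde{\N'}}(d)\le T_{\N}(d)\iff d\ge\Delta f$'' comes out cleanly; the symmetry reductions, the appeal to Theorem~\ref{thm:1}, and antisymmetry of $\le$ are routine.
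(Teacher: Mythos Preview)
Your approach is correct but genuinely different from the paper's. The paper does \emph{not} invoke Theorem~\ref{thm:1}; instead it argues directly on the density. It partitions $[0,\infty)$ into the intervals $I_i=[i\Delta f,(i+1)\Delta f]$ and observes that on each $I_i$ the staircase density already saturates the $\varepsilon$-differential-privacy bound: it equals $M_0 e^{-i\varepsilon}$ on the left portion $I_i^l$ and $M_0 e^{-(i+1)\varepsilon}$ on the right portion $I_i^r$. Then it claims that (i) mass cannot be moved from $I_i$ to $I_j$ with $j<i$ because the decrease rate between consecutive intervals is already maximal, and (ii) mass cannot be moved within $I_i$ towards zero because raising the density on $I_i^l$ above $M_0 e^{-i\varepsilon}$, or lowering it on $I_i^r$ below $M_0 e^{-(i+1)\varepsilon}$, immediately violates Inequality~(\ref{eq:3}). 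That is the whole argument.

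Your route---reduce via Theorem~\ref{thm:1} to a staircase competitor $\widetilde{\N'}$, pass to right tails $T(t)=P(\cdot\ge t)$, use the exact recursion $T(t+\Delta f)=e^{-\varepsilon}T(t)$ for staircases to localize the comparison to $[0,\Delta f]$, then eliminate $d'\ne d$ by the one-point computation at $t=d$---is sound; in particular your claimed equivalence ``$T_{\widetilde{\N'}}(d)\le T_\N(d)\iff d\ge\Delta f$'' checks out (the difference factors as $e^{-\varepsilon}(1-e^{-\varepsilon})(d-d')(d-\Delta f)$). What each approach buys: the paper's is short, self-contained, and conceptually transparent, though its phrasing in terms of ``moving mass'' is a step removed from the formal order of Definition~\ref{def:smaller_noise}. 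Yours is heavier machinery but engages the order relation directly, and it yields a strictly stronger conclusion as a byproduct: the one-parameter family of staircases with $0<d<\Delta f$ is an \emph{antichain} for $\le$, so no optimal noise dominates another. You are also right to isolate the degenerate endpoints $d\in\{0,\Delta f\}$, which coincide and are strictly dominated by every interior staircase; the paper's statement implicitly relies on $0<d<\Delta f$ from the construction in the proof of Theorem~\ref{thm:1}, but does not flag it.
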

\begin{proof}
To prove that $\N$ is optimal, we have to show that if we move some
probability mass of $\N$ towards zero then $\varepsilon$-differential
privacy no longer holds. We only show it for the probability mass
to the right of zero; a symmetric argument can be used for the probability
mass to the left of zero.

First of all, we must show that it is not possible to move any probability
mass from an interval $I_{i}=[i\Delta f,(i+1)\Delta f]$ to an interval
$I_{j}=[j\Delta f,(j+1)\Delta f]$ with $0\le j<i$. This is straightforward:
as the density $f_{\N}$ specified in Theorem~\ref{thm:1} has the
maximum decrease rate between consecutive intervals compatible with
the constraints of $\varepsilon$-differential privacy, moving probability
mass from $I_{i}$ to $I_{j}$ would break $\varepsilon$-differential
privacy.

To conclude the proof, we need to check that it is not possible to
redistribute the probability mass within an interval $I_{i}$ so that
it gets closer to zero. Within the interval $I_{i}$, the density
function $f_{\N}$ takes values $M_{0}\exp(-i\varepsilon)$ at $I_{i}^{l}$
(the left portion of the interval) and $M_{0}\exp(-(i+1)\varepsilon)$
at $I_{i}^{r}$ (the right portion of the interval). We cannot move
any probability mass from $I_{i}^{r}$ towards zero, because the density
would go below $M_{0}\exp(-(i+1)\varepsilon)$ and, thus, $\varepsilon$-differential
privacy would not hold. We cannot move any probability mass from $I_{i}^{l}$
towards zero, because the density would go above $M_{0}\exp(-i\varepsilon)$
and, thus, $\varepsilon$-differential privacy would not hold. 
\end{proof}
Although the theorems above are stated in terms of a fixed query function
$f$, the optimal distribution depends only on $\Delta f$; hence,
all query functions with the same $L_{1}$-sensitivity share the same
optimal noise distribution.

The values of $M_{0}$ and $d$ can be freely chosen according to
the user's preferences. In fact the two parameters $M_{0}$ and $d$
of the optimal family of distributions can be reduced to one because
\[
d=\frac{1-e^{-\varepsilon}-2M_{0}e^{-\varepsilon}\Delta f}{2(1-e^{-\varepsilon})M_{0}}
\]

For instance, let us assume that the user prefers to minimize the
noise variance. We compute the variance of candidate optimal distributions
in terms of the parameters $d$ and $M_{0}$, and find the values
that yield the minimum: 
\[
V(Z)=2M_{0}\int_{0}^{d}x^{2}dx+2M_{0}e^{-\varepsilon}\sum_{i=0\ldots\infty}e^{-i\varepsilon}\int_{d+i\Delta f}^{d+\left(i+1\right)\Delta f}x^{2}dx
\]
The variance can be computed by performing the integrals and calculating
the sum of the power series. \figurename~\ref{fig:minimal_variance}
shows the variance obtained in terms of the parameter $d$ for the
case of $\varepsilon=1$ and $\Delta f=1$. In this case, the minimum
is reached at $d=0.416737$ and the variance is 1.9181. This is below
2, the variance of the Laplace noise with scale parameter 1.

\begin{figure}[ht]
\begin{centering}
\includegraphics[width=10cm]{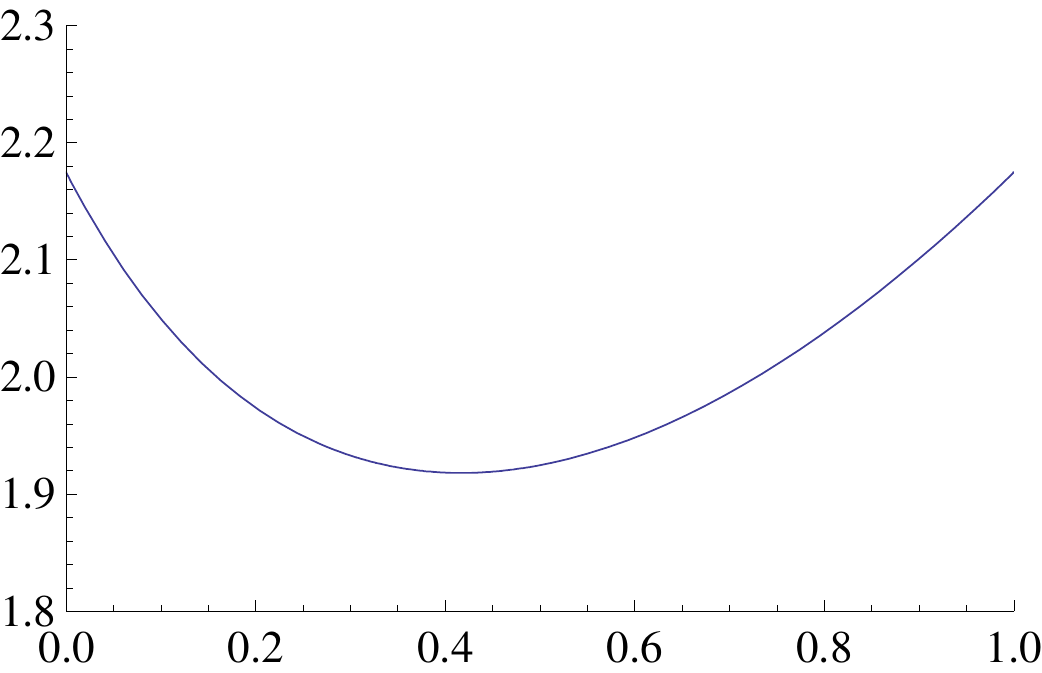} 
\par\end{centering}

\caption{\label{fig:minimal_variance}Variance for $\varepsilon=1$ and $\Delta f=1$}
\end{figure}

Table~\ref{tab:2} shows a comparison of the variance achieved by
the Laplace distribution and the optimal a.c. random noise with minimum
variance, for different values of $\varepsilon$ when $\Delta f=1$.
The table shows that the Laplace variance is only slightly greater
than the minimum variance; we may say that, for a single univariate
query, although the Laplace distribution is not optimal, it is near-optimal.
Therefore, \emph{if the utility of the differentially private answer
to a single univariate query obtained using Laplace noise is poor,
not much improvement can be expected from using a data-independent
variance-optimal random noise distribution}.

\begin{table}[ht]
\caption{\label{tab:2}Variance comparison between Laplace random noise and
a.c. optimal random noise with minimum variance, for $\Delta f=1$}

\centering{}%
\begin{tabular}{cccc}
\hline 
 & $\varepsilon=0.1$  & $\varepsilon=0,5$  & $\varepsilon=1$\tabularnewline
\hline 
\hline 
Laplace distribution  & 200.00  & 8.00  & 2.00\tabularnewline
\hline 
Optimal a.c. noise with min. var.  & 199.92  & 7.92  & 1.92\tabularnewline
\hline 
\end{tabular}
\end{table}

Assume now that the user wants the noise distribution that minimizes
the size of the symmetric confidence interval around the differentially
private query answer that contains the real query value at 95\% confidence
level. In this case, we must solve a minimization problem, as before,
but now the objective function is the size of the confidence interval
in terms of the parameters $d$ and $M_{0}$. \figurename~\ref{fig:minimize_conf_interval}
shows the size of the confidence interval, when $\Delta f=1$ and
$\varepsilon=1$, in terms of parameter $d$. The minimal length for
this case is achieved for $d=0.993$, approximately; in general, however,
the actual value of $d$ where the minimum is reached depends on $\Delta f$
and $\varepsilon$. Table~\ref{tab:3} shows a comparison between
the optimal lengths of the confidence intervals at 95\% confidence
level for several values of $\varepsilon$ when $\Delta f=1$. As
expected, the results obtained from the Laplace distribution are worse
but close to those obtained using the optimal distribution.

\begin{figure}[ht]
\begin{centering}
\includegraphics[width=10cm]{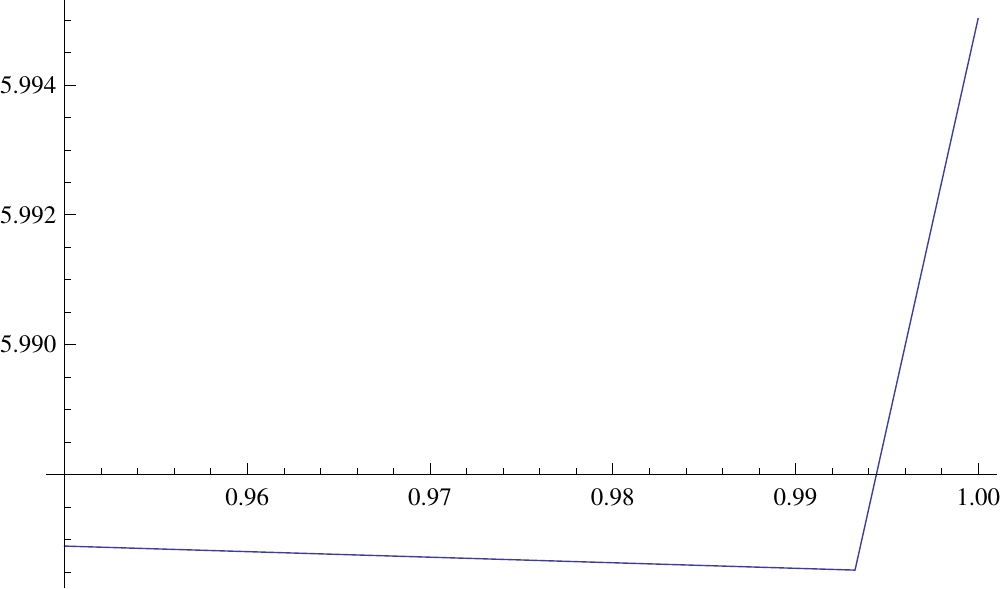} 
\par\end{centering}

\caption{\label{fig:minimize_conf_interval}Size of the 95\% symmetric confidence
interval centered at zero}
\end{figure}

\begin{table}[ht]
\caption{\label{tab:3} Comparison of the size of the symmetric 95\% confidence
interval between Laplace random noise and a.c. optimal random noise
with minimum confidence interval, for $\Delta f=1$}

\centering{}%
\begin{tabular}{cccc}
\hline 
 & $\varepsilon=0.1$  & $\varepsilon=0,5$  & $\varepsilon=1$\tabularnewline
\hline 
\hline 
Laplace distribution  & 59.91  & 11.98  & 5.99\tabularnewline
\hline 
Optimal a.c. noise with min. conf. int.  & 59.91  & 11.97  & 5.98\tabularnewline
\hline 
\end{tabular}
\end{table}

\section{Optimal noise for multivariate queries\label{sec:multi}}

In Section~\ref{sec:Optimal_univariate} we worked out the optimal
a.c. random noise for a query with values in $\mathbb{R}$. We deal
here with multiple queries or with a single query whose response is
a value in $\mathbb{R}^{d}$: both cases are equivalent, because $d$
queries with answers in $\mathbb{R}$ can be viewed as a single query
with answer in $\mathbb{R}^{d}$. Determining the form of all optimal
multivariate a.c. random noises is out of scope; we restrict to a
class of noise distributions whose density consists of several steps
(as was the case for optimal univariate distributions) and show that
they are optimal. The optimal distributions constructed will be shown
to be substantially better than Laplace. Hence, \emph{while Laplace
is near-optimal in the univariate case, in general it is far from
optimal for multivariate or multiple queries.}

We will be less formal here and, to simplify even more, examples will
be presented for the case of two queries/two dimensions, that is,
$d=2$; generalization to arbitrary $d$ is easy.

For the case of a.c. random noise for a single query, it was shown
in Section~\ref{sec:Characterization} that the $\varepsilon$-differential
privacy condition can be expressed in terms of the density function.
The result is easily generalizable to greater dimensions, and therefore
here we can also express the condition in terms of the density function.
\begin{prop}
\label{prop:4}Let $\N=(\N_{1},\ldots,\N_{d})$ be an absolutely continuous
random noise that provides $\varepsilon$-differential privacy to
a query $f:\mathcal{D}\rightarrow\mathbb{R}^{d}$. Then $\varepsilon$-differential
privacy can be characterized in terms of the density function as:
\[
f_{\N}(x)\leq e^{\varepsilon}\times f_{\N}(x+d),\quad d=f(D)-f(D')
\]
for all $x$ and $x+d$ continuity points of $f_{\N}$, where $D$
and $D'$ differ in one row. 
\end{prop}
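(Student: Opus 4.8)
The plan is to repeat, almost verbatim, the argument of Section~\ref{sec:Characterization}, with one-dimensional intervals replaced by $d$-dimensional cubes. First I would unfold Definition~\ref{def:dp_dwork}: since the noise is data-independent we may set $\kappa=f+\N$, so that for an arbitrary measurable $S\subset\mathbb{R}^{d}$ the differential privacy condition reads $P(\N\in S-f(D))\le e^{\varepsilon}P(\N\in S-f(D'))$; replacing $S$ by $S+f(D)$ (legitimate because $S$ ranges over all measurable sets) yields exactly the multivariate analogue of Inequality~(\ref{eq:2}),
\[
P(\N\in S)\le e^{\varepsilon}\,P(\N\in S+d),\qquad d=f(D)-f(D'),
\]
valid for every measurable $S\subset\mathbb{R}^{d}$ and every pair $D,D'$ differing in one row.

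Next I would pass from this inequality on probabilities to the claimed pointwise inequality on $f_{\N}$. Fix continuity points $x$ and $x+d$ of $f_{\N}$, and for small $m>0$ let $C_{m}$ be the axis-parallel cube of side $m$ centred at $x$, so that $C_{m}+d$ is the cube of side $m$ centred at $x+d$ and $|C_{m}+d|=|C_{m}|$. Applying the displayed inequality with $S=C_{m}$ gives $\int_{C_{m}}f_{\N}\le e^{\varepsilon}\int_{C_{m}+d}f_{\N}$. Dividing by $|C_{m}|$ and letting $m\to0$, the left quotient tends to $f_{\N}(x)$ and the right quotient to $f_{\N}(x+d)$: in the spirit of Section~\ref{sec:Characterization} this can be seen from the squeeze $|C_{m}|\inf_{C_{m}}f_{\N}\le\int_{C_{m}}f_{\N}\le|C_{m}|\sup_{C_{m}}f_{\N}$ once $C_{m}$ is small enough to lie inside a region where $f_{\N}$ is continuous, or, more robustly, from the Lebesgue differentiation theorem, which yields this limit at every continuity point irrespective of the behaviour of $f_{\N}$ elsewhere. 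Taking limits then gives $f_{\N}(x)\le e^{\varepsilon}f_{\N}(x+d)$ for all continuity points $x$, $x+d$, which is the characterization in one direction (the exact counterpart of Inequality~(\ref{eq:3})).

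For the converse I would simply integrate: if $f_{\N}(x)\le e^{\varepsilon}f_{\N}(x+d)$ holds for all $x$ outside the (Lebesgue-null) discontinuity set, then for any measurable $S$ we get $P(\N\in S)=\int_{S}f_{\N}(x)\,dx\le e^{\varepsilon}\int_{S}f_{\N}(x+d)\,dx=e^{\varepsilon}P(\N\in S+d)$, which is the multivariate Inequality~(\ref{eq:2}), hence $\varepsilon$-differential privacy for $f$. I expect the only genuine obstacle to be a bookkeeping one: unlike the univariate case, where the jump set has no accumulation points, in $\mathbb{R}^{d}$ the discontinuity set of $f_{\N}$ may be a lower-dimensional surface passing arbitrarily close to a continuity point, so one must check that the localization step still goes through. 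This is precisely where it is cleanest to lean on the Lebesgue differentiation theorem rather than on the explicit $\inf/\sup$ estimate; everything else is a direct transcription of the one-dimensional argument, which is why the statement may fairly be regarded as routine.
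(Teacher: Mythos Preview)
Your proposal is correct and follows exactly the approach the paper intends: the paper does not give an explicit proof of Proposition~\ref{prop:4} but simply remarks that the univariate argument of Section~\ref{sec:Characterization} ``is easily generalizable to greater dimensions'', and you have carried out precisely that generalization, replacing intervals by cubes and otherwise transcribing the $\inf/\sup$ squeeze and the converse-by-integration step. Your observation that in $\mathbb{R}^{d}$ the discontinuity set need not be discrete, and that the Lebesgue differentiation theorem is the cleaner tool for the localization step, is a genuine refinement over the paper's one-dimensional treatment.
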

Similarly to the case of a single univariate query, we will construct
a noise density with several steps, which reaches its maximum all
over a set that contains zero and decreases by a factor $e^{-\varepsilon}$
as we move away from it.

The main difference with other, non-optimal distributions, such as
multivariate Laplace noise, is that the various components (dimensions)
of the random noise do not need to be independent. This allows more
freedom in the definition of the distribution, which we will employ
to achieve a finer calibration to the query function. This is illustrated
below in an example, but prior to it we define a set that will be
repeatedly used in the remainder of this section.
\begin{defn}
Let $f:\mathcal{D}\rightarrow\mathbb{R}^{d}$ be a query function.
The set of differences between neighbor data sets is defined as 
\[
S_{f}=\bigcup_{D,D'}{\langle0,f(D)-f(D')\rangle}
\]
where $D$ and $D'$ data sets that differ in at most one row. 
\end{defn}
The set $S_{f}$ contains all possible variations in $f$ when one
record changes. The boundary of $S_{f}$ can be seen as a generalization
of the $L_{1}$-sensitivity used in the univariate case. Instead of
summarizing the variability of $f$ with a single figure, as $L_{1}$-sensitivity
does, $S_{f}$ keeps track of the maximum variability in each direction.
\begin{example}
\label{exex} Consider a query function $f=(f_{1},f_{2})$ such that
$S_{f}=[-1,1]\times[-1,1]$. From Definition~\ref{def:sensitivity-1},
the $L_{1}$-sensitivity of $f$ is 
\[
\Delta f=\sup_{D,D'}\Vert f(D)-f(D')\Vert_{1}=\sup_{D,D'}(|f_{1}(D)-f_{1}(D')|+|f_{2}(D)-f_{2}(D')|)=1+1=2
\]
As stated in Proposition~\ref{prop:4}, the density of the random
noise, $f_{\N}$, in each of the points of the set $[-1,1]\times[-1,1]$
must be in the range $[e^{-\varepsilon}f_{\N}(0),e^{\varepsilon}f_{\N}(0)]$.
When using independent Laplace-distributed components with zero mean
and $\Delta f/\varepsilon$ scale parameter, the top value for the
density is reached at zero, and it decreases exponentially as we move
away from it. Points with density $e^{-\varepsilon}f_{\N}(0)$ are
those that have $L_{1}$-norm equal to $\Delta f$. \figurename~\ref{fig:Laplace_approximation_1}
depicts $S_{f}$ as a gray shaded box. If all points in $S_{f}$ are
protected with independent Laplace-distributed random noise components,
all points within $[-1,1]\times[-1,1]$ must have density within the
range $[e^{-\varepsilon}f_{\N}(0),f_{\N}(0)]$.

\begin{figure}[ht]
%JOSEPTHESIS. Engrandida figura, pq els caracters es veien molt petits.
\centering{}\includegraphics[width=14cm]{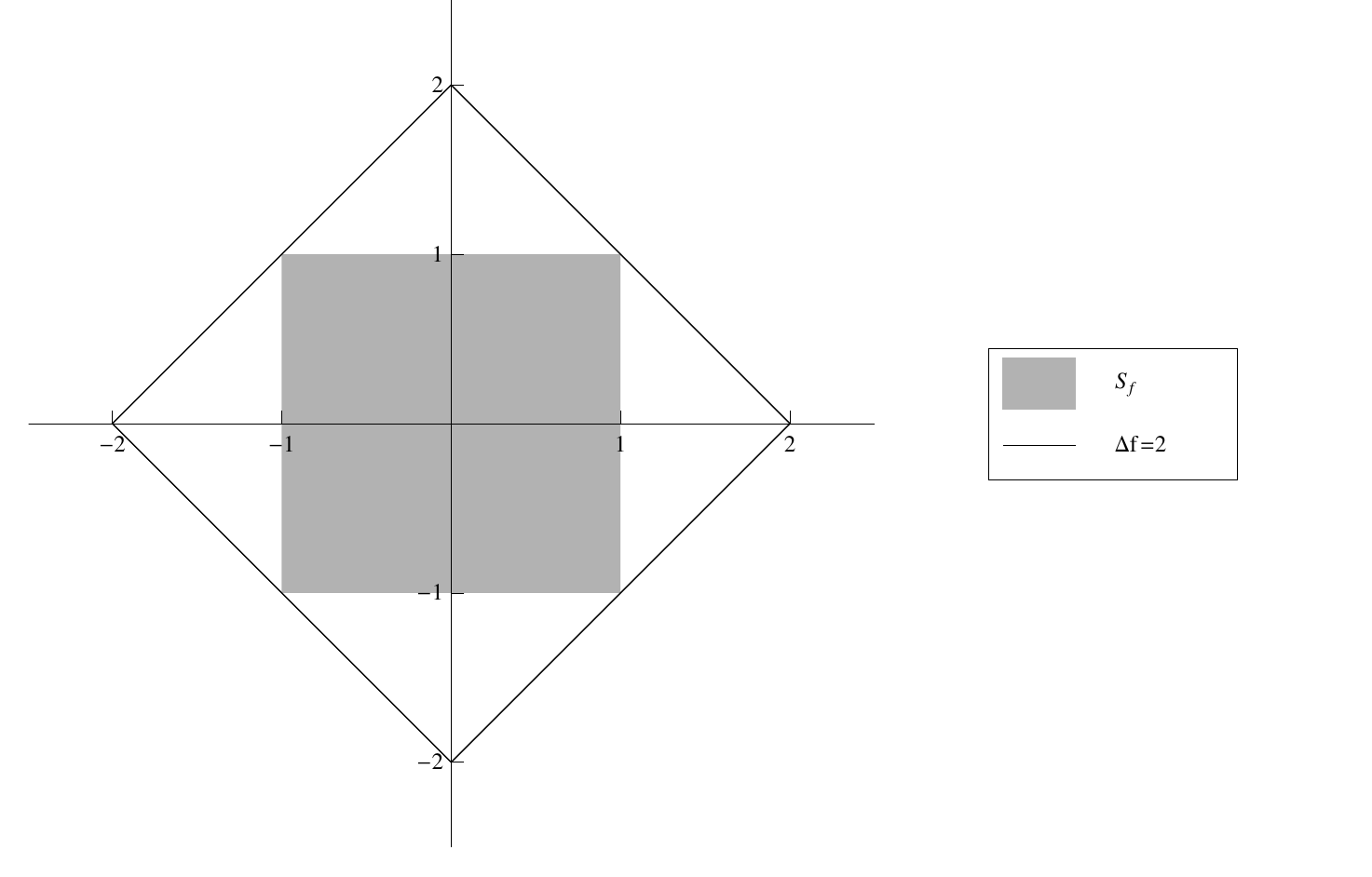}\caption{\label{fig:Laplace_approximation_1} Achieving $\varepsilon$-differential
privacy by Laplace noise addition for $S_{f}=[-1,1]\times[-1,1]$.
The shaded box represents the possible differences in the query result
between data sets that differ in one record. Differential privacy
requires the density of the noise in the shaded box to be within a
factor in $[\exp(-\varepsilon),\exp(\varepsilon)]$ of the density
at zero. The square that encloses the shaded box represents the points
that satisfy the previous condition when using Laplace noise. }
\end{figure}

As it can be appreciated in \figurename~\ref{fig:Laplace_approximation_1},
to satisfy $\varepsilon$-differential privacy at points $(1,1)$,
$(1,-1)$, $(-1,-1)$ and $(-1,1)$ with independent Laplace noise
addition for each dimension, we are overprotecting those points with
$L_{1}$-norm less than or equal to $\Delta f=2$ that do not belong
to $[-1,1]\times[-1,1]$; the density at these points is greater or
equal to $e^{-\varepsilon}f_{\N}(0)$, while this is not a requirement
of $\varepsilon$-differential privacy (which only requires a density
greater or equal to $e^{-\varepsilon}f_{\N}(0)$ for the points in
$S_{f}$).

The ratio between the size of the overprotected region and the size
of $S_{f}$ may become still larger if the variability of one of the
components is greater than the variability of the other. \figurename~\ref{fig:Laplace_approximation_10}
illustrates the case of $S_{f}$ being the set $[-1,1]\times[-10,10]$.

\begin{figure}[ht]
\begin{centering}
%JOSEP. Engrandida figura, que els caracters es veuen molt petits.
\includegraphics[width=14cm]{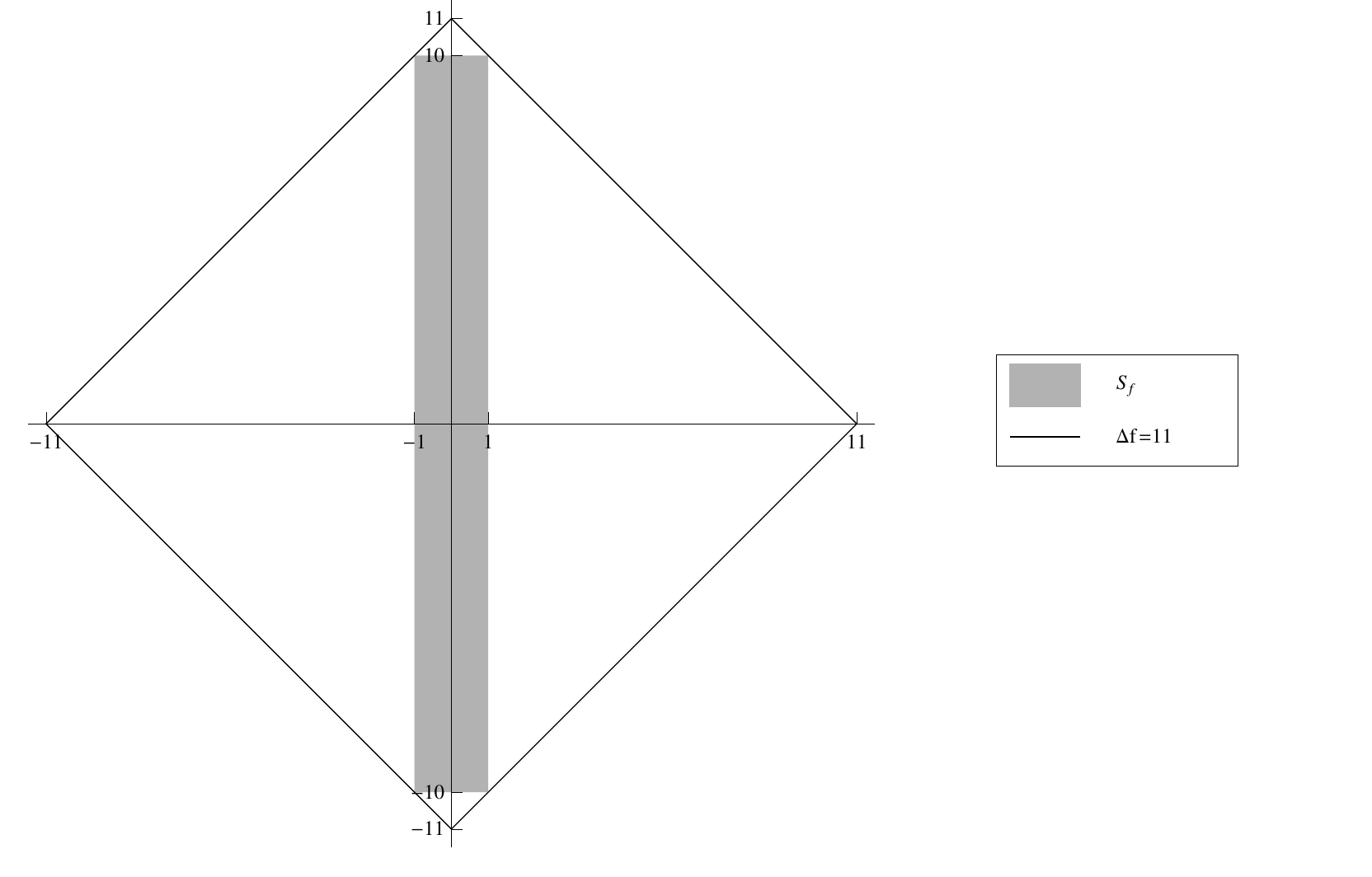} 
\par\end{centering}

\caption{\label{fig:Laplace_approximation_10}Achieving $\varepsilon$-differential
privacy by Laplace noise addition for $S_{f}=[-1,1]\times[-10,10]$
%JORDI2. Explicacio afegida (Revisor 2)
The shaded box represents the possible differences in the query result
between data sets that differ in one record. Differential privacy
requires the density of the noise in the shaded box to be within a
factor in $[\exp(-\varepsilon),\exp(\varepsilon)]$ of the density
at zero. The square that encloses the shaded box represents the points
that satisfy the previous condition when using Laplace noise.}
\end{figure}

\end{example}
In the construction of the piecewise constant noise density, we will
fix a set $S_{0}\subset S_{f}$ with $\langle0,x\rangle\subset S_{0}$
for all $x\in S_{0}$, where the maximum density will be reached.
From this $S_{0}$, we will define $S_{i}$ as the set that contains
the points that are reachable from $S_{i-1}$ in one step, that is,
by adding a value from $S_{f}$: 
\[
S_{i}=\{x\in\mathbb{R}^{d}|x=z+\delta,\, z\in S_{i-1},\,\delta\in S_{f}\}\setminus\cup_{j=0}^{i-1}{S_{j}}
\]
The density value over the points in $S_{i}$ will be $e^{-\varepsilon}$
times the density value over the points in $S_{i-1}$. Therefore,
for $x$ in $S_{i}$ it will be 
\[
f_{\N}(x)=Me^{-i\varepsilon}
\]
The value $M$ must be calibrated so that the total probability equals
1. Such calibration is possible because the density function decreases
exponentially as $i$ grows.

The following theorem shows that the constructed distribution is optimal
at providing $\varepsilon$-differential privacy to the function $f$.
\begin{thm}
\label{prop:multi_steps} Let $f=(f_{1},\ldots,f_{d})$ be a query
function with values in $\mathbb{R}^{d}$. Let $\N=(\N_{1},\ldots,\N_{d})$
be an a.c. random noise with density 
\[
f_{\N}(x)=\sum_{i\ge0}M\exp(-i\varepsilon)\mathbb{I}_{S_{i}}(x)
\]
where $\mathbb{I}_{S_{i}}(x)$ is the indicator function for set $S_{i}$
and $M$ has been calibrated to adjust the total probability mass
to one. If the following conditions hold, then $\N$ is optimal at
providing $\varepsilon$-differential privacy to f: \end{thm}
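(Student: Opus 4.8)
The plan is to proceed in two stages: first verify that $\N$ really provides $\varepsilon$-differential privacy to $f$, and then prove that it is minimal in the order of Definition~\ref{def:smaller_multivariate_noise}, mimicking the two-step argument used for the univariate case in Theorem~\ref{thm:2}.

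For the privacy part I would invoke Proposition~\ref{prop:4} and argue directly on the piecewise-constant density. Since $f_{\N}$ equals $M\exp(-i\varepsilon)$ on $S_i$, it suffices to show that if $x\in S_i$ and $x+\delta\in S_j$ for a difference vector $\delta=f(D)-f(D')$ of neighbouring data sets, then $|i-j|\le 1$: the privacy ratio $f_{\N}(x)/f_{\N}(x+\delta)$ is then $\exp(\varepsilon)$, $1$ or $\exp(-\varepsilon)$, and the condition of Proposition~\ref{prop:4} holds. The bound $j\le i+1$ is immediate from the recursion $S_j=(S_{j-1}+S_f)\setminus\bigcup_{\ell<j}S_\ell$, which gives $S_i+S_f\subseteq\bigcup_{\ell\le i+1}S_\ell$, and the bound $i\le j+1$ follows because $S_f$ is symmetric (if $D,D'$ are neighbours so are $D',D$, hence $-\delta\in S_f$ and $x=(x+\delta)+(-\delta)\in S_j+S_f$). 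The normalising constant $M$ exists because $S_f$ is bounded, so $S_i$ lies in a ball of radius $O(i)$, $\operatorname{vol}(S_i)$ grows only polynomially in $i$, and therefore $\sum_i M\exp(-i\varepsilon)\operatorname{vol}(S_i)<\infty$ and can be normalised to one.

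For optimality I would, exactly as in Theorem~\ref{thm:2}, show that no probability mass of $\N$ can be shifted towards zero without violating $\varepsilon$-differential privacy; that is, any absolutely continuous $\varepsilon$-differentially private $\widetilde{\N}$ with $\widetilde{\N}\le\N$ must equal $\N$. The first ingredient is that each $T_i:=\bigcup_{j\le i}S_j$ is a down-set in the sense of Definition~\ref{def:smaller_multivariate_noise}: $S_0$ is a down-set by hypothesis, $S_f=\bigcup\langle 0,v\rangle$ is a union of coordinate down-sets and hence is one itself, $T_i=T_{i-1}+S_f$, and a Minkowski sum of coordinate down-sets is again a coordinate down-set; consequently $\widetilde{\N}\le\N$ yields $P(\widetilde{\N}\in T_i)\ge P(\N\in T_i)$ for every $i$. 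The second ingredient is the maximal-decay-rate phenomenon: since every point of an outer shell $S_i$ is reachable in one $S_f$-step from a point of $S_{i-1}$ where $f_{\N}$ is $\exp(\varepsilon)$ times larger, the $\varepsilon$-differential privacy inequality for $\widetilde{\N}$ forces that, wherever $f_{\widetilde{\N}}$ exceeds $f_{\N}$ on some $S_i$, it also exceeds $f_{\N}$ on a positive-measure subset of $S_{i-1}$, and iterating this down to $S_0$ shows that any increase in concentration must surface as $f_{\widetilde{\N}}>M$ on part of $S_0$. Combining this with $\widetilde{\N}\le\N$ should force $P(\widetilde{\N}\in T_i)=P(\N\in T_i)$ for all $i$, hence $\widetilde{\N}$ and $\N$ put the same mass on every shell $S_i$; finally, slicing $T_i$ by an arbitrary down-set $D$ and subtracting $T_{i-1}$ gives $\int_{S_i\cap D}f_{\widetilde{\N}}\ge M\exp(-i\varepsilon)\operatorname{vol}(S_i\cap D)$, and letting $D$ shrink to a box around a Lebesgue point of $f_{\widetilde{\N}}$ interior to $S_i$ yields $f_{\widetilde{\N}}\ge M\exp(-i\varepsilon)$ a.e.\ on $S_i$; since the total mass already matches that of the constant $M\exp(-i\varepsilon)$, equality holds a.e.\ and $\widetilde{\N}=\N$.

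The main obstacle is precisely the place where $\widetilde{\N}\le\N$ and $\varepsilon$-differential privacy are combined to exclude any extra concentration, i.e.\ to obtain $P(\widetilde{\N}\in T_i)\le P(\N\in T_i)$. In the univariate Theorem~\ref{thm:2} the analogous move was simple because there the relevant sets are nested symmetric intervals; here, however, the notion of shifting mass towards zero is governed by the whole family of coordinate down-sets, and the shells $S_i$ can have intricate geometry, so one has to juggle simultaneously the down-set property of the $T_i$, the outward and inward reachability relations $S_{i+1}\subseteq S_i+S_f$ and $S_i\subseteq S_{i-1}+S_f$ used to propagate the privacy inequalities, and a measure-theoretic recovery of pointwise density values from averages over shrinking down-sets. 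Enough regularity of $S_0$ and $S_f$ (their being compact with, say, piecewise-flat boundary) is needed for the last two to be legitimate, and this is exactly what the hypotheses collected in the statement are there to guarantee; coordinating these facts is the delicate part, and the point at which the deliberately less formal treatment of this section must be handled with care.
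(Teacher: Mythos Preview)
Your verification of the $\varepsilon$-differential privacy condition is essentially the paper's: both reduce to showing that a single $S_f$-step moves between adjacent shells, using the recursion for one direction and the symmetry of $S_f$ for the other.

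For optimality the two arguments diverge. The paper does not work with the order relation or with the $T_i$ at all; it introduces a \emph{second} partition $\{S_f^i\}$ with $S_f^1=S_f$ and $S_f^{i+1}=(S_f^i+S_f)\setminus\bigcup_{j\le i}S_f^j$, and argues informally about ``moving mass towards zero'': inside $S_f$ the density takes only the values $M$ and $Me^{-\varepsilon}$, so lowering any value below $Me^{-\varepsilon}$ or raising one above $M$ already breaks the privacy ratio, and the down-set hypothesis on $S_0$ forces any mass moved towards zero from $S_0$ to land again in $S_0$; on each outer $S_f^{i+1}$ the density is already the minimum allowed by the value on $S_f^i$, so nothing can be removed there either. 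Your route via $T_i=\bigcup_{j\le i}S_j$ is more formal, and your observation that each $T_i$ is a down-set (because $T_i=S_0+iS_f$ is a Minkowski sum of down-sets) is correct and is a structural fact the paper does not isolate.

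The gap is in your inward-propagation step. From $f_{\widetilde{\N}}(x)>f_{\N}(x)=Me^{-i\varepsilon}$ at $x\in S_i$, and $z\in S_{i-1}$ with $x-z\in S_f$, the privacy inequality gives only
\[
e^{-\varepsilon}f_{\widetilde{\N}}(x)\ \le\ f_{\widetilde{\N}}(z)\ \le\ e^{\varepsilon}f_{\widetilde{\N}}(x),
\]
so the lower bound on $f_{\widetilde{\N}}(z)$ is merely $Me^{-(i+1)\varepsilon}$, far below $f_{\N}(z)=Me^{-(i-1)\varepsilon}$. Excess on $S_i$ therefore does \emph{not} force excess on $S_{i-1}$. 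What the privacy constraint does propagate is the opposite implication, and this is exactly what the paper uses: if $f_{\widetilde{\N}}\ge f_{\N}$ on $S_{i-1}$, then for every $x\in S_i$ one has $f_{\widetilde{\N}}(x)\ge e^{-\varepsilon}f_{\widetilde{\N}}(z)\ge e^{-\varepsilon}Me^{-(i-1)\varepsilon}=f_{\N}(x)$, i.e.\ lower bounds propagate \emph{outward}. To salvage your argument you must first obtain $f_{\widetilde{\N}}\ge M$ a.e.\ on $S_0$ (this is where the paper leans on $S_0\subset S_f$ together with the fact that the $M$-to-$Me^{-\varepsilon}$ drop across $S_f$ is already saturated), then propagate outward to get $f_{\widetilde{\N}}\ge f_{\N}$ everywhere, and finally combine this pointwise inequality with the down-set inequalities $P(\widetilde{\N}\in T_i)\ge P(\N\in T_i)$ and total mass one to force equality. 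Your Lebesgue-point finishing step is aimed at the right conclusion, but it cannot begin until the base case on $S_0$ is established by an argument in the spirit of the paper's.
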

\begin{itemize}
\item $S_{0}\subset S_{f}$ 
\item $\langle0,x\rangle\subset S_{0}$ for all $x\in S_{0}$ 
\item $S_{i+1}=(S_{i}+S_{f})\setminus\cup_{j=0}^{i-1}{S_{j}}$ for all $i\ge0$\end{itemize}
\begin{proof}
First of all we check that $\N$ satisfies the $\varepsilon$-differential
privacy condition as stated in Proposition~\ref{prop:4}. Consider
$x\in\mathbb{R}^{d}$ and $\delta\in S_{f}$. The sets $S_{i}$ form
a cover of $\mathbb{R}^{d}$; therefore we have $x\in S_{i}$ for
some $i\in\mathbb{N}$. For $x+\delta$ we have one of the following
possibilities: $x+\delta\in S_{i-1}$, $x+\delta\in S_{i}$, or $x+\delta\in S_{i+1}$.
The value of the density function will, respectively, be $Me^{-(i-1)\varepsilon}$,
$Me^{-i\varepsilon}$, or $Me^{-(i+1)\varepsilon}$; in all three
cases, the $\varepsilon$-differential privacy condition is satisfied.

To show that $\N$ is optimal at providing $\varepsilon$-differential
privacy to $f$ we have to check that if we move some probability
mass towards zero, the resulting random noise does not provide $\varepsilon$-differential
privacy to $f$. We partition $\mathbb{R}^{d}$ and check, for each
set in the partition, that it is not possible to move any probability
mass towards zero and still satisfy $\varepsilon$-differential privacy.
The partition is $\{S_{f}^{i},\, i\ge1\}$ where $S_{f}^{1}=S_{f}$
and $S_{f}^{i+1}=(S_{f}^{i}+S_{f})\setminus\cup_{j=1}^{i}S_{f}^{j}$.

We start by checking that it is not possible to move any probability
mass contained in $S_{f}^{1}$ towards zero and still satisfy $\varepsilon$-differential
privacy. The density $f_{\N}$ in $S_{f}^{1}$ can be expressed as
\[
f_{\N}(x)=M\times\mathbb{I}_{S_{0}}(x)+M\exp(-\varepsilon)\times\mathbb{I}_{S_{f}^{1}\setminus S_{0}}(x)
\]
Note that $f_{\N}$ already has the maximum change in the density
that $\varepsilon$-differential privacy allows: $\exp(\varepsilon)$.
In other words, if we increase the density above $M$ or decrease
it below $M\times\exp(-\varepsilon)$, $\varepsilon$-differential
privacy will not hold. Let $U\subset S_{f}^{1}$ be the set that will
have its probability mass reduced. It must be $U\subset S_{0}$; otherwise
some points would have
%JOSEPTHESIS. its -> their
their density reduced below $M\times\exp(-\varepsilon)$,
which is not possible. Now, as we have $\langle0,x\rangle\subset S_{0}$
for all $x\in S_{0}$ (\emph{i.e} for any point in $S_{0}$ the points
closer to zero are already in $S_{0}$), if we move probability mass
from $U$ towards zero, this probability mass must go to a set of
points $U'$ contained in $S_{0}$. This way the density of points
in $U'$ would be greater than $M$, which would also break $\varepsilon$-differential
privacy.

To conclude the proof we have to check that it is not possible to
move any probability mass belonging to a set $S_{f}^{i+1}$ with $i\ge1$
towards zero and still satisfy $\varepsilon$-differential privacy.
Note that the density function $f_{\N}$ decreases as fast as possible
as we move away from $S_{0}$: according to proposition Proposition~\ref{prop:4}
the density at a point $y$ reachable from a point $x$ by adding
a value from $S_{f}$ must satisfy $f_{\N}(y)\ge\exp(-\varepsilon)f_{\N}(x)$.
We have set the density $f_{\N}$ at $S_{i+1}$ to be $\exp(-\varepsilon)$
times the density at $S_{i}$; that is, the minimum value that satisfies
$\varepsilon$-differential privacy.

To move some probability mass belonging to $S_{f}^{i+1}$ towards
zero we must select a set $U\subset S_{f}^{i+1}$ and reduce its probability
mass. In other words, the density function in the points in $U$ is
to be reduced. But this is not possible, if we want to preserve $\varepsilon$-differential
%JOSEPTHESIS. Afegit privacy.
privacy.\end{proof}
\begin{example}
\label{exa:laplace_vs_stepwise} Let $f$ be a function with $S_{f}=[-1,1]\times[-10,10]$,
and take $\varepsilon=1$. Hence, the sensitivity of $f$ is $\Delta f=1+10=11$
and $\varepsilon$-differential privacy with two independent Laplace-distributed
random noise components requires these components to have zero mean
and 
%JOSEPTHESIS. Tret \frac
$11/\varepsilon$ scale parameter. Our proposal to achieve
$\varepsilon$-differential privacy is to use the piecewise constant
density construction by setting $S_{0}=[-0.1,0.1]\times[-1,1]$. \figurename~\ref{fig:Laplace_10_1}
shows the density function of both distributions. Note that with the
Laplace distribution the noise densities for both components of $f$
decrease at the same rate, even if the second component of $f$ has
ten times the sensitivity of the first one.

\begin{figure}[ht]
\begin{centering}
%JOSEPTHESIS. Engrandides figures, que les lletres sortien molt petites.
\includegraphics[angle=-90,width=10cm]{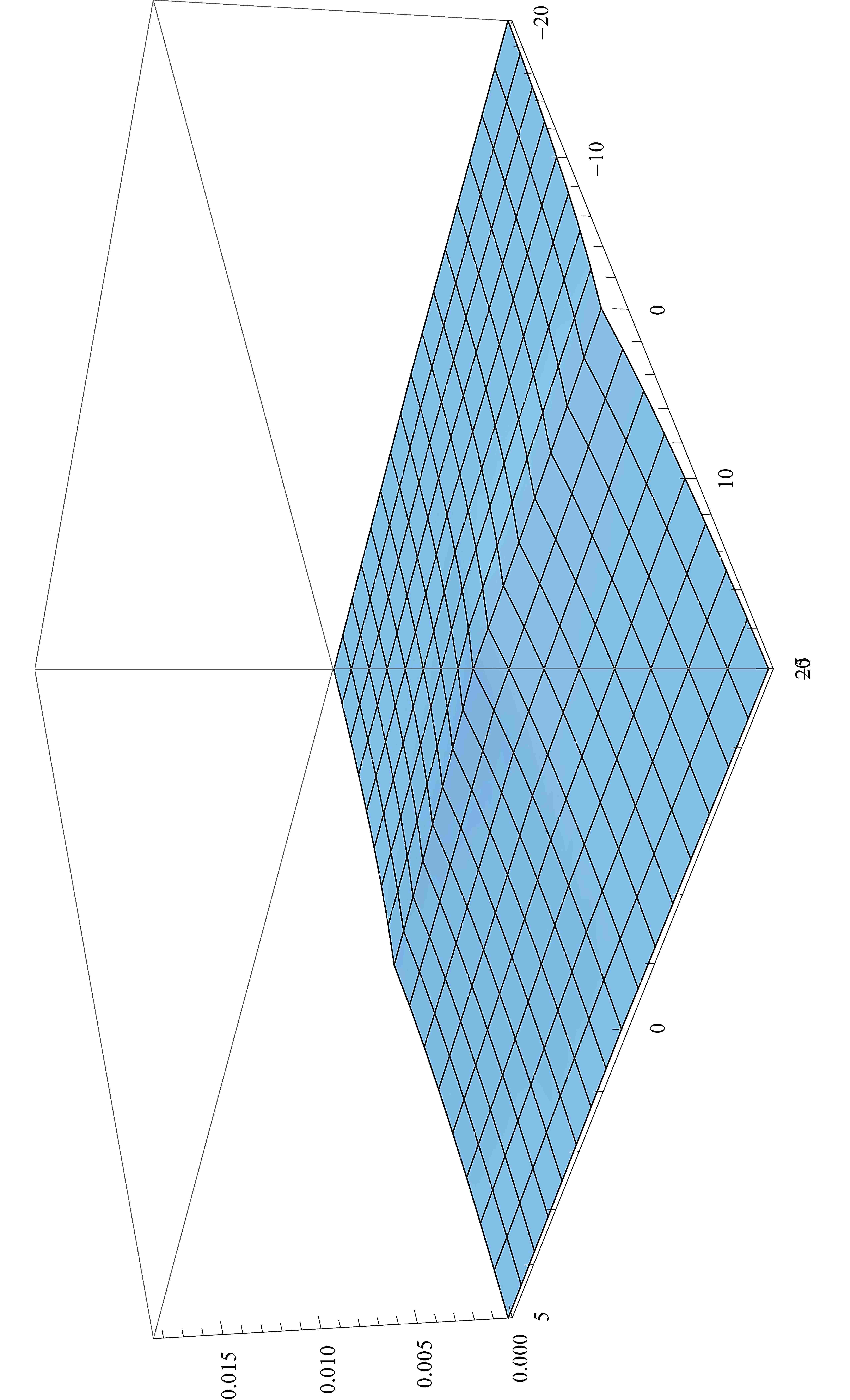} 
\par\end{centering}

\begin{centering}
\includegraphics[width=10cm]{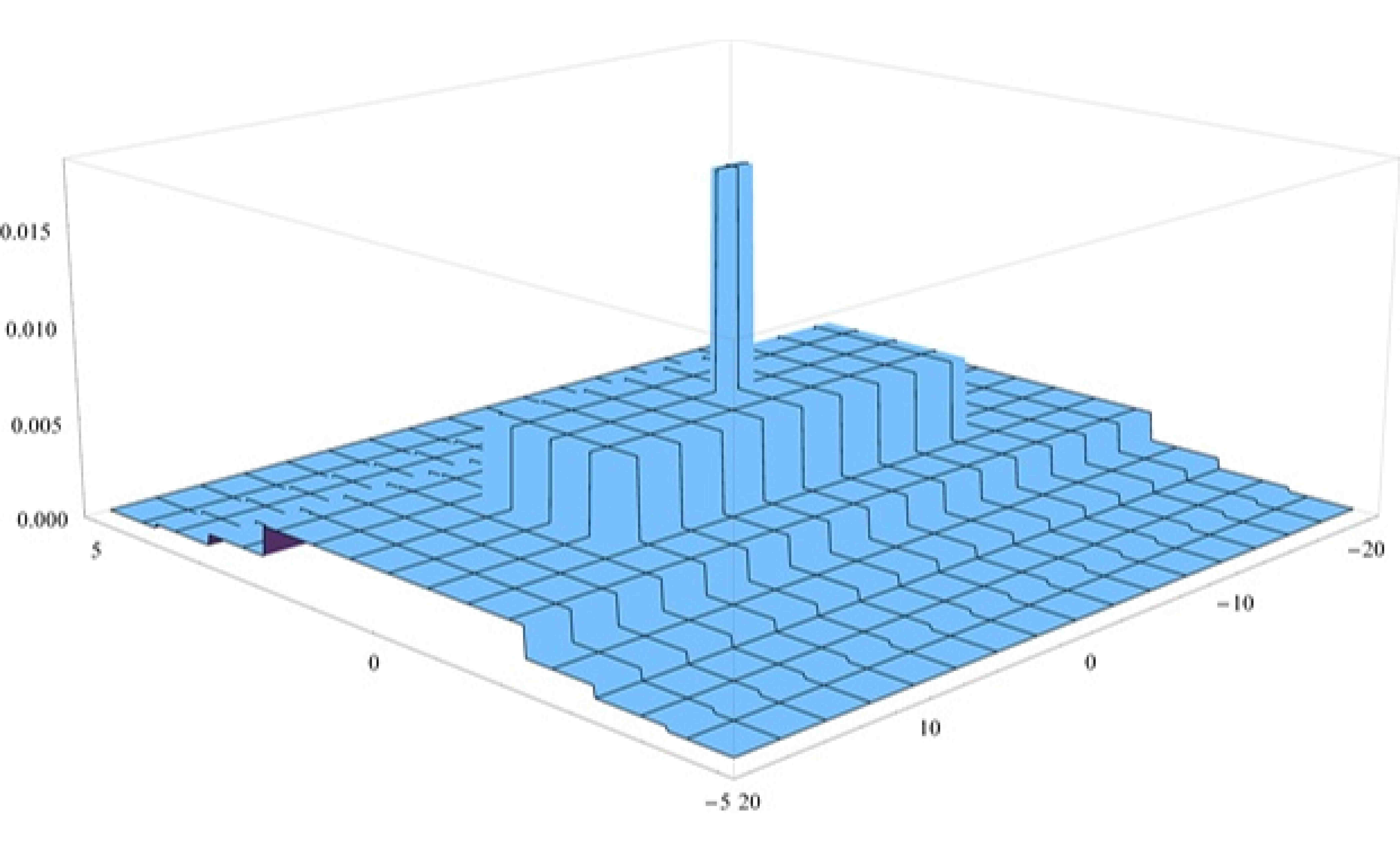} 
\par\end{centering}

\caption{\label{fig:Laplace_10_1}Density functions of the Laplace and piecewise
constant noise distributions required to achieve $1$-differential
privacy for a bivariate function $f=(f_{1},f_{2})$ with $\Delta f_{1}=1$
and $\Delta f_{2}=10$}
\end{figure}

It is easily appreciated in the figure that the piecewise constant
distribution has much more probability concentrated around zero, which
agrees with our optimality definition in Section~\ref{sec:Optimal-Random-Noise}.
To compare both distributions, we compute the variance of the components,
and the minimal size of a confidence region at some confidence levels.

For Laplace-distributed random noise $(\N_{1},\N_{2})$, the computations
are easy. Since we know that $\N_{1}$ and $\N_{2}$ follow a Laplace
distribution, their variance is twice the square of the scale factor
\begin{eqnarray*}
Var(\N_{1}) & = & 242\\
Var(\N_{2}) & = & 242
\end{eqnarray*}
With the Laplace-distributed random noise $(\N_{1},\N_{2})$ points
with equal $L_{1}$-norm are assigned the same noise density. Therefore
the confidence region with minimal size, for a given confidence level
is of the form $\{x|\,\Vert x\Vert\le\alpha\}$. Table~\ref{tab:minimal_size_laplace}
shows the size of the confidence region for several confidence levels.

\begin{table}[ht]
\caption{\label{tab:minimal_size_laplace}Minimal size of the confidence region
for two-dimensional Laplace-distributed random noise with scale parameter
11}

\centering{}%
\begin{tabular}{|c|c|c|}
\hline 
Confidence level  & $\alpha$  & Size\tabularnewline
\hline 
\hline 
0.99  & 73.02  & 10663\tabularnewline
\hline 
0.95  & 52.18  & 5445\tabularnewline
\hline 
0.90  & 42.79  & 3662\tabularnewline
\hline 
\end{tabular}
\end{table}

Computing the variance of the components of the piecewise constant
distribution will be done in terms of the sets $S_{f}$ and $S_{0}$.
If we let $S_{f}=[-s_{1},s_{1}]\times[-s_{2},s_{2}]$ and $S_{0}=[-z_{1},z_{1}]\times[-z_{2},z_{2}]$
then the density of the components $\N_{1}$ and $\N_{2}$ is 
\begin{eqnarray*}
f_{\N_{1}}(x) & = & 2Me^{-i_{1}\varepsilon}\times(z_{2}+s_{2}i_{1}+s_{2}/(e^{\varepsilon}-1))\\
f_{\N_{2}}(x) & = & 2Me^{-i_{2}\varepsilon}\times(z_{1}+s_{1}i_{2}+s_{1}/(e^{\varepsilon}-1))
\end{eqnarray*}
where $i_{1}=\lfloor(|x|-z_{1})/s_{1}+1\rfloor$ is the index of the
first set $S_{i}$ such that $(x,0)$ belongs to it, $i_{2}=\lfloor(|x|-z_{2})/s_{2}+1\rfloor$
is the index of the first set $S_{i}$ such that $(0,x)$ belongs
to it, and $M$ is a constant adjusted so that the random distribution
$(\N_{1},\N_{2})$ has probability mass one. \figurename~\ref{fig:Comparison_component_1}
compares the first and second components of the Laplace and the piecewise
constant random noise. Note that the piecewise constant distribution
seems to slightly underperform Laplace for the second component,
but it clearly outperforms Laplace for the first component.

%JOSEPTHESIS. Engrandida figura
\begin{figure}[ht]
\begin{centering}
\includegraphics[width=10cm]{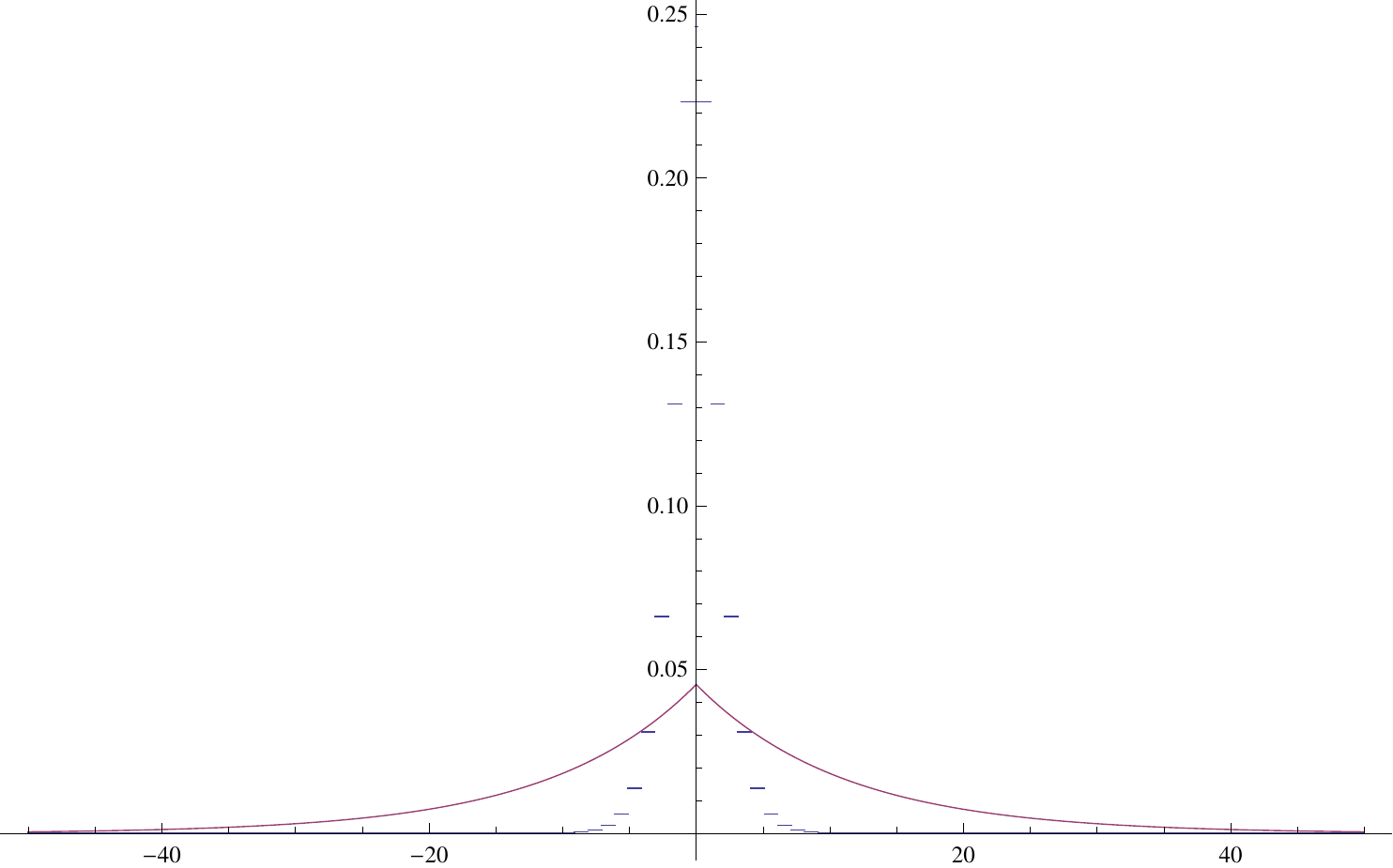} 
\par\end{centering}

\begin{centering}
\includegraphics[width=10cm]{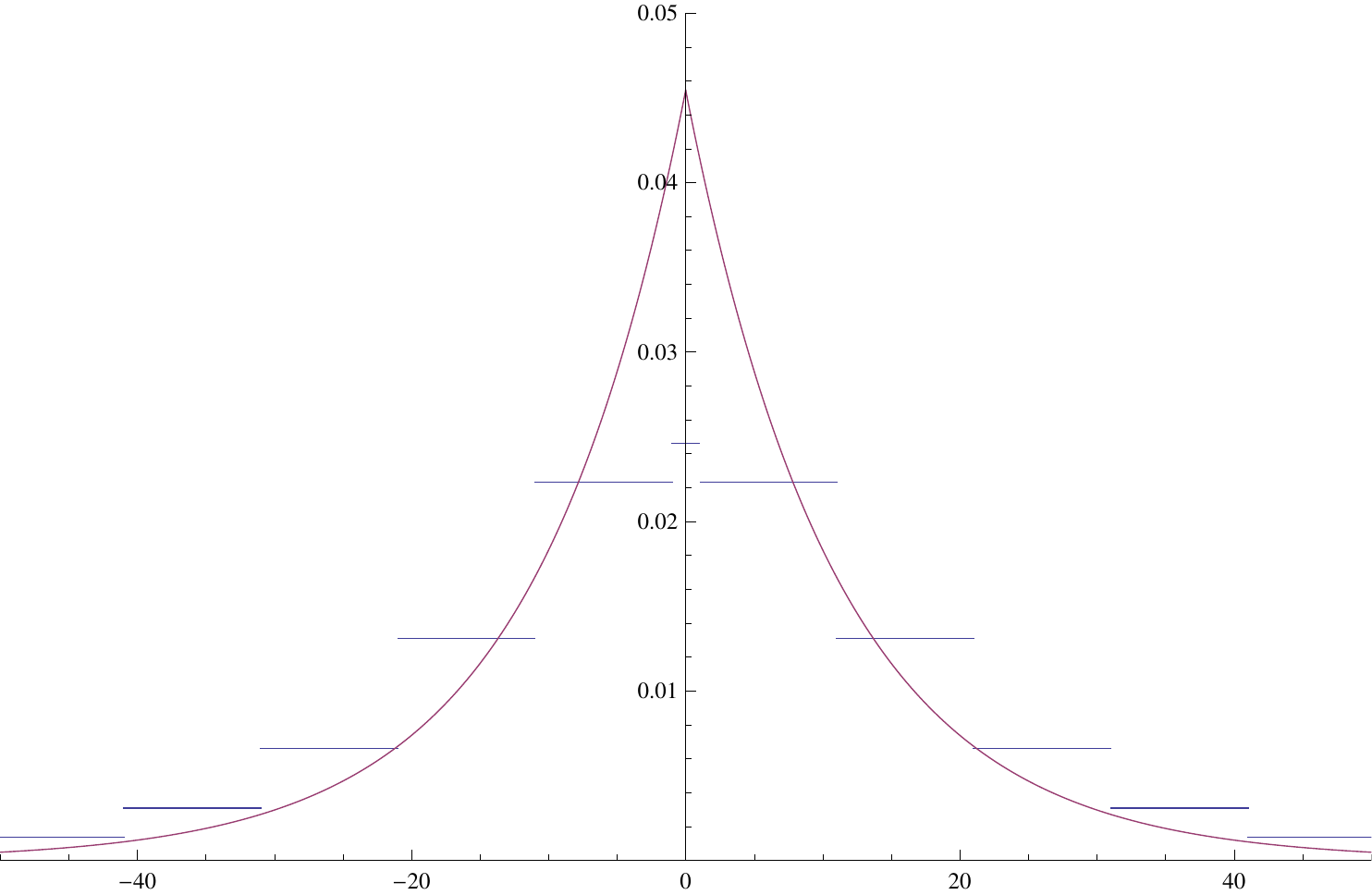} 
\par\end{centering}

%JOSEPTHESIS. Retocada caption
\caption{\label{fig:Comparison_component_1}Comparison of
the Laplace and the piecewise constant random noise distributions
required to achieve $1$-differential privacy for a bivariate function
$f=(f_{1},f_{2})$ with $\Delta f_{1}=1$ and $\Delta f_{2}=10$.
Top, comparison for the first component; bottom, comparison for the
second component.}
\end{figure}

Since the mean of the components is zero, their variance can be computed
by integrating $\int_{\mathbb{R}}x^{2}f_{\N_{i}}(x)dx$, which results
in:

\begin{eqnarray*}
Var(\N_{1}) & = & 4.0338\\
Var(\N_{2}) & = & 403.38
\end{eqnarray*}
Compared to the variances obtained for the Laplace-distributed random
noise, we observe that the variance for $\N_{2}$ when using the piecewise
constant distribution is about twice as big as when using Laplace
distribution. On the other side, the variance of $\N_{1}$ is much
smaller when using the piecewise constant distribution. These results
are consistent with the previous observation about \figurename~\ref{fig:Comparison_component_1}.

We compute now confidence regions for the piecewise constant distribution.
To obtain a confidence region with minimal size, we make sure to include
all the points in $S_{i}$ before including any point in $S_{i+1}$.
We will consider confidence regions of the form $[-z_{1}-\beta s_{1},z_{1}+\beta s_{1}]\times[-z_{2}-\beta s_{2},z_{2}+\beta s_{2}]$.
Table~\ref{tab:confidence_region_stepwise} shows the confidence
regions obtained. By comparing with Table~\ref{tab:minimal_size_laplace},
it can be observed in the table that the minimal size for a confidence
level is much smaller when using the piecewise constant distribution.

\begin{table}[ht]
\caption{Minimal size of the confidence region for the piecewise constant noise
distribution needed for a bivariate function $f=(f_{1},f_{2})$ with
$\Delta f_{1}=1$ and $\Delta f_{2}=10$\label{tab:confidence_region_stepwise}}

\centering{}%
\begin{tabular}{|c|c|c|}
\hline 
Confidence level  & $\beta$  & Size\tabularnewline
\hline 
\hline 
0.99  & 6.99  & 1790.2\tabularnewline
\hline 
0.95  & 4.79  & 916.6\tabularnewline
\hline 
0.90  & 3.90  & 611.2\tabularnewline
\hline 
\end{tabular}
\end{table}

\end{example}
Note that in Example~\ref{exex} we considered $S_{f}$ to be the
product of two intervals. This case models the situation where the
query function components are independent, in the sense that we can
achieve any possible combination of values for the difference of the
query function. That is, $S_{f}=[-1,1]\times[-1,1]$ means that, for
any $[\delta_{1},\delta_{2}]\in[-1,1]\times[-1,1]$, we can find two
data sets $D$ and $D'$ differing in one row such that $f_{1}(D)-f_{1}(D')=\delta_{1}$
and $f_{2}(D)-f_{1}(D')=\delta_{2}$. Taking $S_{f}$ to be the product
of intervals is the natural option in the case of an interactive mechanism~\cite{Dwork2006a},
where we get to know each of the components of the query function
(\emph{i.e.} each successive query if we view the multivariate query
as a group of queries) at different times. In an interactive mechanism
it is not possible to construct the distribution that best matches
the multiquery function $f$, because at the time of the first query
we only know $f_{1}$. Clearly, it is possible to achieve a better
noise calibration for a non-interactive query than for an interactive
one, but using independent Laplace noise addition for each component
fails to exploit non-interactivity.

\section{Conclusions}

Our goal in this chapter
was to analyze the optimality of data-independent random
noise distributions to achieve $\varepsilon$-differential privacy.
The first step was to define the concept of optimal distribution as
a distribution that concentrates the probability around zero as much
as possible while ensuring differential privacy. This criterion led
to a family of optimal distributions, which can be refined by using
additional criteria. In the examples, we have computed optimal distributions
using as additional criteria the minimization of the response variance
or the minimization of the size of the confidence interval around
the response.

For a single univariate query, the optimal absolutely continuous noise
distributions to achieve $\varepsilon$-differential privacy were
built; as a result, we obtained a family of piecewise constant density
functions. The comparison with the Laplace noise distribution showed
that Laplace performs only slightly worse than the optimal absolutely
continuous distributions. Comparison figures were provided for the
variance and the size of the confidence interval.

For a multivariate query or multiple queries, a piecewise constant
construction similar to that of a single query was presented. Comparisons
in terms of variance and of size of the minimal confidence interval
showed that, \emph{for multivariate and/or multiple queries, the Laplace
distribution is far from being optimal}. Given the popularity of the
Laplace distribution, this is a very relevant result. We also observed
that the proposed mechanism provides better responses for non-interactive
queries, as it is able to exploit the global knowledge on the query
function. This is not possible for mechanisms that assume the components
of the query function to be independent, as it is the case for Laplace
noise addition.

\lhead[\chaptername~\thechapter]{\rightmark}

\rhead[\leftmark]{}

\lfoot[\thepage]{}

\cfoot{}

\rfoot[]{\thepage}

%JOSEPTHESIS. Canviat titol
\chapter{Sensitivity-independent differential privacy via knowledge refinement}

%JOSEPTHESIS. Canviada primera frase.
Differential privacy states that the probability for a query response 
to belong to any subset of the query domain 
must be similar regardless of presence or absence
of any specific individual in the data set (dee Definition~\ref{def:dp_dwork}).
A usual approach to satisfy such condition is noise addition: first,
the real value of the query response is computed and, then, a random
noise is added to mask it. A Laplace distribution with zero mean and
a scale parameter that depends on the variability of the query function
is commonly used for noise addition.

Our proposal is not based on masking the true value of the response
by adding some noise, but on modifying the prior knowledge that the
database user has on the response. When a query is submitted to the
database, the user submits at the same time her knowledge/beliefs
about the response. We think of this prior knowledge as the probability
distribution that the user expects for the response. 

Our mechanism is shown to have several advantages over noise addition:
it does not require complex computations, and thus it can be easily
automated; it lets the user exploit her prior knowledge about the
response to achieve better data quality; and it is independent of
the sensitivity of the query function (although this can be a disadvantage
if the sensitivity is small). Furthermore, we give a general algorithm
for knowledge refinement and we show some compounding properties of
our mechanism for the case of multiple queries; also, we build an
interactive mechanism on top of knowledge refinement and we show that
it is safe against adaptive attacks. Finally, we give a quality assessment
for the responses to individual queries.

%JOSEPTHESIS. Afegit
The contents of this chapter have been published in~\cite{passat,ijufks}.

\section{Refining prior knowledge\label{sec:2 Ref_prior_knowledge}}

Our proposal to attain $\varepsilon$-differential privacy is not
based on masking the true value of the response by adding some noise,
but on modifying the prior knowledge of the database user on the response.
When a query is submitted to the database, the user submits at the
same time her knowledge/beliefs about the response. We think of this
prior knowledge as the probability distribution that the user expects
for the response. For example, in case the user has absolutely no
idea about the possible result for a query $f$, the probability distribution
to be used is the uniform distribution over the range of $f$ (assuming
that this range is bounded). The access mechanism modifies this prior
knowledge to fit the real value of the response as much as possible
given the constraints imposed by differential privacy.
\begin{defn}
\label{def:prior_knowledge}Given a query function $f$, the \emph{prior
knowledge} about the response $f(D)$ is the probability distribution
$P_{f}$, defined over $Range(f)$, that the user expects for the
response to $f$.
\end{defn}
The more concentrated the probability mass of $P_{f}$ around the
real value of the response to $f$, the more accurate is the user's
prior knowledge. In general, as the user knows the query $f$ and
the set of possible databases $D$, one may expect her to have some
prior knowledge about the response $f(D)$. The better the knowledge
the user has on the actual database $D$, the more accurate is the
prior knowledge the user can provide to the response mechanism. If
the user's prior knowledge is wrong, the accuracy of the response
may suffer. However, whatever the prior knowledge, the refinement
procedure guarantees that the output is more accurate than the prior
knowledge.

Some users may be reluctant to provide detailed prior knowledge, because
they regard doing so as giving information about themselves to the
database. We should usually think of the prior knowledge as the information
about the response that is publicly available. Providing the database
with such a prior knowledge reveals nothing about the database user.
If the database user has information that is not publicly available,
she must decide whether to use it as prior knowledge or not; the more
accurate the prior knowledge, the more accurate the response will
be. We will see in Section~\ref{sec:Quality} that, even when little
prior knowledge is assumed, knowledge refinement may be superior,
in terms of data quality, to noise addition approaches. Therefore,
it may make sense to use knowledge refinement even if the database
user is not willing to provide all her actual prior knowledge. 

If the query function $f$ has multiple components (dimension $n>1$),
the joint probability distribution must be provided. If the components
of $f$ are independent, specifying the marginal distribution for
each component is enough to compute the joint distribution. This will
also be the case if the components are not independent but the user
has no knowledge about the relationship among them.

The access mechanism is run by the database holder as follows: 
\begin{itemize}
\item Receive the query $f$ and the prior knowledge $P_{f}$ from the database
user.
\item Compute the actual value of the query response, $f(D)$. 
\item Modify $P_{f}$ to adjust it to $f(D)$ as much as possible, given
the constraints imposed by differential privacy.
\item Randomly sample the distribution resulting from the previous step,
and return the sampled value as the response to $f$ evaluated at
$D$.
\end{itemize}
Even though knowledge refinement works by adjusting the prior knowledge,
the output is not the adjusted distribution but a sample from it.
This is the usual approach in differential privacy; only a sample
from the output distribution is returned. Returning the output distribution
itself would leak too much information; in some cases, it could be
used to determine the exact value of the query response.

Note that the user cannot pretend to have more knowledge than she
actually has: sending a guess as $P_{f}$ will most likely be wrong
and worsen the response quality. Also, we show in Section~\ref{sec:interactive}
that using several different (fake) prior knowledge distributions
to mount adaptive attacks does not succeed in breaking $\varepsilon$-differential
privacy.

The critical step is the adjustment of the prior knowledge to the
real query response. To perform this adjustment, we distinguish two
types of queries: statistical queries and individual queries. We call
statistical queries those whose outcome depends on multiple individuals,
while individual queries are those that depend on a single individual.
It will be shown below that a finer adjustment of the prior knowledge
is feasible for individual queries. We start by focusing on statistical
queries, but, before formally specifying the response mechanism, we
give an example to illustrate what we intend to do.
\begin{example}
\label{ex:1}Assume a query function $f$ that is known to return
a value within the interval $[0,1]$. Assume also that the database
user has no further knowledge about the query response, \emph{i.e.}
her prior knowledge is the uniform distribution over $[0,1]$.

To refine the prior knowledge, we modify its density by applying two
multiplicative factors: $\alpha_{u}\ge1$ to the points near $f(D)$,
and $\alpha_{d}\le1$ to the points farther from $f(D)$. In this
way, the probability of obtaining as the response a value near the
actual response $f(D)$ is increased with respect to the prior knowledge,
while the probability of obtaining a distant value is decreased. Figure~\ref{fig:distributions_ex1}
shows the probability distribution resulting from applying the procedure
described above for a pair of neighbor data sets $D$ and $D'$.

\begin{figure}[h]
\begin{centering}
%JOSEPTHESIS. Engrandides figures
\includegraphics[width=7cm]{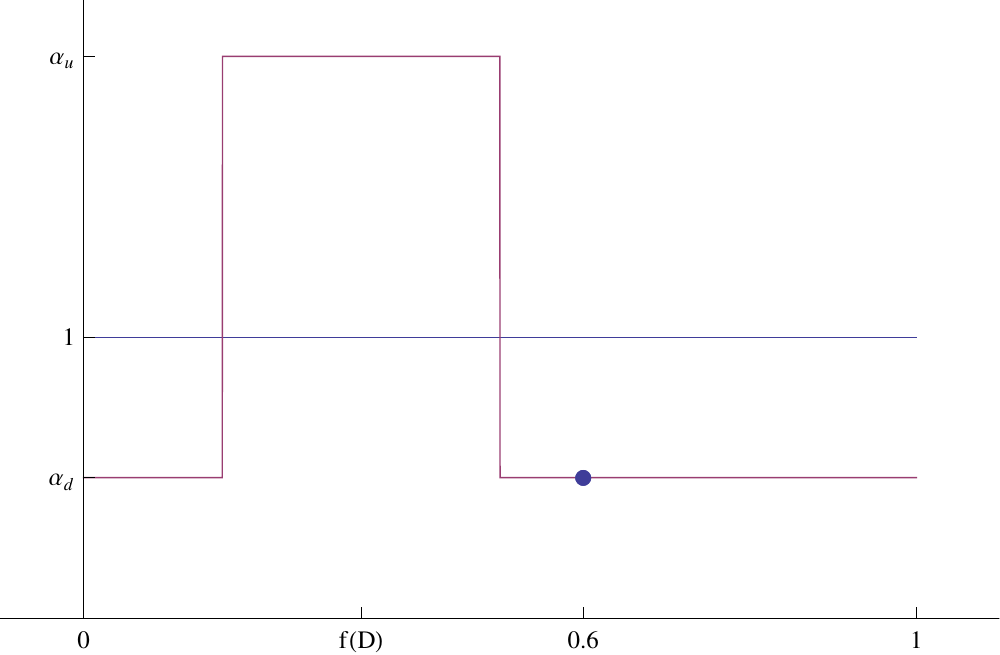}~\includegraphics[width=7cm]{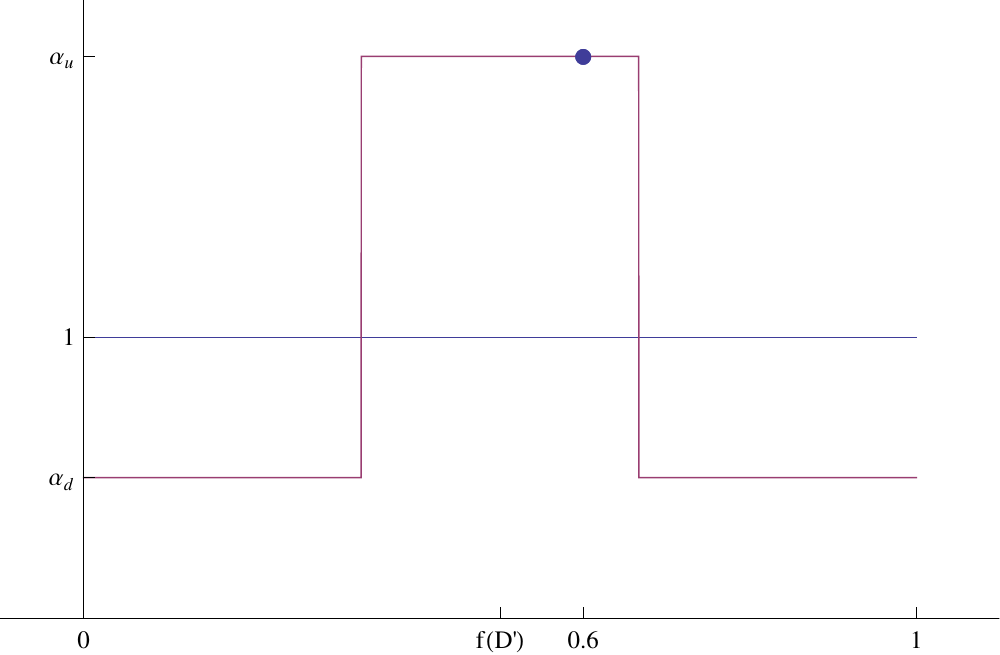}
\par\end{centering}

\caption{Distributions for the response to $f(D)$ (left) and to $f(D')$ (right)\label{fig:distributions_ex1}}
\end{figure}

To obtain $\varepsilon$-differential privacy, the density at a given
point for the response to $f(D)$ must be a factor within the interval
$[e^{-\varepsilon},e^{\varepsilon}]$ of the density at the same point
for the response to $f(D')$. Check, for example, the point 0.6 in
Figure~\ref{fig:distributions_ex1}: on the left-hand side distribution,
the point is far from the real response and thus a factor $\alpha_{d}$
is applied; on the right-hand side distribution, the point is near
the real response and the factor applied is $\alpha_{u}$. For the
$\varepsilon$-differential privacy condition to hold, it must be
$\alpha_{u}/\alpha_{d}\le e^{\varepsilon}$. We can also think in
the reverse way: given two constants $\alpha_{u}\ge1$ and $\alpha_{d}\le1$,
the level of differential privacy achieved by this response mechanism
is $\varepsilon=\ln(\alpha_{u}/\alpha_{d})$.
\end{example}
Note that, to obtain a valid density function from the above modification,
the set of points over which each of the factors $\alpha_{u}$ and
$\alpha_{d}$ are applied must be selected in such a way that the
total probability mass of the resulting distribution equals 1. If
we denote by $\mathcal{U}_{u}$ the set over which we apply the factor
$\alpha_{u}$, for the total probability mass of the adjusted distribution
to be 1, we must have $\alpha_{u}P_{f}(\mathcal{U}_{u})+\alpha_{d}(1-P_{f}(\mathcal{U}_{u}))=1$.
If the prior knowledge is an absolutely continuous distribution, as
in Example~\ref{ex:1}, for any pair of values $\alpha_{u}\ge1$
and $\alpha_{d}\le1$ it is possible to select a set $\mathcal{U}_{u}$
in such a way that $\alpha_{u}P_{f}(\mathcal{U}_{u})+\alpha_{d}(1-P_{f}(\mathcal{U}_{u}))=1$
is satisfied. The reason is that we can select the set $\mathcal{U}_{u}$
to have any probability mass between 0 and 1. If the prior knowledge
distribution is not absolutely continuous, it may not be possible
to find a set $\mathcal{U}_{u}$ with the required probability mass
for the given values $\alpha_{u}$ and $\alpha_{d}$. This section
assumes that such a set $\mathcal{U}_{u}$ exists. In Section~\ref{sec:3 gen_algorithm},
we specify a general algorithm that works for any prior knowledge
distribution. 

The following proposition formalizes the ideas discussed in the previous
example.
\begin{prop}
\label{prop:1}Let $f:\mathcal{D}\rightarrow\mathbb{R}^{n}$ be a
query function and let $P_{f}$ be the prior knowledge for $f(D)$.
Let $\alpha_{u}\ge1$ and $\alpha_{d}\le1$ be such that $\alpha_{u}=e^{\varepsilon}\alpha_{d}$.
Let $\mathcal{U}_{u}$ be an environment of $f(D)$ satisfying $\alpha_{u}P_{f}(\mathcal{U}_{u})+\alpha_{d}(1-P_{f}(\mathcal{U}_{u}))=1$.
The response mechanism that returns a value randomly sampled from
the distribution obtained by modifying $P_{f}$ through multiplication
of the probability mass of the points in $\mathcal{U}_{u}$ by $\alpha_{u}$,
and multiplication of the probability mass of the points outside $\mathcal{U}_{u}$
by $\alpha_{d}$, satisfies $\varepsilon$-differential privacy.
\end{prop}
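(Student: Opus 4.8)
The plan is to unwind the mechanism into an explicit description of the law of the response and then bound it against the prior distribution $P_{f}$, which acts as a common reference for both data sets. Fix two neighbour data sets $D$ and $D'$ and let $\mathcal{U}_{u}(D)$ and $\mathcal{U}_{u}(D')$ be the environments chosen by the mechanism (around $f(D)$ and $f(D')$, each satisfying the normalisation condition). Writing $Q_{D}$ for the law of the value returned on input $D$, the mechanism's definition gives, for every measurable $S\subseteq Range(f)$,
\[
Q_{D}(S)=\alpha_{u}\,P_{f}\!\left(S\cap\mathcal{U}_{u}(D)\right)+\alpha_{d}\,P_{f}\!\left(S\setminus\mathcal{U}_{u}(D)\right),
\]
and similarly for $Q_{D'}$. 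The normalisation condition $\alpha_{u}P_{f}(\mathcal{U}_{u}(D))+\alpha_{d}(1-P_{f}(\mathcal{U}_{u}(D)))=1$ guarantees $Q_{D}(Range(f))=1$, so $Q_{D}$ is a genuine probability distribution, and likewise $Q_{D'}$. I would stress here that $P_{f}$ is supplied by the user and does not depend on whether the underlying data set is $D$ or $D'$: it is the \emph{same} measure in both expressions.

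The next step is two one-sided bounds routed through $P_{f}$. Since $\alpha_{d}\le 1\le\alpha_{u}$, on $S\cap\mathcal{U}_{u}(D)$ the prior mass is scaled by $\alpha_{u}$ and on the remainder by at most $\alpha_{u}$, which yields the upper bound $Q_{D}(S)\le\alpha_{u}P_{f}(S)$. Symmetrically, every part of $S$ contributes to $Q_{D'}(S)$ with a factor of at least $\alpha_{d}$, so $Q_{D'}(S)\ge\alpha_{d}P_{f}(S)$. Combining these with the hypothesis $\alpha_{u}=e^{\varepsilon}\alpha_{d}$ gives
\[
Q_{D}(S)\ \le\ \alpha_{u}P_{f}(S)\ =\ e^{\varepsilon}\alpha_{d}P_{f}(S)\ \le\ e^{\varepsilon}Q_{D'}(S),
\]
which is exactly Inequality~(\ref{eq:dp_dwork}) of Definition~\ref{def:dp_dwork}. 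Since the relation ``differ in one record'' is symmetric in $D$ and $D'$, the same computation with the roles swapped gives $Q_{D'}(S)\le e^{\varepsilon}Q_{D}(S)$, so the guarantee holds in both directions for all measurable $S$.

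There is no deep obstacle; the only things to be careful about are presentational. First, the argument must never invoke any relationship between $\mathcal{U}_{u}(D)$ and $\mathcal{U}_{u}(D')$ — and it does not, precisely because the comparison is passed through the common prior $P_{f}$; this is the one idea that makes the proof go through, so it is worth making explicit. Second, the statement presupposes that, for the given $\alpha_{u},\alpha_{d}$, an environment $\mathcal{U}_{u}$ with the prescribed prior mass exists (which is the case, e.g., when $P_{f}$ is absolutely continuous); under that standing assumption $Q_{D}$ is well defined and nothing further is required. If a density-level formulation is preferred, the identical computation can be phrased with Radon--Nikodym derivatives, noting $dQ_{D}/dP_{f}\le\alpha_{u}$ and $dQ_{D'}/dP_{f}\ge\alpha_{d}$ pointwise and then integrating over $S$.
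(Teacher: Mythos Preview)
Your argument is correct and matches the paper's reasoning: the paper does not give a separate formal proof of this proposition but relies on the preceding Example~\ref{ex:1}, whose pointwise density comparison (each output density lies between $\alpha_{d}$ and $\alpha_{u}$ times the common prior, hence the ratio is at most $\alpha_{u}/\alpha_{d}=e^{\varepsilon}$) is exactly your route through $P_{f}$, carried out at the level of sets rather than densities. Your explicit emphasis that the comparison never needs any relation between $\mathcal{U}_{u}(D)$ and $\mathcal{U}_{u}(D')$ is a useful clarification the paper leaves implicit.
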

When the query $f$ returns a value related to a single individual,
the mechanism in Proposition~\ref{prop:1} can be improved. In that
case, there are only two possibilities for the response: (i) if the
individual we are asking about is not in the database, the distribution
of the response equals the prior knowledge distribution, and (ii)
if the individual is in the database, the distribution for the response
will be the result of refining the prior knowledge. To satisfy $\varepsilon$-differential
privacy, we only need to guarantee that the distribution resulting
from (i) and (ii) does satisfy the limitation on the knowledge gain
imposed by differential privacy. In other words, the output distribution
need only be compared to the prior knowledge. The conditions that
must hold are $1\le\alpha_{u}\le e^{\varepsilon}$ and $e^{-\varepsilon}\le\alpha_{d}\le1$.

Note that, by choosing $\alpha_{u}=e^{\varepsilon}$ and $\alpha_{d}=e^{-\varepsilon}$,
the level of differential privacy that we can guarantee for a statistical
query function (depending on multiple individuals) is $2\varepsilon$,
while for an individual query (whose outcome depends on a single individual),
we double the guarantee to $\varepsilon$.
\begin{prop}
\label{prop:2}Let $f:\mathcal{D}\rightarrow\mathbb{R}^{n}$ be an
individual query in the above sense and let $P_{f}$ be the prior
knowledge distribution for $f$. Let $\alpha_{u}=e^{\varepsilon}$
and $\alpha_{d}=e^{-\varepsilon}$. Let $\mathcal{U}_{u}$ be an environment
of $f(D)$ satisfying $\alpha_{u}P_{f}(\mathcal{U}_{u})+\alpha_{d}(1-P_{f}(\mathcal{U}_{u}))=1$.
The response mechanism that returns a value randomly sampled from
the distribution obtained by modifying $P_{f}$ through multiplication
of the probability mass of the points in $\mathcal{U}_{u}$ by $\alpha_{u}$,
and multiplication of the probability mass of the points outside $\mathcal{U}_{u}$
by $\alpha_{d}$, satisfies $\varepsilon$-differential privacy.
\end{prop}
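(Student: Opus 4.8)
The plan is to exploit the observation recorded just before the statement: because $f$ depends on a single individual, the only two response distributions that can arise on a pair of neighbour data sets are the prior knowledge $P_f$ itself and the single refined distribution, so it suffices to bound the likelihood ratio of these two by a factor $e^{\varepsilon}$ in both directions. This is exactly what the choice $\alpha_u=e^{\varepsilon}$, $\alpha_d=e^{-\varepsilon}$ is designed to deliver.

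First I would fix notation. Let $Q$ denote the distribution the mechanism produces when the individual on which $f$ depends is present in the database, i.e. $Q$ is obtained from $P_f$ by multiplying the probability mass on $\mathcal{U}_u$ by $\alpha_u=e^{\varepsilon}$ and the probability mass on $Range(f)\setminus\mathcal{U}_u$ by $\alpha_d=e^{-\varepsilon}$. The defining equation $\alpha_u P_f(\mathcal{U}_u)+\alpha_d(1-P_f(\mathcal{U}_u))=1$, whose solvability is assumed throughout this section, guarantees that $Q$ is again a probability distribution, and by construction $Q$ is absolutely continuous with respect to $P_f$ with Radon--Nikodym derivative equal to $e^{\varepsilon}$ on $\mathcal{U}_u$ and $e^{-\varepsilon}$ on its complement. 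Integrating the pointwise bound $e^{-\varepsilon}\le dQ/dP_f\le e^{\varepsilon}$ over an arbitrary measurable set $S\subseteq Range(f)$ gives
\[
e^{-\varepsilon}P_f(S)\le Q(S)\le e^{\varepsilon}P_f(S),
\]
and this two-sided inequality is the core of the argument. Note also that the value $\kappa(D)$ returned by the mechanism has distribution $Q$ when the relevant individual lies in $D$ and distribution $P_f$ otherwise (as noted in the discussion preceding the proposition), so it is these two distributions, not the sampling step, that matter for differential privacy.

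Next I would carry out the case analysis over neighbour data sets $D$ and $D'$ in the add/remove sense (the convention of~\cite{Dwork2006} underlying Definition~\ref{def:dp_dwork}); write $i$ for the individual on which $f$ depends. If the record distinguishing $D$ from $D'$ is not $i$'s, then $i$ is present in both data sets or in neither, hence $f(D)=f(D')$, the mechanism builds the same $\mathcal{U}_u$ in both cases, and the response distribution on $D$ equals that on $D'$; Inequality~(\ref{eq:dp_dwork}) then holds with factor $1$. If the distinguishing record is $i$'s, then exactly one of $D,D'$ contains $i$: on that data set $\kappa$ has distribution $Q$ and on the other it has distribution $P_f$, so the displayed two-sided inequality, applied with $S$ the event in question, yields both $P(\kappa(D)\in S)\le e^{\varepsilon}P(\kappa(D')\in S)$ and the symmetric bound. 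These cases are exhaustive, so $\kappa$ satisfies $\varepsilon$-differential privacy.

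The point to get right — and the reason the guarantee sharpens from the $2\varepsilon$ obtainable for statistical queries in Proposition~\ref{prop:1} to $\varepsilon$ here — is precisely that, for a single-individual query under the add/remove convention, no comparison is ever made between two distinct refined distributions (which would cost the factor $\alpha_u/\alpha_d=e^{2\varepsilon}$); one always compares a refined distribution against the untouched prior $P_f$, where the ratio is only $e^{\varepsilon}$. The only mildly delicate ingredient is the measure-theoretic bookkeeping when $P_f$ is not absolutely continuous, but since the whole argument is phrased at the level of measures and the existence of a suitable $\mathcal{U}_u$ is assumed, it carries over unchanged.
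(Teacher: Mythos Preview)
Your argument is correct and follows exactly the approach the paper takes: the paper does not give a separate formal proof of this proposition but rather states it as a formalization of the discussion immediately preceding it, which observes that for an individual query the only comparison needed is between the refined distribution and the untouched prior $P_f$, so the bounds $e^{-\varepsilon}\le\alpha_d$ and $\alpha_u\le e^{\varepsilon}$ suffice. Your write-up simply makes that reasoning precise via the Radon--Nikodym bound and the neighbour case analysis, which is entirely in line with the paper's intent.
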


\section{A general algorithm for knowledge refinement\label{sec:3 gen_algorithm}}

Propositions~\ref{prop:1} and~\ref{prop:2} above state that, given
appropriate factors $\alpha_{u}$ and $\alpha_{d}$ and a set $\mathcal{U}_{u}$
with the required probability mass, the knowledge refinement mechanism
satisfies $\varepsilon$-differential privacy. However, some details
were left aside in the previous section: (i) how is the set $\mathcal{U}_{u}$
selected?, and (ii) can we still apply knowledge refinement if a set
$\mathcal{U}_{u}$ with the required probability mass does not exist?
This section gives a more detailed view of the knowledge refinement
mechanism and answers the two aforementioned questions.

Knowledge refinement works by increasing the probability mass of the
points near $f(D)$, and by decreasing the probability mass of the
rest of points in such a way that the total probability mass equals
one. In Example~\ref{ex:1} there was a natural way to determine
the set $\mathcal{U}_{u}$: the points closest to $f(D)$ in absolute
value. However, such a natural way does not always exist, as illustrated
in the next example.
\begin{example}
\label{extwo}To determine the form of the set $\mathcal{U}_{u}$
for a query function with two components, say $f=(f_{1},f_{2})$,
we use a distance function defined over the range of $f$, namely
$d:Range(f_{1})\times Range(f_{2})\rightarrow[0,\infty)$. If $d$
does not treat $f_{1}$ and $f_{2}$ symmetrically, then one component
is given priority over the other. In fact, there is no natural way
to define $d$ and hence $\mathcal{U}_{u}$. Such definitions are
application-dependent.

Table~\ref{tab:distance} shows some distance functions that are
appropriate for a query with a single component in terms of the type
of the result. We do not provide any distance for multivariate queries
because such distances are very application-dependent, as pointed
out in Example~\ref{extwo}.
\end{example}
\begin{table}[h]
\caption{\label{tab:distance}Example distance function for univariate query
functions depending on the type of the query result}

\centering{}%
\begin{tabular}{ccc}
\hline 
Query result & $Range(f)$ & distance\tabularnewline
\hline 
continuous & $\mathbb{R}$ & $d(x,y)=|x-y|$\tabularnewline
nominal & $\{c_{1},\ldots,c_{n}\}$ & $d(c_{i},c_{j})=\begin{cases}
0 & i=j\\
1 & i\ne j
\end{cases}$\tabularnewline
ordinal & $\{c_{1},\ldots,c_{n}\}$ & $d(c_{i},c_{j})=|i-j|$\tabularnewline
\hline 
\end{tabular}
\end{table}

Note that when we feed the knowledge refinement algorithm with a certain
distance function, we are instructing it with the sets that we want
to favor. Given a value $f(D)$, we modify the probability that the
prior knowledge assigns to the points in $Range(f)$ according to
the distance $d$. If a point at distance $r$ is being applied a
factor $\alpha_{1}$, all points at distance $r$ must be applied
the same factor, and points at a shorter distance must be applied
a factor $\alpha_{2}$ with $\alpha_{2}\ge\alpha_{1}$. Therefore,
the set $\mathcal{U}_{u}$ of points that has its probability increased
must be of the form $\mathcal{U}_{f(D),r}^{1}$ or $\mathcal{U}_{f(D),r}^{2}$,
for some $r\in[0,\infty)$, where:

\begin{equation}
\begin{array}{c}
\mathcal{U}_{f(D),r}^{1}=\{x\in Range(f):d(f(D),x)\le r\}\\
\mathcal{U}_{f(D),r}^{2}=\{x\in Range(f):d(f(D),x)<r\}
\end{array}\label{sets}
\end{equation}
The set $\mathcal{U}_{d}$ of points that has its probability decreased
is the complement of $\mathcal{U}_{u}$, that is, $\mathcal{U}_{d}=Range(f)\setminus\mathcal{U}_{u}$.

We want to choose two multiplicative factors $\alpha_{u}$ and $\alpha_{d}$
to modify the probability mass of $\mathcal{U}_{u}$ and $\mathcal{U}_{d}$,
respectively. Factors $\alpha_{u}$ and $\alpha_{d}$ must be selected
so that differential privacy holds and the total probability mass
of the resulting modified distribution equals one.

Table~\ref{tab:2 alpha} shows the form of factors $\alpha_{u}$
and $\alpha_{d}$ for the two types of queries considered in Section~\ref{sec:2 Ref_prior_knowledge}:
individual and statistical. For the case of individual queries, the
differential privacy condition need only hold between the distribution
of the response and the prior knowledge. 

Any pair of values $\alpha_{u}\in[1,e^{\varepsilon}]$ and $\alpha_{d}\in[e^{-\varepsilon},1]$
yields $\varepsilon$-differential privacy; however, $\alpha_{u}=e^{\varepsilon}$
and $\alpha_{d}=e^{-\varepsilon}$ yield the greatest knowledge gain.

For statistical queries, the condition must hold for each pair of
distributions for the response to the query over data sets that differ
in a single record. Therefore, we must have $\alpha_{u}/\alpha_{d}\leq e^{\varepsilon}$.
Same as for individual queries, the greatest knowledge gain is achieved
when $\alpha_{u}/\alpha_{d}=e^{\varepsilon}$. The actual values of
$\alpha_{u}$ and $\alpha_{d}$ must belong to the intervals $[1,e^{\varepsilon}]$
and $[e^{-\varepsilon},1]$, respectively, but they can be freely
chosen, as long as $\alpha_{u}/\alpha_{d}\lyxmathsym{ }\leq e^{\varepsilon}$
holds and the total probability mass is one:
\begin{equation}
\alpha_{u}P_{f}(\mathcal{U}_{u})+\alpha_{d}P_{f}(\mathcal{U}_{d})=1\label{totprob}
\end{equation}

\begin{table}[h]
\caption{\label{tab:2 alpha}Form of the factors $\alpha_{u}$ and $\alpha_{d}$
for individual and statistical queries}

\centering{}%
\begin{tabular}{cc}
\hline 
Type of query & Factors\tabularnewline
\hline 
individual & $\alpha_{u}=e^{\varepsilon}$, $\alpha_{d}=e^{-\varepsilon}$\tabularnewline
statistical & $\alpha_{u}\in[1,e^{\varepsilon}]$, $\alpha_{d}\in[e^{-\varepsilon},1]$
with $\alpha_{u}/\alpha_{d}=e^{\varepsilon}$\tabularnewline
\hline 
\end{tabular}
\end{table}

For statistical queries, the specific values selected for $\alpha_{u}$
and $\alpha_{d}$ determine the maximum knowledge gain for the points
in $\mathcal{U}_{u}$ and $\mathcal{U}_{d}$, where the gain is understood
as the modification w.r.t. the prior knowledge $P_{f}$. Assuming
that $\alpha_{u}/\alpha_{d}=e^{\varepsilon}$ holds, a greater value
for $\alpha_{u}$ provides increased knowledge gain for the points
in $\mathcal{U}_{u}$, but it also results in a greater value for
$\alpha_{d}$, because otherwise $\alpha_{u}/\alpha_{d}\leq e^{\varepsilon}$
would not be satisfied; this implies decreasing the knowledge gain
for the points in $\mathcal{U}_{d}$ with respect to the prior knowledge.

For fixed values of the factors $\alpha_{u}$ and $\alpha_{d}$, from
Equation (\ref{totprob}) and $P_{f}(\mathcal{U}_{d})=1-P_{f}(\mathcal{U}_{u})$,
we have: 
\[
\begin{array}{c}
P_{f}(\mathcal{U}_{u})=\frac{\alpha_{u}-1}{\alpha_{u}-\alpha_{d}}\\
P_{f}(\mathcal{U}_{d})=\frac{1-\alpha_{d}}{\alpha_{u}-\alpha_{d}}
\end{array}
\]
 For continuous prior knowledge, it is always possible to select sets
$\mathcal{U}_{u}$ and $\mathcal{U}_{d}$ with the above probability
masses. In this case, the knowledge refinement mechanism is very simple:
apply factor $\alpha_{u}$ to $\mathcal{U}_{u}$ and factor $\alpha_{d}$
to $\mathcal{U}_{d}$, as stated in Propositions~\ref{prop:1} and~\ref{prop:2}.

For other kinds of prior knowledge, the sets $\mathcal{U}_{u}$ and
$\mathcal{U}_{d}$ with the required probability masses may not exist.
In such cases, we still want to apply the factor $\alpha_{u}$ to
the greatest possible set of points closest to $f(D)$, and the factor
$\alpha_{d}$ to the greatest possible set of points farthest from
$f(D)$, thus achieving the maximum knowledge gain at such points.
We denote $\mathcal{U}_{u}'$ the set that is applied factor $\alpha_{u}$,
and $\mathcal{U}_{d}'$ the set that is applied factor $\alpha_{d}$.
For the remaining points we adjust their factor to have a total probability
mass of one. See Algorithm~\ref{alg:1} for a detailed description
of the process; this algorithm is run by the database holder.

\begin{algorithm}[h]
\caption{Knowledge refinement algorithm to respond to query $f(D)$ for a general
prior knowledge\label{alg:1}}

Input parameters: query $f$, prior knowledge $P_{f}$ of the database
user, distance function $d$, factors $\alpha_{u}$ and $\alpha_{d}$
from the database holder.
\begin{enumerate}
\item \begin{raggedright}
Compute the actual value of the query response, $f(D)$.
\par\end{raggedright}
\item Modify $P_{f}$ to adjust it to $f(D)$ as much as possible, given
the constraints imposed by differential privacy. This is done as follows:

\begin{enumerate}
\item Let $p_{u}=(\alpha_{u}-1)/(\alpha_{u}-\alpha_{d})$.
\item Let $p_{d}=(1-\alpha_{d})/(\alpha_{u}-\alpha_{d})$.
\item \textbf{if} there exists a set $\mathcal{U}_{u}$ of the form $\mathcal{U}_{f(D),r}^{1}$
or $\mathcal{U}_{f(D),r}^{2}$ (see Expression~\ref{sets}) with
$P_{f}(\mathcal{U}_{u})=p_{u}$ \textbf{then}

\begin{description}
\item [{ }] Build the distribution of the response to $f(D)$ by applying
the factor $\alpha_{u}$ to $\mathcal{U}_{u}$, and $\alpha_{d}$
to $Range(f)\setminus\mathcal{U}_{u}$.
\end{description}

\textbf{else}
\begin{enumerate}
\item Find the maximal set $\mathcal{U}_{u}'$ of the form $\mathcal{U}_{f(D),r}^{1}$
or $\mathcal{U}_{f(D),r}^{2}$ with $P_{f}(\mathcal{U}_{u}')<p_{u}$.
\item Find the maximal set $\mathcal{U}_{d}'$ of the form $Range(f)\setminus\mathcal{U}_{f(D),r}^{1}$
or $Range(f)\setminus\mathcal{U}_{f(D),r}^{2}$ with $P_{f}(\mathcal{U}_{d}')<p_{d}.$
\item Let $p_{ud}=1-P_{f}(\mathcal{U}_{u}')-P_{f}(\mathcal{U}_{d}')$ be
the probability of the points not in $\mathcal{U}_{u}'\cup\mathcal{U}_{d}'$
\item Let $\alpha_{ud}=(1-\alpha_{u}p_{u}-\alpha_{d}p_{d})/(1-p_{u}-p_{d})$
be the factor to be applied to $Range(f)\setminus(\mathcal{U}_{u}'\cup\mathcal{U}_{d}')$ 
\item Build the distribution of the response to $f(D)$ by applying:

\begin{itemize}
\item factor $\alpha_{u}$ to points in $\mathcal{U}_{u}'$
\item factor $\alpha_{d}$ to points in $\mathcal{U}_{d}'$
\item factor $\alpha_{ud}$ to points in $Range(f)\setminus(\mathcal{U}_{u}'\cup\mathcal{U}_{d}')$.
\end{itemize}
\end{enumerate}
\end{enumerate}
\item Randomly sample the distribution resulting from the previous step,
and return the sampled value as the response to $f$ evaluated at
$D$.\end{enumerate}
\end{algorithm}

It is easy to check that the total probability mass of the distribution
equals one, no matter whether the \textbf{then} or the \textbf{else}
option of the \textbf{if} statement of Algorithm~\ref{alg:1} is
taken. Regarding the differential privacy condition, we have already
seen that it holds for the \textbf{then} case. For the \textbf{else}
case, differential privacy also holds, because $\alpha_{ud}$ belongs
to the interval $[\alpha_{d},\alpha_{u}]$.

Differential privacy is usually criticized for the low utility of
the results it provides~\cite{Muralidhar2010,Sarathy2010,Sarathy2011}.
Several relaxations of $\varepsilon$-differential privacy have been
proposed; in particular, the authors of~\cite{Dwork2006b} propose
$(\varepsilon,\delta)$-differential privacy (\emph{a.k.a} $(\varepsilon,\delta)$-indistinguishability),
and $(\varepsilon,\delta)$-probabilistic differential privacy. The
former property relaxes the strict requirement of differential privacy
by adding a non-zero $\delta$. The latter property allows arbitrarily
large knowledge gains within probability $\delta$. Let us briefly
review $(\varepsilon,\delta)$-privacy and sketch how prior knowledge
refinement can achieve it. 
\begin{defn}
\label{def:delta_diff_priv}A randomized function gives $(\varepsilon,\delta)$-differential
privacy if, for all data sets $D_{1}$, $D_{2}$ such that one can
be obtained from the other by adding or removing a single record,
and all $S\subset Range(\kappa)$ 
\begin{equation}
P(\kappa(D_{1})\in S)\le\exp(\varepsilon)\times P(\kappa(D_{2})\in S)+\delta\label{ref:eq:2}
\end{equation}

\end{defn}
As $\varepsilon$-differential privacy implies $(\varepsilon,\delta)$-differential
privacy, Algorithm~\ref{alg:1} can be used to obtain $(\varepsilon,\delta)$-differential
privacy. However, a simple modification to Algorithm~\ref{alg:1}
can offer better data utility while still satisfying $(\varepsilon,\delta)$-differential
privacy (but no longer $\varepsilon$-differential privacy). We do
not provide a formal algorithm with the required modifications, but
the idea is to use the extra margin $\delta$ to increase the probability
at $f(D)$ and reduce it at the points farthest from $f(D)$. 

Just like it happened for $\varepsilon$-differential privacy, the
improvement of $(\varepsilon,\delta)$-privacy for individual queries
is greater than for statistical queries. For an individual query,
we only need to compare the distribution of the response with the
prior knowledge (see Figure~\ref{fig:distributions_indiv}). As the
prior knowledge is not modified, we can modify the response by adding
$\delta$ to the probability mass of $f(D)$, and subtract $\delta$
from the tails of the distribution. 

\begin{figure}[h]
%JOSEPTHESIS. Engrandida figura
\centering{}\includegraphics[width=7cm]{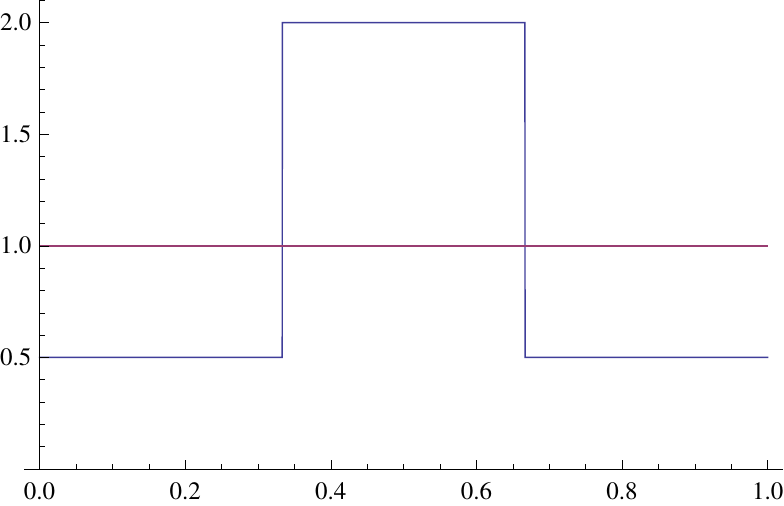}\includegraphics[width=7cm]{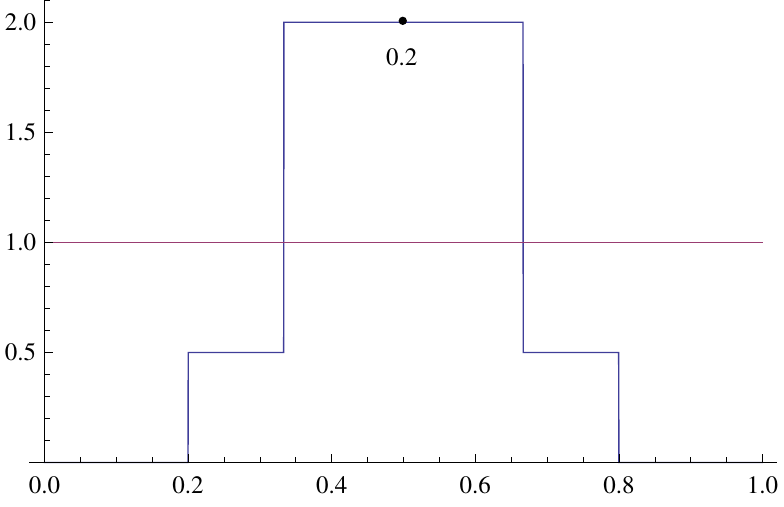}\caption{\label{fig:distributions_indiv}Distribution of the response to a
individual query when $f(D)=0.5$, for $\ln2$-differential privacy
(left), and $(\ln2,0.2)$-differential privacy (right)}
\end{figure}

For a statistical query, we also want to increase the probability
mass of the actual response $f(D)$, while reducing the probability
mass of the set $S_{f(D)}'$ of points farthest from $f(D)$. Although
other schemes are possible, a sensible choice is to have the probability
mass of $f(D)$ increased by the same amount $\delta'$, whatever
the data set $D$. As we have to keep the total probability mass equal
to one, we must decrease the probability of $S_{f(D)}'$ by $\delta'$.
Now, since we can select data sets $D_{1}$ and $D_{2}$ such that
$f(D_{1})$ belongs to $S_{f(D_{2})}'$, for Inequality~(\ref{ref:eq:2})
to hold for $S_{f(D_{2})}'$, it must be $\delta'=\delta/2$ (it can
also be $\delta'<\delta/2$, but then we are not taking advantage
of the whole $\delta$ margin).

\begin{figure}[h]
%JOSEPTHESIS. Engrandida figura
\begin{centering}
\includegraphics[width=7cm]{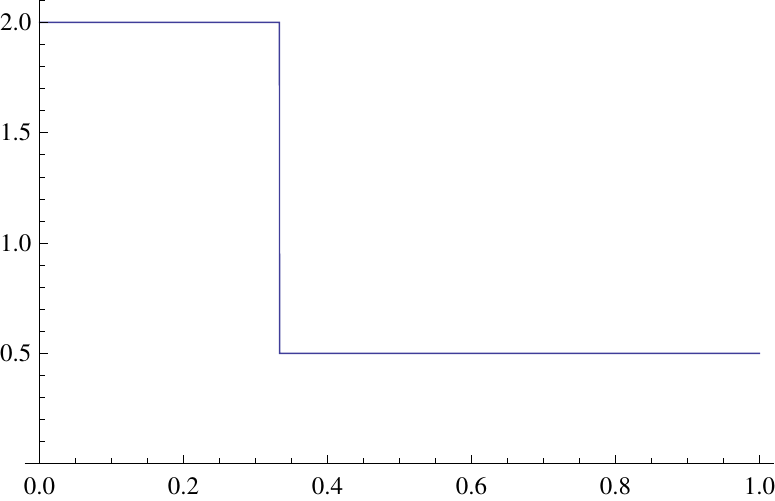}\includegraphics[width=7cm]{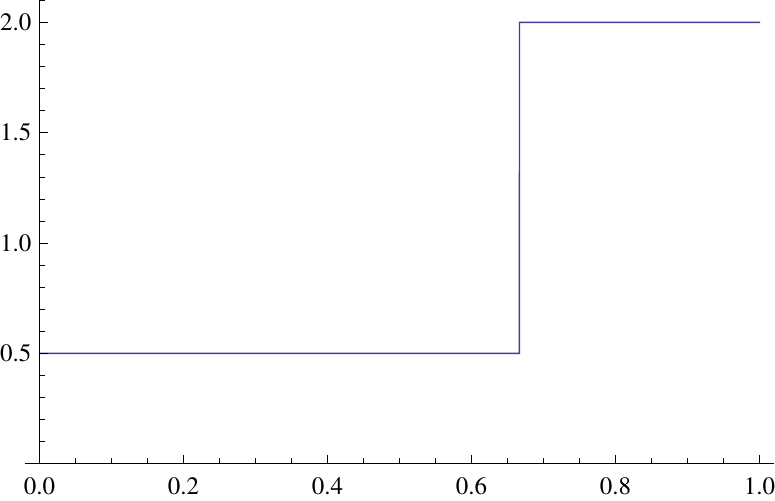}
\par\end{centering}

\begin{centering}
\includegraphics[width=7cm]{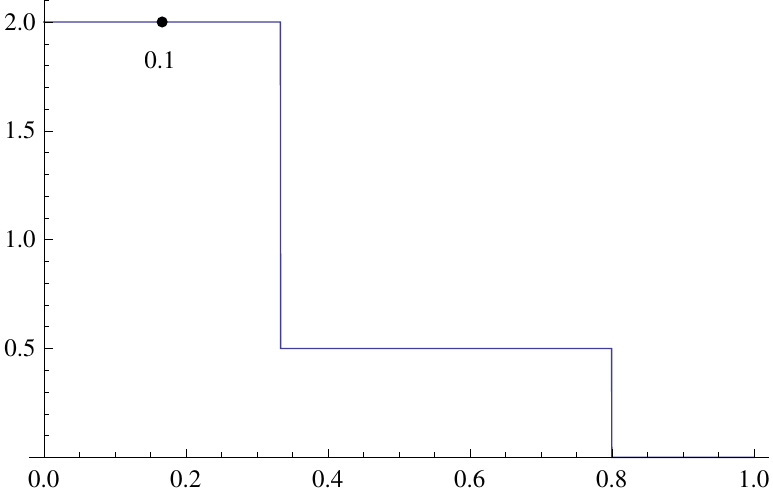}\includegraphics[width=7cm]{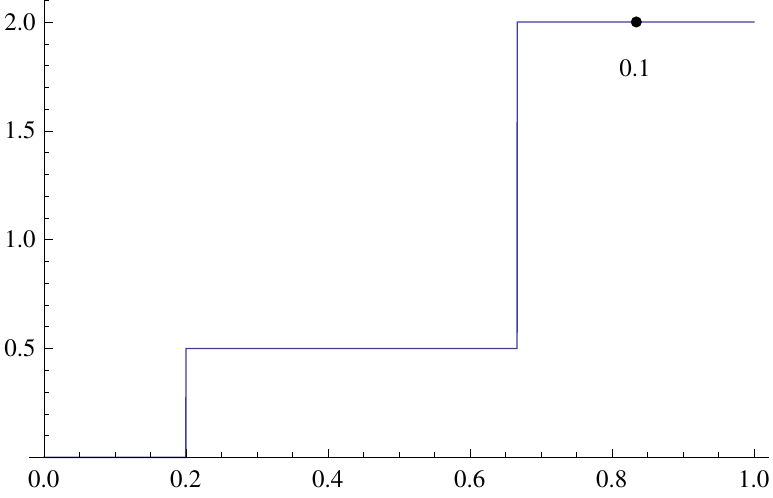}
\par\end{centering}

\caption{Distribution for the response to a statistical query when $f(D)=0.166$
(left) and $f(D)=0.833$ (right), for $\ln2$-differential privacy
(top) and $(\ln2,0.2)$-differential privacy (bottom)\label{fig:distrib_stat}}
\end{figure}

\section{Differential privacy in multicomponent queries\label{sec:compounded}}

The knowledge refinement mechanism as introduced in Section~\ref{sec:2 Ref_prior_knowledge}
is independent of the number of components of the query function.
However, for the case of multicomponent queries, we can relate the
level of differential privacy for the multicomponent query to the
level of differential privacy of the components. If we have a query
$f=(f_{1},\ldots,f_{n})$ and for each of the components, $f_{i}$,
we get an $\varepsilon_{i}$-differentially private response, then
we get a $\sum_{i=1}^{n}\varepsilon_{i}$-differentially private response
for $f$. This is in fact a property of $\varepsilon$-differential
privacy, hence a proof for our specific mechanism is not required
(see~\cite{Mcsherry07mechanismdesign}).

The above result on multicomponent queries can be improved when each
of the queries refers to a disjoint set of individuals. For the noise
addition mechanism, it easy to see that, when performing queries $f_{1},\ldots,f_{n}$
that refer each to a disjoint set of individuals, the global sensitivity
equals the maximum of the sensitivities of the individual queries~\cite{Dwork2006a}.
The reason is that, by adding or removing a single individual from
the data set, only one of the queries is affected. This is a good
property, as it guarantees $\max\{\varepsilon_{i}\}$-differential
privacy instead of $\sum\varepsilon_{i}$-differential privacy. Our
goal is to show that this property can also be achieved for our proposal.
In fact, we will show further on that this is also a general property
of differential privacy. We start with an example.
\begin{example}
Let $D$ be a database with two attributes: an identifier $ID$ and
a Boolean attribute $B$. Let $f_{1}$ and $f_{2}$ be queries that
return the value of $B$ for individuals 1 and 2, respectively. Let
the prior knowledge for both queries be the independent uniform distribution
over the set $\{0,1\}$, which assigns a prior probability 0.5 to
each of the possible outcomes for each query. To respond to $f_{1}$
in an $\varepsilon$-differentially private way with $\varepsilon=1$,
we select factors $\alpha_{u}=e^{\varepsilon}$ and $\alpha_{d}=e^{-\varepsilon}$
that modify the prior knowledge. The same factors are selected for
$f_{2}$. Now we want to check whether the combination of responses
to $f_{1}$ and $f_{2}$ is still $\varepsilon$-differentially private.

For the sake of simplicity, we assume that both individuals are in
$D$, and that $f_{1}(D)=0$ and $f_{2}(D)=0$. For the rest of cases
we would proceed in a similar way. Figure~\ref{fig:distributions_boolean}
shows the prior knowledge and the output distribution for both query
functions $f_{1}$ and $f_{2}$. Indeed, by setting $\alpha_{d}=e^{-\varepsilon}$
and adjusting the probability mass to one instead of setting $\alpha_{u}=e^{\varepsilon}$,
we have

\[
\begin{array}{c}
P(K_{f_{1}}(D)=1|f_{1}(D)=0)=P(K_{f_{1}}(D)=1|f_{2}(D)=0)=\\
=0.5\alpha_{d}=0.5e^{-1}=0.1839
\end{array}
\]

\[
\begin{array}{c}
P(K_{f_{1}}(D)=1|f_{1}(D)=1)=P(K_{f_{1}}(D)=1|f_{2}(D)=1)=\\
=1-0.5\alpha_{d}=0.8161
\end{array}
\]

\begin{figure}[h]
\begin{centering}
%JOSEPTHESIS. Engrandida figura
\includegraphics[width=10cm]{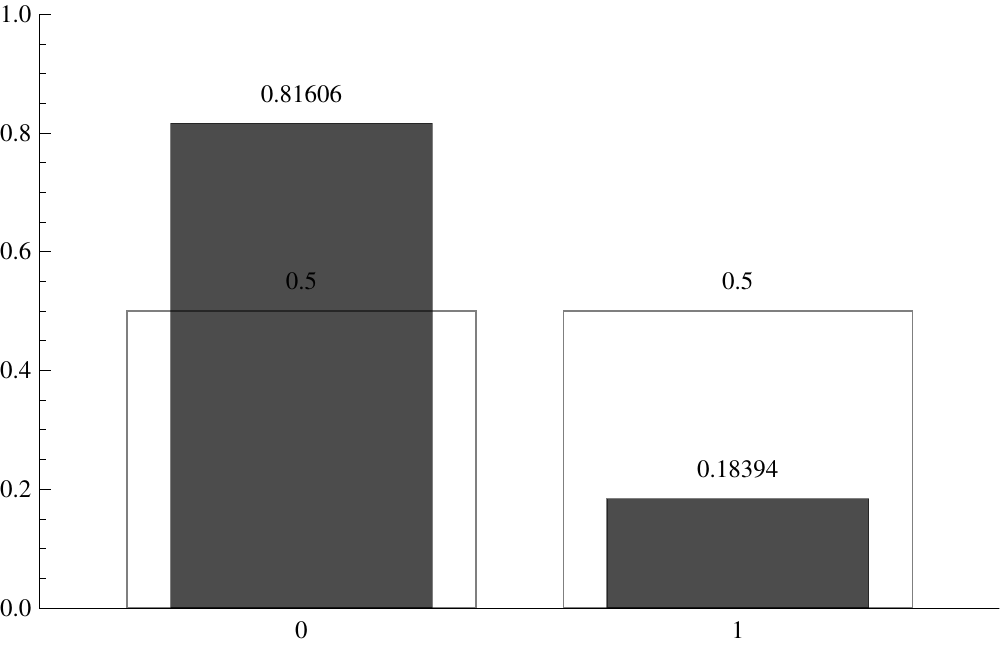}
\par\end{centering}

\caption{\label{fig:distributions_boolean}Prior knowledge about attribute
$B$ and distribution of the $\varepsilon$-differentially private
response to query functions $f_{1}$ and $f_{2}$, assuming that the
actual value for attribute $B$ is 0}
\end{figure}

Table~\ref{tab:distr_comp} shows the joint distribution for the
output of $(f_{1},f_{2})$, which is obtained by multiplying the output
distributions for $f_{1}$ and $f_{2}$. 

\begin{table}[h]
\caption{\label{tab:distr_comp}Distribution of the differentially private
response to the two-component query $(f_{1},f_{2})$ when the true
values are $f_{1}(D)=f_{2}(D)=0$}

\centering{}%
\begin{tabular}{ccc|c|c|}
 &  & \multicolumn{1}{c}{} & \multicolumn{1}{c}{0} & \multicolumn{1}{c}{1}\tabularnewline
\cline{4-5} 
 &  & $K_{f_{1}}$ & $1-0.5\alpha_{d}$ & $0.5\alpha_{d}$\tabularnewline
\cline{4-5} 
 & $K_{f_{2}}$ & \multicolumn{1}{c}{} & \multicolumn{1}{c}{} & \multicolumn{1}{c}{}\tabularnewline
\cline{2-2} \cline{4-5} 
\multicolumn{1}{c|}{0} & \multicolumn{1}{c|}{$1-0.5\alpha_{d}$} &  & $(1-0.5\alpha_{d})^{2}$ & $(1-0.5\alpha_{d})$$0.5\alpha_{d}$\tabularnewline
\cline{2-2} \cline{4-5} 
\multicolumn{1}{c|}{1} & \multicolumn{1}{c|}{$0.5\alpha_{d}$} &  & $(1-0.5\alpha_{d})$$0.5\alpha_{d}$ & $0.25\alpha_{d}^{2}$\tabularnewline
\cline{2-2} \cline{4-5} 
\end{tabular}
\end{table}

For $\varepsilon$-differential privacy to hold for the two-component
query $f=(f_{1},f_{2})$, the ratio of the response distribution at
$D$ and the response distribution at any $D'$ that results from
$D$ by adding or removing a single individual must be within the
range $[e^{-\varepsilon},e^{\varepsilon}]$. As $f_{1}$ and $f_{2}$
are related to individuals 1 and 2, any modification to $D$ that
does not affect the records for those individuals leaves the distribution
of responses unchanged. As we are assuming that individuals 1 and
2 are in $D$, the only modifications to be considered are the removal
of one of these individuals. Table~\ref{tab:distr_resp} shows the
distributions of responses when individual 1 or 2 are removed. We
use $K_{f}$ to denote the distribution of the response to query $f$.
It can be seen that the respective ratios between the distribution
in Table~\ref{tab:distr_comp} and the ones in Table~\ref{tab:distr_resp}
are within $[e^{-\varepsilon},e^{\varepsilon}]=[e^{-1},e]$; specifically,
the ratios take only two values, $\alpha_{d}=e^{-1}$ and $2-\alpha_{d}=2-e^{-1}$.

\begin{table}[h]
\caption{\label{tab:distr_resp}Distribution of the response to query $f=(f_{1},f_{2})$
when either individual 1 is missing (top) or individual 2 is missing
(bottom), and when the attribute value for the non-missing individual
is 0.}

\begin{centering}
\begin{tabular}{c>{\centering}p{2cm}c|>{\centering}p{2.5cm}|>{\centering}p{2.5cm}|}
 &  & \multicolumn{1}{c}{} & \multicolumn{1}{>{\centering}p{2.5cm}}{0} & \multicolumn{1}{>{\centering}p{2.5cm}}{1}\tabularnewline
\cline{4-5} 
 &  & $K_{f_{1}}$ & $0.5$ & $0.5$\tabularnewline
\cline{4-5} 
 & $K_{f_{2}}$ & \multicolumn{1}{c}{} & \multicolumn{1}{>{\centering}p{2.5cm}}{} & \multicolumn{1}{>{\centering}p{2.5cm}}{}\tabularnewline
\cline{2-2} \cline{4-5} 
\multicolumn{1}{c|}{0} & \multicolumn{1}{>{\centering}p{2cm}|}{$1-0.5\alpha_{d}$} &  & $0.5(1-0.5\alpha_{d})$ & $0.5(1-0.5\alpha_{d})$\tabularnewline
\cline{2-2} \cline{4-5} 
\multicolumn{1}{c|}{1} & \multicolumn{1}{>{\centering}p{2cm}|}{$0.5\alpha_{d}$} &  & $0.25\alpha_{d}$ & $0.25\alpha_{d}$\tabularnewline
\cline{2-2} \cline{4-5} 
 &  & \multicolumn{1}{c}{} & \multicolumn{1}{>{\centering}p{2.5cm}}{} & \multicolumn{1}{>{\centering}p{2.5cm}}{}\tabularnewline
\end{tabular}
\par\end{centering}

\centering{}%
\begin{tabular}{c>{\centering}p{2cm}c|>{\centering}p{2.5cm}|>{\centering}p{2.5cm}|}
 &  & \multicolumn{1}{c}{} & \multicolumn{1}{>{\centering}p{2.5cm}}{0} & \multicolumn{1}{>{\centering}p{2.5cm}}{1}\tabularnewline
\cline{4-5} 
 &  & $K_{f_{1}}$ & $1-0.5\alpha_{d}$ & $0.5\alpha_{d}$\tabularnewline
\cline{4-5} 
 & $K_{f_{2}}$ & \multicolumn{1}{c}{} & \multicolumn{1}{>{\centering}p{2.5cm}}{} & \multicolumn{1}{>{\centering}p{2.5cm}}{}\tabularnewline
\cline{2-2} \cline{4-5} 
\multicolumn{1}{c|}{0} & \multicolumn{1}{>{\centering}p{2cm}|}{$0.5$} &  & $0.5(1-0.5\alpha_{d})$ & $0.25\alpha_{d}$\tabularnewline
\cline{2-2} \cline{4-5} 
\multicolumn{1}{c|}{1} & \multicolumn{1}{>{\centering}p{2cm}|}{$0.5$} &  & $0.5(1-0.5\alpha_{d})$ & $0.25\alpha_{d}$\tabularnewline
\cline{2-2} \cline{4-5} 
\end{tabular}
\end{table}

\end{example}
We now state and prove in general the property illustrated in the
previous example.
\begin{prop}
\label{Proposition3}Let $D$ be a data set and let $(f_{1},\ldots,f_{n})$
be a set of query functions related to disjoint sets of individuals.
Let $K_{f_{i}}$ be a random variable that provides $\varepsilon_{i}$-differential
privacy for $f_{i}$, and assume that $K_{f_{i}}$ is independent
from $K_{f_{j}}$ for any $i\ne j$. Then $(K_{f_{1}},\ldots,K_{f_{n}})$
provides $\max\{\varepsilon_{i}\}$-differential privacy for $(f_{1},\ldots,f_{2})$.\end{prop}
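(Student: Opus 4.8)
The plan is to exploit the \emph{disjointness} of the individuals underlying the queries. Suppose $D$ and $D'$ differ in a single record, i.e.\ in one individual $r$. Since the sets of individuals associated with $f_1,\ldots,f_n$ are pairwise disjoint, $r$ belongs to at most one of these sets; call it the set for $f_k$ (and if $r$ belongs to none of them, all response distributions for $D$ and $D'$ coincide and the claim is immediate). Consequently, for every $i\ne k$ the records that are relevant to $f_i$ are identical in $D$ and in $D'$, so $f_i(D)=f_i(D')$ and hence the distribution of $K_{f_i}(D)$ equals that of $K_{f_i}(D')$. Only the $k$-th component can ``see'' the change, and for it $\varepsilon_k$-differential privacy bounds the ratio of the corresponding response distributions by $e^{\varepsilon_k}\le e^{\max_i\varepsilon_i}$.

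Then I would fix an arbitrary measurable $S\subset\prod_i Range(f_i)$ and write $\mu_i^D$ for the distribution of $K_{f_i}(D)$. By the assumed independence of $K_{f_1},\ldots,K_{f_n}$, the joint distribution of $(K_{f_1}(D),\ldots,K_{f_n}(D))$ is the product measure $\mu_1^D\times\cdots\times\mu_n^D$, and similarly for $D'$; since $\mu_i^D=\mu_i^{D'}$ for all $i\ne k$, the two joint measures share every factor except the $k$-th one. Applying Fubini's theorem, integrating out the coordinates $i\ne k$ first, and writing $S_{\mathbf y}=\{z:(\mathbf y,z)\in S\}$ for the section at $\mathbf y=(y_i)_{i\ne k}$, we get $P\bigl((K_{f_i}(D))_i\in S\bigr)=\int \mu_k^D(S_{\mathbf y})\,d\bigl(\prod_{i\ne k}\mu_i^D\bigr)(\mathbf y)$. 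Applying the pointwise bound $\mu_k^D(S_{\mathbf y})\le e^{\varepsilon_k}\mu_k^{D'}(S_{\mathbf y})$ inside the integral and integrating back yields $P\bigl((K_{f_i}(D))_i\in S\bigr)\le e^{\varepsilon_k}P\bigl((K_{f_i}(D'))_i\in S\bigr)\le e^{\max_i\varepsilon_i}P\bigl((K_{f_i}(D'))_i\in S\bigr)$, which is exactly the $\max_i\varepsilon_i$-differential privacy inequality for $(f_1,\ldots,f_n)$; the reverse direction follows by exchanging the roles of $D$ and $D'$.

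The main obstacle is largely bookkeeping rather than conceptual: one must check carefully that the disjointness hypothesis genuinely forces all but one of the response distributions to be unchanged between neighbouring data sets, and that the product/Fubini step is valid for an arbitrary measurable $S$ rather than only for product sets. If one prefers to avoid measure theory, an alternative in the absolutely continuous case is to note that the joint density factorizes, bound the ratio of joint densities pointwise by $e^{\varepsilon_k}$ via the density-based characterization established earlier, and integrate over $S$; the product-measure argument above, however, also covers the purely discrete situation of the Boolean example without any extra work.

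Finally, I would remark that the hypothesis of independence between the $K_{f_i}$ is essential for the factorization step, and that the result specializes the general composition bound $\sum_i\varepsilon_i$ precisely because adding or removing one record perturbs at most one summand.
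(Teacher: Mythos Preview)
Your argument is correct and follows the same core idea as the paper: disjointness forces all but one component, say the $k$-th, to have identical response distributions on $D$ and $D'$, and independence lets you factor the joint law so that only the $\varepsilon_k$ bound matters. The paper's proof carries this out only for product sets $S=S_1\times\cdots\times S_n$, writing $P((K_{f_i}(D))_i\in S)=\prod_i P(K_{f_i}(D)\in S_i)$, cancelling the common factors $i\ne k$, and invoking $\varepsilon_k$-differential privacy on the remaining one; it explicitly states that the inequality also holds for general $S$ but does not prove it. Your Fubini/section argument is precisely the missing step that upgrades the product-set case to arbitrary measurable $S$, so your proof is in fact more complete than the paper's while resting on the same mechanism.
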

\begin{proof}
Let $D'$ be a data set obtained from $D$ by adding or removing a
single user. We want to check that the following inequalities hold
for any subset $S$ of the range of $(K_{f_{1}},\ldots,K_{f_{n}})$:
\[
e^{-\max\{\varepsilon_{i}\}}\le\frac{P((K_{f_{1}}(D),\ldots,K_{f_{n}}(D))\in S)}{P((K_{f_{1}}(D'),\ldots,K_{f_{n}}(D'))\in S)}\le e^{\max\{\varepsilon_{i}\}}
\]
It is easy to show that the above inequality holds for the case of
$S$ being the Cartesian product of sets $S_{i}$, with $S_{i}$ a
subset of the range of $K_{f_{j}}(D)$, or when the probability distribution
of $(K_{f_{1}},\ldots,K_{f_{n}})$ is absolutely continuous. For a
general set $S$ and a non absolutely continuous distribution, the
inequalities still hold. 
%JOSEPTHESIS. IMPORTANT. Aqui no podem al·legar manca d'espai. He canviat
%una mica la frase.
%However, such a general proof requires the
%use of some concepts of measure theory and, for space reasons, we
%omit such details here.
%However, such a general proof requires the
%use of some concepts of measure theory.
However, we restrict the proof for  
$S=S_{1}\times\ldots\times S_{n}$.

The probabilities $P((K_{f_{1}}(D),\ldots,K_{f_{n}}(D))\in S)$ and
$P((K_{f_{1}}(D'),\ldots,K_{f_{n}}(D'))\in S)$ can be written as
the product of probabilities $\prod P(K_{f_{i}}(D)\in S_{i})$ and
$\prod P(K_{f_{i}}(D')\in S_{i})$, respectively. By adding or removing
a single individual, only one of the queries is affected. Say the
affected query is $f_{j}$ for some $j\in\{1,\ldots,n\}$. By removing
the factors that are both in the numerator and the denominator, the
inequalities that we need to check become 
\[
e^{-\max\{\varepsilon_{i}\}}\le\frac{P(K_{f_{j}}(D)\in S_{j})}{P(K_{f_{j}}(D')\in S_{j})}\le e^{\max\{\varepsilon_{i}\}},
\]
 which holds because $K_{f_{j}}$ satisfies $\varepsilon_{j}$-differential
privacy, and $\varepsilon_{j}\le\max\{\varepsilon_{i}\}$.
\end{proof}

\section{Interactive queries and adaptive attacks\label{sec:interactive}}

Differential privacy is usually presented as an interactive query-response
mechanism where the data set is held by a trusted party to whom users
send their queries. Despite this claimed interactivity, the formal
definition of differential privacy (Definition~\ref{def:dp_dwork})
is based on a single query, thereby removing the complexities that
interactivity would introduce. Malicious users may try to use interaction
to exploit potential vulnerabilities of the access mechanism. When
using Laplace noise addition the user can, for example, use the knowledge
acquired from previous answers to forge the new query. For knowledge
refinement the problem is even more compelling, since, besides the
query function, the user also feeds the access mechanism with a prior
knowledge distribution and optionally with a distance function.

\subsection{Interactive access mechanisms}

To implement interactivity, a protocol is built on top of the non-interactive
access mechanism. The idea is quite simple; when a query is submitted,
the access mechanism analyzes if answering the query is too disclosive,
in which case the query is simply discarded. To determine if answering
a new query is too disclosive, all the queries submitted by a user
so far, including the new query, are treated as a single multicomponent
query and $\varepsilon$-differential privacy is enforced for it.
Protocol~\ref{alg:interactive_laplace} describes the protocol for
the interactive Laplace noise access mechanism introduced in~\cite{Dwork2006a};
in the protocol, $\Delta(\cdot)$ stands for sensitivity. 

\begin{protocol} 
\caption{Interactive Laplace noise addition mechanism} \label{alg:interactive_laplace}
\begin{enumerate} 
\item The database holder initializes the access mechanism with  the following parameters: \begin{itemize} 
	\item $\varepsilon$, the maximum level of leakage allowed; 
	\item $\lambda$, the amount of noise to be added to every response ($\lambda$ is the parameter of the Laplace noise distribution); for   fixed $\varepsilon$, the greater $\lambda$, the more queries  the access mechanism will be able to answer. 
\end{itemize} 
\item Let $i:=1$. 
\item {\bf while} queries are answered by the access mechanism {\bf do} 
\begin{enumerate} 
\item The user submits a query $f_i$ (for $i >1$, $f_i$ may depend on responses to previous queries $(f_1, \cdots, f_{i-1})$). 
\item {\bf if} $\Delta(f_1,\cdots, f_i)/\lambda \leq \varepsilon$ {\bf then} the  access mechanism returns $f_i(D) + \mbox{Laplace}(\lambda)$ as  response; {\bf else} it returns nothing. 
\item $i:=i+1$ 
\end{enumerate} 
\end{enumerate} 
\end{protocol}

We now present an interactive knowledge refinement mechanism parallel
to the Laplace-based one. As knowledge refinement does not depend
on the sensitivity of the query function, our interactive mechanism
does not need to compute sensitivities and is therefore simpler than
the Laplace-based one. Also, we will allow the database user to select
the amount of leakage $\varepsilon_{i}$ independently for each query
$f_{i}$. The only requirement is that the access mechanism will refuse
answering query $f_{i}$ (and successive queries) if the leakage of
the multicomponent query $(f_{1},\cdots,f_{i})$ exceeds $\varepsilon$.

\begin{protocol} 
\caption{Interactive mechanism for knowledge refinement} 
\label{alg:interactive_refinement}
\begin{enumerate} 
	\item The database holder initializes the access mechanism with  $\varepsilon$, the maximum level of leakage allowed. 
	\item Let $i:=1$. 
	\item {\bf while} queries are answered by the access mechanism {\bf do} 
	\begin{enumerate} 
		\item The user submits a query $q_i=(f_i,P_{f_i},d_i, \varepsilon_i)$, where $f_i$ is the query function, $P_{f_i}$ is the prior knowledge distribution for the query, $d_i$ is the distance function to be used and  $\varepsilon_i$ is the desired level of leakage (for $i>1$, $q_i$ may depend on responses to previous queries $(q_1,\cdots,q_{i-1})$). 
		\item {\bf if} $\sum_{j=1}^i \varepsilon_j \leq \varepsilon$ {\bf then} the access mechanism returns a response to $f_i$ resulting from applying knowledge refinement to $P_{f_i}$ with distance $d_i$ so that $\varepsilon_i$-differential privacy is guaranteed; {\bf else} it returns nothing. 
		\item $i:=i+1$ 
	\end{enumerate} 
\end{enumerate} 
\end{protocol} 

If the $d$-th query is the last query answered by the interactive
mechanism of Protocol~\ref{alg:interactive_refinement}, by construction
the user obtains at most a knowledge gain $\varepsilon$ for $(f_{1},\cdots,f_{d})$.
This holds regardless of the prior knowledge distributions and distance
functions chosen by the user for each query.

By submitting the desired level of leakage $\varepsilon_{i}$ for
each query, in Protocol~\ref{alg:interactive_refinement} the database
user is allowed to trade more accurate answers in some queries for
less accurate answers in other queries. Protocol~\ref{alg:interactive_laplace}
could be modified to permit such flexibility as well: the user could
be asked to choose the noise parameter $\lambda_{i}$ for the $i$-th
query, and the condition checked by the access mechanism would become
\[
\sum_{j=1}^{i}\Delta(f_{j})/\lambda_{j}\leq\varepsilon
\]
 Since $\Delta(f_{1},\cdots,f_{i})\leq\Delta(f_{1})+\cdots+\Delta(f_{i})$,
when $\lambda_{1}=\cdots=\lambda_{i}$ the modified condition above
may result in less queries being answered than the condition in Protocol~\ref{alg:interactive_laplace}.

\subsection{Adaptive attacks}

The interactive mechanisms of Protocols~\ref{alg:interactive_laplace}
and~\ref{alg:interactive_refinement} guarantee, respectively for
Laplace noise and knowledge refinement, that the responses to any
sequence of adaptive queries $(q_{1},\cdots,q_{d})$ will not violate
$\varepsilon$-differential privacy. However, the following question
can be raised: is there any sequence of adaptive queries $(q_{1},\cdots,q_{d})$
and a way to combine the responses to this sequence that allows an
attacker to obtain an estimator of $f(D)$ that does not satisfy $\varepsilon$-differential
privacy? 

We show that such an attack cannot succeed. Our proof is completely
general; it does not depend on the access mechanism used to attain
differential privacy. Let $F:\mathbb{R}^{d}\rightarrow\mathbb{R}$
be the function used by the attacker to combine the responses to $q_{1},\ldots,q_{d}$;
let these responses be samples of the random vector $K_{f_{1}}(D),\cdots,K_{f_{d}}(D)$.
The attacker computes $F(K_{f_{1}}(D),\cdots,K_{f_{d}}(D))$ and takes
it as the response to $f(D)$. We are not interested in determining
$F$ or even in determining whether $F(K_{f_{1}}(D),\cdots,K_{f_{d}}(D))$
is a good estimate for $f(D)$. The following result will suffice.
\begin{prop}
\label{attack}For any function $F$, if $(K_{f_{1}}(D),\cdots,K_{f_{d}}(D))$
satisfies $\varepsilon$-differential privacy, then $F(K_{f_{1}}(D),\cdots,K_{f_{d}}(D))$
also satisfies $\varepsilon$-differential privacy.\end{prop}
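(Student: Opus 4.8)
The statement is the classic "post-processing" (or "data-processing") invariance of differential privacy: applying any (possibly randomized, but here deterministic) function $F$ to a differentially private output cannot degrade the privacy guarantee. The plan is to argue directly from Definition~\ref{def:dp_dwork}, exploiting the fact that for a deterministic $F$ the preimage $F^{-1}(S)$ is a well-defined subset of $\mathbb{R}^d$, and then reducing the privacy inequality for $F(K_{f_1}(D),\dots,K_{f_d}(D))$ to the privacy inequality for the vector $(K_{f_1}(D),\dots,K_{f_d}(D))$ itself, which we are given satisfies $\varepsilon$-differential privacy.

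First I would fix an arbitrary measurable $S \subset Range(F)$ and a pair of neighbor data sets $D$, $D'$. Write $T := K(D) = (K_{f_1}(D),\dots,K_{f_d}(D))$ and $T' := K(D')$. The key observation is the set identity
\[
\{F(T) \in S\} = \{T \in F^{-1}(S)\},
\]
so that $P(F(K(D)) \in S) = P(K(D) \in F^{-1}(S))$ and likewise for $D'$. Since $(K_{f_1},\dots,K_{f_d})$ satisfies $\varepsilon$-differential privacy, applying Inequality~(\ref{eq:dp_dwork}) with the particular subset $F^{-1}(S) \subset Range(K)$ gives
\[
P(K(D) \in F^{-1}(S)) \le e^{\varepsilon}\, P(K(D') \in F^{-1}(S)),
\]
and translating back through the set identity yields $P(F(K(D)) \in S) \le e^{\varepsilon}\, P(F(K(D')) \in S)$. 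Since $S$, $D$, $D'$ were arbitrary, $F(K_{f_1}(D),\dots,K_{f_d}(D))$ satisfies $\varepsilon$-differential privacy.

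The only real subtlety — and the step I would flag as the "main obstacle," though it is minor — is measurability: for the argument to make sense we need $F^{-1}(S)$ to be a legitimate event, i.e. $F$ should be a measurable map and $S$ a measurable subset of its range. In the setting of the paper this is automatic: the $F$ an attacker would use to combine numeric responses is some explicit formula (a continuous or piecewise-continuous function $\mathbb{R}^d \to \mathbb{R}$), hence Borel measurable, so preimages of Borel sets are Borel. I would state this as a standing regularity assumption and move on. A second remark worth including: the same argument covers randomized post-processing — if $F$ is a randomized map independent of $D$, condition on its internal randomness, apply the deterministic case, and integrate — so the attacker gains nothing even by randomizing the combination rule. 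This closes the proof and also addresses the broader "adaptive attack" worry motivating the section: no post-hoc manipulation of the released answers can manufacture a violation of $\varepsilon$-differential privacy that was not already possible from the raw released vector.
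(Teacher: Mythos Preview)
Your proposal is correct and follows exactly the same route as the paper: reduce $P(F(K(D))\in S)$ to $P(K(D)\in F^{-1}(S))$ via the preimage identity, then invoke the assumed $\varepsilon$-differential privacy of the vector $(K_{f_1},\dots,K_{f_d})$ on the set $F^{-1}(S)$. Your added remarks on measurability and on randomized post-processing are sensible elaborations, but the core argument is identical to the paper's.
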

\begin{proof}
We need to check that, for each pair of data sets $D$ and $D'$ that
differ in a single individual and for each set $S\in Range(F(K_{f_{1}},\cdots,K_{f_{d}}))$,
it holds that 
\[
\frac{P(F(K_{f_{1}}(D),\cdots,K_{f_{d}}(D)))\in S)}{P(F(K_{f_{1}}(D'),\cdots,K_{f_{d}}(D')))\in S)}\leq e^{\varepsilon}
\]
 Since $P(F\circ X\in S)=P(X\in F^{-1}(S))$, we can express the previous
inequality as 
\[
\frac{P((K_{f_{1}}(D),\cdots,K_{f_{d}}(D))\in F^{-1}(S))}{P((K_{f_{1}}(D'),\cdots,K_{f_{d}}(D'))\in F^{-1}(S))}\leq e^{\varepsilon}
\]
 which holds because $(K_{f_{1}}(D),\cdots,K_{f_{d}}(D))$ satisfies
$\varepsilon$-differential privacy.
\end{proof}
The following corollary follows from the previous proposition.
\begin{cor}
Whatever the attacker's strategy, her estimate for $f(D)$ always
satisfies $\varepsilon$-differential privacy.
\end{cor}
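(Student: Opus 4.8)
The plan is to reduce the corollary to Proposition~\ref{attack} (post-processing immunity) together with the guarantee furnished by the interactive protocols of Section~\ref{sec:interactive}. An attacker's strategy consists of two ingredients: (i) an adaptive rule that, given the history of responses received so far, decides the next query $q_i=(f_i,P_{f_i},d_i,\varepsilon_i)$ and whether to stop; and (ii) a post-processing function $F$ mapping the received responses to an estimate of $f(D)$. First I would observe that, by the stopping condition of Protocol~\ref{alg:interactive_refinement} (respectively Protocol~\ref{alg:interactive_laplace}), whatever the adaptive rule is, the sequence of queries that actually get answered always satisfies $\sum_j\varepsilon_j\le\varepsilon$ (respectively $\sum_j\Delta(f_j)/\lambda_j\le\varepsilon$), so the multicomponent query actually answered enjoys $\varepsilon$-differential privacy by the composition property invoked in Section~\ref{sec:compounded}.

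Second, I would package the whole interaction---adaptive query selection plus the randomized responses---as a single randomized function $\mathcal{A}$ of the data set whose output is the transcript $(K_{f_1}(D),\dots,K_{f_d}(D))$, padded with a fixed symbol if the protocol halts early. The adaptivity is harmless at this point: each $f_i$ is a deterministic function of the earlier outputs $K_{f_1}(D),\dots,K_{f_{i-1}}(D)$ and of the (fixed) attacker strategy, so $\mathcal{A}(D)$ is genuinely a function of $D$ obtained by composing the per-query randomized responses. Adaptive sequential composition (the property already cited from~\cite{Mcsherry07mechanismdesign}) then gives that $\mathcal{A}$ itself satisfies $\varepsilon$-differential privacy.

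Finally, the attacker's estimate is $F(\mathcal{A}(D))$ for some function $F$ (if $F$ is randomized, condition on its internal coins and argue pointwise). Proposition~\ref{attack} applies verbatim, since its proof uses only $P(F\circ X\in S)=P(X\in F^{-1}(S))$, and yields that $F(\mathcal{A}(D))$ satisfies $\varepsilon$-differential privacy. As the attacker's strategy was arbitrary, the corollary follows.

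The step I expect to be the main obstacle is the careful treatment of adaptivity and of the data-dependent stopping time: one must make sure that deciding which query to ask next, and whether to continue, leaks nothing beyond what the composition bound already accounts for---i.e.\ that the attacker genuinely sees only a function of the transcript. This is precisely the content of adaptive composition for differential privacy, so the clean route is to invoke it rather than re-prove it; the excerpt already grants us that property, and Proposition~\ref{attack} then does the rest.
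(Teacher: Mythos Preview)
Your proposal is correct and follows essentially the same approach as the paper: the paper merely states that the corollary ``follows from the previous proposition,'' relying on the prior claim that the interactive protocols render the transcript $(K_{f_1}(D),\ldots,K_{f_d}(D))$ $\varepsilon$-differentially private, and then Proposition~\ref{attack} does the rest. Your treatment is in fact more careful than the paper's---you make explicit the packaging of adaptivity and the stopping rule, and you handle randomized $F$---whereas the paper simply asserts the antecedent of Proposition~\ref{attack} holds and invokes it.
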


\section{Quality of the response to individual queries\label{sec:Quality}}

We have defined an individual query, $f$, to be one that depends
on a single individual. We can think of it as a query that returns
the value of some attribute for some specific individual. 

Typical differential privacy mechanisms based on noise addition provide
low data quality responses for individual queries. The reason is that,
as any individual can take any value in $Range(f)$, the sensitivity
of the query equals the length of $Range(f)$. When using knowledge
refinement, the quality of the response depends to a great extent
on the prior knowledge available. 

In this section, we provide some data quality comparisons between
Laplace noise addition and knowledge refinement for individual queries.
Comparisons will be based of specific query functions. The first one
is based on a query function that returns a Boolean value; we show
how the distribution for the differentially private response gets
closer to the real response by refining prior knowledge than by adding
Laplace noise. The second comparison is based on a continuous function
with range $[0,1]$; we show that, even if we have no prior knowledge,
knowledge refinement provides better data quality for individual queries.

%JOSEPTHESIS. Aixo era subsubsection. Canviat a subsection
\subsection{Data quality for a Boolean attribute}

Consider a simple database $D$ with two attributes: an identifier
$ID$ and a Boolean attribute $B$ that may take values 0 and 1. We
assume that $B$ is very sensitive and that, to limit the disclosure
risk, access to the database must be mediated by a query-response
mechanism satisfying differential privacy, with $\varepsilon=1$.
Let $f:\mathcal{D}\rightarrow\{0,1\}$ be a query that asks the value
of attribute $B$ for a specific individual.

To achieve differential privacy via Laplace noise addition, we must
first compute the sensitivity of function $f$. Assuming that $f$
returns $1/2$ if the individual is not in the database,
the $L_{1}$-sensitivity of $f$ is $1/2$. Therefore,
to achieve differential privacy for $\varepsilon=1$, we must add
a Laplace distribution $L(0,1/2)$ to the true value of
the query response. Figure~\ref{fig:Ex2_Laplace} shows the distribution
of the responses for both possible values of $B$, 0 and 1.

\begin{figure}[h]
\begin{centering}
%JOSEPTHESIS. Engrandida figura
\includegraphics[width=10cm]{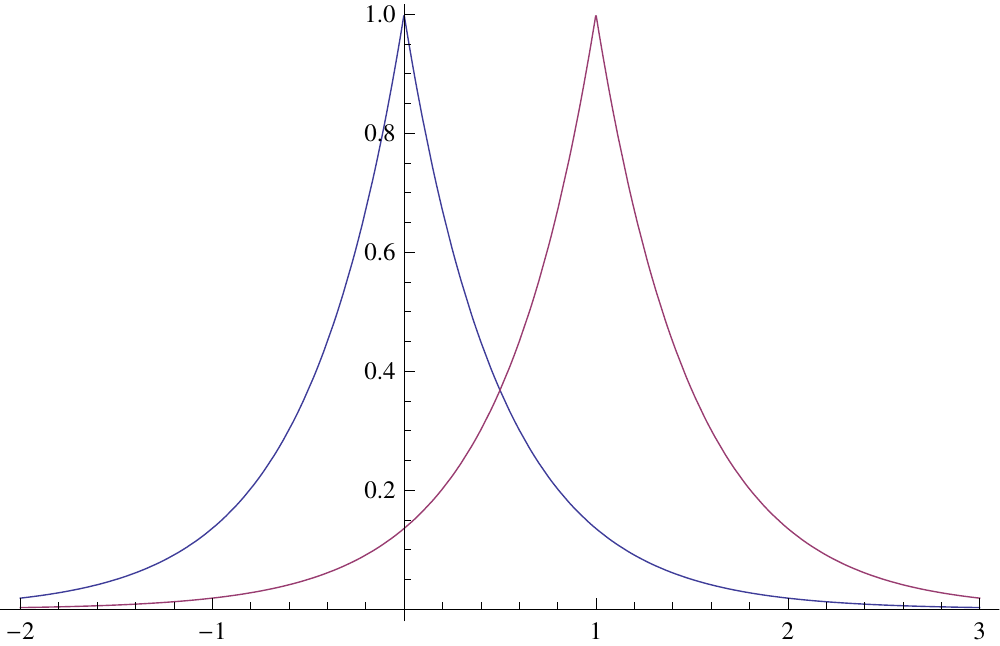}
\par\end{centering}

\caption{Response distributions with Laplace noise addition\label{fig:Ex2_Laplace}}
\end{figure}

Assuming that the user is only interested in a 0/1 response, any value
below $1/2$ is taken as 0, and any value above $1/2$
as 1. The distribution for the response thus obtained is:

\[
K_{f}(D)=\begin{cases}
%JOSEPTHESIS. Afegit if
0 & \mbox{if $f(D)+L(0,1/2)<0.5$}\\
1 & \mbox{otherwise.}
\end{cases}
\]

If $f(D)$ equals 0, $K_{f}(D)$ follows a Bernoulli distribution
with parameter 0.184. If $f(D)$ equals 1, the distribution of $K_{f}(D)$
is a Bernoulli with parameter 0.816. Note that this is completely
independent from the true distribution of attribute $B$, and from
any previous knowledge that the user might have on it. Hence, differential
privacy via Laplace noise addition does not let the user exploit prior
knowledge.

Let us assume that attribute $B$ is 1 only with probability 0.01.
For a user with this information, using the response obtained from
the differential privacy mechanism is actually misleading, as the
result will be 1 with probability

\[
\begin{array}{c}
P(K_{f}(D)=1)=\\
P(K_{f}(D)=1|f(D)=0)P(f(D)=0)+P(K_{f}(D)=1|f(D)=1)P(f(D)=1)\\
=0.184\cdot0.99+0.816\cdot0.01=0.19
\end{array}
\]

We could increase the parameter $\varepsilon$ to get a more accurate
response. However, by doing so we would be reducing the privacy guarantees.

Now, we turn to the refinement mechanism and, same as before, we assume
that the user knows that $B$ equals 1 with probability 0.01. Take
$\alpha_{u}=e^{\varepsilon}=e$ and $\alpha_{d}=e^{-\varepsilon}=e^{-1}$.
Hence,

\[
\begin{array}{c}
P(K_{f}(D)=1|f(D)=0)=P(f(D)=1)\cdot\alpha_{d}=0.003678\\
P(K_{f}(D)=0|f(D)=0)=1-0.003678=0.9963222\\
P(K_{f}(D)=1|f(D)=1)=P(f(D)=1))\cdot\alpha_{u}=0.027182\\
P(K_{f}(D)=0|f(D)=1)=1-0.027182=0.972817
\end{array}
\]

Note that, as this is not an absolutely continuous distribution, we
had to do some adjustment to have a total probability mass equal to
one: instead of adjusting $\alpha_{u}$ and $\alpha_{d}$, we directly
adjusted $P(K_{f}(D)=0|f(D)=0)$ and $P(K_{f}(D)=0|f(D)=1)$. Figure~\ref{fig:Response-distribution-with}
depicts the distribution of the response for both possible values
of attribute $B$ and for the prior knowledge. 

\begin{figure}[h]
\begin{centering}
%JOSEPTHESIS. Engrandida figura
\includegraphics[width=10cm]{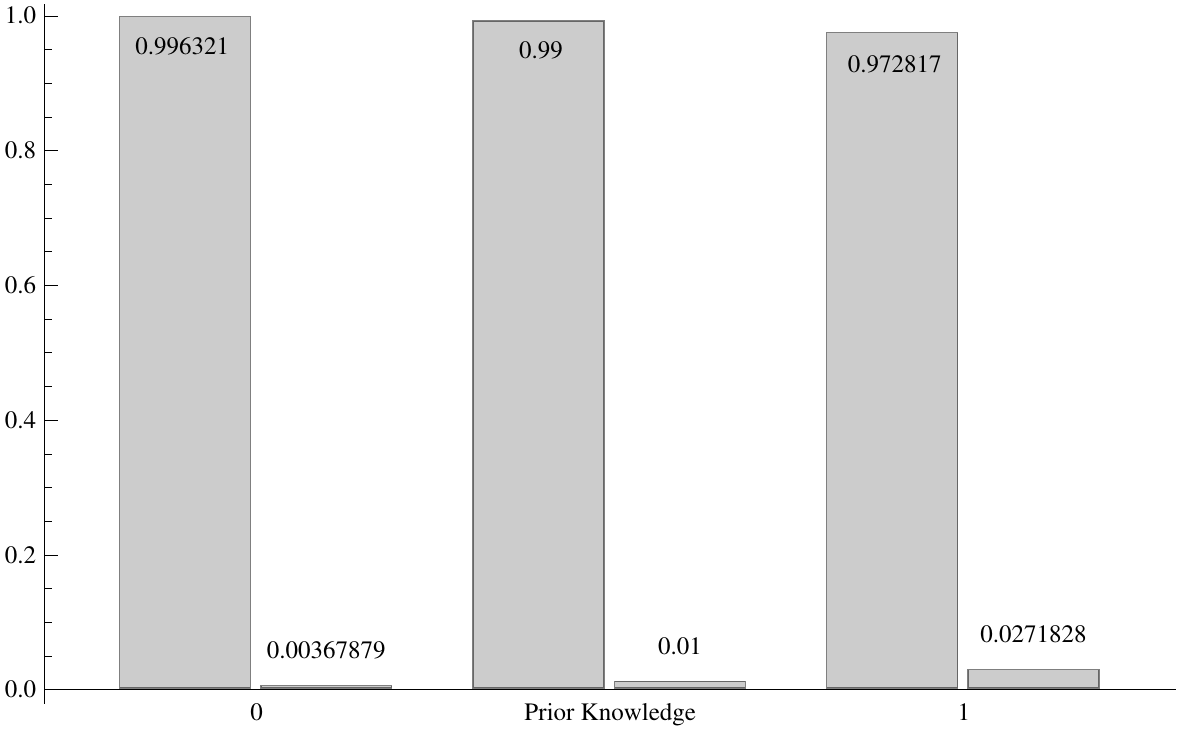}
\par\end{centering}

\caption{\label{fig:Response-distribution-with}Response distribution with
prior knowledge refinement}
\end{figure}

Now, the probability of obtaining a response 1 is

\[
\begin{array}{c}
P(K_{f}(D)=1)=\\
=P(K_{f}(D)=1|f(D)=0)P(f(D)=0)+P(K_{f}(D)=1|f(D)=1)P(f(D)=1)\\
=0.003678\cdot0.99+0.02182\cdot0.01=0.003912
\end{array}
\]

As 0.003912 is much closer to 0.01 than 0.19, we conclude that, despite
both mechanisms providing the same level of privacy, the output distribution
is much closer to the actual distribution of the attribute when using
the mechanism based on knowledge refinement. Therefore, knowledge
refinement outperforms Laplace noise addition for Boolean attributes
released under differential privacy.

\subsection{Data quality for a continuous attribute}

Let $f:\mathcal{D}\rightarrow[0,1]$ be a query function that returns
a value in the interval $[0,1]$. We have fixed the range of $f$
to be able to obtain some numerical results, but a similar comparison
can be done for other ranges. We compare the response obtained by
using Laplace noise addition and knowledge refinement with a uniform
$U[0,1]$ prior knowledge.

When using Laplace noise addition, the response to $f(D)$ is $K_{f}(D)=f(D)+Laplace(0,1/\varepsilon)$.
When using knowledge refinement, the prior knowledge is modified by
increasing the probability of the set $\mathcal{U}_{u}$ containing
the points closer to $f(D)$ by a factor $\alpha_{u}$, and decreasing
the probability of the rest by a factor $\alpha_{d}$. We saw in Section~\ref{sec:3 gen_algorithm}
that $\mathcal{U}_{u}$ must satisfy $P_{f}(\mathcal{U}_{u})=(\alpha_{u}-1)/(\alpha_{u}-\alpha_{d})$,
which in the case of a uniform prior knowledge within the interval
$[0,1]$ coincides with the size of $\mathcal{U}_{u}$. We also saw
(Table~\ref{tab:2 alpha}) that, for an individual query, the factors
are $\alpha_{u}=e^{\varepsilon}$ and $\alpha_{d}=e^{-\varepsilon}$.

Table~\ref{tab:Comparison} shows a comparison of the distribution
for the response to $f(D)$ for several values of $\varepsilon$ when
$f(D)=0.5$. For Laplace noise addition, we have computed the variance
of the response, as well as the probability for the response to be
within the range $[0,1]$. For knowledge refinement, we have computed
the variance of the response, the size of $\mathcal{U}_{u}$, and
the probability for the response to be in $\mathcal{U}_{u}$. The
results in the table show that knowledge refinement behaves much better
than Laplace noise addition, but perhaps this is better observed by
comparing the actual distributions. Figure~\ref{fig:Comparison}
shows the distributions for the response when using Laplace noise
addition and knowledge refinement with the same values of $\varepsilon$
used in the table.

\begin{table}[h]
%JOSEPTHESIS. Caption posada abans de la taula
\caption{\label{tab:Comparison}Comparison between the distribution of the
response to $f(D)$ for Laplace noise addition and knowledge refinement
for several values of $\varepsilon$ when $f(D)=0.5$}
\begin{centering}
\begin{tabular}{cccccc}
\hline 
 & \multicolumn{2}{c}{Laplace noise addition} & \multicolumn{3}{c}{Knowledge refinement}\tabularnewline
$\varepsilon$ & Variance & $P(K_{f}(D)\in[0,1])$ & Variance & $size(\mathcal{U}_{u})$ & $\ensuremath{P(K_{f}(D)\in\mathcal{U}_{u})}$\tabularnewline
\hline 
0.1 & 200 & 0.476 & 0.077 & 0.475 & 0.525\tabularnewline
$\ln(2)$ & 4.16 & 0.549 & 0.046 & 0.333 & 0.667\tabularnewline
1 & 2 & 0.607 & 0.034 & 0.269 & 0.731\tabularnewline
2 & 0.5 & 0.684 & 0.012 & 0.119 & 0.881\tabularnewline
\hline 
\end{tabular}
\par\end{centering}
\end{table}

\begin{figure}[h]
\begin{centering}
%JOSEPTHESIS. Engrandides figures
\includegraphics[width=7cm]{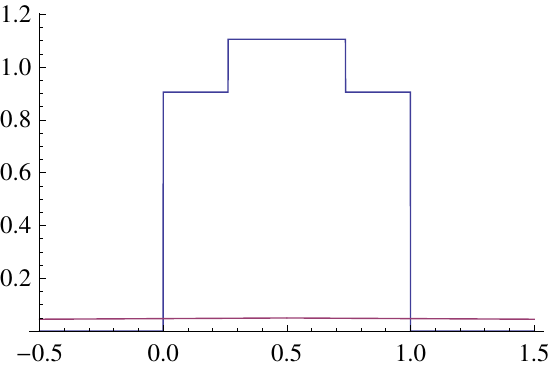}~\includegraphics[width=7cm]{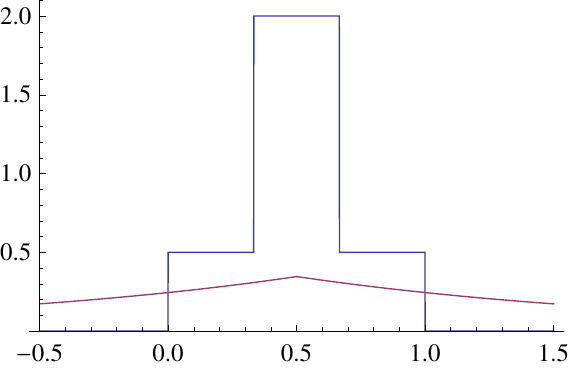}
\par\end{centering}

\begin{centering}
\includegraphics[width=7cm]{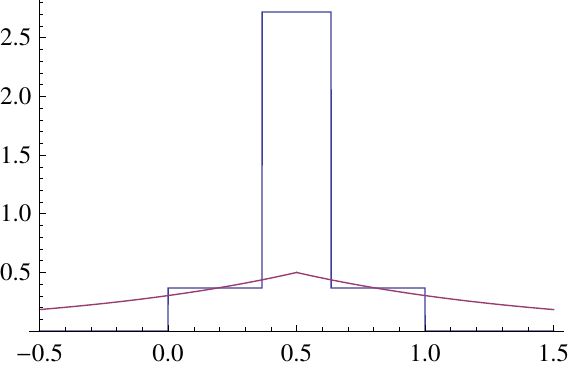}~\includegraphics[width=7cm]{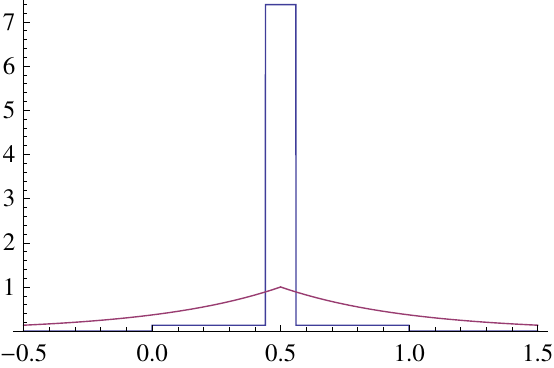}
\par\end{centering}

\caption{\label{fig:Comparison}Distribution for the response to $f(D)$, when
$f(D)=0.5$, for Laplace noise addition (distribution with unbounded
support) and knowledge refinement (distribution with support $[0,1]$)
for $\varepsilon=0.1$ (top left), $\varepsilon=\ln(2)$ (top right),
$\varepsilon=1$ (bottom left), and $\varepsilon=2$ (bottom right)}
\end{figure}

\section{Discussion\label{sec:Discussion}}

In previous sections we have highlighted that the knowledge refinement
mechanism lets the database user exploit her prior knowledge to obtain
a more accurate response. In Section~\ref{sec:Quality} we saw that,
for the case of individual queries, knowledge refinement provides
a much more accurate response even when there is no prior knowledge. 

Other advantages of prior knowledge refinement are:
\begin{itemize}
\item \emph{Simplicity}. Mechanisms such as Laplace noise addition are based
on the addition of a random noise whose magnitude depends on the variation
of the query function across neighbor data sets, also known as sensitivity.
To calibrate the random noise, the sensitivity of the function must
be computed, which may be quite complex. The mechanism based on the
refinement of the prior knowledge only depends on the prior knowledge
(it is independent from the sensitivity of the query function), and
thus it is easier to implement, especially in a non-supervised environment.
\item \emph{Generality}. As said above, Laplace noise addition requires
computing the sensitivity of the query function, and this can only
be done if the query function takes values in a metric space. This
introduces some complexities when the function returns categorical
information. The mechanism based on prior knowledge refinement does
not impose any requirement on the query function, and thus it can
be applied without extra overhead to functions returning categorical
information.
\item Consistency. Knowledge refinement lets the database user easily restrict
the response to a set of values consistent with the query function,
by having the prior knowledge assign a probability mass of zero to
the set of inconsistent values. For example, in Table~\ref{tab:Comparison}
we saw that Laplace noise sends the response outside the query function
range $[0,1]$ with great probability, while knowledge refinement
always keeps the response within range. Querying categorical attributes
is another example. It is usual to have some combinations of categories
that do not make sense. For example, if the attributes are ``employed''
(Y/N), and ``unemployment benefits'' (Y/N), a response Y for both
attributes does not make sense. When using a noise addition mechanism,
there is no way to avoid that combination of values, while, when using
knowledge refinement, to avoid that combination we only have to use
a prior knowledge distribution that assigns zero probability mass
to it.
\end{itemize}
Despite the advantages listed above, there are some situations for
which the proposed mechanism is not appropriate. If the range of values
that the function may return is large compared to the variability
between neighbor data sets, and the database user does not have precise
knowledge of the response, then a method based on noise addition produces
better data quality. This may be the case of statistical queries where
the user has no prior knowledge of the result. However, when querying
about a specific individual, the proposed method results in much greater
response quality.

\section{Conclusions}

We have introduced a novel mechanism to attain differential privacy.
This mechanism is based on refining the prior knowledge that the user
may have about the query response. This refinement is performed taking
into account the constraints imposed by differential privacy.

The refinement mechanism presents several advantages over the usual
noise addition mechanism. It is easier to implement, especially in
a non-supervised environment, as it does not require potentially complex
computations (such as determining the sensitivity of the query function).
The fact that it lets users exploit their prior knowledge may lead
to a level of data quality not reachable by mechanisms independent
of the user knowledge. For example, we showed in the examples of Section~\ref{sec:Quality}
that the distribution of the response was closer to the real distribution
when using the refinement mechanism. For query functions with great
sensitivity, the amount of noise added by noise addition mechanisms,
such as~\cite{Dwork2006a}, may render the response useless. In contrast,
the data quality that results from our proposal is independent from
the sensitivity of the query function; yet this has the drawback that,
for small sensitivities, our approach may be inferior to noise addition.

We have also analyzed the behavior of our approach for multicomponent
queries. A generic property of differential privacy guarantees that,
if a $\varepsilon_{i}$-differentially private response is provided
for a query $f_{i}$, for $i=1$ to $n$, a $\sum\varepsilon_{i}$-differentially
private response is provided for the query $(f_{1},\ldots f_{n})$.
We have seen that this can be improved if each query $f_{i}$ refers
to a disjoint set of individuals. In this case, we achieve $\max\{\varepsilon_{i}\}$-differential
privacy, instead of $\sum\varepsilon_{i}$-differential privacy. Interactive
mechanisms for Laplace noise addition and knowledge refinement have
also been described. Such interactive mechanisms take as input parameter
the maximum level of leakage $\varepsilon$ allowed by the database
holder, and queries are answered until that level of leakage is reached.
The knowledge refinement interactive mechanism is superior to the
Laplace noise interactive mechanism in that it does not need to compute
sensitivities. We have shown that any interactive mechanism providing
$\varepsilon$-differential privacy is safe against adaptive attacks;
whatever the strategy used by an attacker to combine query responses,
$\varepsilon$-differential privacy holds.

\lhead[\chaptername~\thechapter]{\rightmark}

%JOSEPTHESIS. Escurcada la capcalera de la pagina.
\rhead[Enhancing utility in
differential privacy via $k$-anonymity]{}

%\rhead[\leftmark]{}

\lfoot[\thepage]{}

\cfoot{}

\rfoot[]{\thepage}

%JOSEPTHESIS. Descapitalitzat el titol, com per la resta de capitols.
\chapter{Enhancing data utility in differential privacy via microaggregation-based
$k$-anonymity\label{chap:Enhancing-Data-Utility}}

It is not uncommon in the data anonymization literature to oppose
the ``old'' $k$-anonymity model to the ``new'' differential privacy
model, which offers more robust privacy guarantees. Yet, it is often
disregarded that the utility of the masked results provided by differential
privacy is quite limited, due to the amount of noise that needs to
be added to the output, or because utility can only be guaranteed
for a restricted type of queries. This is in contrast with the general-purpose
anonymized data resulting from $k$-anonymity mechanisms, which also
focus on preserving data utility.
%JOSEPTHESIS. paper -> chapter
In this chapter, we show that a synergy
between differential privacy and $k$-anonymity can be found: $k$-anonymity
can help improving the utility of differentially private query responses.
We devote special attention to the utility improvement of differentially
private published data sets. Specifically, we show that the amount
of noise required to fulfill $\varepsilon$-differential privacy can
be reduced if noise is added to a $k$-anonymous version of the data
set, where $k$-anonymity is reached through a specially designed
microaggregation of all attributes. As a result of noise reduction,
the analytical utility of the anonymized output is increased. The
theoretical benefits of our proposal are illustrated in a practical
setting with an empirical evaluation on a pair of reference data sets. 

%JOSEPTHESIS. Afegit
The contents of this chapter have been accepted for publication in~\cite{report}.

\section{Introduction}

Publishing microdata (\emph{e.g.}, responses to polls, census information,
healthcare records) collected by organizations such as statistical
agencies is of great interest for the data analysis community. At
the same time, microdata may contain confidential information about
individuals. To overcome this privacy threat, data should be anonymized
before making them available for secondary use~\cite{hundepool2012}.

In the last two decades, several models for data anonymization have
been proposed in the literature. One of the best-known and widely
used is $k$-anonymity~\cite{Samarati1998b}, which aims at making
each record indistinguishable from, at least, $k-1$ other records.
The usual computational procedure to reach $k$-anonymity is a combination
of attribute generalization and local suppression~\cite{Samarati01,Sweeney2002}.
An alternative procedure, especially suitable for attributes with
no obvious generalization hierarchy (like the numerical ones), is
microaggregation~\cite{Domingo2005,Domi02}. Whatever the computational
procedure, $k$-anonymity assumes that identifiers are suppressed
from the data to be released and it focuses on masking quasi-identifier
attributes; these are attributes (\emph{e.g.}, Age, Gender, Zipcode
and Race) that may enable re-identifying the respondent of a record
because they are linkable to analogous attributes available in external
identified data sources (like electoral rolls, phone books, etc.).
$k$-Anonymity does not mask confidential attributes (\emph{e.g.},
salary, health condition, political preferences, etc.) unless they
are also quasi-identifiers. While $k$-anonymity has been shown to
provide reasonably useful anonymized results, especially for small
$k$, it is also vulnerable to attacks based on the possible lack
of diversity of the non-anonymized confidential attributes or on additional
background knowledge available to the attacker~\cite{Domingo2008}.

On the other hand, $\varepsilon$-differential privacy~\cite{Dwork2006}
is a more recent and rigorous privacy model that makes no assumptions
about the attacker's background knowledge. In a nutshell, it guarantees
that the anonymization output is insensitive (up to a factor dependent
on $\varepsilon$) to modifications of individual input records. In
this way, the privacy of an individual is not compromised by her presence
in the data set, which is a much more robust guarantee than the one
offered by $k$-anonymity model. To do so, $\varepsilon$-differential
privacy requires adding an amount of noise to the anonymization output
that depends on the variability of the actual non-anonymized values.
$\varepsilon$-Differential privacy was originally proposed for the
\textit{interactive} scenario, in which, instead of releasing a masked
version of the data, the anonymizer returns noise-added answers to
interactive queries. Compared to the unrestricted and general-purpose
data publication offered by $k$-anonymity, the interactive scenario
of $\varepsilon$-differential privacy severely limits data analysis,
because it only allows answering queries whose number and type are
limited. Otherwise, an adversary could reconstruct some of the original
data~\cite{Chen11}.

It is pointed out in~\cite{Blum2008} that the previous limitation
can be circumvented by allowing an $\varepsilon$-differentially private
data publication (\emph{i.e.}, a \textit{non-interactive} setting),
which supports answering an unlimited number of potentially heterogeneous
queries. However, since $\varepsilon$-differential privacy should
ensure that the probability distribution of the published records
is not changed by \textit{any} modification of a single input record,
the amount of noise that needs to be added to the published data in
such a general setting is so large that it would severely hamper data
utility~\cite{Chen11}. This problem can be minimized in specific
scenarios, but at the expense of preserving usefulness only for restricted
classes of queries~\cite{Blum2008,Dwork2009,Hardt2010}.

In summary, we can conclude that $k$-anonymity enables general-purpose
data publication with reasonable utility at the cost of some privacy
weaknesses. On the contrary, $\varepsilon$-differential privacy offers
a very robust privacy guarantee at the cost of substantially limiting
the utility of anonymized outputs.

%JOSEPTHESIS. Trec aquest encapcalament, que queda molt "de paper".
%\subsection{Contribution and plan}

We show here that a synergy between both privacy models can be found
in order to achieve $\varepsilon$-differential privacy: $k$-anonymity
can help increasing the utility of differentially private query outputs.
Specifically, we show that the amount of noise required to fulfill
$\varepsilon$-differential privacy can be greatly reduced if the
query is run over a $k$-anonymous version of the data set obtained
through microaggregation of all attributes (instead of running it
on the raw input data). The rationale is that the microaggregation
performed to achieve $k$-anonymity helps reducing the sensitivity
of the input versus modifications of individual records; hence, it
helps reducing the amount of noise to be added to achieve $\varepsilon$-differential
privacy. As a result, data utility can be improved without renouncing
the strong privacy guarantee of $\varepsilon$-differential privacy.

Section~\ref{sec:dp_kanon} discusses the use of a $k$-anonymous
microaggregation step prior to the evaluation of a query function
as a means to reduce the query sensitivity, thereby reducing the noise
required to attain differential privacy. Section~\ref{sec:dp_ds_kanon}
proposes a general algorithm for generating $\varepsilon$-differentially
private data sets that employs the $k$-anonymous microaggregation
procedure described earlier. Implementation details for data sets
with numerical and categorical attributes are given. Section~\ref{sec:evaluation}
reports on an empirical evaluation of the differentially private outputs
obtained from a pair of reference data sets via $k$-anonymous microaggregation;
the output is compared against standard $k$-anonymity and $\varepsilon$-differential
privacy mechanisms regarding data utility and disclosure risk. Section~\ref{sec:conclusions}
presents the conclusions and proposes some lines of future research.

\section{Differential privacy through k-anonymous microaggregation\label{sec:dp_kanon} }

Differential privacy and microaggregation offer quite different disclosure
limitation guarantees. Differential privacy is introduced in a query-response
environment and offers probabilistic guarantees that the contribution
of any single individual to the query response is limited, while microaggregation
is used to protect microdata releases and works by clustering groups
of individuals and replacing them by the group centroid. When applied
to the quasi-identifier attributes, microaggregation achieves $k$-anonymity.
In spite of those differences, we can leverage the masking introduced
by microaggregation to decrease the amount of random noise required
to attain differential privacy.

Let $X$ be a data set with attributes $A_{1},\ldots,A_{m}$, and
$\overline{X}$ be a microaggregated $X$ with minimal cluster size
$k$. Let $M$ be a microaggregation function that takes as input
a data set, and outputs a microaggregated version of it: $M(X)=\overline{X}$.
Let $f$ be an arbitrary query function for which an $\varepsilon$-differentially
private response is requested. A typical differentially private mechanism
takes these steps: capture the query $f$, compute the real response
$f(D)$, and output a masked value $f(X)+N$, where $N$ is a random
noise whose magnitude is adjusted to the sensitivity of $f$.

%JOSEPTHESIS. a -> an (a corregir tambe al paper de congrés).
To improve the utility of an $\varepsilon$-differentially private
response to $f$, we seek to minimize the distortion introduced by
the random noise $N$. Two main approaches are used in the literature.
In the first approach, a random noise is used that allows for a finer
calibration to the query $f$ under consideration. For instance, if
the variability of the query $f$ is highly dependent on the actual
data set $X$, using a data-dependent noise (such as in~\cite{Nissim2007})
would probably reduce the magnitude of the noise. In the second approach,
the query function $f$ is modified so that the new query function
is less sensitive to modifications of a record in the data set (the
abovementioned paper~\cite{Mohammed11} exemplifies this approach).

Our proposal falls into the second approach: we replace the original
query function $f$ by $f\circ M$, that is, we run the query $f$
over the microaggregated data set $\overline{X}$. For our proposal
to be meaningful, the function $f\circ M$ must be a good approximation
of $f$. Our assumption is that the microaggregated data set $\overline{X}$
preserves the statistical information contained in the original data
set $X$; therefore, any query that is only concerned with the statistical
properties of the data in $X$ can be run over the microaggregated
data set $\overline{X}$ without much deviation. The function $f\circ M$
will certainly not be a good approximation of $f$ when the output
of $f$ depends on the properties of specific individuals; however,
this is not our case, as we are only interested in the extraction
of statistical information.

Since the $k$-anonymous data set $\overline{X}$ is formed by the
centroids of the clusters (\emph{i.e.}, the average records), for
the sensitivity of the queries $f\circ M$ to be effectively reduced
the centroid must be stable against modifications of one record in
the original data set $X$. This means that modification of one record
in the original data set $X$ should only slightly affect the centroids
in the microaggregated data set. Although this will hold for most
of the clusters yielded by any microaggregation algorithm, we need
it to hold for \emph{all} clusters in order to effectively reduce
the sensitivity.

Not all microaggregation algorithms satisfy the above requirement;
for instance, if the microaggregation algorithm could generate a completely
unrelated set of clusters after modification of a single record in
$X$, the effect on the centroids could be large. As we are modifying
one record in $X$, the best we can expect is a set of clusters that
differ in one record from the original set of clusters. Microaggregation
algorithms with this property lead to the greatest reduction in the
query sensitivity; we refer to them as \emph{insensitive} microaggregation
algorithms.
\begin{defn}[Insensitive microaggregation]
 \label{def:insensitive_microaggregation-1} Let $X$ be a data set,
$M$ a microaggregation algorithm, and let $\{C_{1},\ldots,C_{n}\}$
be the set of clusters that result from running $M$ on $X$. Let
$X'$ be a data set that differs from $X$ in a single record, and
$\{C'_{1},\ldots,C'_{n}\}$ be the clusters produced by running $M$
on $X'$. We say that $M$ is insensitive to the input data if, for
every pair of data sets $X$ and $X'$ differing in a single record,
there is a bijection between the set of clusters $\{C_{1},\ldots,C_{n}\}$
and the set of clusters $\{C'_{1},\ldots,C'_{n}\}$ such that each
pair of corresponding clusters differs at most in a single record. 
\end{defn}
Since for an insensitive microaggregation algorithm corresponding
clusters differ at most in one record, bounding the variability of
the centroid is simple. For instance, for numerical data, when computing
the centroid as the mean, the maximum change for each attribute equals
the size of the range of the attribute divided by $k$. If the microaggregation
was not insensitive, a single modification in $X$ might lead to completely
different clusters, and hence to large variability in the centroids.

The output of microaggregation algorithms is usually highly dependent
on the input data. On the positive side, this leads to greater within-cluster
homogeneity and hence less information loss. On the negative side,
modifying a single record in the input data may lead to completely
different clusters; in other words, such algorithms are not insensitive
to the input data as per Definition~\ref{def:insensitive_microaggregation-1}.
We illustrate this fact for MDAV. Figure~\ref{fig:MDAV-clusters-1}
shows the clusters generated by MDAV for a toy data set $X$ consisting
of 15 records with two attributes, before and after modifying a single
record. In MDAV, we use the Euclidean distance and $k=5$. Two of
the clusters in the original data set differ by more than one record
from the respective most similar clusters in the modified data set.
Therefore, no mapping between clusters of both data sets exists that
satisfies the requirements of Definition~\ref{def:insensitive_microaggregation-1}.
The centroids of the clusters are represented by a cross. A large
change in the centroids between the original and the modified data
sets can be observed.

\begin{figure}[t]
\begin{centering}
%JOSEPTHESIS. Engrandida figura.
\includegraphics[width=7cm]{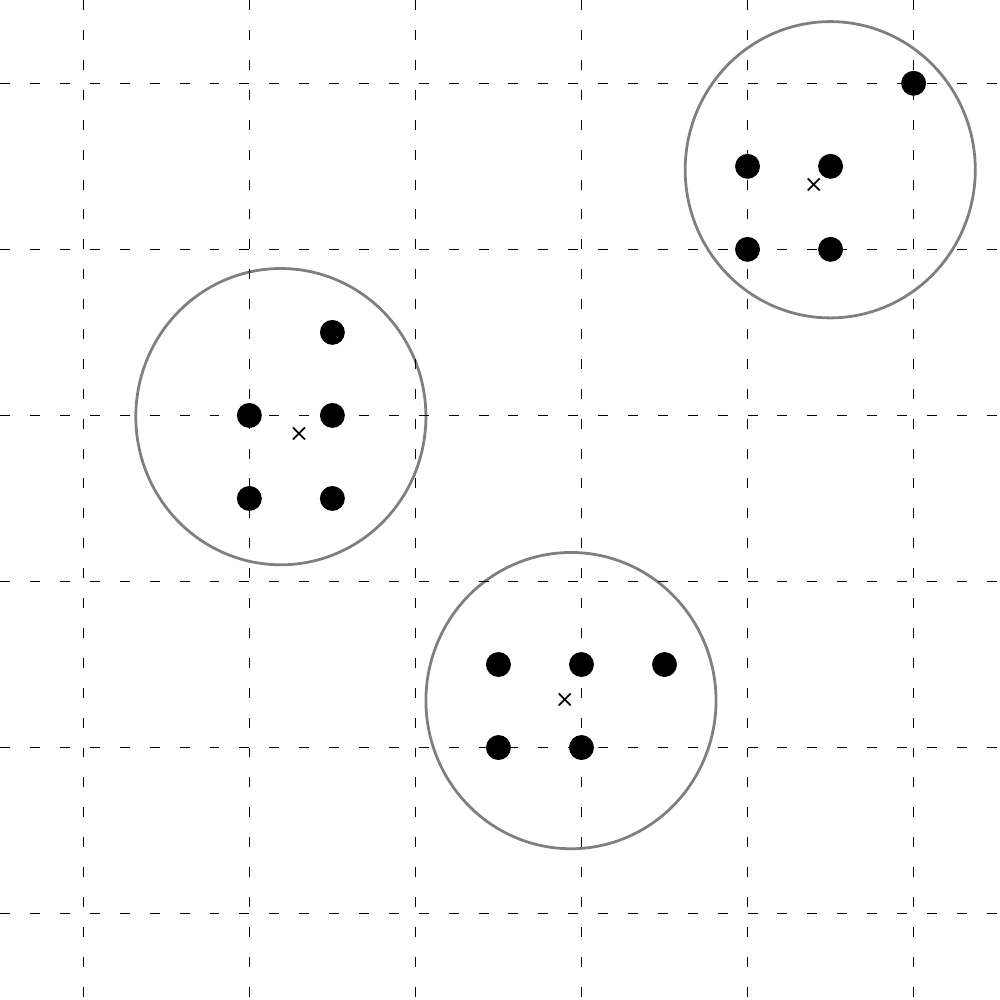}\hspace{0.5cm}\includegraphics[width=7cm]{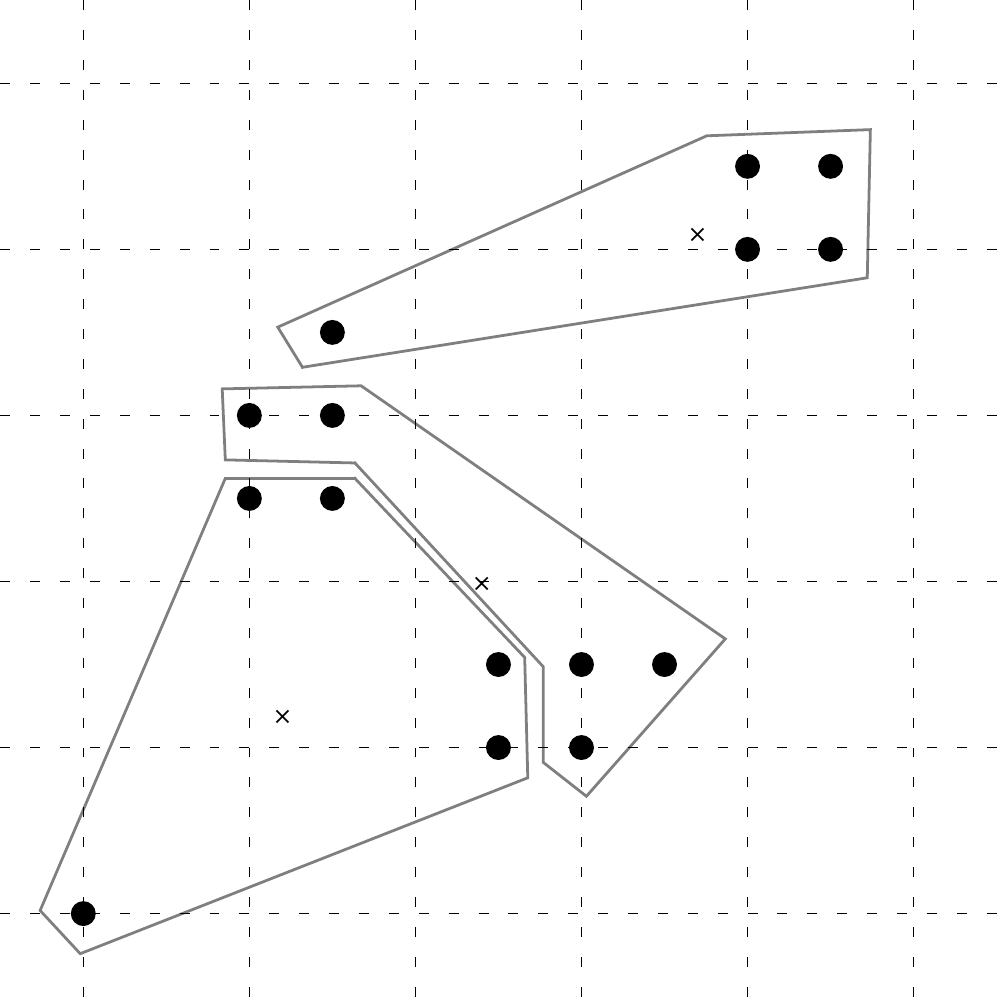} 
\par\end{centering}

\caption{\label{fig:MDAV-clusters-1}MDAV clusters and centroids with $k=5$.
Left, original data set $X$; right, data set after modifying one
record in $X$. }
\end{figure}

We want to turn MDAV into an insensitive microaggregation algorithm,
so that it can be used as the microaggregation algorithm to generate
$\overline{X}$. MDAV depends on two parameters: the minimal cluster
size $k$, and the distance function $d$ used to measure the distance
between records. Modifying $k$ does not help making MDAV insensitive:
similar examples to the ones in Figure~\ref{fig:MDAV-clusters-1}
can easily be proposed for any $k>1$; on the other hand, setting
$k=1$ does make MDAV insensitive, but it is equivalent to not performing
any microaggregation at all. Next, we see that MDAV is insensitive
if the distance function $d$ is consistent with a total order relation.
\begin{defn}
A distance function $d:X\times X\rightarrow\mathbb{R}$ is said to
be consistent with an order relation $\le_{X}$ if $d(x,y)\le d(x,z)$
whenever $x\le_{X}y\le_{X}z$. \end{defn}
\begin{prop}
\label{prop:single_total_order-1}Let $X$ be a data set equipped
with a total order relation $\le_{X}$. Let $d:X\times X\rightarrow\mathbb{R}$
be a distance function consistent with $\le_{X}$. MDAV with distance
$d$ satisfies the insensitivity condition (Definition~\ref{def:insensitive_microaggregation-1}). \end{prop}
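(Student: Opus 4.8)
The plan is to show that when $d$ is consistent with $\le_X$, MDAV (Algorithm~\ref{alg:MDAV}) always carves $X$ into \emph{contiguous blocks} of the totally ordered sequence, and that a one-record change perturbs these blocks only locally, so the bijection demanded by Definition~\ref{def:insensitive_microaggregation-1} is obtained simply by matching blocks in $\le_X$-order. The first step is a monotonicity lemma: from consistency it follows that, for every fixed $y\in X$, the map $z\mapsto d(y,z)$ is non-decreasing as $z$ moves away from $y$ in either direction along $\le_X$, i.e.\ $d(y,\cdot)$ has a ``V-shaped'' profile with minimum at $y$. Two consequences are immediate: (i) the $k-1$ records closest to any $y$ form, together with $y$, a set of $k$ \emph{consecutive} records of $(X,\le_X)$ (fixing a tie-break rule when records are equidistant); and (ii) the record of a subtable that is farthest from a reference point lying ``between'' the $\le_X$-extremes of that subtable is one of the two extremes.

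Next I would establish the interval structure by induction on the number of executed loop iterations of MDAV, with the invariant: the set $X_{\mathrm{cur}}$ of not-yet-clustered records is a contiguous interval of $(X,\le_X)$ and every cluster output so far is a contiguous interval. In a \textbf{while}-iteration, $\overline{x}$ is the centroid of $X_{\mathrm{cur}}$; by (ii) the seed $x_1$ is a $\le_X$-extreme of $X_{\mathrm{cur}}$, say its minimum, and by (i) the cluster formed around $x_1$ is the block of the $k$ smallest records of $X_{\mathrm{cur}}$; symmetrically $x_2$ is the maximum and its cluster is the block of the $k$ largest; deleting both leaves a contiguous interval, preserving the invariant. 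The endgame (the \textbf{if}\,$|X|\ge 2k$ branch and the final ``remaining records'' cluster) is treated the same way: one more extreme block of size $k$ is peeled, and the last cluster is the remaining interval, of size between $k$ and $2k-1$, exactly as in the standard MDAV analysis. Hence MDAV partitions $(X,\le_X)$ into consecutive intervals $I_1<_X\cdots<_X I_t$, and the list of block sizes is fixed by $|X|$, $k$ and the control flow alone.

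For the bijection, let $X'$ arise from $X$ by replacing a record $r$ by $r'$, and assume $\le_{X'}$ agrees with $\le_X$ on the $n-1$ common records (true when $\le$ is the restriction of a fixed total order on the record universe). Then the sorted sequence of $X'$ is that of $X$ with one deletion and one insertion. Running MDAV on $X'$ yields consecutive intervals $I'_1<_{X'}\cdots<_{X'}I'_t$ with the same size list, and I match $I_j\leftrightarrow I'_j$. Reading both sorted sequences left to right, the block boundaries sit at the same \emph{positions}, so a common record strictly between the deletion site and the insertion site is shifted by exactly one position, while common records outside the span of $r$ and $r'$ keep both their position and their block. Thus each $I_j$ and $I'_j$ share all but at most one record: blocks entirely below $r$ and entirely above $r'$ are unchanged; each block crossed by the one-position ``cascade'' loses a boundary record and gains the neighbouring one; and the single block absorbing $r'$ trades $r$ for $r'$. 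This is exactly the bijection of Definition~\ref{def:insensitive_microaggregation-1}.

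The main obstacle I anticipate is the seed selection in the endgame: because $x_1$ is defined through the \emph{external} centroid $\overline{x}$ (not an element of $(X,\le_X)$), one must both justify that it is always an endpoint of $X_{\mathrm{cur}}$ and control the possibility that a one-record change flips \emph{which} endpoint is chosen there; such a flip would swap the roles of the last two blocks, which may have different sizes, and break the block-to-block correspondence. I would resolve this by working with distances in generic position, or by fixing a deterministic tie-break consistent with $\le_X$, and—if needed—by noting that for the sizes $n$ where the \textbf{if}-branch does not produce a size-asymmetric final pair of blocks the argument is unaffected; granting that, the remainder is routine bookkeeping over the \textbf{while}/\textbf{if} structure and the last cluster of size up to $2k-1$.
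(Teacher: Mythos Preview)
Your plan is the paper's plan: show that under a $\le_X$-consistent distance MDAV peels contiguous blocks from the two ends of the sorted sequence, then argue that replacing one record shifts each block by at most one element, so the order-preserving bijection works. The paper is terser---it asserts the interval structure in one line and writes the cascade explicitly as $C_i'=(C_i\cup\{x_1^{i+1}\})\setminus\{x_1^{i}\}$---but the architecture is identical, and your monotonicity/induction packaging is a reasonable way to fill in what the paper leaves implicit.

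Where you differ from the paper is that you actually notice the endgame problem; unfortunately your proposed remedies (generic position, a tie-break rule, restricting to sizes where the last two blocks are equal) do not dissolve it. The difficulty is not a tie: a single-record change can \emph{strictly} flip which endpoint is farthest from the centroid in the \textbf{if}\,$|X|\ge 2k$ branch, and when $k\nmid n$ this swaps the sizes of the last two blocks. Take $k=2$ and $X=\{0,10,11,12,13\}$ on the line; MDAV produces $\{0,10\}$ and $\{11,12,13\}$. Replace $0$ by $100$: MDAV on $X'$ produces $\{13,100\}$ and $\{10,11,12\}$. No bijection matches these so that corresponding clusters differ in a single record (the size-$2$ block jumps from the left end to the right end and shares nothing with its counterpart). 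The paper's proof does not confront this case either---its displayed clusters $\{x_1,\ldots,x_k\},\ldots,\{x_{n-k+1},\ldots,x_n\}$ and the ensuing cascade tacitly assume every block has size exactly $k$---so the gap you flagged is present in the original argument as well, not a defect peculiar to your write-up.
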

\begin{proof}
When the distance $d$ is consistent with a total order, MDAV with
cluster size $k$ reduces to iteratively taking sets with cardinality
$k$ from the extremes, until less than $k$ records are left; the
remaining records form the last cluster. Let $x_{1},\ldots,x_{n}$
be the elements of $X$ sorted according to $\le_{X}$. MDAV generates
a set clusters of the form: 
\[
\{x_{1},\ldots,x_{k}\},\ldots,\{x_{n-k+1},\ldots,x_{n}\}
\]
We want to check that modifying a single record of $X$ leads to a
set of clusters that differ in at most one element. Suppose that we
modify record $x$ by setting it to $x'$, and let $X'$ be the modified
data set. Without loss of generality, we assume that $x\le_{X}x'$;
the proof is similar for the case $x'\le_{X}x$.

Let $C$ be the cluster of $X$ that contains $x$, and $C'$ the
cluster of $X'$ that contains $x'$. Let $m$ be the minimum of the
elements in $C$, and let $M$ be the maximum of the elements in $C'$.
As MDAV takes groups of $k$ records from the extremes, the clusters
of $X$ whose elements are all inferior to $m$, or all superior to
$M$ remain unmodified in $X'$. Therefore, we can assume that $x$
belongs to the leftmost cluster of $X$, and $x'$ belongs to the
rightmost cluster in $X'$.

Let $C_{1},\ldots,C_{m}$ and $C_{1}',\ldots,C_{m}'$ be, respectively,
the clusters of $X$ and $X'$, ordered according to $\le_{X}$. Let
$x_{1}^{i}$ and $x_{j_{i}}^{i}$ be the minimum and the maximum of
the elements of $C_{i}$: $C_{i}=\{z\in X|x_{1}^{i}\le z\le x_{j_{i}}^{i}\}$.
Cluster $C_{1}'$ contains the same elements as $C_{1}$ except for
$x$ that has been removed from $C_{1}'$ and for $x_{1}^{2}$ that
has been added to $C_{1}'$, $C_{1}'=(C_{1}\cup\{x_{1}^{2}\})\setminus\{x\}$.
Clusters $C_{2}',\ldots,C_{m-1}'$ contain the same elements as the
respective cluster $C_{2},\ldots,C_{m-1}$, except for $x_{1}^{i}$
that has been removed from $C_{i}'$ and $x_{1}^{i+1}$ that has been
added to $C_{i}'$. Cluster $C_{m}'$ contains the same elements as
$C_{m}$ except for $x_{1}^{m}$ that has been removed from $C_{m}'$and
$x'$ that has been added to $C_{m}'$. Therefore, clusters $C_{i}$
and $C_{i}'$ differ in a single record for all $i$, which completes
the proof. 
\end{proof}
We have seen that, when the distance function is consistent with a
total order relation, MDAV is insensitive. Now, we want to determine
the necessary conditions for an arbitrary microaggregation algorithm
to be insensitive. Algorithm~\ref{alg:general_microaggregation}
describes the general form of a microaggregation algorithm with fixed
cluster size $k$. Essentially it keeps selecting groups of $k$ records,
until less than $2k$ records are left; the remaining records form
the last cluster, whose size is between $k$ and $2k-1$. Generating
each cluster requires a selection criterion to prioritize some elements
over the others. We can think of this prioritization as an order relation
$\le_{i}$, and the selection criterion for constructing the cluster
$C_{i}$ to be ``select the $k$ smallest records according to $\le_{i}$''.
Note that the prioritization used to generate different clusters need
not be the same; for instance, MDAV selects the remaining element
that is farthest from the average of remaining points, and prioritizes
based on the distance to it.

\begin{algorithm}[ht]
\caption{\label{alg:general_microaggregation}General form of a microaggregation
algorithm with fixed cluster size}

\textbf{let} $X$ be the original data set

\textbf{let} $k$ be the minimal cluster size

\vspace{0.2cm}

\textbf{set} $i:=0$

\textbf{while} $|X|\ge2k$ \textbf{do}

\hspace{0.5cm}$C_{i}\leftarrow k$ smallest elements from $X$ according
to $\le_{i}$

\hspace{0.5cm}$X:=X\setminus C_{i}$

\hspace{0.5cm}$i:=i+1$

\textbf{end while}

$\overline{X}\leftarrow$Replace each record $r\in X$ by the centroid
of its cluster

\vspace{0.2cm}

\textbf{return} $\overline{X}$ 
\end{algorithm}

Let $X$ and $X'$ be a pair of data sets that differ in one record.For
Algorithm~\ref{alg:general_microaggregation} to be insensitive,
the sequence of orders $\le_{i}$ must be constant across executions
of the algorithm; to see this, note that if one of the orders $\le_{i}$
changed, we could easily construct data sets $X$ and $X'$ such that
cluster $C_{i}$ in $X$ would differ by more than one record from
its corresponding cluster in $X'$, and hence the algorithm would
not be insensitive.

Another requirement for Algorithm~\ref{alg:general_microaggregation}
to be insensitive is that the priority assigned by $\le_{i}$ to any
two different elements must be different. If there were different
elements sharing the same priority, we could end up with clusters
that differ by more than one record. For instance, assume that the
sets $X$ and $X'$ are such that $X'=(X\setminus\{x\})\cup\{x'\}$,
and assume that $x$ belongs to cluster $C_{i}$ and $x'$ belongs
to cluster $C_{i}'$. Clusters $C_{i}$ and $C_{i}'$ already differ
in one element, so for the clustering to be insensitive all the other
records in these clusters must be equal. If there was a pair of elements,
$y\ne y'$, with the same priority, and if only one of them was included
in each of the clusters $C_{i}$ and $C_{i}'$, then, as there is
no way to discriminate between $y$ and $y'$, we could, for instance,
include $y$ in $C_{i}$, and $y'$ in $C_{i}'$. In that case the
clusters $C_{i}$ and $C_{i}'$ would differ by more than one record.
Therefore, for the microaggregation to be insensitive $\le_{i}$ must
assign a different priority to each element; in other words, $\le_{i}$
must be a total order.

A similar argument to the one used in Proposition~\ref{prop:single_total_order-1}
can be used to show that when the total order relation is the same
for all the clusters ---in other words, when $\le_{i}$ and $\le_{j}$
are equal for any $i$ and $j$---, then Algorithm~\ref{alg:general_microaggregation}
is insensitive to the input data. However, we want to show that even
when the total orders $\le_{i}$ are different, insensitivity still
holds. In fact, Proposition~\ref{prop:insensitive} provides a complete
characterization of insensitive microaggregation algorithms of the
form of Algorithm~\ref{alg:general_microaggregation}.
\begin{prop}
\label{prop:insensitive}Algorithm~\ref{alg:general_microaggregation}
is insensitive to input data if and only if $\{\le_{i}\}_{i\in\mathbb{N}}$
is a fixed sequence of total order relations defined over the domain
of $X$.\end{prop}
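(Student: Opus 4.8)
The statement is an iff, so the plan is to prove both directions, and the only substantial new work is the "if" part, since the "only if" part is essentially a formalization of the two observations already made just before the statement. For the forward implication I would argue as follows. First, the sequence $\{\le_i\}$ must be input-independent: if the order $\le_i$ used to carve out the $i$-th cluster could differ between an input $X$ and a neighbour $X'=(X\setminus\{x\})\cup\{x'\}$, one can choose $X,X'$ so that the first $i$ iterations coincide on both runs (the same records survive to iteration $i$) but the $i$-th cluster is then selected by two genuinely different total orders, producing clusters that differ in more than one record, contradicting Definition~\ref{def:insensitive_microaggregation-1}. Second, each $\le_i$ must be a strict total order on the record universe: if two distinct records $y\neq y'$ were tied under $\le_i$, one can set up $X$ and $X'$ so that at iteration $i$ exactly one of $y,y'$ must enter the cluster being built on each side, and since the tie gives no way to prefer one over the other an adversarial resolution may place $y$ in $C_i$ and $y'$ in $C_i'$, again forcing the two clusters to differ in more than one record. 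Together these yield the "only if" part.

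For the converse — the genuinely new content, since Proposition~\ref{prop:single_total_order-1} only covers the case where all clusters use the same order — I would proceed by induction on the iteration index, carrying the invariant that \emph{after removing the first $j$ clusters the two residual data sets either coincide or differ in exactly one record, and the first $j$ clusters produced on the two runs differ pairwise in at most one record}. Since $|X|=|X'|$, both runs perform the same number $m$ of full iterations and terminate with a last cluster of the same size, so the "same index" map is a genuine bijection between the two cluster families; the invariant evaluated after the last cluster then gives exactly the conclusion of Definition~\ref{def:insensitive_microaggregation-1}, i.e. insensitivity.

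The inductive step is where the real argument lies and is the main obstacle. Write the two residual sets as $A$ and $A'=(A\setminus\{a\})\cup\{a'\}$ with $Y:=A\setminus\{a\}=A'\setminus\{a'\}$, and let $C$ and $C'$ be the $k$ smallest elements of $A$ and of $A'$ under the fixed order $\le_j$ (if $|A|<2k$ this is the terminating last cluster and nothing needs to be propagated). A short case analysis on whether $a\in C$ and whether $a'\in C'$ closes the step: if neither holds, then $C=C'$ is the $k$-smallest set of $Y$ and the residuals become $(Y\setminus C)\cup\{a\}$ and $(Y\setminus C)\cup\{a'\}$; if both hold, then $C\setminus\{a\}=C'\setminus\{a'\}$ equals the $(k-1)$-smallest set of $Y$, so $C$ and $C'$ differ in one record while the residuals \emph{coincide}; if $a\in C$ but $a'\notin C'$, then $C'=(C\setminus\{a\})\cup\{z\}$ with $z$ the $k$-th smallest element of $Y$, so $C$ and $C'$ differ only in $\{a,z\}$ and the residuals differ only in $\{z,a'\}$; the case $a\notin C$, $a'\in C'$ is symmetric. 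In every branch the invariant is preserved. The crucial use of the hypothesis is precisely that $\le_j$ is fixed, not recomputed from the surviving records (as in MDAV's "distance to the average of the remaining points"): this is what lets the identical selection rule be applied to the two nearly-equal residual sets and guarantees that their selected clusters stay within one record of each other. Routine bookkeeping (checking $z\neq a$, $z\neq a'$, and that the counts work out so the number of iterations and the last-cluster size agree on both runs) completes the proof.
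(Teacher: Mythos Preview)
Your proposal is correct and follows essentially the same route as the paper: the ``only if'' direction is handled by the two observations already made in the text (fixedness of the sequence and totality of each $\le_i$), and the ``if'' direction is proved by the same induction on iterations with the same four-case analysis on whether the changed record lies in the current cluster on each side, showing that both the clusters and the residual sets differ in at most one record. Your write-up is in fact slightly more explicit than the paper's about the invariant and about why both runs perform the same number of iterations, but the argument is identical in substance.
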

\begin{proof}
In the discussion previous to Proposition~\ref{prop:insensitive}
we have already shown that if Algorithm~\ref{alg:general_microaggregation}
is insensitive, then $\{\le_{i}\}_{i\in\mathbb{N}}$ must be a fixed
sequence of total order relations. We show now that the reverse implication
also holds: if $\{\le_{i}\}_{i\in\mathbb{N}}$ is a fixed sequence
of total order relations, then Algorithm~\ref{alg:general_microaggregation}
is insensitive to input data.

Let $X$ and $X'$ be, respectively, the original data set and a data
set that differs from $X$ in one record. Let $C_{i}$ and $C_{i}'$
be, respectively, the clusters generated at step $i$ for the data
sets $X$ and $X'$. We want to show, for any $i$, that $C_{i}$
and $C_{i}'$ differ in at most one record.

An argument similar to the one in Proposition~\ref{prop:single_total_order-1}
shows that the clusters $C_{0}$ and $C_{0}'$ that result from the
first iteration of the algorithm differ in at most one record. To
see that Algorithm~\ref{alg:general_microaggregation} is insensitive,
it is enough to check that the sets $X\setminus C_{0}$ and $X'\setminus C_{0}'$
differ in at most one record; then, we could apply the previous argument
to $X\setminus C_{0}$ and $X'\setminus C_{0}'$ to see that $C_{1}$
and $C_{1}'$ differ in one record, and so on.

Let $x_{1},\ldots,x_{n}$ be the elements of $X$ ordered according
to $\le_{0}$, so that $C_{0}=\{x_{1},\ldots,x_{k}\}$. Assume that
$X'$ has had element $x$ replaced by $x'$: $X'=$ $\{x_{1},$ $\ldots,$
$x_{n},x'\}\setminus\{x\}$. We have the following four possibilities.
(i) If neither $x$ belongs to $C_{0}$ nor $x'$ belongs to $C_{0}'$,
then $C_{0}$ and $C_{0}'$ must be equal; therefore, $X\setminus C_{0}$
and $X'\setminus C_{0}'$ differ, at most, in one record. (ii) If
both $x$ belongs to $C_{0}$ and $x'$ belong to $C_{0}'$, then
$X\setminus C_{0}$ and $X'\setminus C_{0}'$ are equal. (iii) If
$x$ belongs to $C_{0}$ but $x'$ does not belong to $C_{0}'$, we
can write $C_{0}'$ as $\{x_{1},\ldots,x_{k+1}\}\setminus\{x\}$;
the set $X'\setminus C_{0}'$ is $\{x_{k+2},\ldots,x_{n},x'\}$, which
differs in one record from $X\setminus C_{0}=\{x_{k+1},\ldots,x_{n}\}$;
and (iv) If $x$ is not in $C_{0}$ but $x'$ is in $C_{0}'$, we
can write $C_{0}'$ as $\{x_{1},\ldots,x_{k-1},x'\}$; the set $X'\setminus C_{0}'$
is $\{x_{k},\ldots,x_{n}\}\setminus\{x\}$, which differs in one record
from $X\setminus C_{0}=\{x_{k+1},\ldots,x_{n}\}.$ Therefore, we have
seen that $X\setminus C_{0}$ and $X'\setminus C_{0}'$ differ in
at most one record, which completes the proof. 
\end{proof}
Using multiple order relations in Algorithm~\ref{alg:general_microaggregation},
as allowed by Proposition~\ref{prop:insensitive}, in contrast with
the single order relation used to turn MDAV insensitive in Proposition~\ref{prop:single_total_order-1},
allows us to increase the within-cluster homogeneity achieved in the
microaggregation (see Section~\ref{sec:evaluation} for an empirical
evaluation).

The modification of the query function $f$ to $f\circ M$ by introducing
a prior microaggregation step is intended to reduce the sensitivity
of the query function. Assume that the microaggregation function $f$
computes the centroid of each cluster as the mean of its components.
%JOSEPTHESIS. how the -> how (corregir també al paper)
We analyze next how microaggregation affects the $L_{1}$-sensitivity
of the query function $f$.
\begin{defn}[($L_{1}$-Sensitivity)]
 The $L_{1}$-sensitivity of a function $f:D^{n}\rightarrow\mathbb{R}^{d}$
is the smallest number $\Delta(f)$ such that for all $X,X'\in D^{n}$
which differ in a single entry, 
\[
\left\Vert f(X)-f(X')\right\Vert _{1}\le\Delta(f)
\]

\end{defn}
The $L_{1}$-sensitivity of $f$, $\Delta(f)$, measures the maximum
change in $f$ that results from a modification of a single record
in $X$. Essentially, the microaggregation step $M$ in $f\circ M$
distributes the modification suffered by a single record in $X$ among
multiple records in $M(X)$. Consider, for instance, the data sets
$X$ and $X'$ depicted in Figure~\ref{fig:insensitive-2}. The record
at the top right corner in $X$ has been moved to the bottom left
corner in $X'$; all the other records remain unmodified. In the microaggregated
data sets $M(X)$ and $M(X')$ ---the crosses represent the centroids---
we observe that all the centroids have been modified but the magnitude
of the modifications is smaller: the modification suffered by the
record at the top right corner of $X$ has been distributed among
all the records in $M(X)$.

When computing the centroid as the mean, we can guarantee that the
maximum variation in any centroid is at most $1/k$ of the variation
of the record in $X$. Therefore, we can think of the $L_{1}$-sensitivity
of $f\circ M$ as the maximum change in $f$ if we allow a variation
in each record that is less than $1/k$ times the maximal variation.
In fact, this is a very rough estimate, as only a few centroids can
have a variation equaling $1/k$ of the maximal variation in $X$,
but it is useful to analyze some simple functions such as the identity.
The identity function returns the exact contents of a specific record,
and is used extensively in later sections to construct $\varepsilon$-differentially
private data sets. The sensitivity of the identity functions depends
only on the maximum variation that the selected record may suffer;
therefore, it is clear that distributing the variation among several
records lowers the sensitivity. This is formalized in the following
proposition.
\begin{prop}
\label{prop:identity_function} Let $X\in D^{n}$ be a data set with
numerical attributes only. Let $M$ be a microaggregation function
with minimal cluster size $k$ that computes the centroid by taking
the mean of the elements of each cluster. Given a record $r\in X$,
let $I_{r}()$ be the function that returns the attribute values contained
in record $r$ of $X$. Then $\Delta(I_{r}\circ M)\le\Delta(I_{r})/k$. \end{prop}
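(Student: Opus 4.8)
The plan is to unwind the definition of $L_{1}$-sensitivity for the composition $I_{r}\circ M$ and reduce everything to controlling how far a single cluster centroid can move when one input record is changed. First I would record the value of $\Delta(I_{r})$ itself: since $I_{r}$ merely reads off the attribute values of one record and every record ranges over the same domain $D$, the worst case in the definition of $\Delta(I_{r})$ is attained when record $r$ is precisely the record that differs, so $\Delta(I_{r})=\sum_{\ell=1}^{m}(\sup_{x\in D}x_{\ell}-\inf_{x\in D}x_{\ell})$, the sum of the attribute ranges. In particular this number is the same for every index $r$, a fact I will use below.

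Next I would reformulate the left-hand side. By construction of $M$, the $r$-th record of the microaggregated data set $\overline{X}=M(X)$ is the centroid (coordinatewise mean) of the cluster $C_{r}(X)$ containing record $r$, which is an average of at least $k$ vectors of $D$. Hence, for neighbor data sets $X,X'$ differing only in record $j$,
\[
\bigl\|(I_{r}\circ M)(X)-(I_{r}\circ M)(X')\bigr\|_{1}=\Bigl\|\tfrac{1}{|C_{r}(X)|}\sum_{i\in C_{r}(X)}X_{i}-\tfrac{1}{|C_{r}(X')|}\sum_{i\in C_{r}(X')}X'_{i}\Bigr\|_{1},
\]
and the whole problem becomes: how much can this average change? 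The key step is then to invoke the insensitivity of $M$ (Definition~\ref{def:insensitive_microaggregation-1}): corresponding clusters of $M(X)$ and $M(X')$ differ in at most one record, and from this I would argue that the multiset of vectors averaged to form $(I_{r}\circ M)(X')$ is obtained from the one forming $(I_{r}\circ M)(X)$ by a single elementary change — either replacing the old value of record $j$ by its new value, or exchanging one member for another. In the first case the mean of $|C_{r}|\ge k$ vectors moves by at most $\tfrac{1}{k}\|X'_{j}-X_{j}\|_{1}\le\tfrac{1}{k}\Delta(I_{j})=\tfrac{1}{k}\Delta(I_{r})$; in the second case, since any two points of $D$ differ coordinatewise by at most the attribute ranges, the mean again moves by at most $\tfrac{1}{k}\Delta(I_{r})$. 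Either way $\bigl\|(I_{r}\circ M)(X)-(I_{r}\circ M)(X')\bigr\|_{1}\le\Delta(I_{r})/k$, and taking the supremum over neighbors gives $\Delta(I_{r}\circ M)\le\Delta(I_{r})/k$.

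The main obstacle is exactly this last reduction. The insensitivity condition as stated supplies only a bijection between the full sets of clusters of $M(X)$ and $M(X')$ with corresponding clusters close; the delicate point is to promote this to the statement that the cluster \emph{seen by the fixed record $r$} changes by at most one member, and that a value change and a membership change in that cluster cannot both be charged the full $\Delta(I_{r})/k$. This is the only place where one must use the internal structure of $M$ beyond the facts that its cluster sizes are $\ge k$ and its centroids are means. I would establish it by following a record through the reclustering exactly as in the proof of Proposition~\ref{prop:insensitive} — tracking how the position of $r$ in the fixed total orders shifts by at most one position when record $j$ is altered — and it is worth isolating this stability statement as a separate lemma before deriving the sensitivity bound.
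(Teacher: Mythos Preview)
Your proposal is correct and follows essentially the same approach as the paper: the paper's proof consists of a single sentence pointing back to the informal discussion preceding the proposition, where it is argued that for an insensitive microaggregation with mean centroids the maximum change in any centroid is at most $1/k$ of the attribute range, i.e.\ $\Delta(I_{r})/k$. You are in fact more careful than the paper, which never explicitly confronts the subtlety you flag---that the bijection in Definition~\ref{def:insensitive_microaggregation-1} must match the cluster containing the fixed record $r$ in $X$ to the one containing $r$ in $X'$---and simply takes this for granted.
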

\begin{proof}
The function $I_{r}\circ M$ returns the centroid of $M(X)$ that
corresponds to the record $r$ in $X$. It was shown in the discussion
that precedes the proposition that, for a data set that contains only
numerical attributes, if the centroid is computed as the mean of the
records in the cluster, then the maximum change in any centroid is,
at most, $\Delta(I_{r})/k$; that is, $\Delta(I_{r}\circ M)\le\Delta(I_{r})/k$.
\end{proof}

\section{Differentially private data sets through $k$-anonymity\label{sec:dp_ds_kanon}}

Assume that we have an original data set $X$ and that we want to
generate a data set $X_{\varepsilon}$ ---an anonymized version of
$X$--- that satisfies $\varepsilon$-differential privacy. Even if
differential privacy was not introduced with the aim of generating
anonymized data sets, we can think of a data release as the collected
answers to successive queries for each record in the data set. Let
$I_{r}()$ be as defined in Proposition~\ref{prop:identity_function}.
We generate $X_{\varepsilon}$, by querying $X$ with $I_{r}(X)$,
for all $r\in X$. If the responses to the queries $I_{r}()$ satisfy
$\varepsilon$-differential privacy, then, as each query refers to
a different record, by the parallel composition property $X_{\varepsilon}$
also satisfies $\varepsilon$-differential privacy.

The proposed approach for generating $X_{\varepsilon}$ is general
but naive. As each query $I_{r}()$ refers to a single individual,
its sensitivity is large; therefore, the masking required to attain
$\varepsilon$-differential privacy is quite significant, and thus
the utility of such a $X_{\varepsilon}$ very limited.

To improve the utility of $X_{\varepsilon}$, we introduce a microaggregation
step as discussed in Section~\ref{sec:dp_kanon}: (i) from the original
data set $X$, we generate a $k$-anonymous data set $\overline{X}$
---by using a microaggregation algorithm with minimum cluster size
$k$, like MDAV, and assuming that all attributes are quasi-identifiers---,
and (ii) the $\varepsilon$-differentially private data set $X_{\varepsilon}$
is generated from the $k$-anonymous data set $\overline{X}$ by taking
an $\varepsilon$-differentially private response to the queries $I_{r}(\overline{X})$,
for all $r\in\overline{X}$.

By constructing the $k$-anonymous data set $\overline{X}$, we stop
thinking in terms of individuals, to start thinking in terms of groups
of $k$ individuals. Now, the sensitivity of the queries $I_{r}(\overline{X})$
used to construct $X_{\varepsilon}$ reflects the effect that modifying
a single record in $X$ has on the groups of $k$ records in $\overline{X}$.
The fact that each record in $\overline{X}$ depends on $k$ (or more)
records in $X$ is what allows the sensivity to be effectively reduced.
See Proposition~\ref{prop:identity_function} above.

Algorithm~\ref{alg:dp_data_set-1} details the procedure for generating
the differentially private data set $X_{\varepsilon}$.

\begin{algorithm}[ht]
\caption{\label{alg:dp_data_set-1}Generation of an $\varepsilon$-differentially
private data set $X_{\varepsilon}$ from $X$ via microaggregation}

\textbf{let} $X$ be the original data set

\textbf{let} $M$ be an insensitive microaggregation algorithm with
minimal cluster size $k$

\textbf{let} $S_{\varepsilon}()$ be an $\varepsilon$-differentially
private sanitizer

\textbf{let} $I_{r}()$ be the query for the attributes of record
$r$

\vspace{0.2cm}

$\overline{X}\leftarrow$ microaggregated data set $M(X)$

\textbf{for} each $r\in\overline{X}$ \textbf{do}

\hspace{0.5cm}$r_{\varepsilon}\leftarrow S_{\varepsilon}(I_{r}(\overline{X}))$

\hspace{0.5cm}\emph{insert} $r_{\varepsilon}$ \emph{into} $X_{\varepsilon}$

\textbf{end for}

\vspace{0.2cm}

\textbf{return} $X_{\varepsilon}$ 
\end{algorithm}

\subsection{Achieving differential privacy with numerical attributes\label{sec:dif-numerical}}

For a data set consisting of numerical attributes only, generating
the $\varepsilon$-differentially private data set $X_{\varepsilon}$
as previously described is quite straightforward. 

Let $X$ be a data set with $m$ numerical attributes: $A_{1}$, $\ldots$
, $A_{m}$. The first step to construct $X_{\varepsilon}$ is to generate
the $k$-anonymous data set $\overline{X}$ via an insensitive microaggregation
algorithm. As we have seen in Section~\ref{sec:dp_kanon}, the key
point of insensitive microaggregation algorithms is to define a total
order relation over $Dom(X)$, the domain of the records of the data
set $X$. The domain of $X$ contains all the possible values that
make sense, given the semantics of the attributes. In other words,
the domain is not defined by the actual records in $X$ but by the
set of values that make sense for each attribute and by the relation
between attributes.

Microaggregation algorithms use a distance function, $d:Dom(X)\times Dom(X)$
$\rightarrow\mathbb{R}$, to measure the distances between records
and generate the clusters. We assume that such a distance function
is already available and we define a total order with which the distance
is consistent. To construct a total order, we take a reference point
$R$, and define the order according to the distance to $R$. Given
a pair of elements $x,y\in Dom(X)$, we say that $x\le y$ if $d(R,x)\le d(R,y)$.
On the other hand, we still need to define the relation between elements
that are equally distant from $R$. As we assume that the data set
$X$ consists of numerical attributes only, we can take advantage
of the fact that individual attributes are equipped with a total order
---the usual numerical order--- and sort the records that are equally
distant from $R$ by means of the alphabetical order: given $x=(x_{1},\ldots,x_{m})$
and $y=(y_{1},\ldots,y_{m})$, with $d(x,R)=d(y,R)$, we say that
$x\le y$ if $(x_{1},\ldots,x_{m})\le(y_{1},\ldots,y_{m})$ according
to the alphabetical order.

Proposition~\ref{prop:identity_function} shows that, as a result
of the insensitive microaggregation, one has $\Delta(I_{r}\circ M)=\Delta(I_{r})/k$;
therefore, $\varepsilon$-differential privacy can be achieved by
adding to $\overline{X}$ an amount of Laplace noise that would only
achieve $k\varepsilon$-differential privacy if directly added to
$X$.

\subsection{Insensitive MDAV}

According to Proposition~\ref{prop:single_total_order-1}, to make
MDAV insensitive we must define a total order among the elements in
$Dom(X)$. According to the previous discussion, this total order
is constructed by selecting a reference point. To increase within-cluster
homogeneity, MDAV starts by clustering the elements at the boundaries.
For our total order to follow this guideline, the reference point
$R$ must be selected among the elements of the boundary of $Dom(X)$.
For instance, if the domain of $A_{i}$ is $[a_{b}^{i},a_{t}^{i}]$,
we can set $R$ to be the point $(a_{b}^{1},\ldots,a_{b}^{m})$.

Figure~\ref{fig:insensitive-2} illustrates the insensitive microaggregation
obtained by using MDAV with the total order defined above. The original
data set $X$ and the modified data set $X'$ are the same of Figure~\ref{fig:MDAV-clusters-1}.
We also use $k=5$ and the Euclidean distance for insensitive MDAV.
Let us take as the reference point for the above defined total order
the point $R$ at the lower left corner of the grids. Note that now
clusters $C_{1}$,$C_{2}$, and $C_{3}$ in $X$ differ in a single
record from $C_{1}'$,$C_{2}'$, and $C_{3}'$ in $X'$, respectively.
By comparing Figures~\ref{fig:MDAV-clusters-1} and~\ref{fig:insensitive-2},
we observe that the standard (non-insensitive) MDAV results in a set
of clusters with greater within-cluster homogeneity; however, in exchange
for the lost homogeneity, insensitive MDAV generates sets of clusters
that are more stable when one record of the data set changes.

\begin{figure}[t]
\begin{centering}
%JOSEPTHESIS. Engrandida figura.
\includegraphics[width=7cm]{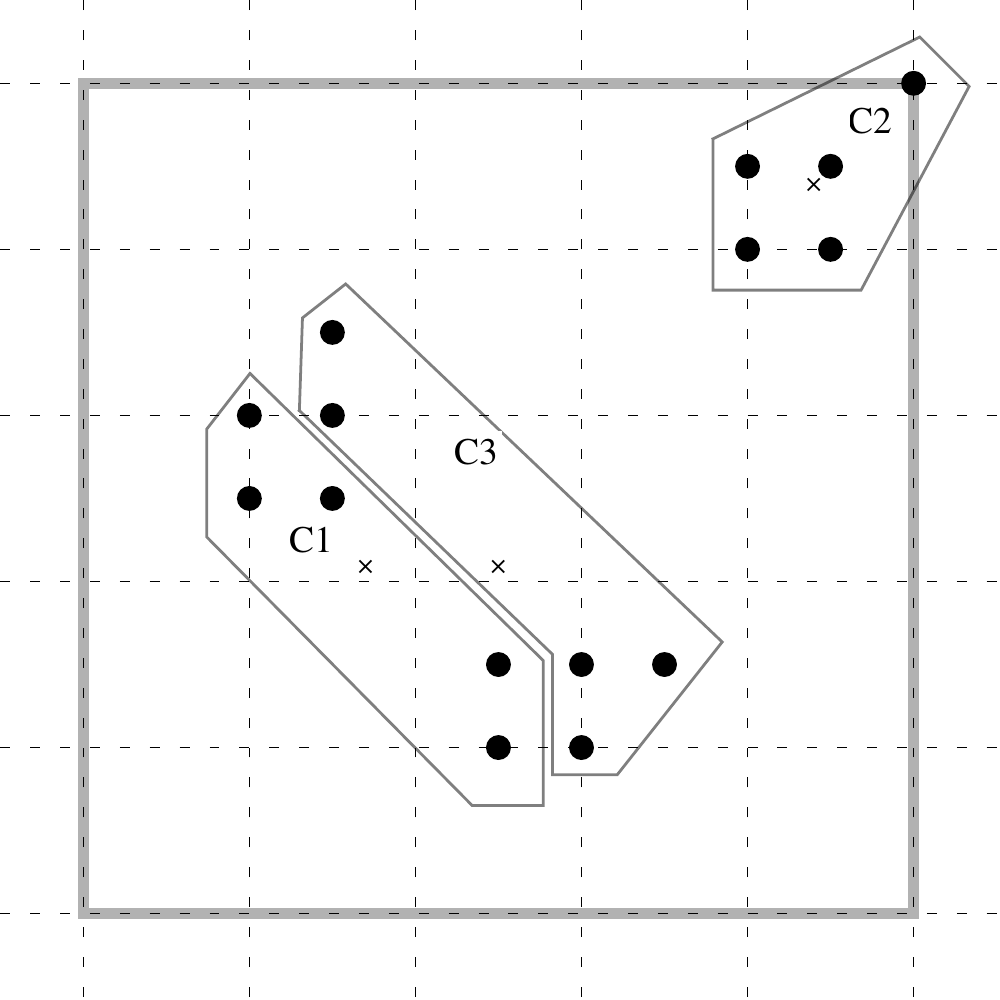}\hspace{0.5cm}\includegraphics[width=7cm]{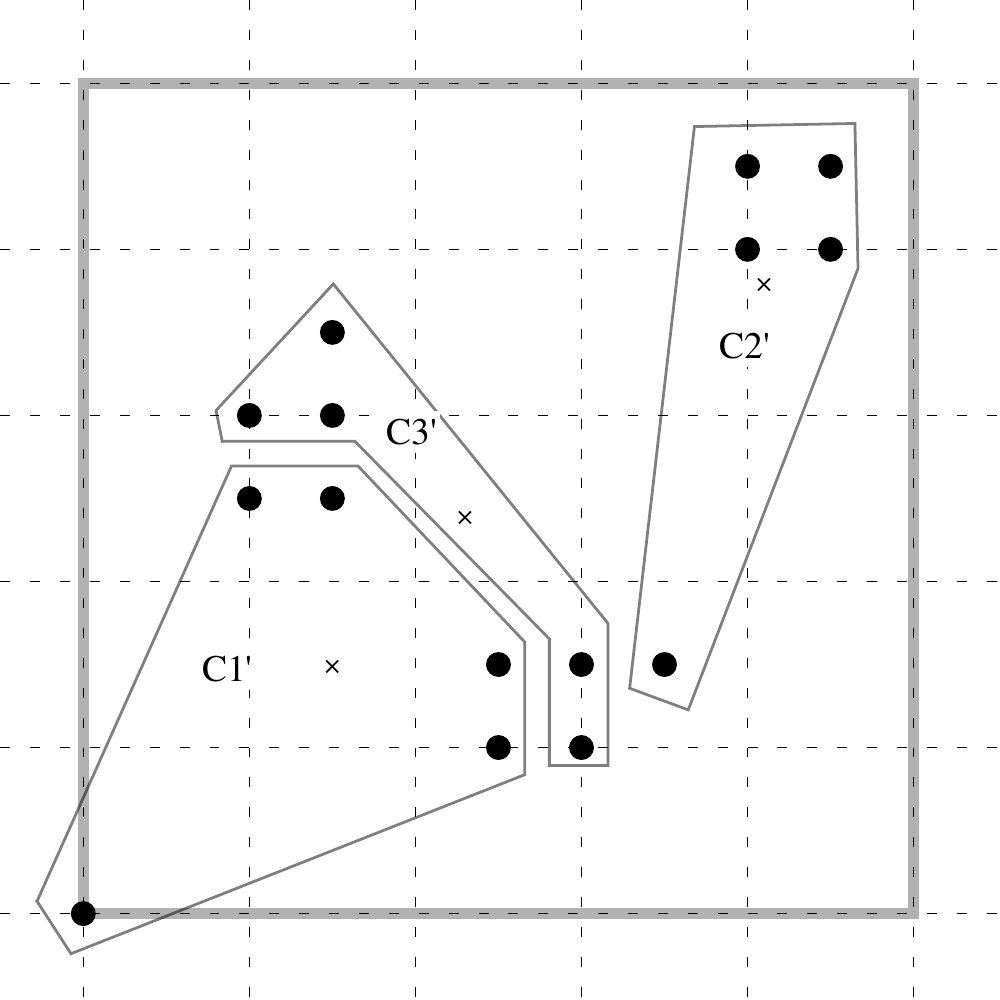} 
\par\end{centering}

\caption{\label{fig:insensitive-2}Insensitive MDAV microaggregation with $k=5$.
Left, original data set $X$; right, data set after modifying one
record in $X$.}
\end{figure}

\subsection{General insensitive microaggregation\label{sec:insensitive-micro}}

It was seen in Section~\ref{sec:dp_kanon} that each clustering step
within microaggregation can use a different total order relation,
as long as the sequence of order relations is kept constant. The advantage
of using multiple total order relations is that it allows the insensitive
microaggregation algorithm to better mimic a standard non-insensitive
microaggregation algorithm, and thus increase the within-cluster homogeneity.

The sequence of total orders is determined by a sequence of reference
points $R_{i}$. In the selection of $R_{i}$ we try to match the
criteria used by non-insensitive microaggregation algorithms to increase
within-cluster homogeneity: start clustering at the boundaries, and
generate a cluster that is far apart from the previously generated
cluster.

Let the domain of $A_{i}$ be $[a_{b}^{i},a_{t}^{i}]$. Define the
set $\mathcal{R}$ of candidate reference points at those points in
the boundaries of $Dom(X)$, that is: 
\[
\mathcal{R}=\{(a_{v_{1}}^{1},\ldots,a_{v_{m}}^{m})|v_{i}\in\{b,t\}\mbox{ for \ensuremath{1\leq i\leq m}}\}
\]
The first reference point $R_{1}$ is arbitrarily selected from $\mathcal{R}$;
for instance, $R_{1}=(a_{b}^{1},\ldots,a_{b}^{m})$. Once a point
$R_{i}$ has been selected, $R_{i+1}$ is selected among the still
unselected points in $\mathcal{R}$ so that it maximizes the Hamming
distance to $R_{i}$ ---if $R_{1}=(a_{b}^{1},\ldots,a_{b}^{m})$,
then $R_{2}=(a_{t}^{1},\ldots,a_{t}^{m})$---. If several unselected
points in $\mathcal{R}$ maximize the Hamming distance to $R_{i}$,
we select the one among them with greatest distance to $R_{i-1}$,
and so on.

Figure~\ref{fig:form_of_clusters} shows the form of the clusters
for a data set containing two numerical attributes. The graphic on
the left is for a single reference point ---this is also the form
of the clusters obtained by insensitive MDAV, which uses a single
total order relation---. The graphic on the right uses four reference
points, one for each edge of the domain, which are selected in turns
as described above.

\begin{figure}[t]
\begin{centering}
%JOSEPTHESIS. Engrandida figura
\includegraphics[width=7cm]{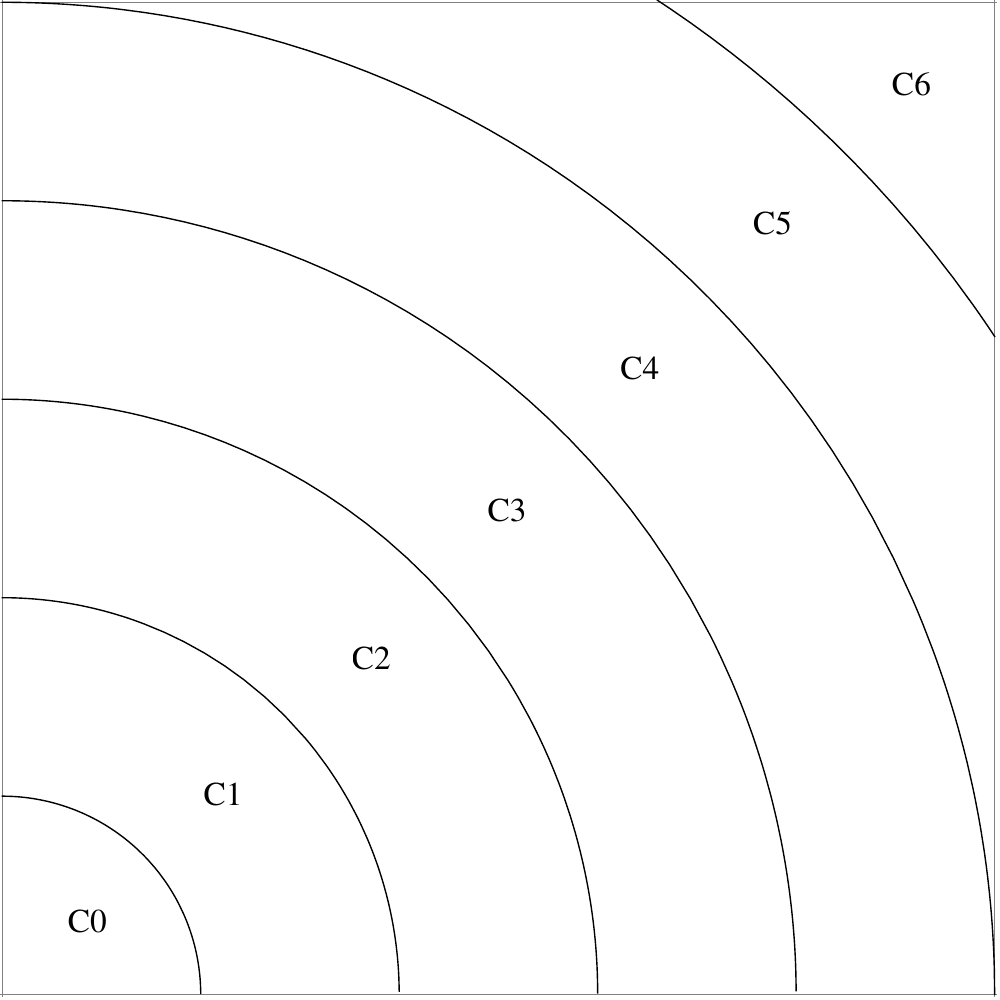}\hspace{0.5cm}\includegraphics[width=7cm]{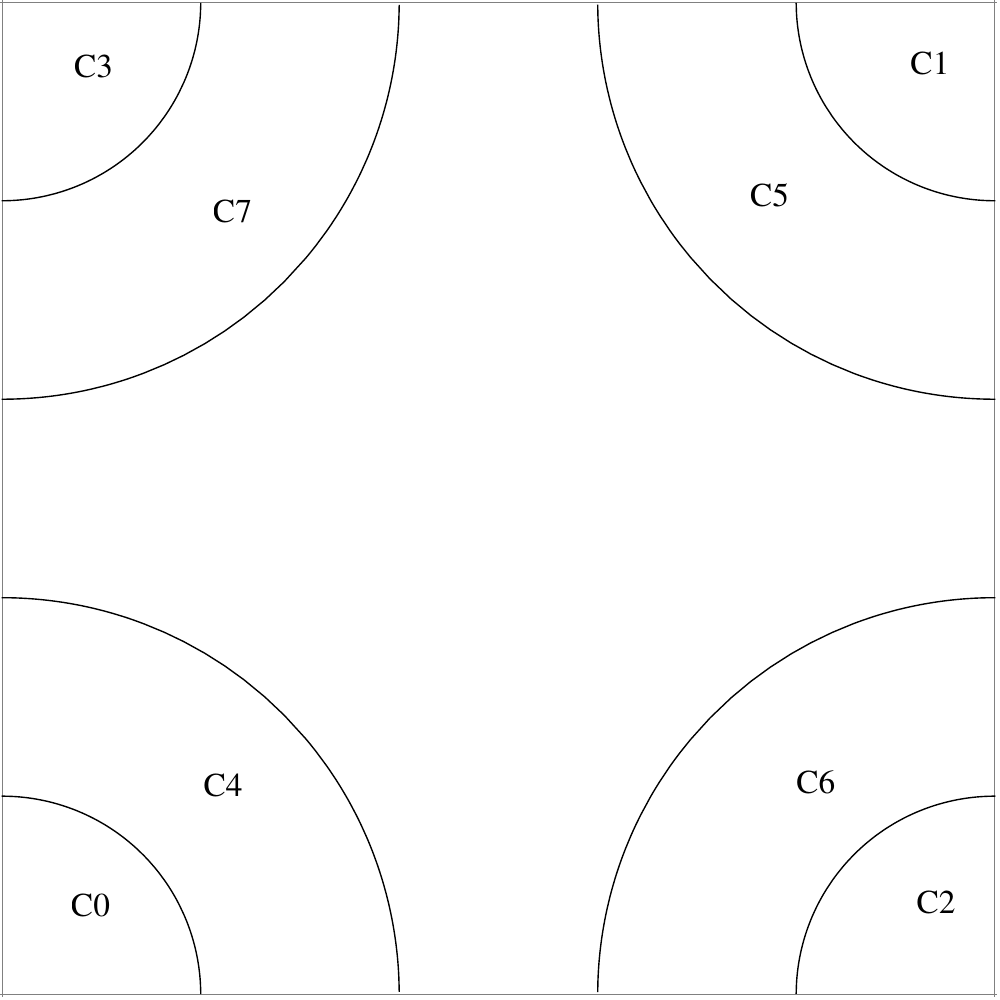} 
\par\end{centering}

\caption{\label{fig:form_of_clusters}Cluster formation. Left, using a single
reference point; right, taking each corner of the domain as a reference
point.}
\end{figure}

\subsection{Achieving differential privacy with categorical attributes\label{sec:dif-categorical}}

Many data sets contain attributes with categorical values, such as
Race, Country of birth, or Job~\cite{Blake98}. Unlike continuous-scale
numerical attributes, categorical attributes take values from a finite
set of categories for which the arithmetical operations needed to
microaggregate and add noise to the outputs do not make sense. In
the sequel, we detail alternative mechanisms that are suitable for
categorical attributes in order to achieve differential privacy as
detailed above.

Let $X$ be a data set with $m$ categorical attributes: $A_{1}$,$\ldots$,
$A_{m}$. The first challenge regards the definition of $Dom(X)$.
Unlike for numerical attributes, the universe of each categorical
attribute can only be defined by extension, listing all the possible
values. This universe can be expressed either as a flat list or it
can be structured in a hierarchic/taxonomic way. The latter scenario
is more desirable, since the taxonomy implicitly captures the semantics
inherent to conceptualizations of categorical values (\emph{e.g.},
disease categories, job categories, sports categories, etc.). In this
manner, further operations can exploit this taxonomic knowledge to
provide a semantically coherent management of attribute values~\cite{Martinez12a}.

Formally, a taxonomy $\tau$ can be defined as an upper semilattice
$\le_{\varsigma}$ on a set of concepts $\varsigma$ with a top element
$root_{\varsigma}$. We define the taxonomy $\tau(A_{i})$ associated
to an attribute $A_{i}$ as the lattice on the minimum set of concepts
that covers all values in $Dom(A_{i})$. Notice that $\tau(A_{i})$
will include all values in $Dom(A_{i})$ (\emph{e.g.}, ``skiing'',
``sailing'', ``swimming'', ``soccer'', etc., if the attribute
refers to sport names) and, usually, some additional generalizations
that are necessary to define the taxonomic structure (\emph{e.g.},``winter
sports'', ``water sports'', ``field sports'', and ``sport''
as the $root$ of the taxonomy).

If $A_{1}$, $\ldots$, $A_{m}$ are independent attributes, $Dom(X)$
can be defined as the ordered combination of values of each $Dom(A_{i})$,
as modeled in their corresponding taxonomies $\tau(A_{1})$, $\ldots$,
$\tau(A_{m})$. If $A_{1},\cdots,A_{m}$ are not independent, value
tuples in $Dom(X)$ may be restricted to a subset of valid combinations.

Next, a suitable distance function $d:Dom(X)\times Dom(X)$ $\rightarrow$
$\mathbb{R}$ to compare records should be defined. To tackle this
problem, we can exploit the taxonomy $\tau(A_{i})$ associated to
each $A_{i}$ in $X$ and the notion of \textit{semantic distance}~\cite{Sanc12}.
A semantic distance $\delta$ quantifies the amount of semantic differences
observed between two terms (\emph{i.e.}, categorical values) according
to the knowledge modeled in a taxonomy. Section~\ref{sec:distance-categorical}
discusses the adequacy of several semantic measures in the context
of differential privacy. By composing semantic distances $\delta$
for individual attributes $A_{i}$, each one computed from the corresponding
taxonomy $\tau(A_{i})$, we can define the required distance $d:Dom(X)\times Dom(X)\rightarrow\mathbb{R}$.

To construct a total order that yields insensitive and within-cluster
homogeneous microaggregation as detailed in Section~\ref{sec:insensitive-micro},
we need to define the boundaries of $Dom(X)$, from which records
will be clustered. Unlike in the numerical case, this is not straightforward
since most categorical attributes are not ordinal and, hence, a total
order cannot be trivially defined even for individual attributes.
However, since the taxonomy $\tau(A_{i})$ models the domain of $A_{i}$,
boundaries of $Dom(A_{i})$, that is, $[a_{b}^{i},a_{t}^{i}]$, can
be defined as the most distant and opposite values from the ``middle''
of $\tau(A_{i})$. From a semantic perspective, this notion of centrality
in a taxonomy can be measured by the \textit{marginality model}~\cite{mdai2012}.
This model determines the central point of the taxonomy and how far
each value is from that center, according to the semantic distance
between value pairs.

The \emph{marginality $m(\cdot,\cdot)$} of each value $a_{j}^{i}$
in $A_{i}$ with respect to its domain of values $Dom(A_{i})$ is
computed as 
\begin{equation}
m(Dom(A_{i}),a_{j}^{i})=\sum_{a_{l}^{i}\in Dom(A_{i})-\{a_{j}^{i}\}}\delta(a_{l}^{i},a_{j}^{i})\label{eq:m}
\end{equation}
where $\delta(\cdot,\cdot)$ is the semantic distance between two
values. The greater $m(Dom(A_{i}),a_{j}^{i})$, the more marginal
(\emph{i.e.}, the less central) is $a_{j}^{i}$ with regard to $Dom(A_{i})$.

Hence, for each $A_{i}$, one boundary $a_{b}^{i}$ of $Dom(A_{i})$
can be defined as the most marginal value of $Dom(A_{i})$:

\begin{equation}
a_{b}^{i}=\arg\max_{a_{j}^{i}\in Dom(A_{i})}m(Dom(A_{i}),a_{j}^{i})\label{eqbound1}
\end{equation}

The other boundary $a_{t}^{i}$ can be defined as the most distant
value from $a_{b}^{i}$ in $Dom(A_{i})$:

\begin{equation}
a_{t}^{i}=\arg\max_{a_{j}^{i}\in Dom(A_{i})}\delta(a_{j}^{i},a_{b}^{i})\label{eqbound2}
\end{equation}

By applying the above expressions to the set of attributes $A_{1},\cdots,A_{m}$
in $X$, the set $\mathcal{R}$ of candidate reference points needed
to define a total order according to the semantic distance can be
constructed as described in Section~\ref{sec:insensitive-micro}. 

If no taxonomic structure is available, other centrality measures
based on data distribution can be used (\emph{e.g.}, by selecting
the modal value as the most central value~\cite{Domingo2005}). However,
such measures omit data semantics and result in significantly less
useful anonymized results~\cite{Martinez12}.

Similarly to the numerical case, if several records are equally distant
from the reference points, the alphabetical criterion can be used
to induce an order within those equidistant records.

At this point, records in $X$ can be grouped using the insensitive
microaggregation algorithm, thereby yielding a set of clusters with
a sensitivity of only one record per cluster. The elements in each
cluster must be replaced by the cluster centroid (\emph{i.e.}, the
arithmetical mean in the numerical case) in order to obtain a $k$-anonymous
data set. Since the mean of a sample of categorical values cannot
be computed in the standard arithmetical sense, we rely again on the
notion of marginality~\cite{mdai2012}: the mean of a sample of categorical
values can be approximated by the least marginal value in the taxonomy,
which is taken as the \textit{centroid} of the set.

Formally, given a sample $S(A_{i})$ of a nominal attribute $A_{i}$
in a certain cluster, the marginality-based centroid for that cluster
is defined in~\cite{mdai2012} as: 
\begin{equation}
Centroid(S(A_{i}))=\arg\min_{a_{j}^{i}\in\tau(S(A_{i}))}m(S(A_{i}),a_{j}^{i})\label{eq:cen}
\end{equation}
where $\tau(S(A_{i}))$ is the minimum taxonomy extracted from $\tau(A_{i})$
that includes all values in $S(A_{i})$. Notice that by considering
as centroid candidates all concepts in $\tau(S(A_{i}))$, which include
all values in $S(A_{i})$ and also their taxonomic generalizations,
we improve the numerical accuracy of the centroid discretization inherent
to categorical attributes~\cite{Martinez12}.

The numerical value associated to each centroid candidate $a_{j}^{i}$
corresponds to its marginality value $m(S(A_{i}),a_{j}^{i})$, which
depends on the sample of values in the cluster. Given a cluster of
records with a set of independent attributes $A_{1},\cdots,A_{m}$,
the cluster centroid can be obtained by composing the individual centroids
of each attribute.

As in the numerical case, cluster centroids depend on input data.
To fulfill differential privacy for categorical attributes, two aspects
must be considered. On the one hand, the centroid computation should
evaluate as centroid candidates all the values in the taxonomy associated
to the \emph{domain} of each attribute ($\tau(A_{i})$), and not only
the sample of values to be aggregated ($\tau(S(A_{i}))$), since the
centroid should be insensitive to any value change of input data within
the attribute's domain. On the other hand, to achieve insensitivity,
uncertainty must be added to the centroid computation. Since adding
Laplacian noise to centroids makes no sense for categorical values,
an alternative way to obtain differentially private outputs consists
in selecting centroids in a probabilistic manner. The general idea
is to select centroids with a degree of uncertainty that is proportional
to the suitability of each centroid and the desired degree of $\varepsilon$-differential
privacy. To do so, the Exponential Mechanism proposed by McSherry
and Talwar~\cite{Mcsherry07mechanismdesign} can be applied. Given
a function with discrete outputs $t$, the mechanism chooses the output
that is close to the optimum according to the input data $D$ and
quality criterion $q(D,t)$, while preserving $\varepsilon$-differential
privacy. Each output is associated with a selection probability $\Pr(t)$,
which grows exponentially with the quality criterion, as follows:
\[
\Pr(t)\propto\exp(\frac{{\varepsilon}q(D,t)}{2\Delta(q)})
\]

In this manner, the optimal output or those that are close to it according
to the quality criterion will be more likely to be selected. Based
on the above arguments, $\varepsilon$-differentially private centroids
can be selected as indicated in Algorithm~\ref{alg:catdif}.

\begin{algorithm}[ht]
\caption{\label{alg:catdif}Computation of $\varepsilon$-differentially private
centroids for clusters with categorical attributes}

\textbf{let} $C$ be a cluster with at least $k$ records

\vspace{0.2cm}

\textbf{for} each categorical attribute $A_{i}$ \textbf{do}

\hspace{0.5cm}Take as quality criterion $q(\cdot,\cdot)$ for each
centroid candidate $a_{j}^{i}$ in $\tau(A_{i})$ the additive inverse
of its marginality towards the attribute values $S(A_{i})$ contained
in $C$, that is, $-m(S(A_{i}),a_{j}^{i})$; 

\hspace{0.5cm}Sample the centroid from a distribution that assigns
\begin{equation}
\Pr(a_{j}^{i})\propto\exp(\frac{{\varepsilon}\times(-m(S(A_{i}),a_{j}^{i}))}{2\Delta(m(A_{i}))})\label{eq:exp}
\end{equation}

\textbf{end for} 
\end{algorithm}

Notice that the inversion of the marginality function has no influence
on the relative probabilities of centroid candidates, since it is
achieved through a \textit{bijective linear transformation}.

With the algorithm we have the following result, which is parallel
to what we saw in the numerical case: if the input data are $k$-anonymous,
the higher $k$, the less the uncertainty that needs to be added to
reach $\varepsilon$-differential privacy.
\begin{prop}
Let $X$ be a data set with categorical attributes. Let $\overline{X}$
be a $k$-anonymous version of $X$ generated using an insensitive
microaggregation algorithm $M$ with minimum cluster size $k$. $\varepsilon$-Differential
privacy can be achieved by using Algorithm~\ref{alg:catdif} to obtain
cluster centroids in $\overline{X}$ with an amount of uncertainty
that decreases as $k$ grows. \end{prop}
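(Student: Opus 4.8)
The plan is to mirror the proof of Proposition~\ref{prop:identity_function}: reduce the categorical case to (i) a single-cluster guarantee supplied by the exponential mechanism of~\cite{Mcsherry07mechanismdesign}, and (ii) a composition argument over clusters and attributes. First I would note that, since $\overline{X}$ is a microaggregated data set, its clusters partition the records of $X$ into disjoint groups, so the centroids of distinct clusters depend on disjoint sets of input records; by parallel composition it then suffices to show that the centroid produced by Algorithm~\ref{alg:catdif} for a single cluster is $\varepsilon$-differentially private with respect to changes in the records of $X$. Within a cluster, Algorithm~\ref{alg:catdif} samples the centroid of each attribute $A_i$ independently from the distribution $\Pr(a_j^i)\propto\exp(-\varepsilon\,m(S(A_i),a_j^i)/(2\Delta(m(A_i))))$, which is exactly the exponential mechanism with quality criterion $q(\cdot,a_j^i)=-m(S(A_i),a_j^i)$; hence by the McSherry--Talwar theorem this per-attribute step is $\varepsilon$-differentially private, provided $\Delta(m(A_i))$ is a valid sensitivity bound for $m$. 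Composing over the $m$ attributes of a record yields an $m\varepsilon$ guarantee for the whole centroid (or, equivalently, one splits the budget and runs each attribute with $\varepsilon/m$); I would pin down this bookkeeping factor explicitly, but it does not affect the essence.

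The key step is bounding $\Delta(m(A_i))$ and showing it does not grow with $k$. This is where insensitivity of $M$ (Definition~\ref{def:insensitive_microaggregation-1}) enters: if $X$ and $X'$ differ in a single record, each cluster of $\overline{X}$ differs from its counterpart by at most one record, so the value sample of a given cluster changes by at most one element, $S'(A_i)=(S(A_i)\setminus\{x\})\cup\{x'\}$. From the marginality definition in Equation~(\ref{eq:m}), $m(S'(A_i),a_j^i)-m(S(A_i),a_j^i)$ then equals $\delta(x',a_j^i)-\delta(x,a_j^i)$ (with the obvious modification when $a_j^i\in\{x,x'\}$), whose absolute value is at most the semantic diameter $\delta_{\max}(A_i)=\max_{a,b\in Dom(A_i)}\delta(a,b)$. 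Hence one may take $\Delta(m(A_i))=\delta_{\max}(A_i)$, a constant independent of $k$. This is the categorical counterpart of the bound $\Delta(I_r\circ M)\le\Delta(I_r)/k$ of Proposition~\ref{prop:identity_function}; the difference is that for marginality the stabilising effect of $k$ does not appear in the sensitivity but, as explained next, in the scale of the quality values.

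It remains to argue that the uncertainty of the sampled centroid shrinks as $k$ grows. The output distribution is governed by the gaps $m(S(A_i),a_j^i)-m(S(A_i),a^i_{\mathrm{opt}})$ between a candidate and the exact (deterministic) centroid $a^i_{\mathrm{opt}}=Centroid(S(A_i))$: whenever such a gap is large relative to $\Delta(m(A_i))$, the exponential weight of $a_j^i$ is negligible. Since $m(S(A_i),a_j^i)$ is a sum of $|S(A_i)|-1\ge k-1$ semantic distances, these gaps increase with the cluster size --- for the homogeneous clusters produced by the insensitive, reference-point-ordered microaggregation of Section~\ref{sec:insensitive-micro} essentially linearly in $k$ --- while $\Delta(m(A_i))$ stays fixed. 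Therefore the exponent $-\varepsilon\,(m(S(A_i),a_j^i)-m(S(A_i),a^i_{\mathrm{opt}}))/(2\Delta(m(A_i)))$ becomes increasingly negative for every suboptimal candidate, the probability mass concentrates on $a^i_{\mathrm{opt}}$, and in the limit the sampled centroid coincides with the exact one with probability tending to one; equivalently, a smaller $\varepsilon$ suffices to maintain a prescribed concentration level.

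The main obstacle I anticipate is making the last paragraph rigorous: "the quality gaps scale with $k$" is not automatic for arbitrary clusters (a maximally spread cluster need not exhibit linear growth), so I would isolate and state the within-cluster homogeneity hypothesis under which it holds and verify that the insensitive microaggregation of Sections~\ref{sec:insensitive-micro}--\ref{sec:dif-categorical} meets it, in the same spirit in which Proposition~\ref{prop:identity_function} implicitly assumes that the cluster structure is the one produced by an insensitive algorithm. The remaining pieces --- parallel composition over clusters, the exact attribute-composition constant, and the elementary marginality-difference bound --- are routine.
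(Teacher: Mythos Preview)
Your proposal is correct and follows essentially the same route as the paper: both arguments use insensitivity of $M$ to bound the change in $S(A_i)$ to one element, derive that $\Delta(m(A_i))$ equals the semantic diameter of $Dom(A_i)$ (a constant in $k$), and then argue that because $m(S(A_i),\cdot)$ is a sum of at least $k-1$ distances the quality gaps widen with $k$ so the exponential mechanism concentrates on the true centroid. You are in fact more explicit than the paper about the composition scaffolding (parallel over clusters, sequential over attributes with the $m\varepsilon$ budget) and more candid that the ``gaps scale with $k$'' step relies on within-cluster homogeneity rather than holding for arbitrary clusters; the paper states this informally as marginalities ``tend to become more markedly diverse'' without isolating the hypothesis.
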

\begin{proof}
Without loss of generality, we can write the proof for a single attribute
$A_{i}$. The argument can be composed for multi-attribute data sets.

Let $\Delta(m(A_{i}))$ be the sensitivity of the marginality function
for attribute $A_{i}$. According to the insensitive microaggregation
described earlier in Section~\ref{sec:dp_kanon}, modifying one record
in the data set will induce a change of at most one value in the set
$S(A_{i})$ of values of $A_{i}$ in a cluster. Considering that marginality
measures the sum of distances between a centroid candidate and all
the elements in $S(A_{i})$, in the worst case, in which all values
in $S(A_{i})$ correspond to the same boundary of $Dom(A_{i})$ (defined
by either Equation (\ref{eqbound1}) or Equation (\ref{eqbound2})),
and one of these is changed by the other boundary, the sensitivity
$\Delta(m(A_{i}))$ will correspond to the semantic distance between
both boundaries.

We have that: i) to compute the probabilities in Expression (\ref{eq:exp}),
the quality criterion $-m(S(A_{i}),a_{j}^{i})$ is combined with $\varepsilon$
and $\Delta(m(A_{i}))$, and the latter two magnitudes are constant
for $Dom(A_{i})$; ii) $|S(A_{i})|\geq k$; iii) $m(S(A_{i}),a_{j}^{i})$
is a sum of, at least, $k-1$ terms. Hence, as the cluster size $k$
grows, the marginalities $m(S(A_{i}),a_{j}^{i})$ of values $a_{j}^{i}$
in the cluster $S(A_{i})$ have more degrees of freedom and hence
tend to become more markedly diverse. Hence, as $k$ grows, the probabilities
computed in Expression (\ref{eq:exp}) tend to become more markedly
diverse, and the largest probability (the one of the optimum centroid
candidate) can be expected to dominate more clearly; note that probabilities
computed with Expression (\ref{eq:exp}) decrease exponentially as
marginality grows. Therefore, optimum centroids are more likely to
be selected as $k$ increases. In other words, the amount of uncertainty
added to the output to fulfill differential privacy for categorical
attributes decreases as the $k$-anonymity level of the input data
increases. 
\end{proof}

\subsection{A semantic distance suitable for differential privacy\label{sec:distance-categorical}}

As described above, the selection of differentially private outputs
for categorical attributes is based on the marginality value of centroid
candidates that, in turn, is a function of the semantic distance between
centroids and clustered values. Moreover, the total order used to
create clusters also relies on the assessment of semantic distances
between attribute values. Hence, the particular measure used to compute
semantic distances directly influences the quality of anonymized outputs.

A semantic distance $\delta:o\times o\rightarrow\mathbb{R}$ is a
function mapping a pair of concepts to a real number that quantifies
the difference between the concept meanings. A well-suited $\delta$
to achieve semantic-preserving differentially private outputs should
have the following features. First, it should capture and quantify
the semantics of the categorical values precisely, so that they can
be well differentiated, both when defining the total order and also
when selecting cluster centroids~\cite{Martinez12}. Second, from
the perspective of differential privacy, $\delta$ should have a low
numerical sensitivity to outlying values, that is, those that are
the most distant to the rest of data. In this manner, the sensitivity
of the quality criterion, which is the semantic distance of the two
most outlying values of the domain, will also be low. This will produce
less noisy and, hence, more accurate differentially private outputs.

The accuracy of a semantic measure depends on the kind of techniques
and knowledge bases used to perform the semantic assessments~\cite{Sanc12}.
Among those relying on taxonomies, feature-based measures and measures
based on intrinsic information-theoretic models usually achieve the
highest accuracy with regard to human judgments of semantic distance~\cite{Sanc12}.
The former measures~\cite{Sanc12,Petrakis06} quantify the distance
between concept pairs according to their number of common and non-common
taxonomic ancestors. The latter measures~\cite{Sanchez11,SanchezBatet11,Pirro09,SanchezBatet12}
evaluate the similarity between concept pairs according to their mutual
information, which is approximated as the number of taxonomic specializations
of their most specific common ancestor. Both approaches exploit more
taxonomic knowledge and, hence, tend to produce more accurate results,
than well-known edge-counting measures~\cite{Rada89,Wu94}, which
quantify the distance between concepts by counting the number of taxonomic
edges separating them.

On the other hand, the sensitivity to outlying values depends on the
way in which semantic evidences are quantified. Many classical methods~\cite{Rada89,Wu94}
propose distance functions that are linearly proportional to the amount
of semantic evidences observed in the taxonomy (\emph{e.g.}, number
of taxonomic links). As a result, distances associated to outlying
concepts are significantly larger than those between other more ``central''
values. This leads to a centroid quality criterion with a relatively
high sensitivity, which negatively affects the accuracy of the Exponential
Mechanism~\cite{Mcsherry07mechanismdesign}. More recent methods~\cite{Sanc12,Pirro09,mdai2012}
choose to evaluate distances in a non-linear way. Non-linear functions
provide more flexibility since they can implicitly weight the contribution
of more specific~\cite{mdai2012,Li03} or more detailed~\cite{Sanc12,Pirro09,Sanchez11,SanchezBatet12}
concepts. As a result, concept pairs become better differentiated
and semantic assessments tend to be more accurate~\cite{Sanc12}.
We can distinguish between measures that exponentially promote semantic
differences~\cite{mdai2012,Li03} and those that aggregate semantic
similarities~\cite{Pirro09,Sanchez11,SanchezBatet12} and differences~\cite{Sanc12}
in a logarithmic way. Among these, the latter one is best suited for
the differential privacy scenario, since the logarithmic assessment
of the semantic differences helps to reduce the relative numerical
distances associated to outlying concepts and, hence, to minimize
the sensitivity of the quality function used in the Exponential Mechanism.

Formally, this measure computes the distance $\delta:A_{i}\times A_{i}\rightarrow\mathbb{R}$
between two categorical values $a_{1}^{i}$ and $a_{2}^{i}$ of attribute
$A_{i}$, whose domain is modeled in the taxonomy $\tau(A_{i})$,
as a logarithmic function of their number of non-common taxonomic
ancestors divided (for normalization) by their total number of ancestors~\cite{Sanc12}:

\begin{equation}
\delta(a_{1}^{i},a_{2}^{i})=\log_{2}\Bigg(1+\frac{|\phi(a_{1}^{i})\cup\phi(a_{2}^{i})|-|\phi(a_{1}^{i})\cap\phi(a_{2}^{i})|}{|\phi(a_{1}^{i})\cup\phi(a_{2}^{i})|}\Bigg)\label{eq:d}
\end{equation}

where $\phi(a_{j}^{i})$ is the set of taxonomic ancestors of $a_{j}^{i}$
in $\tau(A_{i})$, including itself.

As demonstrated in~\cite{Sanc12} and~\cite{Batet10}, Expression
(\ref{eq:d}) satisfies \textit{non-negativity}, \textit{reflexivity},
\textit{symmetry} and \textit{subadditivity}, thereby being a distance
measure in the mathematical sense.

Moreover, thanks to the normalizing denominator, the above distance
is insensitive to the size and granularity of the background taxonomy.
and it yields positive normalized values in the $[0,1]$ range. Since
the distance $d:Dom(X)\times Dom(X)\rightarrow\mathbb{R}$ defined
in Section~\ref{sec:dif-categorical} is the composition of semantic
distances for individual attributes and their domains may be modeled
in different taxonomies, a normalized output is desirable to coherently
integrate distances computed from different sources.

\subsection{Integrating heterogeneous attribute types\label{sec:numerical-categorical}}

The above-described semantic measure provides us with a numerical
assessment of the distance between categorical attributes. As a result,
given a data set $X$ with attributes of heterogeneous data types
(\emph{i.e.}, numerical and categorical), the record distance $d:Dom(X)\times Dom(X)\rightarrow\mathbb{R}$
required for microaggregation can be defined by composing numerically
assessed distances for individual attributes, as follows:

\begin{equation}
d({\bf x}_{1},{\bf x}_{2})=\sqrt{\frac{(dist(a_{1}^{1},a_{2}^{1}))^{2}}{(dist(a_{b}^{1},a_{t}^{1}))^{2}}+\cdots+\frac{(dist(a_{1}^{m},a_{2}^{m}))^{2}}{(dist(a_{b}^{m},a_{t}^{m}))^{2}}}\label{eqdist}
\end{equation}

where $dist(a_{1}^{i},a_{2}^{i})$ is the distance (either numerical
or semantic) between the values for the $i$-th attribute $A_{i}$
in ${\bf x}_{1}$ and ${\bf x}_{2}$, and $dist(a_{b}^{i},a_{t}^{i})$
is the distance between the boundaries of $Dom(A_{i})$, which is
used to eliminate the influence of the attribute scale. 

It can be noticed that Expression (\ref{eqdist}) is similar to the
normalized Euclidean distance, but replacing attribute variances,
which depend on input data, by distances between domain boundaries,
which are insensitive to changes of input values. In this manner,
the record distance function effectively defines a total order that
fulfills differential privacy.

\section{Empirical evaluation\label{sec:evaluation}}

In this section we show some empirical results that illustrate how
$k$-anonymous microaggregation of input data reduces the amount of
noise required to fulfill differential privacy and, hence, positively
influences the utility of the anonymized outputs.

\subsection{Evaluation data}

The above-described mechanism has been applied to numerical and categorical
attributes of two reference data sets:
\begin{itemize}
\item ``Census'', which contains 1,080 records with numerical attributes~\cite{refcasc}.
This data set was used in the European project CASC and in~\cite{Domi01creta,Dand02b,Yanc02,Lasz05,Domingo2005,Domingo2008b,Domingo10}.
Like in~\cite{Domingo10}, we took attributes FICA (Social security
retirement payroll deduction), FEDTAX (Federal income tax liability),
INTVAL (Amount of interest income) and POTHVAL (Total other persons
income). To fulfill differential privacy, all four attributes were
masked, \emph{i.e.}, they were considered as quasi-identifiers in
all our tests. The resulting records were all different from each
other. Since all attributes represent non-negative amounts of money,
we took as boundaries for the attribute domains $a_{b}^{i}=0$ and
$a_{t}^{i}=1.5\times max\_attribute\_value\_in\_the\_dataset$. The
domain upper bound $a_{t}^{i}$ is a reasonable estimate if the attribute
values in the data set are representative of the attribute values
in the population, which in particular means that the population outliers
are represented in the data set. The difference between the bounds
$a_{b}^{i}$ and $a_{t}^{i}$ defines the sensitivity of each attribute
and influences the amount of Laplace noise to be added to masked outputs,
as detailed in Section~\ref{sec:dif-numerical}. Since the Laplace
distribution takes values in the range $(-\infty,+\infty)$, for consistency,
we bound noise-added outputs to the $[a_{b}^{i},a_{t}^{i}]$ range
defined above.
\item ``Adult'', a well-known data set from the UCI repository~\cite{Adult},
which has often been used in the past to evaluate privacy-preserving
methods ~\cite{Martinez12a,Domingo06,Fung05,Lin10}. Like in~\cite{Martinez12a}
we focused on two categorical attributes: OCCUPATION and NATIVE-COUNTRY.
According to the data set description $Dom(\mbox{OCCUPATION})$ includes
14 distinct categories, whereas $Dom(\mbox{NATIVE-COUNTRY})$ covers
41. The taxonomies modeling attribute domains, $\tau(\mbox{OCCUPATION})$
and $\tau(\mbox{NATIVE-COUNTRY})$, were extracted from WordNet 2.1~\cite{Fellbaum98},
a general-purpose repository that taxonomically models more than 100,000
concepts. Mappings between attribute labels and WordNet concepts are
those stated in~\cite{Martinez12a}. Considering attribute categories
and their taxonomic ancestors, the resulting taxonomies contain 122
distinct concepts for OCCUPATION and 127 for NATIVE-COUNTRY. As discussed
in Section~\ref{sec:dif-categorical}, these higher figures enable
a finer grained and more accurate discretization of cluster centroids
in comparison with approaches based on flat lists of attribute categories.
Domain boundaries for each attribute and sensitivities for centroid
quality criteria were set as described in Section~\ref{sec:dif-categorical}.
For evaluation purposes, we used the training corpus from the Adult
data set, which consists of 30,162 records after removing records
with missing values. Due to the reduced set of attribute categories,
the evaluation data contained 388 different record tuples, hence being
%JOSEPTHESIS. homogenous -> homogeneous (canviar també al paper).
a much more homogeneous data set than Census. 
\end{itemize}

\subsection{Evaluation measures\label{sec:eval}}

The quality of the masked output for different combinations of $k$-anonymity
and $\varepsilon$-differential privacy levels has been evaluated
from the perspectives of \textit{information loss} and \textit{disclosure
risk}: 
\begin{itemize}
\item Information loss has been quantified by means of the well-known Sum
of Squared Errors (SSE), a measure used in a good deal of the anonymization
literature (\emph{e.g.} \cite{Domi02}). For a given anonymized data
set (\emph{i.e.}, a $k$-anonymous data set $\overline{X}$ or an
$\varepsilon$-differentially private data set $X_{\varepsilon}$),
SSE is defined as the sum of squares of attribute distances between
original records in $X$ and their versions in the anonymized data
set, that is 
\[
SSE=\sum_{x_{j}\in X}\sum_{a_{j}^{i}\in x_{j}}(dist(a_{j}^{i},(a_{j}^{i})'))^{2},
\]
where $a_{j}^{i}$ is the value of the $i$-th attribute for the $j$-th
original record and $(a_{j}^{i})'$ represents its masked version.
For numerical attributes, $dist(\cdot,\cdot)$ corresponds to the
standard Euclidean distance, whereas for categorical ones we used
the semantic distance defined in Equation (\ref{eq:d}). Hence, the
lower is SSE, the lower is information loss and the higher is the
utility of the anonymized data. 
\item The disclosure risk has been evaluated as the percentage of records
of the original data that can be correctly matched from the anonymized
data set, that is, the percentage of Record Linkages (RL) 
\[
RL=100\times\frac{\sum_{x_{j}\in X}\Pr(x'_{j})}{m},
\]
where $m$ is the number of original records and the record linkage
probability for an anonymized record ($\Pr(x'_{j})$) is calculated
as 
\[
\Pr(x'_{j})=\left\{ \begin{array}{lll}
0 & \mbox{if} & x_{j}\not\in G\\
\frac{1}{|G|} & \mbox{if} & x_{j}\in G
\end{array}\right.
\]
where $G$ is the set of original records that are at minimum distance
from $x'_{j}$. The same distance functions as for SSE have been used.
If the correct original record $x_{j}$ is in $G$, then $\Pr(x'_{j})$
is computed as the probability of guessing $x_{j}$ in $G$, that
is, $1/|G|$. Otherwise, $\Pr(x'_{j})=0$. The lower RL, the better
is the privacy of the anonymized output. 
\end{itemize}
As baseline results, we have computed SSE and RL values for a standard
$k$-anonymity scenario in which all attributes are microaggregated
by means of the original MDAV algorithm~\cite{Domingo2005}, and
also with its modified insensitive version with several reference
points (Algorithm~\ref{alg:general_microaggregation}). Furthermore,
we also considered the straightforward $\varepsilon$-differential
privacy scenario in which Laplace noise or the Exponential Mechanism
are directly applied to unaggregated inputs; this approach is equivalent
to applying our method with a $k$-anonymity level of $k=1$.

The $\varepsilon$ parameter for differential privacy has been set
to $\varepsilon$= {0.01, 0.1, 1.0, 10.0}, which covers the usual
range of differential privacy levels observed in the literature~\cite{Dwork2011,Char10,Char12,Mach08}.
The $k$-anonymity levels have been set between 1 and 100, except
for the raw sensitive and insensitive MDAV microaggregations, which
start from $k=2$, because $k=1$ would mean that input data are not
modified.

Figure~\ref{figSSERL} depicts the SSE and RL values for the different
parameterizations of $k$ and $\varepsilon$ for the Census data set,
whereas Figure~\ref{figSSERLCat} corresponds to the Adult data set.
Due to the broad ranges of the SSE and RL values, the Y-axes are represented
using a $\log_{10}$ scale. Each test involving Laplace noise shows
the averaged results of 10 runs, for the sake of stability.

\begin{figure}[ht]
%JOSEPTHESIS. Treta aquesta reculada.
%\hspace{-1.5cm} 
\includegraphics[scale=0.3]{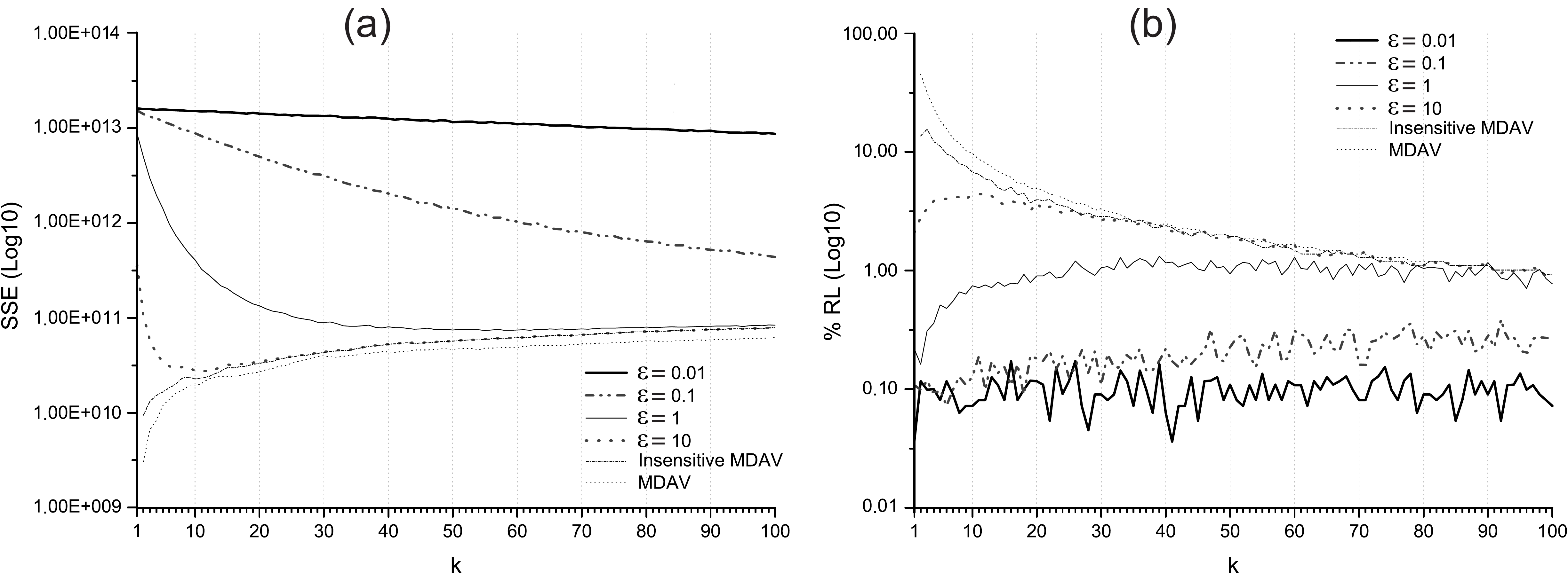}
\caption{SSE and RL values for different $k$ (varying with step 1) and $\varepsilon$
values for the ``Census'' data set. }
\label{figSSERL} 
\end{figure}

\begin{figure}[ht]
%JOSEPTHESIS. Treta aquesta reculada
%\hspace{-1.5cm} 
\includegraphics[scale=0.3]{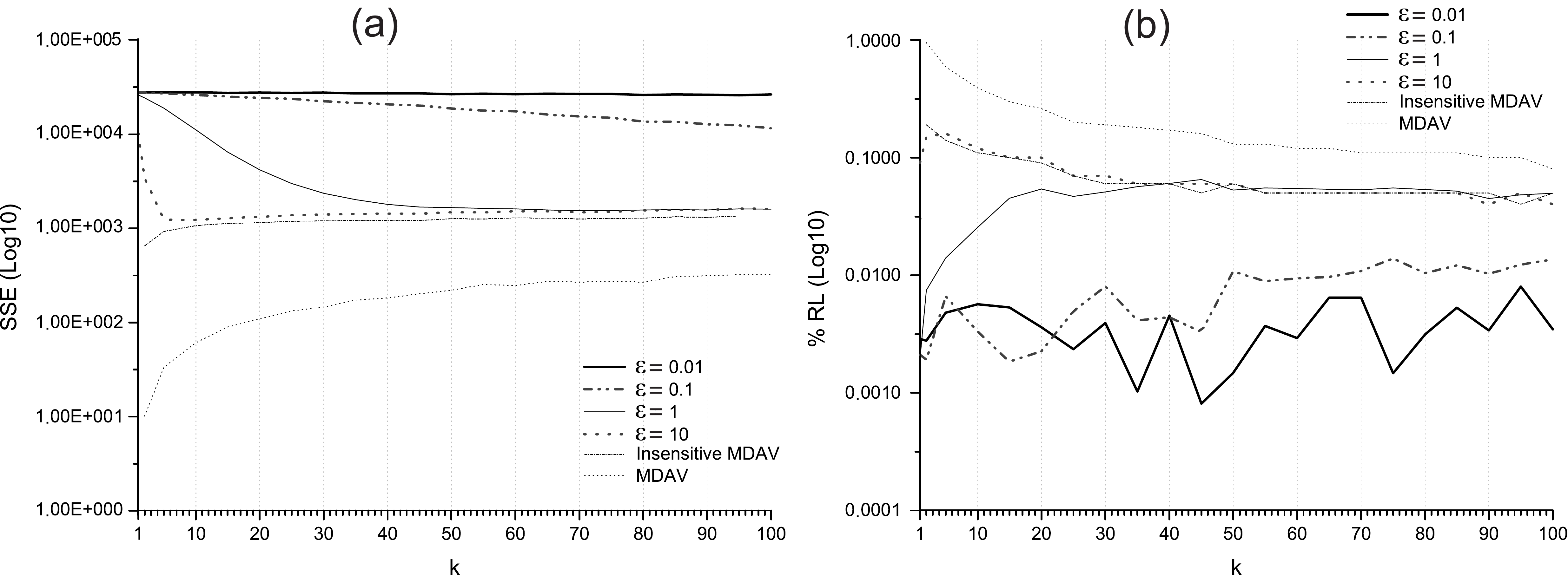}
\caption{SSE and RL values for different $k$ (varying with step 5) and $\varepsilon$
values for the ``Adult'' data set. }

\label{figSSERLCat} 
\end{figure}

To compare our method against baseline approaches regarding the \textit{balance}
between information loss and disclosure risk, we also computed the
\textit{relative} improvement of SSE and RL values for our approach
($SSE_{k\epsilon}$, $RL_{k\epsilon}$) over the baseline values ($SSE_{0}$,
$RL_{0}$) obtained with the original MDAV algorithm and with unaggregated
differential privacy. First, we computed the improvement factor of
SSE values as follows: 
\[
SSE_{f}=\frac{\sqrt{SSE_{0}}}{\sqrt{SSE_{k\epsilon}}}
\]
Then, the improvement factor of RL values was computed as: 
\[
RL_{f}=\frac{RL_{0}}{RL_{k\epsilon}}
\]
The final score that balances both dimensions was the ratio between
SSE and RL\_factors: 
\[
Score=\frac{SSE_{f}}{RL_{f}}
\]
Notice that SSE values have been square-rooted to provide a coherent
linear integration of RL and SSE, and that $Scores$ above 1.0 show
a practical improvement against baseline approaches.

Tables~\ref{mdav-num} and~\ref{dif-num} show the $SSE_{f}$ and
$RL_{f}$ factors and the resulting $Scores$ for different $\varepsilon$
values and some $k$-anonymity degrees with respect to baseline approaches
for the Census data set. Tables~\ref{mdav-cat} and~\ref{dif-cat}
correspond to the Adult data set.

\begin{table}[ht]
\caption{\label{mdav-num}Census data set. $SSE_{f}$ and $RL_{f}$ factors,
and $Scores$ for different $\varepsilon$ values against standard
MDAV microaggregation for several $k$-anonymity levels}

{\scriptsize \hspace{-3cm} }%
\begin{tabular}{|l|cc|ccc|ccc|ccc|ccc|}
\hline 
 & \multicolumn{2}{c|}{{\scriptsize $MDAV$}} & \multicolumn{3}{c|}{{\scriptsize $\epsilon=0.01$}} & \multicolumn{3}{c|}{{\scriptsize $\epsilon=0.1$}} & \multicolumn{3}{c|}{{\scriptsize $\epsilon=1.0$}} & \multicolumn{3}{c|}{{\scriptsize $\epsilon=10.0$}}\tabularnewline
 & {\scriptsize SSE$_{0}$ } & {\scriptsize RL$_{0}$ } & {\scriptsize SSE$_{f}$ } & {\scriptsize RL$_{f}$ } & \textit{\scriptsize Score}{\scriptsize{} } & {\scriptsize SSE$_{f}$ } & {\scriptsize RL$_{f}$ } & \textit{\scriptsize Score}{\scriptsize{} } & {\scriptsize SSE$_{f}$ } & {\scriptsize RL$_{f}$ } & \textit{\scriptsize Score}{\scriptsize{} } & {\scriptsize SSE$_{f}$ } & {\scriptsize RL$_{f}$ } & \textit{\scriptsize Score}\tabularnewline
\hline 
{\scriptsize $k=2$ } & {\scriptsize 3.07E+09 } & {\scriptsize 45.3 } & {\scriptsize 0.014 } & {\scriptsize 386.92 } & \textit{\scriptsize 5.37}{\scriptsize{} } & {\scriptsize 0.015 } & {\scriptsize 457.27 } & \textit{\scriptsize 6.74}{\scriptsize{} } & {\scriptsize 0.025 } & {\scriptsize 279.44 } & \textit{\scriptsize 6.92}{\scriptsize{} } & {\scriptsize 0.17 } & {\scriptsize 16.1 } & \textit{\scriptsize 2.73}{\scriptsize{} }\tabularnewline
{\scriptsize $k=5$ } & {\scriptsize 1.20E+10 } & {\scriptsize 18.4 } & {\scriptsize 0.027 } & {\scriptsize 227.41 } & \textit{\scriptsize 6.26}{\scriptsize{} } & {\scriptsize 0.032 } & {\scriptsize 186.06 } & \textit{\scriptsize 5.93}{\scriptsize{} } & {\scriptsize 0.092 } & {\scriptsize 35.87 } & \textit{\scriptsize 3.29}{\scriptsize{} } & {\scriptsize 0.588 } & {\scriptsize 4.59 } & \textit{\scriptsize 2.7}\tabularnewline
{\scriptsize $k=15$ } & {\scriptsize 2.41E+10 } & {\scriptsize 6.48 } & {\scriptsize 0.04 } & {\scriptsize 80.0 } & \textit{\scriptsize 3.24}{\scriptsize{} } & {\scriptsize 0.06 } & {\scriptsize 42.08 } & \textit{\scriptsize 2.53}{\scriptsize{} } & {\scriptsize 0.343 } & {\scriptsize 8.8 } & \textit{\scriptsize 3.02}{\scriptsize{} } & {\scriptsize 0.862 } & {\scriptsize 1.82 } & \textit{\scriptsize 1.57}\tabularnewline
{\scriptsize $k=30$ } & {\scriptsize 4.00E+10 } & {\scriptsize 3.33 } & {\scriptsize 0.054 } & {\scriptsize 37.0 } & \textit{\scriptsize 2.01}{\scriptsize{} } & {\scriptsize 0.112 } & {\scriptsize 30.83 } & \textit{\scriptsize 3.46}{\scriptsize{} } & {\scriptsize 0.667 } & {\scriptsize 3.14 } & \textit{\scriptsize 2.1}{\scriptsize{} } & {\scriptsize 0.955 } & {\scriptsize 1.27 } & \textit{\scriptsize 1.21}\tabularnewline
\hline 
\end{tabular}
\end{table}

\begin{table}[ht]
\caption{\label{dif-num}Census data set. $SSE_{f}$ and $RL_{f}$ factors,
and resulting $Scores$ for different $\varepsilon$ values and $k$-anonymity
degrees against straightforward $\varepsilon$-differential privacy
scenario (\emph{i.e.}, $k=1$)}

{\scriptsize \hspace{-1cm} }%
\begin{tabular}{|l|cc|ccc|ccc|ccc|}
\hline 
 & \multicolumn{2}{c|}{{\scriptsize $k=1$}} & \multicolumn{3}{c|}{{\scriptsize $k=5$}} & \multicolumn{3}{c|}{{\scriptsize $k=15$}} & \multicolumn{3}{c|}{{\scriptsize $k=30$}}\tabularnewline
 & {\scriptsize SSE$_{0}$ } & {\scriptsize RL$_{0}$ } & {\scriptsize SSE$_{f}$ } & {\scriptsize RL$_{f}$ } & \textit{\scriptsize Score}{\scriptsize{} } & {\scriptsize SSE$_{f}$ } & {\scriptsize RL$_{f}$ } & \textit{\scriptsize Score}{\scriptsize{} } & {\scriptsize SSE$_{f}$ } & {\scriptsize RL$_{f}$ } & \textit{\scriptsize Score}\tabularnewline
\hline 
{\scriptsize $\epsilon=0.01$ } & {\scriptsize 1.62E+13 } & {\scriptsize 0.036 } & {\scriptsize 1.01 } & {\scriptsize 0.44 } & \textit{\scriptsize 0.45}{\scriptsize{} } & {\scriptsize 1.05 } & {\scriptsize 0.44 } & \textit{\scriptsize 0.47}{\scriptsize{} } & {\scriptsize 1.10 } & {\scriptsize 0.40 } & \textit{\scriptsize 0.44}{\scriptsize{} }\tabularnewline
{\scriptsize $\epsilon=0.1$ } & {\scriptsize 1.54E+13 } & {\scriptsize 0.108 } & {\scriptsize 1.14 } & {\scriptsize 1.09 } & \textit{\scriptsize 1.24}{\scriptsize{} } & {\scriptsize 1.52 } & {\scriptsize 0.70 } & \textit{\scriptsize 1.07}{\scriptsize{} } & {\scriptsize 2.20 } & {\scriptsize 1.00 } & \textit{\scriptsize 2.20}{\scriptsize{} }\tabularnewline
{\scriptsize $\epsilon=1.0$ } & {\scriptsize 8.86E+12 } & {\scriptsize 0.218 } & {\scriptsize 2.49 } & {\scriptsize 0.42 } & \textit{\scriptsize 1.06}{\scriptsize{} } & {\scriptsize 6.57 } & {\scriptsize 0.30 } & \textit{\scriptsize 1.95}{\scriptsize{} } & {\scriptsize 9.92 } & {\scriptsize 0.21 } & \textit{\scriptsize 2.04}{\scriptsize{} }\tabularnewline
{\scriptsize $\epsilon=10.0$ } & {\scriptsize 3.69E+11 } & {\scriptsize 2.09 } & {\scriptsize 3.26 } & {\scriptsize 0.52 } & \textit{\scriptsize 1.70}{\scriptsize{} } & {\scriptsize 3.37 } & {\scriptsize 0.59 } & \textit{\scriptsize 1.98}{\scriptsize{} } & {\scriptsize 2.90 } & {\scriptsize 0.79 } & \textit{\scriptsize 2.30}{\scriptsize{} }\tabularnewline
\hline 
\end{tabular}
\end{table}

\begin{table}[ht]
\caption{\label{mdav-cat}Adult data set. $SSE_{f}$ and $RL_{f}$ factors,
and resulting $Scores$ for different $\varepsilon$ values against
standard MDAV microaggregation for several $k$-anonymity levels}

{\scriptsize \hspace{-2cm} }%
\begin{tabular}{|l|cc|ccc|ccc|ccc|ccc|}
\hline 
 & \multicolumn{2}{c|}{{\scriptsize $MDAV$}} & \multicolumn{3}{c|}{{\scriptsize $\epsilon=0.01$}} & \multicolumn{3}{c|}{{\scriptsize $\epsilon=0.1$}} & \multicolumn{3}{c|}{{\scriptsize $\epsilon=1.0$}} & \multicolumn{3}{c|}{{\scriptsize $\epsilon=10.0$}}\tabularnewline
 & {\scriptsize SSE$_{0}$ } & {\scriptsize RL$_{0}$ } & {\scriptsize SSE$_{f}$ } & {\scriptsize RL$_{f}$ } & \textit{\scriptsize Score}{\scriptsize{} } & {\scriptsize SSE$_{f}$ } & {\scriptsize RL$_{f}$ } & \textit{\scriptsize Score}{\scriptsize{} } & {\scriptsize SSE$_{f}$ } & {\scriptsize RL$_{f}$ } & \textit{\scriptsize Score}{\scriptsize{} } & {\scriptsize SSE$_{f}$ } & {\scriptsize RL$_{f}$ } & \textit{\scriptsize Score}\tabularnewline
\hline 
{\scriptsize $k=2$ } & {\scriptsize 1.01 } & {\scriptsize 0.95 } & {\scriptsize 0.019 } & {\scriptsize 343.13 } & \textit{\scriptsize 6.53}{\scriptsize{} } & {\scriptsize 0.019 } & {\scriptsize 497.53 } & \textit{\scriptsize 9.52}{\scriptsize{} } & {\scriptsize 0.02 } & {\scriptsize 127.49 } & \textit{\scriptsize 2.60}{\scriptsize{} } & {\scriptsize 0.054 } & {\scriptsize 6.33 } & \textit{\scriptsize 0.34}{\scriptsize{} }\tabularnewline
{\scriptsize $k=5$ } & {\scriptsize 3.33 } & {\scriptsize 0.59 } & {\scriptsize 0.034 } & {\scriptsize 123.31 } & \textit{\scriptsize 4.27}{\scriptsize{} } & {\scriptsize 0.035 } & {\scriptsize 89.48 } & \textit{\scriptsize 3.14}{\scriptsize{} } & {\scriptsize 0.042 } & {\scriptsize 42.21 } & \textit{\scriptsize 1.77}{\scriptsize{} } & {\scriptsize 0.164 } & {\scriptsize 3.69 } & \textit{\scriptsize 0.61}\tabularnewline
{\scriptsize $k=15$ } & {\scriptsize 8.93 } & {\scriptsize 0.3 } & {\scriptsize 0.057 } & {\scriptsize 56.49 } & \textit{\scriptsize 3.22}{\scriptsize{} } & {\scriptsize 0.059 } & {\scriptsize 163.06 } & \textit{\scriptsize 9.73}{\scriptsize{} } & {\scriptsize 0.118 } & {\scriptsize 6.66 } & \textit{\scriptsize 0.78}{\scriptsize{} } & {\scriptsize 0.263 } & {\scriptsize 3.0 } & \textit{\scriptsize 0.79}\tabularnewline
{\scriptsize $k=30$ } & {\scriptsize 14.6 } & {\scriptsize 0.19 } & {\scriptsize 0.073 } & {\scriptsize 48.55 } & \textit{\scriptsize 3.53}{\scriptsize{} } & {\scriptsize 0.08 } & {\scriptsize 23.72 } & \textit{\scriptsize 1.92}{\scriptsize{} } & {\scriptsize 0.25 } & {\scriptsize 3.73 } & \textit{\scriptsize 0.93}{\scriptsize{} } & {\scriptsize 0.32 } & {\scriptsize 2.71 } & \textit{\scriptsize 0.87}\tabularnewline
\hline 
\end{tabular}
\end{table}

\begin{table}[ht]
\caption{\label{dif-cat}Adult data set. $SSE_{f}$ and $RL_{f}$ factors,
and resulting $Scores$ for different $\varepsilon$ values and $k$-anonymity
degrees against straightforward $\varepsilon$-differential privacy
scenario (\emph{i.e.}, $k=1$)}

{\scriptsize \hspace{-1cm} }%
\begin{tabular}{|l|cc|ccc|ccc|ccc|}
\hline 
 & \multicolumn{2}{c|}{{\scriptsize $k=1$}} & \multicolumn{3}{c|}{{\scriptsize $k=5$}} & \multicolumn{3}{c|}{{\scriptsize $k=15$}} & \multicolumn{3}{c|}{{\scriptsize $k=30$}}\tabularnewline
 & {\scriptsize SSE } & {\scriptsize RL } & {\scriptsize SSE$_{f}$ } & {\scriptsize RL$_{f}$ } & \textit{\scriptsize Score}{\scriptsize{} } & {\scriptsize SSE$_{f}$ } & {\scriptsize RL$_{f}$ } & \textit{\scriptsize Score}{\scriptsize{} } & {\scriptsize SSE$_{f}$ } & {\scriptsize RL$_{f}$ } & \textit{\scriptsize Score}\tabularnewline
\hline 
{\scriptsize $\epsilon=0.01$ } & {\scriptsize 2.78E+04 } & {\scriptsize 0.0029 } & {\scriptsize 1.00 } & {\scriptsize 0.60 } & \textit{\scriptsize 0.60}{\scriptsize{} } & {\scriptsize 1.01 } & {\scriptsize 0.54 } & \textit{\scriptsize 0.54}{\scriptsize{} } & {\scriptsize 1.00 } & {\scriptsize 0.73 } & \textit{\scriptsize 0.74}{\scriptsize{} }\tabularnewline
{\scriptsize $\epsilon=0.1$ } & {\scriptsize 2.77E+04 } & {\scriptsize 0.0021 } & {\scriptsize 1.01 } & {\scriptsize 0.32 } & \textit{\scriptsize 0.32}{\scriptsize{} } & {\scriptsize 1.05 } & {\scriptsize 1.14 } & \textit{\scriptsize 1.20}{\scriptsize{} } & {\scriptsize 1.11 } & {\scriptsize 0.26 } & \textit{\scriptsize 0.29}{\scriptsize{} }\tabularnewline
{\scriptsize $\epsilon=1.0$ } & {\scriptsize 2.61E+04 } & {\scriptsize 0.0019 } & {\scriptsize 1.18 } & {\scriptsize 0.14 } & \textit{\scriptsize 0.16}{\scriptsize{} } & {\scriptsize 2.01 } & {\scriptsize 0.04 } & \textit{\scriptsize 0.09}{\scriptsize{} } & {\scriptsize 3.34 } & {\scriptsize 0.04 } & \textit{\scriptsize 0.13}{\scriptsize{} }\tabularnewline
{\scriptsize $\epsilon=10.0$ } & {\scriptsize 1.02E+04 } & {\scriptsize 0.09 } & {\scriptsize 2.89 } & {\scriptsize 0.56 } & \textit{\scriptsize 1.62}{\scriptsize{} } & {\scriptsize 2.83 } & {\scriptsize 0.90 } & \textit{\scriptsize 2.54}{\scriptsize{} } & {\scriptsize 2.70 } & {\scriptsize 1.29 } & \textit{\scriptsize 3.48}{\scriptsize{} }\tabularnewline
\hline 
\end{tabular}
\end{table}

\subsection{Discussion\label{sec:discuss}}

Regarding the evolution of SSE values in Figures~\ref{figSSERL}(a)
and ~\ref{figSSERLCat}(a), we observe for both data sets that the
$k$-anonymous microaggregation of input records effectively reduces
the required amount of noise and hence the loss of information, compared
to a straightforward implementation of $\varepsilon$-differential
privacy (with no prior microaggregation, \emph{i.e.}, $k=1$). The
relative improvement directly depends on the value of $\varepsilon$
and the best results are obtained for $\varepsilon=1.0$. As shown
in Tables~\ref{dif-num} and~\ref{dif-cat}, for $k=30$ and $\varepsilon=1.0$,
the relative improvement $SSE_{f}$ is around one order of magnitude
for the Census data set and is around 3 for the Adult data set.

Looking at Figures~\ref{figSSERL}(a) and~\ref{figSSERLCat}(a)
we observe different effects depending on the value of $\varepsilon$
for both data sets: 
\begin{itemize}
\item For small $\varepsilon$ (that is, 0.01 or 0.1), the larger $k$,
the smaller is SSE, because the noise reduction at the $\varepsilon$-differential
privacy stage more than compensates the noise increase at the microaggregation
stage due to greater aggregation. Anyway, the amount of noise involved
for these values of $\varepsilon$ is so high that even with the aforementioned
noise reduction, the output data are hardly useful. 
\item For very large $\varepsilon$ (that is, 10), there is a sharp decline
of SSE for low $k$ values (around 5); however, for larger $k$ (above
10), there is a new and slow increase in SSE, because the noise added
by $\varepsilon$-differential privacy being low, it is dominated
by the noise added by prior microaggregation in larger clusters. This
is more noticeable for the Census data set, since SSE values are more
similar to those of standard microaggregation. 
\item For medium $\varepsilon$ (that is, 1), there is a substantial decline
of SSE for low $k$ (below 30) and, for larger $k$, SSE stays nearly
constant and reasonably low. In this case, the noise added by prior
microaggregation in larger clusters is compensated by the noise reduced
at the $\varepsilon$-differential privacy stage due to decreased
sensitivity with larger $k$. 
\end{itemize}
Notice also that insensitive MDAV microaggregation incurs a higher
SSE than standard MDAV microaggregation. Indeed, the clusters formed
by insensitive microaggregation are less homogeneous, due to the total
order enforced for input records. Particularly, the Adult data set
shows a more noticeable increase of SSE figures. This is coherent
with the criterion detailed in Section~\ref{sec:insensitive-micro}
to define a total order, which alternatively picks combinations of
attribute domain boundaries as reference points to create clusters.
Since the evaluated Adult data set consists of two attributes, four
different reference points can be defined. This contrasts with the
four attributes considered for the Census data set, which provide
16 different combinations of domain boundaries, giving more degrees
of freedom and producing a more accurate clustering of input data.
Moreover, since the Adult data set consists of categorical attributes
with a limited set of possible categories (in comparison with continuous
scale numerical ranges defined by the Census attributes), the imperfections
introduced by the insensitive aggregation are amplified by the need
to discretize cluster centroids. In any case, the SSE increase caused
by insensitive microaggregation is around an order of magnitude smaller
than the noise reduction this microaggregation enables when used as
a prior step to $\varepsilon$-differential privacy.

RL values shown in Figures~\ref{figSSERL}(b) and ~\ref{figSSERLCat}(b)
behave the other way round as SSE. First, we notice that the standard
MDAV algorithm results in the highest percentage of linkages. For
the Census data set, a $k$-anonymity level $k\geq20$ is needed to
attain a percentage of linkages below 5\%. For Adult, RL is much lower
because the number of distinct records is limited by the set of categories
of each attribute (\emph{i.e.}, only 388 distinct tuples for Adult,
whereas all 1,080 records are different for Census), and because the
number of records is much higher (\emph{i.e.}, 30,162 for Adult vs.
1,080 for Census). As a result, the probability of correct record
linkage is much lower (\emph{i.e.}, below 1\% from $k=2$). Insensitive
MDAV yields slightly more privacy than MDAV for the Census data set
and significantly more privacy (less percentage of record linkages)
for the Adult data set. The superior RL reduction in Adult w.r.t.
Census is coherent with the differences in information loss observed
in SSE values, which were caused by the less homogeneous clusterization
in Adult. In both data sets, the RL values of insensitive microaggregation
are very similar to the ones obtained with $\varepsilon$-differential
privacy with $\varepsilon=10$. For $\varepsilon$ values of 0.01
and 0.1, the RL values hardly vary when the $k$-anonymity level increases,
because they are very low already with $k=1$ (no prior microaggregation).
Note that, for such low $\varepsilon$-values, the RL values stay
around 0.1\% for Census data, which, considering the data set size
of 1,080 records, corresponds to the probability of successful random
record linkage (\emph{i.e.}, 1/1,080). The fact that records are almost
randomly matched is reflected by the large spikes of the plot. For
the Adult data set, RL behaves similarly but it shows a much lower
matching probability (\emph{i.e.}, around 0.0033\%, that is, 1/30,162),
because of the larger cardinality of the data set. It can also be
seen that the top level of privacy offered by standard $\varepsilon$-differential
privacy ($k=1$) for low $\varepsilon$ is basically maintained when
using prior microaggregation ($k>1$); hence, the reduction in information
loss achieved by using microaggregation prior to noise addition does
not entail appreciable privacy penalties.

For $\varepsilon=1$, the RL results are more interesting. For the
Census data set, they show an increase of the percentage of record
linkages from 0.2\% for $k=1$ (no prior microaggregation) to around
1\% for $k=25$. For the Adult data set, RL rises from 0.02\% for
$k=1$ to around 0.05\% for $k\geq20$. This is the other side of
the very noticeable improvement of SSE values.

In all cases, as shown by $RL_{f}$ in Tables~\ref{mdav-num} and~\ref{mdav-cat},
$\varepsilon$-differential privacy reduces RL versus standard $k$-anonymity
from around 2 orders of magnitude (for $\varepsilon$ = 0.01 or 0.1)
to 1 order (for $\varepsilon$ = 1.0 or 10.0), for the considered
$k$-anonymity levels. This illustrates the practical privacy improvement
that $\varepsilon$-differential privacy brings as a result of the
more strict theoretical privacy guarantees.

By analyzing the balance ($Score$) between the SSE and RL figures
summarized in Tables~\ref{mdav-num},~\ref{dif-num},~\ref{mdav-cat}
and~\ref{dif-cat}, we can conclude that: 
\begin{itemize}
\item $Scores$ with respect to the standard MDAV algorithm (Tables~\ref{mdav-num}
and ~\ref{mdav-cat}) are above 1.0 in all cases for the Census data
set and for Adult when $\varepsilon=0.01$ or 0.1. This shows that
the improved disclosure risk brought by $\varepsilon$-differential
privacy more than compensates the relative increase of information
loss caused by noise. $Scores$ for the Adult data set are lower than
for Census because baseline $RL_{0}$ figures for Adult were so low
that the improvements brought by differential privacy are less noticeable
when evaluating disclosure risk. For the same reason, $Scores$ also
tend to decrease as the microaggregation level $k$ increases. 
\item $Scores$ with respect to standard $\varepsilon$-differential privacy
(\emph{i.e.}, $k=1$, Tables~\ref{dif-num} and ~\ref{dif-cat})
tend to increase as both $\varepsilon$ and $k$ grow. In fact, values
of $k$ and $\varepsilon$ over a threshold are needed for the balance
to show improvement (\emph{i.e.}, $Score>1.0$). We observe that for
$k\geq15$ and $\varepsilon\geq1.0$, the very substantial information
loss reduction obtained by using $k$-anonymous microaggregation prior
to $\varepsilon$-differential privacy more than compensates the small
increase in the percentage of record linkages with respect to standard
$\varepsilon$-differential privacy. 
\end{itemize}
The above observations suggest that, given a desired level $\varepsilon$
of differential privacy and a specific data set, a $k$-anonymity
level can be determined that optimizes the improvement of data utility
and/or privacy.

\subsection{Statistical analysis of anonymized results}

\label{sec:statistics-numerical}

To complement the above evaluation, in this section we provide an
attribute-level analysis of several statistics for the numerical data
set (Census). As in~\cite{Domingo10}, $\Theta$ and $\Theta'$ denote
the same statistic (\emph{e.g.}, attribute mean, attribute variance,
etc.) for each attribute over the original data set and its masked
version (by means of $k$-anonymity and/or $\varepsilon$-differential
privacy), respectively, we computed the variation of the statistic
introduced by the anonymization process as: 
\[
\Delta(\Theta)=\frac{\mid\Theta'-\Theta\mid}{\mid\Theta\mid}
\]
Variations were computed for the \textit{mean} of each attribute (named
$\Delta(m_{X})$ for FEDTAX, $\Delta(m_{P})$ for POTHVAL, $\Delta(m_{I})$
for INTVAL and $\Delta(m_{F})$ for FICA) and also for their \textit{variances}
($\Delta(\sigma_{X})$ for FEDTAX, $\Delta(\sigma_{P})$ for POTHVAL,
$\Delta(\sigma_{I})$ for INTVAL and $\Delta(\sigma_{F})$ for FICA).
In both cases, the smaller the variations, the less is the information
loss and the better is the data utility. Results are reported in Table~\ref{stats-num}.

\begin{table}[ht]
\caption{\label{stats-num}Census data set. Variation for several statistics
between the original data set and data sets anonymized with methods
using different values of $k$ and $\varepsilon$. Methods include
$\varepsilon$-differential privacy with prior $k$-anonymous microaggregation
($k=1$ amounts to plain $\varepsilon$-differential privacy), insensitive
MDAV microaggregation and plain MDAV microaggregation.}

\centering{}{\scriptsize }%
\begin{tabular}{|l|c|c|c|c|c|c|c|}
\hline 
{\scriptsize Statistic } & {\scriptsize $k$ } & {\scriptsize $\varepsilon=0.01$ } & {\scriptsize $\varepsilon=0.1$ } & {\scriptsize $\varepsilon=1.0$ } & {\scriptsize $\varepsilon=10.0$ } & {\scriptsize Insensit. MDAV } & {\scriptsize MDAV}\tabularnewline
\hline 
{\scriptsize $\Delta(m_{X})$ } & {\scriptsize 1 } & {\scriptsize 1.0947 } & {\scriptsize 1.0356 } & {\scriptsize 0.6925 } & {\scriptsize 0.0500 } & {\scriptsize 0.0 } & {\scriptsize 0.0 }\tabularnewline
 & {\scriptsize 2 } & {\scriptsize 1.1065 } & {\scriptsize 1.0088 } & {\scriptsize 0.4421 } & {\scriptsize 0.0134 } & {\scriptsize 0.0 } & {\scriptsize 0.0 }\tabularnewline
 & {\scriptsize 5 } & {\scriptsize 1.1030 } & {\scriptsize 0.8743 } & {\scriptsize 0.1461 } & {\scriptsize 0.0024 } & {\scriptsize 0.0 } & {\scriptsize 0.0 }\tabularnewline
 & {\scriptsize 15 } & {\scriptsize 1.0063 } & {\scriptsize 0.5171 } & {\scriptsize 0.0202 } & {\scriptsize 0.0003 } & {\scriptsize 0.0 } & {\scriptsize 0.0 }\tabularnewline
 & {\scriptsize 30 } & {\scriptsize 0.9677 } & {\scriptsize 0.2841 } & {\scriptsize 0.0030 } & {\scriptsize 0.0001 } & {\scriptsize 0.0 } & {\scriptsize 0.0 }\tabularnewline
\hline 
{\scriptsize $\Delta(m_{P})$ } & {\scriptsize 1 } & {\scriptsize 14.5160 } & {\scriptsize 13.7959 } & {\scriptsize 9.0362 } & {\scriptsize 1.2125 } & {\scriptsize 0.0 } & {\scriptsize 0.0 }\tabularnewline
 & {\scriptsize 2 } & {\scriptsize 14.4279 } & {\scriptsize 12.9546 } & {\scriptsize 6.2642 } & {\scriptsize 0.5245 } & {\scriptsize 0.0 } & {\scriptsize 0.0 }\tabularnewline
 & {\scriptsize 5 } & {\scriptsize 14.0466 } & {\scriptsize 11.4310 } & {\scriptsize 2.6323 } & {\scriptsize 0.1670 } & {\scriptsize 0.0 } & {\scriptsize 0.0 }\tabularnewline
 & {\scriptsize 15 } & {\scriptsize 13.4838 } & {\scriptsize 7.2579 } & {\scriptsize 0.7462 } & {\scriptsize 0.0302 } & {\scriptsize 0.0 } & {\scriptsize 0.0 }\tabularnewline
 & {\scriptsize 30 } & {\scriptsize 12.5205 } & {\scriptsize 4.4492 } & {\scriptsize 0.2970 } & {\scriptsize 0.0034 } & {\scriptsize 0.0 } & {\scriptsize 0.0 }\tabularnewline
\hline 
{\scriptsize $\Delta(m_{I})$ } & {\scriptsize 1 } & {\scriptsize 25.4754 } & {\scriptsize 24.6380 } & {\scriptsize 15.9486 } & {\scriptsize 2.2799 } & {\scriptsize 0.0 } & {\scriptsize 0.0 }\tabularnewline
 & {\scriptsize 2 } & {\scriptsize 24.8984 } & {\scriptsize 23.0202 } & {\scriptsize 10.9231 } & {\scriptsize 1.0192 } & {\scriptsize 0.0 } & {\scriptsize 0.0 }\tabularnewline
 & {\scriptsize 5 } & {\scriptsize 24.5284 } & {\scriptsize 19.3688 } & {\scriptsize 4.7766 } & {\scriptsize 0.3356 } & {\scriptsize 0.0 } & {\scriptsize 0.0 }\tabularnewline
 & {\scriptsize 15 } & {\scriptsize 23.6351 } & {\scriptsize 12.6533 } & {\scriptsize 1.4402 } & {\scriptsize 0.0656 } & {\scriptsize 0.0 } & {\scriptsize 0.0 }\tabularnewline
 & {\scriptsize 30 } & {\scriptsize 21.9726 } & {\scriptsize 7.7496 } & {\scriptsize 0.6244 } & {\scriptsize 0.0152 } & {\scriptsize 0.0 } & {\scriptsize 0.0 }\tabularnewline
\hline 
{\scriptsize $\Delta(m_{F})$ } & {\scriptsize 1 } & {\scriptsize 1.0126 } & {\scriptsize 0.9648 } & {\scriptsize 0.6151 } & {\scriptsize 0.0270 } & {\scriptsize 0.0 } & {\scriptsize 0.0 }\tabularnewline
 & {\scriptsize 2 } & {\scriptsize 1.0275 } & {\scriptsize 0.9225 } & {\scriptsize 0.4194 } & {\scriptsize 0.0090 } & {\scriptsize 0.0 } & {\scriptsize 0.0 }\tabularnewline
 & {\scriptsize 5 } & {\scriptsize 0.9927 } & {\scriptsize 0.8061 } & {\scriptsize 0.1304 } & {\scriptsize 0.0010 } & {\scriptsize 0.0 } & {\scriptsize 0.0 }\tabularnewline
 & {\scriptsize 15 } & {\scriptsize 0.9140 } & {\scriptsize 0.4945 } & {\scriptsize 0.0117 } & {\scriptsize 0.0004 } & {\scriptsize 0.0 } & {\scriptsize 0.0 }\tabularnewline
 & {\scriptsize 30 } & {\scriptsize 0.8812 } & {\scriptsize 0.2678 } & {\scriptsize 0.0026 } & {\scriptsize 0.0002 } & {\scriptsize 0.0 } & {\scriptsize 0.0 }\tabularnewline
\hline 
\hline 
{\scriptsize $\Delta(\sigma_{X})$ } & {\scriptsize 1 } & {\scriptsize 9.5299 } & {\scriptsize 9.2111 } & {\scriptsize 6.4855 } & {\scriptsize 0.5122 } & {\scriptsize 0.0 } & {\scriptsize 0.0 }\tabularnewline
 & {\scriptsize 2 } & {\scriptsize 9.4974 } & {\scriptsize 8.8891 } & {\scriptsize 4.3752 } & {\scriptsize 0.1067 } & {\scriptsize 0.0447 } & {\scriptsize 0.0053 }\tabularnewline
 & {\scriptsize 5 } & {\scriptsize 9.3824 } & {\scriptsize 7.8526 } & {\scriptsize 1.5560 } & {\scriptsize 0.0584 } & {\scriptsize 0.0804 } & {\scriptsize 0.0156 }\tabularnewline
 & {\scriptsize 15 } & {\scriptsize 9.0318 } & {\scriptsize 5.2659 } & {\scriptsize 0.1527 } & {\scriptsize 0.0972 } & {\scriptsize 0.1015 } & {\scriptsize 0.0398 }\tabularnewline
 & {\scriptsize 30 } & {\scriptsize 8.5521 } & {\scriptsize 2.9454 } & {\scriptsize 0.0461 } & {\scriptsize 0.1241 } & {\scriptsize 0.1254 } & {\scriptsize 0.0639 }\tabularnewline
\hline 
{\scriptsize $\Delta(\sigma_{P})$ } & {\scriptsize 1 } & {\scriptsize 69.3473 } & {\scriptsize 67.3757 } & {\scriptsize 45.9873 } & {\scriptsize 2.1168 } & {\scriptsize 0.0 } & {\scriptsize 0.0 }\tabularnewline
 & {\scriptsize 2 } & {\scriptsize 69.1904 } & {\scriptsize 64.9683 } & {\scriptsize 29.3618 } & {\scriptsize 0.4549 } & {\scriptsize 0.0697 } & {\scriptsize 0.0247 }\tabularnewline
 & {\scriptsize 5 } & {\scriptsize 68.5536 } & {\scriptsize 57.7628 } & {\scriptsize 8.0984 } & {\scriptsize 0.0597 } & {\scriptsize 0.1268 } & {\scriptsize 0.0991 }\tabularnewline
 & {\scriptsize 15 } & {\scriptsize 66.2145 } & {\scriptsize 35.8830 } & {\scriptsize 0.7419 } & {\scriptsize 0.2365 } & {\scriptsize 0.2429 } & {\scriptsize 0.1967 }\tabularnewline
 & {\scriptsize 30 } & {\scriptsize 62.0932 } & {\scriptsize 18.9228 } & {\scriptsize 0.0958 } & {\scriptsize 0.3370 } & {\scriptsize 0.3416 } & {\scriptsize 0.3214 }\tabularnewline
\hline 
{\scriptsize $\Delta(\sigma_{I})$ } & {\scriptsize 1 } & {\scriptsize 96.3225 } & {\scriptsize 93.3506 } & {\scriptsize 64.2854 } & {\scriptsize 3.0044 } & {\scriptsize 0.0 } & {\scriptsize 0.0 }\tabularnewline
 & {\scriptsize 2 } & {\scriptsize 96.0339 } & {\scriptsize 90.2298 } & {\scriptsize 40.5087 } & {\scriptsize 0.5915 } & {\scriptsize 0.0950 } & {\scriptsize 0.0349 }\tabularnewline
 & {\scriptsize 5 } & {\scriptsize 95.2261 } & {\scriptsize 79.1174 } & {\scriptsize 11.0832 } & {\scriptsize 0.0411 } & {\scriptsize 0.1358 } & {\scriptsize 0.1327 }\tabularnewline
 & {\scriptsize 15 } & {\scriptsize 91.7395 } & {\scriptsize 49.0829 } & {\scriptsize 1.1650 } & {\scriptsize 0.2330 } & {\scriptsize 0.2362 } & {\scriptsize 0.2614 }\tabularnewline
 & {\scriptsize 30 } & {\scriptsize 86.7387 } & {\scriptsize 24.3520 } & {\scriptsize 0.1243 } & {\scriptsize 0.4433 } & {\scriptsize 0.4476 } & {\scriptsize 0.4729 }\tabularnewline
\hline 
{\scriptsize $\Delta(\sigma_{F})$ } & {\scriptsize 1 } & {\scriptsize 16.3302 } & {\scriptsize 15.7698 } & {\scriptsize 11.3811 } & {\scriptsize 0.9712 } & {\scriptsize 0.0 } & {\scriptsize 0.0 }\tabularnewline
 & {\scriptsize 2 } & {\scriptsize 16.2702 } & {\scriptsize 15.2505 } & {\scriptsize 7.8828 } & {\scriptsize 0.2338 } & {\scriptsize 0.0593 } & {\scriptsize 0.0067 }\tabularnewline
 & {\scriptsize 5 } & {\scriptsize 16.1054 } & {\scriptsize 13.6669 } & {\scriptsize 3.0073 } & {\scriptsize 0.0625 } & {\scriptsize 0.1117 } & {\scriptsize 0.0224 }\tabularnewline
 & {\scriptsize 15} & {\scriptsize 15.4965 } & {\scriptsize 9.3544 } & {\scriptsize 0.3439 } & {\scriptsize 0.1580 } & {\scriptsize 0.1634 } & {\scriptsize 0.0670 }\tabularnewline
 & {\scriptsize 30 } & {\scriptsize 14.7710 } & {\scriptsize 5.4324 } & {\scriptsize 0.0423 } & {\scriptsize 0.1780 } & {\scriptsize 0.1797 } & {\scriptsize 0.1060 }\tabularnewline
\hline 
\end{tabular}
\end{table}

The variations of the \emph{attribute means} directly depend on the
amount of noise added to the anonymized output. Hence, for the two
$k$-anonymous MDAV implementations, attribute means are perfectly
preserved in the masked output since centroids are the exact means
of clustered values. Regarding differentially privacy implementations,
we observe a monotonic decrease for the variations of the mean for
all attributes as the $k$-anonymity factor applied to input data
increases from $k=1$ to $k=30$. This shows the benefits that data
microaggregation brings at reducing the amount of noise needed to
fulfill differential privacy. For fixed $\varepsilon$, the sharpness
of this monotonic decrease is similar for all attributes. However,
as $\varepsilon$ increases from 0.01 to 10.0, the decrease becomes
sharper and sharper for all attributes. Indeed, for $\varepsilon=0.01$
the decrease factor for the variation of the mean is around 1.1 for
all attributes (quotient of the variations of the mean for $k=1$
and $k=30$), whereas for $\varepsilon=10.0$ the decrease factor
reaches around 200. Hence, we see that $\varepsilon>0.1$ is needed
to significantly reduce baseline variations of the mean for all attributes
(we take as baseline the variations for $k=1$, that is for plain
$\varepsilon$-differential privacy without prior microaggregation).

The variations of the \textit{attribute variances} increase for the
two MDAV implementations as the $k$-anonymity level grows, since
output record values tend to be more 
%JOSEPTHESIS. homogenous -> homogeneous. Canviar també al paper.
homogeneous and thereby suppress
more variance as a result of the data aggregation process. The growth
factor is larger for the standard MDAV algorithm in comparison with
its insensitive version, since the latter tends to produce less homogeneous
clusters. Differential privacy implementations behave the other way
round. For $\varepsilon\leq1.0$, the variations of attribute variances
decrease as the $k$-anonymity level grows, for all attributes. This
suggests that prior microaggregation helped to decrease the large
variance introduced by the noise added to fulfill differential privacy.
Similarly to what happened for variations of means, decrease factors
for variations of variances are larger for higher $\varepsilon$ values.
Results with $\varepsilon=10.0$ are worth noting. In this case, variances
tend to increase for $k$ values above 5. As discussed in the previous
section, the noise added for such a high $\varepsilon$ value is so
low that the effect of the prior microaggregation dominates in larger
clusters. In other words, prior microaggregation followed by 10-differential
privacy behaves similarly to microaggregation alone.

The results of the above analysis of attribute-level statistics are
coherent with the results based on SSE presented in previous sections.
It becomes clear that prior microaggregation helps differentially
private data to retain the utility of original data much like standard
$k$-anonymity does.

\section{Conclusions\label{sec:conclusions}}

We have presented an approach that combines $k$-anonymity and $\varepsilon$-differential
privacy in order to reap the best of both models: namely, the reasonably
low information loss incurred by $k$-anonymity and the high privacy
level guaranteed by $\varepsilon$-differential privacy. In our approach,
we use a newly defined insensitive microaggregation to obtain a $k$-anonymous
data set by considering all attributes as quasi-identifiers; then
we take the $k$-anonymous microaggregated data set as an input to
which uncertainty is added in order to reach $\varepsilon$-differential
privacy. We have also described how our approach can be applied to
numerical and categorical attributes and also to records combining
heterogeneous attribute types.

In addition to a theoretical proposal, we have presented empirical
results for heterogeneous data sets which show that our approach reduces
the information loss of standard differential privacy by several orders
of magnitude, while preserving its theoretical privacy guarantee and
improving the practical privacy (percentage of record linkages) versus
standard $k$-anonymity.

Future work will involve at least the following research lines: 
\begin{itemize}
\item Even though special care has been exerted to avoid damaging within-cluster
homogeneity when making microaggregation insensitive, there is still
room for improvement, especially for categorical data. New criteria
to define total orders are conceivable, such as fixing sampling and
sorting strategies of data spaces, so that the within-cluster homogeneity
reaches levels more similar to the ones achieved by standard microaggregation. 
\item It would also be interesting to define a methodology that, given a
data set, a target privacy level $\varepsilon$ and fixed utility
and privacy measures, determines the most suitable $k$ for the prior
$k$-anonymous microaggregation, in view of optimizing the data utility
and/or disclosure risk. 
\end{itemize}

\lhead[\chaptername~\thechapter]{\rightmark}

\rhead[\leftmark]{}

\lfoot[\thepage]{}

\cfoot{}

\rfoot[]{\thepage}

%JOSEP20130321. Descapitalitzat el titol
\chapter{Differential privacy via $t$-closeness in data publishing\label{chap:t-closeness}}

$k$-Anonymity and $\varepsilon$-differential privacy are two mainstream
privacy models originated within the computer science community. 
Their approaches
towards disclosure limitation are quite different: $k$-anonymity
is a model for releases of microdata ({\em i.e.}
individual records) that seeks to prevent record
re-identification by hiding each original record 
within a group of $k$ indistinguishable
anonymized records, while $\varepsilon$-differential privacy originated
as a model
for interactive databases and seeks to limit the knowledge that users
obtain from query responses. Both models are often presented as antagonistic: 
$\varepsilon$-differential privacy supporters view 
$k$-anonymity as an old-fashioned privacy notion that offers only
poor disclosure limitation guarantees, while $\varepsilon$-differential
privacy detractors criticize the limited utility of $\varepsilon$-differentially
private outputs and the cumbersomeness of not having access to
the data set. 

%JORDI. 
%\subsection{Contribution and plan of this paper}
%This paper shows that, 
We show that for data set anonymization, the $t$-closeness
extension of $k$-anonymity is closely related to $\varepsilon$-differential
privacy. 
This relation is demonstrated both versus uninformed
intruders (having access only to the released data set) and informed
intruders (having also background knowledge).
For uninformed intruders we prove that 
$\exp(\varepsilon)$-closeness is equivalent to
$\varepsilon$-differential privacy. For informed intruders,
the strict equivalence we obtain for uninformed intruders 
does not hold; however, we show that $\exp(\varepsilon)$-closeness
can be seen as a good approximation to $\varepsilon$-differential
privacy. Our approach is a constructive one: we specify a computational
procedure based on bucketization that, given
an original data set, builds a $t$-close version of it. In the 
case of uninformed intruders, this version turns out to 
be differentially private as well; in the case of informed
intruders, it is approximately differentially private.

%JORDI. Backgound tret
%Section~\ref{back} reviews background on $k$-anonymity, $t$-closeness
%and $\varepsilon$-differential privacy. 
Section~\ref{sec:Optimal-generic--anonymous} reviews partitioning
strategies used to achieve $k$-anonymity and its extensions,
including $t$-closeness. Specifically, we first
examine the shortcomings of achieving $k$-anonymity and $t$-closeness
in the classical sense, that is, by modifying the quasi-identifier
attributes to create 
groups of at least $k$ indistinguishable records. We then review
two approaches which leave the quasi-identifier attributes unaltered
and which will be used as building blocks of our computational procedure
to reach $t$-closeness and $\varepsilon$-differential privacy.
Section~\ref{sec:t_closeness} develops in detail our 
proposed construction to reach
$t$-closeness. 
Section~\ref{from} shows that $\exp(\varepsilon)$-closeness 
reached with the previous construction implies: i) $\varepsilon$-differential
privacy in the case of uninformed intruders; ii) approximate
$\varepsilon$-differential privacy in the case of informed intruders.
%JOSEP20130321. Tret future research i passat a les conclusions de la tesi.
Conclusions are summarized in
Section~\ref{conclusion}.

%JORDI. Afegit
The contents of this chapter have been accepted for publication in~\cite{pst2013}.

\section{Partitioning strategies for $k$-anonymity\label{sec:Optimal-generic--anonymous}}

In $k$-anonymity and its extensions (including $l$-diversity
and $t$-closeness),
the partitioning strategy to create groups of indistinguishable
records is a key point for data utility. Assume a data user 
who wants to analyze a group of individuals that has been selected based on the
value of the quasi-identifier attributes. The utility that this user
derives from the $k$-anonymous data depends on how well the
target group of individuals
can be approximated by the groups of indistinguishable
records. The best utility is achieved when the target group of individuals
can be approximated as the union of groups of indistinguishable records. 

As an example, consider a data set with 16 records,
two quasi-identifier attributes
$Q_{1}$ and $Q_{2}$, and one confidential attribute $C$. Assume
that $Q_{1}$ and $Q_{2}$ take values in the sets $\{A,B,C,D\}$
and $\{P,Q,R,S\}$, respectively. Figure~\ref{fig:projection} represents
the projection of this data set on the quasi-identifier attributes;
each point is the projection of one record on the quasi-identifiers.
In this case, we assume that each possible combination of quasi-identifiers
occurs in exactly one record. 
The dominant approach towards $k$-anonymity uses generalization
and suppression to partition the data set into groups of $k$ indistinguishable
records. Assume that the generalization hierarchies are those in Figure~\ref{fig:gen_hierarchy},
and that we want to obtain a $4$-anonymous data set. 
Figure~\ref{fig:generalized_data_sets}
depicts the $4$-anonymous data sets produced by minimal generalizations.
A data user interested in the group of individuals with $Q_{1}=A$
would prefer the $4$-anonymous data set on the left, 
which still allows distinguishing that group.
On the contrary, the
data set on the right of Figure~\ref{fig:generalized_data_sets}
is the worst option
for a user interested in the individuals 
with $Q_1=A$, because all values of $Q_1$ are lumped together.
However,
a user interested in the group of individuals with $Q_{2}=P$ would
prefer the $4$-anonymous data set on the right
of Figure~\ref{fig:generalized_data_sets}.

\begin{figure}
\begin{centering}
\includegraphics[width=5cm]{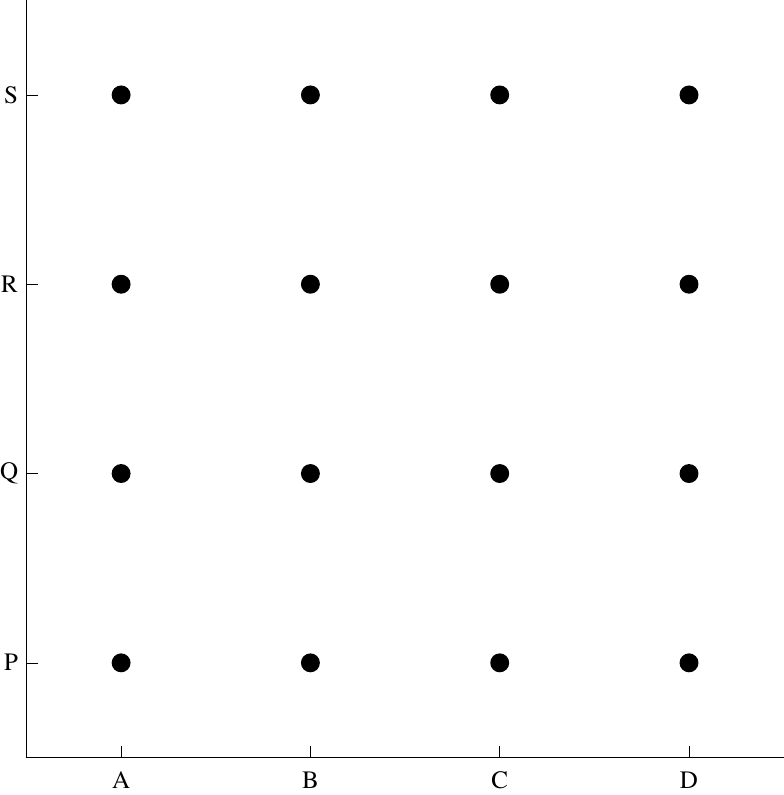}
\par\end{centering}

\caption{\label{fig:projection}Projection of the records in the data set on 
the quasi-identifier attributes $Q_{1}$ and $Q_{2}$}

\end{figure}

\begin{figure*}
\begin{centering}
\includegraphics[width=6cm]{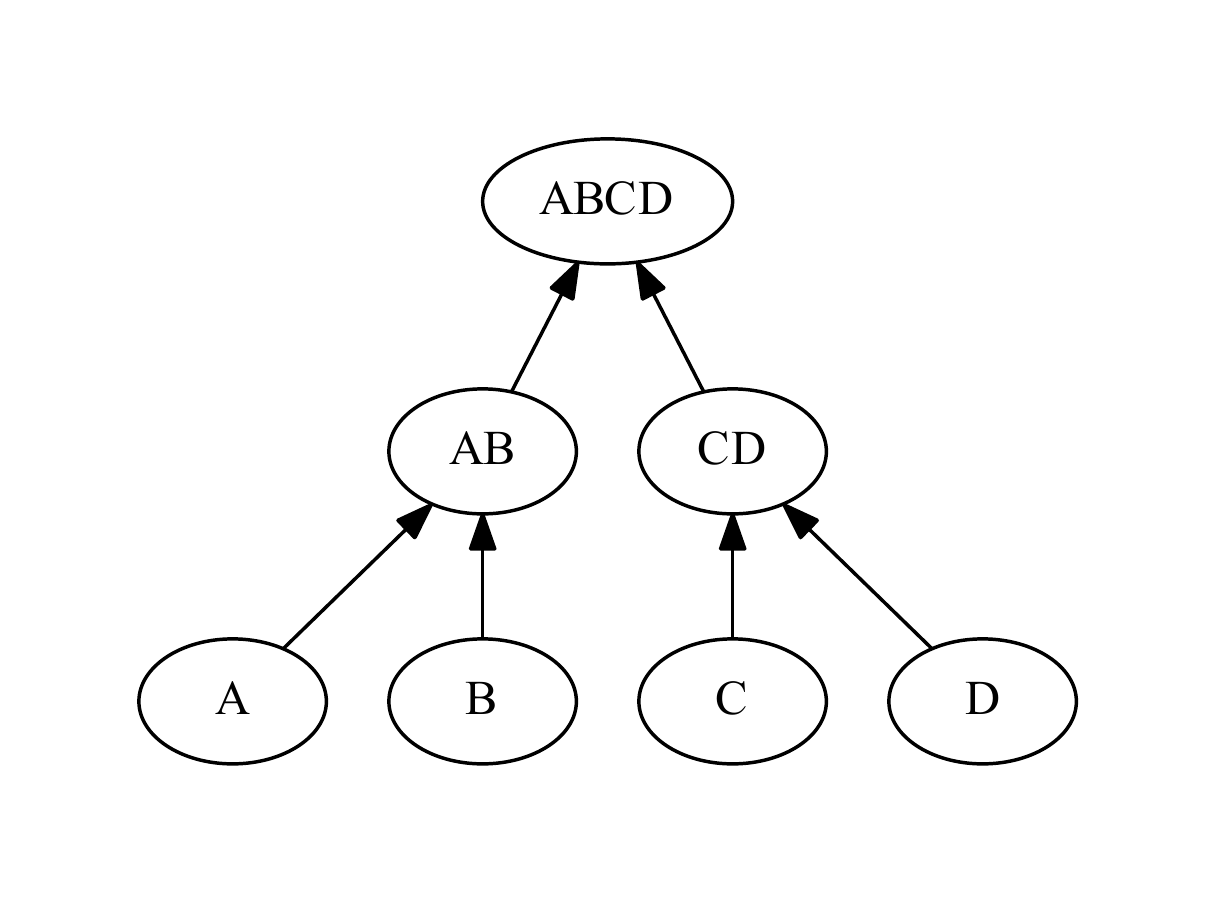}~\includegraphics[width=6cm]{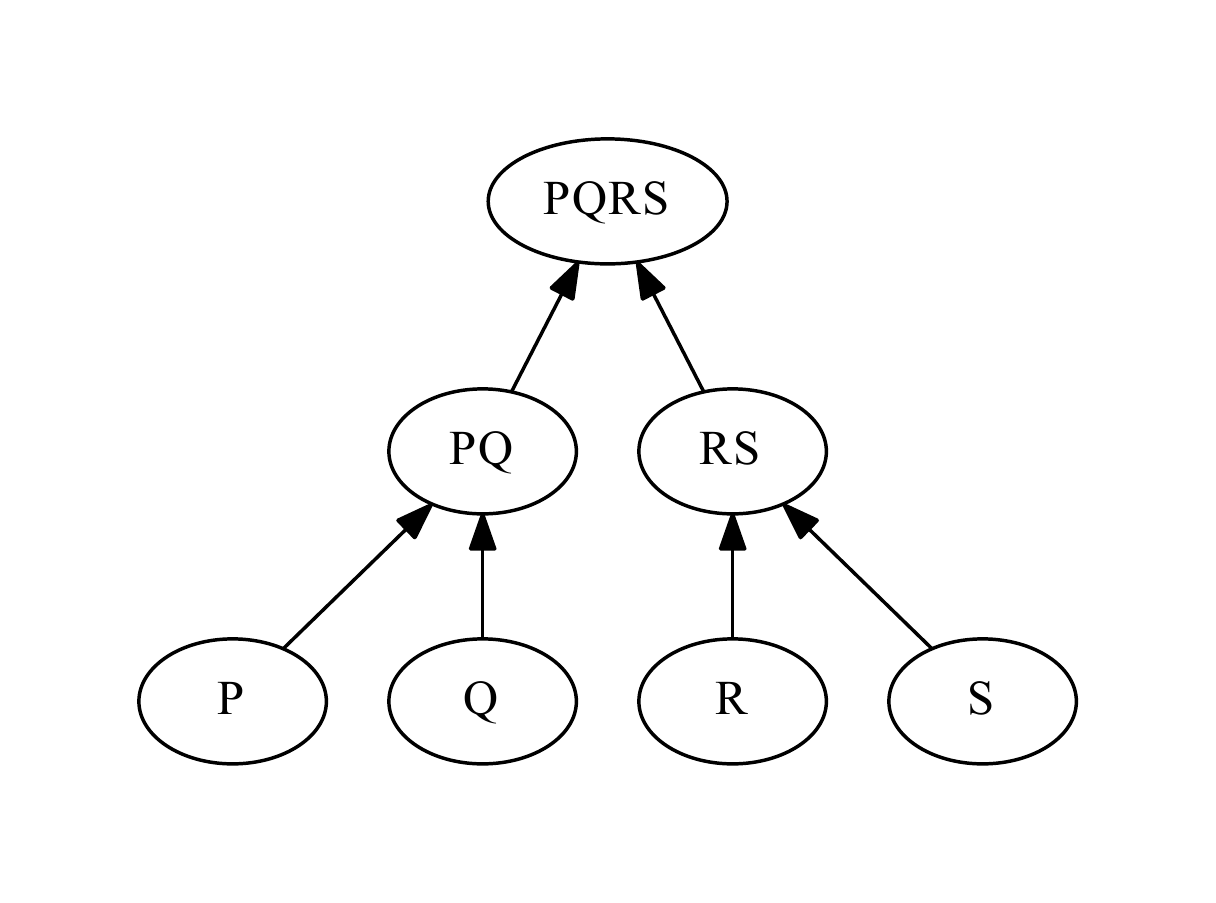}
\par\end{centering}

\caption{Generalization hierarchies for attributes $Q_{1}$ (left) and
$Q_{2}$ (right)}
\label{fig:gen_hierarchy}
\end{figure*}

\begin{figure*}

\begin{centering}
\includegraphics[width=5cm]{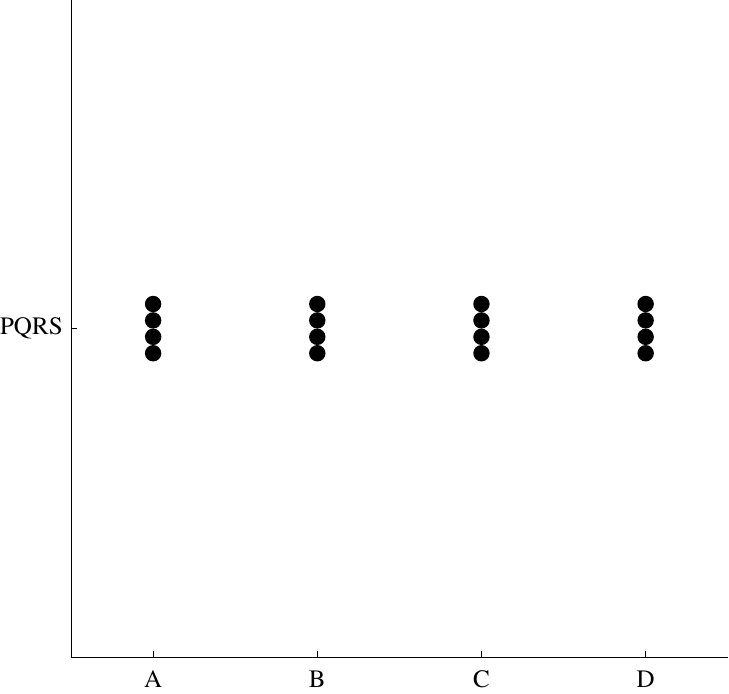}~\includegraphics[width=5cm]{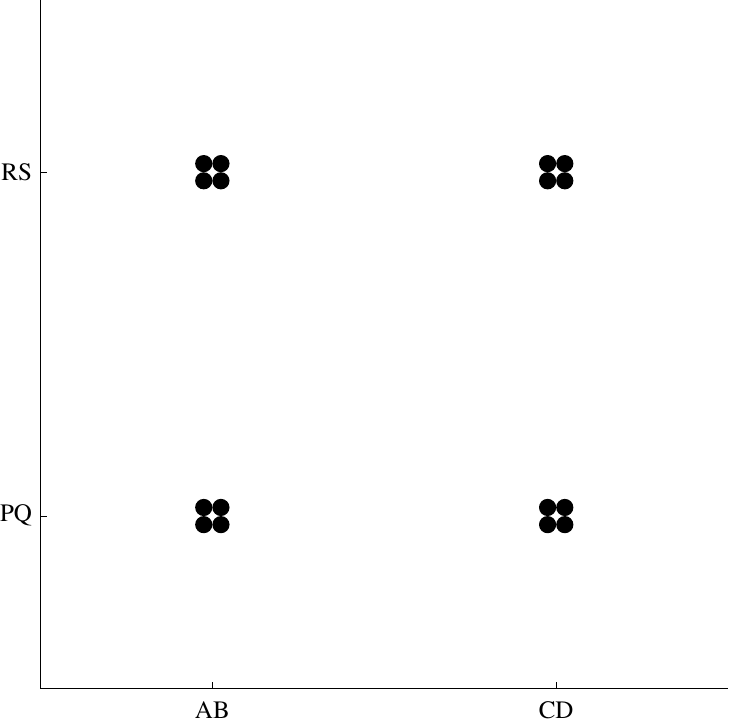}~\includegraphics[width=5cm]{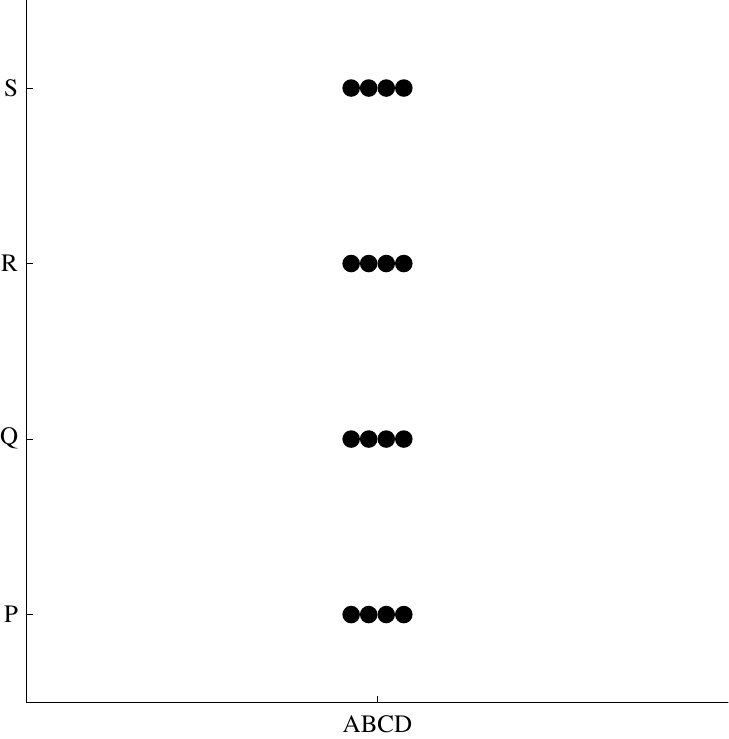}
\par\end{centering}

\caption{Minimal 4-anonymous generalizations}
\label{fig:generalized_data_sets}

\end{figure*}

Therefore, the selected partitioning of the records is essential for
the protected data set to deliver high utility. If the data collector
is aware of the kind of analyses that data users are interested in,
then the collector can tailor the partitioning to those analyses. 
However, most
of the time the data collector is unaware of the intended use of
the data; thus, a customized partitioning is not feasible. Even if
the data collector knew the relevant analyses, different analyses 
may require different partitions,
but releasing several versions of the same data set using a 
different partition each is not advisable, as it would endanger
whatever anonymity is gained by partitioning.

Another problem of the generalization approach is related to the number
of quasi-identifiers. When there is a large number of quasi-identifiers,
all of them need to be generalized
to satisfy $k$-anonymity, which
results in a large information loss. This
is known as ``the curse of dimensionality''~\cite{Aggarwal2005}.

This section aims at a method to generate $k$-anonymous data sets 
that mitigates
the issues described above:
\begin{itemize}
\item We generate a partition of the
records that preserves as much information as possible. To construct
such a partition, instead of partitioning based on the quasi-identifiers,
we will do it based on the confidential attribute. 
\item To avoid losing information
on the quasi-identifier attributes, we replace generalization
of the quasi-identifiers by an approach that preserves both 
the quasi-identifiers
and the confidential attributes. In particular,
we propose to use either
the Anatomy~\cite{Sun2009} or the probabilistic 
$k$-anonymity~\cite{fuzzieee}
methods.
\end{itemize}

If our goal is to construct a $k$-anonymous data
set, the Anatomy method is better, as it preserves more information,
namely the distribution of the confidential attribute within 
each set in the partition. However, 
if our goal is to achieve $t$-closeness,
we will show that the probabilistic $k$-anonymity approach is
preferable.

\subsection{Partitioning based on the confidential attribute\label{sub:conf_partition}}

We have argued above that customizing the $k$-anonymous partition
to specific data analysis requirements is not an option. 
Hence, the utility of the data depends
on the amount of variability of the confidential attribute. For instance,
if we target a specific individual,
the quasi-identifiers allow us to determine
a group of $k$ records that must contain that individual; thus,
we know that each of the values of the confidential attribute within
the group has probability $1/k$ of corresponding to the target individual.
The amount of knowledge we get (and thus the utility) depends on the
variability of the confidential attribute within the group: 
the more similar the confidential attribute values, 
the more knowledge for the user, but also the higher the 
risk of attribute disclosure.

To limit the variability of the confidential attribute within groups
of indistinguishable records, we propose to partition the records
in the data set based on the value of the confidential attribute.
We focus on a numerical confidential attribute. Let $D$ be a data
set with quasi-identifiers collectively denoted as $QI$,
and a confidential attribute $C$, as represented in 
Table~\ref{dataset}.

For the sake of clarity, we take a single confidential attribute. If
there are several confidential attributes, we can treat them as a single
compound confidential attribute and partition the data set according 
to a proximity criterion that takes into account all the components
({\em e.g.} microaggregation over confidential attributes~\cite{Domingo2005}).

\begin{table}
\begin{center}
\caption{Data set with quasi-identifiers $QI$ and a confidential
attribute $C$}
\label{dataset}
\begin{tabular}{ccc}
 & $QI$ & $C$\tabularnewline
\cline{2-3} 
individual 1 & $q_{1}$ & $c_{1}$\tabularnewline
\cline{2-3} 
$\vdots$ & $\vdots$ & $\vdots$\tabularnewline
\cline{2-3} 
individual N & $q_{N}$ & $c_{N}$\tabularnewline
\cline{2-3} 
\end{tabular}
\end{center}
\end{table}

To minimize the variability of $C$, we sort
the records by $C$, and generate the
partition by taking the $k$ minimal and maximal records, iteratively
(see Algorithm~\ref{alg:Optimal-partitioning}).

\begin{algorithm}
{\bf let} $D=\{(q_{i},c_{i})|i=1,\ldots,N\}$ be the original data
set

{\bf let} $P=\emptyset$ the partition of $D$ to be returned

{\bf let} $O=((oq_{1},oc_{1}),\ldots,(oq_{N},oc_{N}))$ be the list
of records of $D$ ordered by ascending values $c_i$ 

{\bf while} $|O|\geq 3k$ {\bf do}

\hspace{0.5cm}{\bf let} $P_{min}$ be the set containing the first
$k$ records of $O$

\hspace{0.5cm}insert $P_{min}$ into $P$

\hspace{0.5cm}remove the first $k$ records from $O$

\hspace{0.5cm}{\bf let} $P_{max}$ be the set containing the last
$k$ records of $O$

\hspace{0.5cm}insert $P_{max}$ into $P$

\hspace{0.5cm}remove the last $k$ records from $O$

{\bf end while}

{\bf if} $|O| \geq 2k$ {\bf then}

\hspace{0.5cm}{\bf let} $P_{min}$ be the set containing the first
$k$ records of $O$

\hspace{0.5cm}insert $P_{min}$ into $P$

\hspace{0.5cm}remove the first $k$ records from $O$

{\bf end if}

{\bf let} $P_{rest}$ be the set with the records remaining in $O$

insert $P_{rest}$ into $P$

{\bf return} $P$

\caption{\label{alg:Optimal-partitioning}Optimal partitioning based on the
confidential attribute}

\end{algorithm}

\subsection{Anatomy: reducing information loss in quasi-identifiers\label{sub:anatomy}}

We have mentioned above the ``curse of dimensionality''
information loss problem inherent to generalizations
affecting many quasi-identifier attributes.
The problem may get even worse if we construct the partition based
on the confidential attribute, as proposed in the previous section.
The values of the quasi-identifier attributes in each group of the partition
may span the whole domains of those attributes (or substantial fractions
of them).
Therefore, replacing all values of each quasi-identifier attribute
within a group by a single generalized value would lead to a great utility loss.
Moreover, note that
the generalized values for the quasi-identifiers might coincide
for different groups.

To overcome this difficulty, we propose to use the Anatomy approach
to $k$-anonymity, which preserves the original values of the quasi-identifiers.
To dissociate (break the relation between) quasi-identifiers and 
confidential attributes,
two tables are generated: the first one assigns a group identifier
to the quasi-identifiers, and the second one relates each group identifier
to the confidential attributes. We illustrate this in 
Tables~\ref{fig:deidentified_data}, \ref{fig:three_anonymous}
and~\ref{fig:group_id}. Table~\ref{fig:deidentified_data} shows
the original de-identified data. Table~\ref{fig:three_anonymous}
presents a $3$-anonymous version of the data obtained by generalization
of the attributes {\em Date of Birth} and {\em Sex}. Note that
we have used the greatest level of generalization for those attributes,
and thus the information loss is large. In contrast, we observe in
Table~\ref{fig:group_id} that, by using a group identifier 
to relate quasi-identifier attributes
and confidential attributes, we achieve exactly what we wanted: we
$k$-anonymize {\em the relation} 
between quasi-identifiers and confidential attributes,
while preserving the values of quasi-identifier attributes and 
the confidential attribute. In particular, the distribution of the 
confidential attribute within each group (records sharing the same
group identifier) is preserved.

\begin{table}
\begin{centering}
\caption{\label{fig:deidentified_data}Original de-identified medical data }
\begin{tabular}{llll}
\textbf{Ethnicity} & \textbf{Date of Birth} & \textbf{Sex} & \textbf{Problem}\tabularnewline
\hline 
asian & 09/27/64 & female & hypertension\tabularnewline
\hline 
asian & 05/08/61 & female & obesity\tabularnewline
\hline 
asian & 04/18/64 & male & chest pain\tabularnewline
\hline 
black & 03/13/63 & male & hypertension\tabularnewline
\hline 
black & 03/18/63 & male & shortness of breath\tabularnewline
\hline 
black & 09/07/64 & female & obesity\tabularnewline
\hline 
white & 05/14/61 & male & chest pain\tabularnewline
\hline 
white & 05/08/63 & male & obesity\tabularnewline
\hline 
white & 09/15/61 & female & shortness of breath\tabularnewline
\hline 
\end{tabular}
\par\end{centering}
\end{table}

\begin{table}
\begin{centering}
\caption{\label{fig:three_anonymous}$3$-Anonymous data set}
\begin{tabular}{llcl}
\textbf{Ethnicity} & \textbf{Date of Birth} & \textbf{Sex} & \textbf{Problem}\tabularnewline
\hline 
asian & {[}61,64{]} & - & hypertension\tabularnewline
\hline 
asian & {[}61,64{]} & - & obesity\tabularnewline
\hline 
asian & {[}61,64{]} & - & chest pain\tabularnewline
\hline 
black & {[}61,64{]} & - & hypertension\tabularnewline
\hline 
black & {[}61,64{]} & - & shortness of breath\tabularnewline
\hline 
black & {[}61,64{]} & - & obesity\tabularnewline
\hline 
white & {[}61,64{]} & - & chest pain\tabularnewline
\hline 
white & {[}61,64{]} & - & obesity\tabularnewline
\hline 
white & {[}61,64{]} & - & shortness of breath\tabularnewline
\hline 
\end{tabular}
\par\end{centering}
\end{table}

\begin{table}
\begin{centering}
\caption{\label{fig:group_id}Left, relation between quasi-identifiers and group
identifier. Right, relation between group identifier
and confidential attribute.}
\begin{tabular}{llll}
\textbf{Ethnicity} & \textbf{Date of Birth} & \textbf{Sex} & \textbf{ID}\tabularnewline
\hline 
asian & 09/27/64 & female & 1\tabularnewline
\hline 
asian & 05/08/61 & female & 1\tabularnewline
\hline 
asian & 04/18/64 & male & 1\tabularnewline
\hline 
black & 03/13/63 & male & 2\tabularnewline
\hline 
black & 03/18/63 & male & 2\tabularnewline
\hline 
black & 09/07/64 & female & 2\tabularnewline
\hline 
white & 05/14/61 & male & 3\tabularnewline
\hline 
white & 05/08/63 & male & 3\tabularnewline
\hline 
white & 09/15/61 & female & 3\tabularnewline
\hline 
\end{tabular}~%
\begin{tabular}{cl}
\textbf{ID} & \textbf{Problem}\tabularnewline
\hline 
1 & hypertension\tabularnewline
\hline 
1 & obesity\tabularnewline
\hline 
1 & chest pain\tabularnewline
\hline 
2 & hypertension\tabularnewline
\hline 
2 & shortness of breath\tabularnewline
\hline 
2 & obesity\tabularnewline
\hline 
3 & chest pain\tabularnewline
\hline 
3 & obesity\tabularnewline
\hline 
3 & shortness of breath\tabularnewline
\hline 
\end{tabular}
\par\end{centering}
\end{table}

\subsection{Comparison of partitioning strategies}

When generating the partition based on the quasi-identifiers, small
values of the parameter $k$ are typically used. Usually, the variability
of the confidential attribute thus obtained is large enough not to
lead to attribute disclosure. However, when basing the partition on the
confidential attribute, small values of $k$ will almost certainly
lead to attribute disclosure, because in this case 
the within-group variability of the confidential attribute is small.

However, constructing the partition based on the confidential attribute
has one important advantage: it allows fixing the desired level of
variability for the confidential attribute. 
Indeed, parameter $k$ can be increased
to a value that provides effective disclosure limitation guarantees.
For instance, by setting $k$ to $0.1\times N$, we guarantee that
the confidential attribute for any individual is hidden inside a group
of individuals that amount to a $10\%$ of the actual sample. 

Note that if partitioning
is based on the quasi-identifiers, we cannot control the level of
variability of the confidential attribute inside each of the $k$-anonymous
groups: some of them may exhibit a large variability (which offers
protection against
attribute disclosure, but poor data utility) and others may not 
(which offers good data
utility, but high risk of attribute disclosure). The underlying problem is
the impossibility of
enforcing a {\em predetermined} amount of variability: variability
increases with $k$, but the relationship between $k$ and the amount
of variability of the confidential attribute is not clear. Usually,
$k$ must be small (if any utility is to be provided), which results
in poor disclosure limitation guarantees. 

\section{A bucketization construction 
to achieve $t$-closeness\label{sec:t_closeness}}

It has been argued above that when partitioning is based on the confidential
attribute, the value of $k$ must be increased to provide effective
disclosure limitation. In this section we seek to enforce a stronger
disclosure limitation criterion: $t$-closeness. $t$-Closeness limits
the knowledge gain that an intruder can derive 
from the $k$-anonymous groups.
The distribution of the confidential attribute within each of the
$k$-anonymous groups is required to be similar to the distribution
of the confidential attribute on the whole data set. 

%JORDI. tret
%As recalled in Section~\ref{sub:t-Closeness} above,
For $t$-closeness to
be satisfied, the distance between the data set-level 
and the group-level distribution
of the confidential attribute must be less than $t$ for any group.
When $t$-closeness was introduced, the Earth Mover's distance (EMD)
was proposed~\cite{Li2007}. 
The EMD measures the minimal amount of work required
to transform one distribution to another by moving probability mass
between each other. 

The kind of guarantee that $t$-closeness offers depends on the distance
function used. We aim at achieving an $\varepsilon$-differentially
privacy-like guarantee, and this requires us to use a different distance.
$\varepsilon$-Differential privacy guarantees that, for any two data
sets that differ in one individual, the probability for a query response
computed on either data set 
to belong to an arbitrary set $S$ differs 
at most by a factor $\exp(\varepsilon)$.
The distance function we propose mimics the $\varepsilon$-differential
privacy criterion.
\begin{defn}
\label{def:distance}Given two random distributions $\mathcal{D}_{1}$
and $\mathcal{D}_{2}$, we define the distance between $\mathcal{D}_{1}$
and $\mathcal{D}_{2}$ as:
\[
d(\mathcal{D}_{1},\mathcal{D}_{2})=
\max_{S}\{\frac{\Pr_{\mathcal{D}_{1}}(S)}{\Pr_{\mathcal{D}_{2}}(S)},
\frac{\Pr_{\mathcal{D}_{2}}(S)}{\Pr_{\mathcal{D}_{1}}(S)}\}
\]
where $S$ is an arbitrary (measurable) set, and we take the quotients
of probabilities to be zero, if both $\Pr_{\mathcal{D}_{1}}(S)$ and
$\Pr_{\mathcal{D}_{2}}(S)$ are zero, and to be infinity if only the
denominator is zero.
\end{defn}

If the distributions $\mathcal{D}_{1}$ and $\mathcal{D}_{2}$ are
discrete (as it is the case for the sampling distribution of the confidential
attribute in a microdata set), computing the distance between them
is simpler: taking the maximum over the possible individual values
suffices.
\begin{prop}
\label{prop1}
If distributions $\mathcal{D}_{1}$ and $\mathcal{D}_{2}$ take values
in a discrete set $\{x_{1},$ $\ldots,$ $x_{N}\}$, then the distance $d(\mathcal{D}_{1},\mathcal{D}_{2})$
can be computed as 
\begin{equation}
d(\mathcal{D}_{1},\mathcal{D}_{2})=\max_{i=1,\ldots,N}\{\frac{\Pr_{\mathcal{D}_{1}}(x_{i})}{\Pr_{\mathcal{D}_{2}}(x_{i})},\frac{\Pr_{\mathcal{D}_{2}}(x_{i})}{\Pr_{\mathcal{D}_{1}}(x_{i})}\}\label{eq:distance}
\end{equation}
\end{prop}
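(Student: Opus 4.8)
The plan is to show the two inequalities ``$\ge$'' and ``$\le$'' between $d(\mathcal{D}_{1},\mathcal{D}_{2})$ and the quantity $M:=\max_{i}\{\Pr_{\mathcal{D}_{1}}(x_{i})/\Pr_{\mathcal{D}_{2}}(x_{i}),\Pr_{\mathcal{D}_{2}}(x_{i})/\Pr_{\mathcal{D}_{1}}(x_{i})\}$ appearing on the right-hand side of~(\ref{eq:distance}). For the inequality $d\ge M$ I would simply note that every singleton $\{x_{i}\}$ is an admissible set $S$ in Definition~\ref{def:distance}, so the maximum defining $d$ is taken over a family of sets that already includes the singletons; hence $d\ge M$, with the same conventions on $0/0$ and on division by zero. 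A useful reduction at this point: if some $x_{i}$ belongs to the support of exactly one of $\mathcal{D}_{1},\mathcal{D}_{2}$, then $M=\infty$ and the claimed identity holds trivially, since $d=\infty$ as well. So I would henceforth assume that $\mathcal{D}_{1}$ and $\mathcal{D}_{2}$ share a common support $\Sigma\subseteq\{x_{1},\ldots,x_{N}\}$ with strictly positive point masses on $\Sigma$, and $M<\infty$.

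For the reverse inequality $d\le M$ I would fix an arbitrary (measurable) $S$, discard the points outside $\Sigma$ (which carry no mass for either distribution) and the empty-set case (conventionally $0/0\le M$), and then invoke the ``mediant'' observation that a ratio of sums of positive numbers is a convex combination of the termwise ratios:
\[
\frac{\Pr_{\mathcal{D}_{1}}(S)}{\Pr_{\mathcal{D}_{2}}(S)}
=\sum_{x_{i}\in S}\frac{\Pr_{\mathcal{D}_{2}}(x_{i})}{\Pr_{\mathcal{D}_{2}}(S)}\cdot\frac{\Pr_{\mathcal{D}_{1}}(x_{i})}{\Pr_{\mathcal{D}_{2}}(x_{i})}
\le\max_{x_{i}\in S}\frac{\Pr_{\mathcal{D}_{1}}(x_{i})}{\Pr_{\mathcal{D}_{2}}(x_{i})}\le M,
\]
because the weights $\Pr_{\mathcal{D}_{2}}(x_{i})/\Pr_{\mathcal{D}_{2}}(S)$ are nonnegative and sum to one. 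Swapping the roles of $\mathcal{D}_{1}$ and $\mathcal{D}_{2}$ gives the same upper bound for $\Pr_{\mathcal{D}_{2}}(S)/\Pr_{\mathcal{D}_{1}}(S)$, so $\max\{\Pr_{\mathcal{D}_{1}}(S)/\Pr_{\mathcal{D}_{2}}(S),\Pr_{\mathcal{D}_{2}}(S)/\Pr_{\mathcal{D}_{1}}(S)\}\le M$ for this $S$; taking the supremum over all $S$ yields $d(\mathcal{D}_{1},\mathcal{D}_{2})\le M$. Combining the two inequalities proves~(\ref{eq:distance}).

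I expect the only real nuisance to be the bookkeeping around the zero-probability conventions of Definition~\ref{def:distance}: deciding precisely which sets $S$ force an infinite ratio, and checking that the right-hand side of~(\ref{eq:distance}) reproduces exactly those same infinities. Once the mismatched-support case has been separated off (both sides equal $\infty$), the argument is entirely elementary — no measure-theoretic subtlety survives because the range $\{x_{1},\ldots,x_{N}\}$ is finite, and the only fact used is that the ratio of two finite sums of positive reals lies between the least and the greatest of the coordinatewise ratios.
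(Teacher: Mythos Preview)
The paper states Proposition~\ref{prop1} without proof, so there is no argument to compare against. Your proposal is correct and complete: the singleton inclusion gives $d\ge M$ immediately, the mismatched-support case is dispatched cleanly, and the mediant (convex-combination) bound on ratios of sums handles the remaining direction. Nothing is missing.
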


To satisfy $t$-closeness, the groups in the partition must be selected such
that the distance of the distribution of the confidential attribute
on the whole data set and the distribution on each of 
the groups is less
than $t$. When using the previously defined distance, if we work
with the sampling distribution of the confidential attribute (assuming
that at least one of the values of the confidential attribute has multiplicity
less than the cardinality of the partition), the distance is always
infinity.
The reason is that the distance due to values of the confidential
attribute that do not appear within the group is infinity (according
to Definition~\ref{def:distance}). To
avoid this issue, instead of working with the sampling distribution
of the confidential attribute, we work with a bucketized version of
it, where several points are clustered into a set of buckets $B_{1},\ldots,B_{n}$.
In Figure~\ref{fig:discretization} the values of the confidential
attribute in the original data have been clustered in buckets 
$B_1$, $B_2$ and $B_3$
that contain four points each. From this step we get a distribution
for the confidential attribute with diminished granularity.

\begin{figure}
\begin{centering}
\includegraphics[width=9cm]{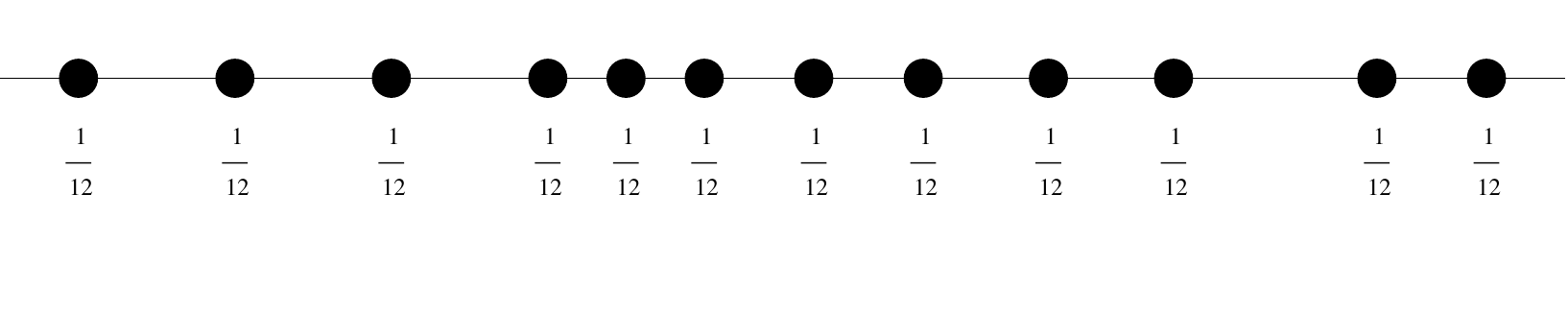}
\par\end{centering}

\begin{centering}
\includegraphics[width=9cm]{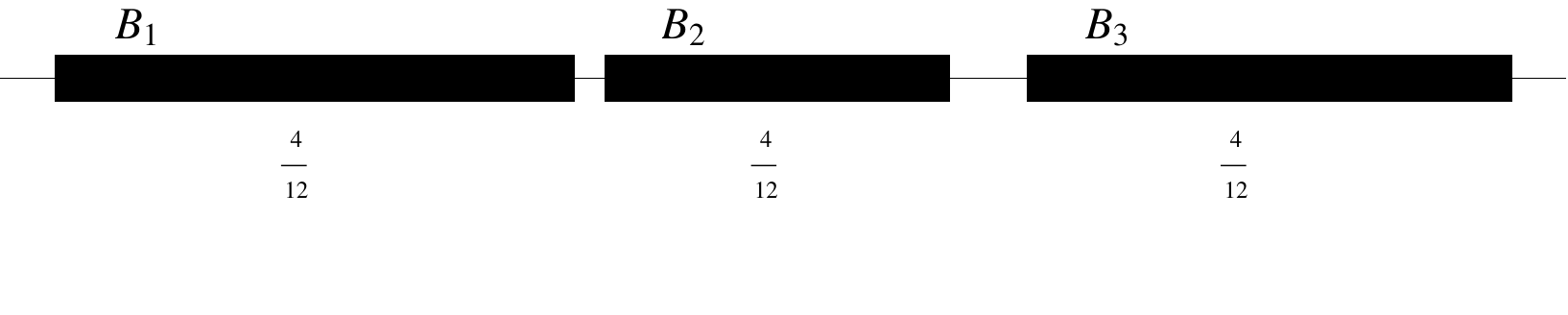}
\par\end{centering}

\caption{\label{fig:discretization}Top, original confidential attribute 
values. Bottom, bucketized confidential attribute values.}
\end{figure}

By using the proposed bucketization it is feasible to attain $t$-closeness
for a finite $t$. For instance, Figure~\ref{fig:t_close_partition}
shows a $4$-anonymous partition of the data set that satisfies $1.5$-closeness,
according to the previously defined distance. The sampling distribution
of the original data assigns probability $1/3$ to each of the buckets
$B_1$, $B_2$ and $B_3$;
hence, the bucket-level distribution $\mathcal{D}$ of the confidential
attribute in the original data set is $\Pr(B_1)=\Pr(B_2)=\Pr(B_3)=1/3$.
Each of the groups in the partition ($P_{1}$, $P_{2}$, $P_{3}$) takes 
either one or two points from each bucket.
Hence, the bucket-level
distribution $\mathcal{D}(P_1)$ of the confidential attribute for group $P_1$
is $\Pr(B_1)=1/2$ and $\Pr(B_2)=\Pr(B_3)=1/4$;
for group $P_2$ the distribution, denoted by $\mathcal{D}(P_2)$, is 
$\Pr(B_1)=\Pr(B_3)=1/4$ and $\Pr(B_2)=1/2$;
for group $P_3$ the distribution, denoted by $\mathcal{D}(P_3)$, is
$\Pr(B_1)=\Pr(B_2)=1/4$ and $\Pr(B_3)=1/2$.
By using Equation~(\ref{eq:distance})
to measure the distance between $\mathcal{D}$ and $\mathcal{D}(P_i)$,
for all $i$,
we conclude that the generated partition satisfies $1.5$-closeness.
In Table~\ref{fig:t_close_partition} we have depicted both the
original set of values of the confidential attribute, and the generated
buckets.
%JORDI. footnote treta
%\footnote{Strictly speaking, $t$-closeness is only satisfied for the
%bucketed distribution, and thus the $k$-anonymous partitioning should
%be done over the bucketed distribution. However, if we assume that
%the marginal distribution of the confidential attribute is not disclosive,
%we can give it to the data user. In that case, the data user 
%can transform the partition of the bucketed data into a compatible
%partition of the original data.}.

\begin{figure}
\begin{centering}
\begin{tabular}{cc}
\includegraphics[width=7cm]{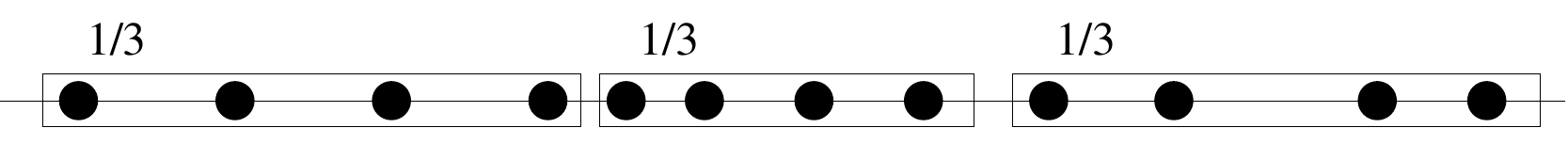} & original data\tabularnewline
\includegraphics[width=7cm]{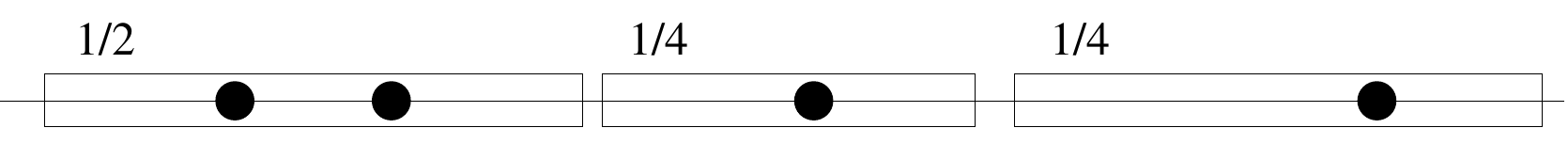} & group $P_{1}$\tabularnewline
\includegraphics[width=7cm]{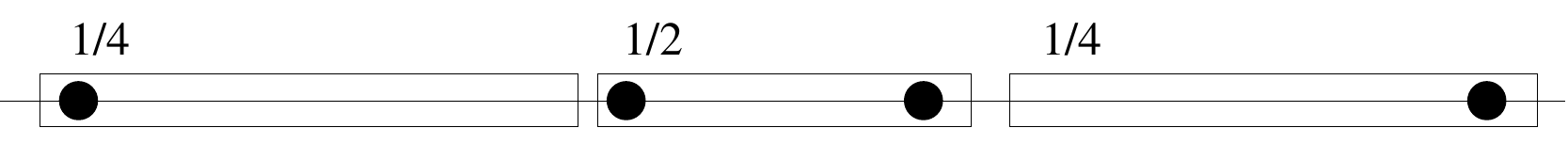} & group $P_{2}$\tabularnewline
\includegraphics[width=7cm]{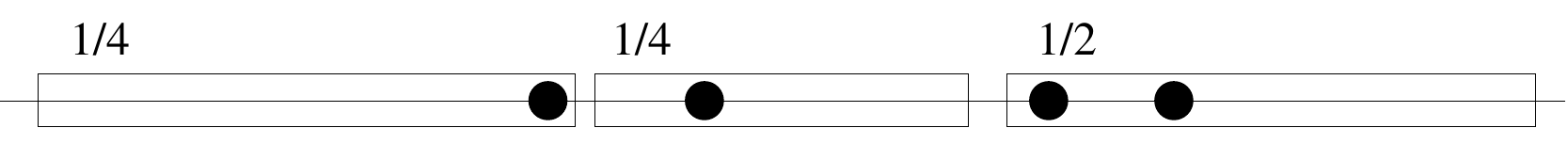} & group $P_{3}$\tabularnewline
\end{tabular}
\par\end{centering}
\caption{\label{fig:t_close_partition}Sample partition that satisfies $1.5$-closeness}
\end{figure}

Let the points in the original data set depicted 
in Figure~\ref{fig:t_close_partition}
be of the form $(qi_{i},c_{i})$, where $c_i$
the value of the confidential attribute
and $c_{i}\ge c_{j}$ for $i\ge j$.
According to the Anatomy approach to $k$-anonymity, the $4$-anonymous
$1.5$-close resultant data, associated to partition $\{P_{1},P_{2},P_{3}\}$
in Figure~\ref{fig:t_close_partition} and bucketization $\{B_{1},B_{2},B_{3}\}$
in Figure~\ref{fig:discretization}, consists of 
the two linked tables displayed 
in Table~\ref{fig:t_close_output}.

\begin{table*}
\caption{\label{fig:t_close_output}$4$-Anonymous $1.5$-close data set associated
to the partition $\{P_{1},P_{2},P_{3}\}$ in Figure~\ref{fig:t_close_partition}
and bucketization in Figure~\ref{fig:discretization}}
\begin{centering}
\begin{tabular}{ccccccccccccc}
\hline 
QI & $qi_{1}$ & $qi_{2}$ & $qi_{3}$ & $qi_{4}$ & $qi_{5}$ & $qi_{6}$ & $qi_{7}$ & $qi_{8}$ & $qi_{9}$ & $qi_{10}$ & $qi_{11}$ & $qi_{12}$\tabularnewline
\hline 
Group Id & $P_{2}$ & $P_{1}$ & $P_{1}$ & $P_{3}$ & $P_{2}$ & $P_{3}$ & $P_{1}$ & $P_{2}$ & $P_{3}$ & $P_{3}$ & $P_{1}$ & $P_{2}$\tabularnewline
\hline 
 &  &  &  &  &  &  &  &  &  &  &  & \tabularnewline
\hline 
Group Id & $P_{1}$ & $P_{1}$ & $P_{1}$ & $P_{1}$ & $P_{2}$ & $P_{2}$ & $P_{2}$ & $P_{2}$ & $P_{3}$ & $P_{3}$ & $P_{3}$ & $P_{3}$\tabularnewline
\hline 
Bucket  & $B_{1}$ & $B_{1}$ & $B_{2}$ & $B_{3}$ & $B_{1}$ & $B_{2}$ & $B_{2}$ & $B_{3}$ & $B_{1}$ & $B_{2}$ & $B_{3}$ & $B_{3}$\tabularnewline
\hline 
\end{tabular}
\par\end{centering}
\end{table*}

\subsection{Bucketization of the original data\label{sub:Bucketizing}}

The selected bucketization of the confidential attribute has a large
impact on data utility: if the bucketization is too coarse, 
the information
loss in the confidential attribute is large; if the bucketization
is too fine, it may not be possible to attain $t$-closeness. In this
section we seek to determine the optimal size (in terms of probability
mass) of the buckets.

Figure~\ref{fig:2_close_distribution} illustrates two probability
distributions:
the uniform distribution represents the global distribution of the 
confidential attribute (over the whole data set),
and the other distribution corresponds to the
confidential attribute restricted to a group 
$P_i$.
These two distributions satisfy $2$-closeness with the distance
of Definition~\ref{def:distance}: 
the density of the restriction to $P_{i}$ equals $1/2$ 
for all the
range of values of the confidential attribute, except for a range
of values that has density 2.

\begin{figure}
\begin{centering}
\includegraphics[width=6cm]{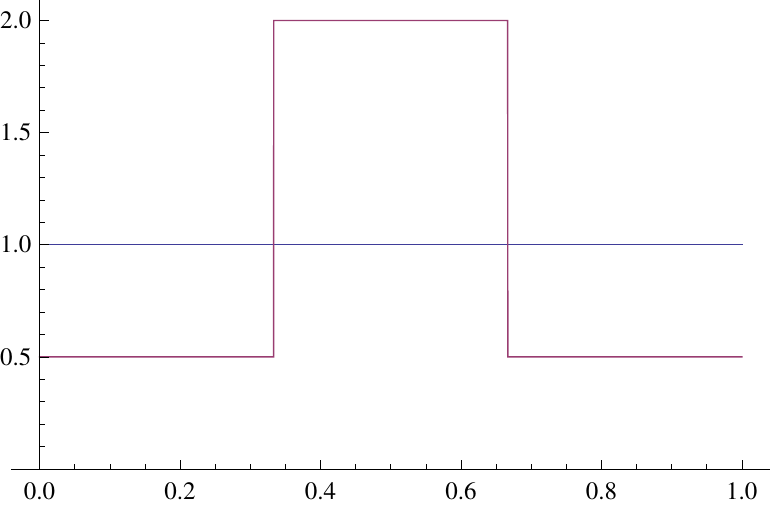}
\par\end{centering}

\caption{\label{fig:2_close_distribution}Probability distributions satisfying
$2$-closeness with the distance of Definition~\ref{def:distance}}
\end{figure}

When bucketizing the distributions in Figure~\ref{fig:2_close_distribution},
the range of values with density 2 should exactly
correspond to a bucket or a union of buckets, in order
to maximize the utility of the data. This is illustrated
in Figure~\ref{fig:compare_buckets}, whose top row shows
bucketized versions of the distributions 
of Figure~\ref{fig:2_close_distribution} using {\em three} buckets:
top left graph, bucketized version of the global distribution;
top right graph, bucketized version of the restriction to $P_i$.
Note that, for each of the buckets, the global probability
and the probability restricted to $P_i$ differ by a multiplicative
factor of two; that is, we attain $2$-closeness with equality for each
of the buckets. The bottom row of 
Figure~\ref{fig:compare_buckets} 
shows the bucketized versions of the
distributions in Figure~\ref{fig:2_close_distribution}
using {\em two} buckets. It can be seen that, with the two proposed buckets,
both bucketized distributions are identical; 
that is, we get $1$-closeness, which is
stronger than the intended $2$-closeness, but comes at the cost
of data utility loss.
Therefore, the number and hence the 
probability mass of the optimal buckets is dependent
on the level of $t$-closeness that we want. 

\begin{figure*}
\begin{center}
\begin{tabular}{ccc}
\includegraphics[width=6cm]{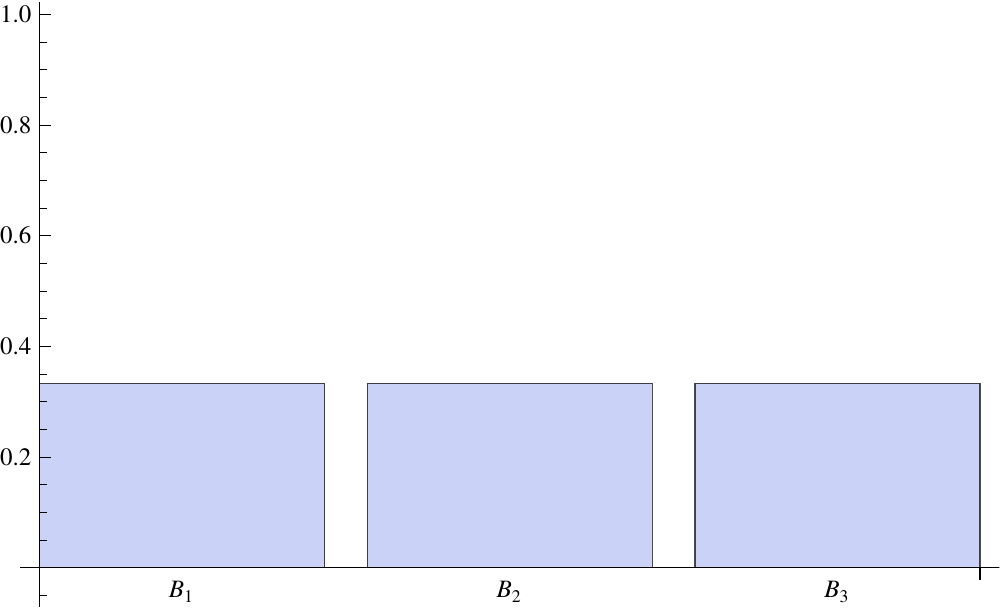}
& \rule{1cm}{0cm} &
\includegraphics[width=6cm]{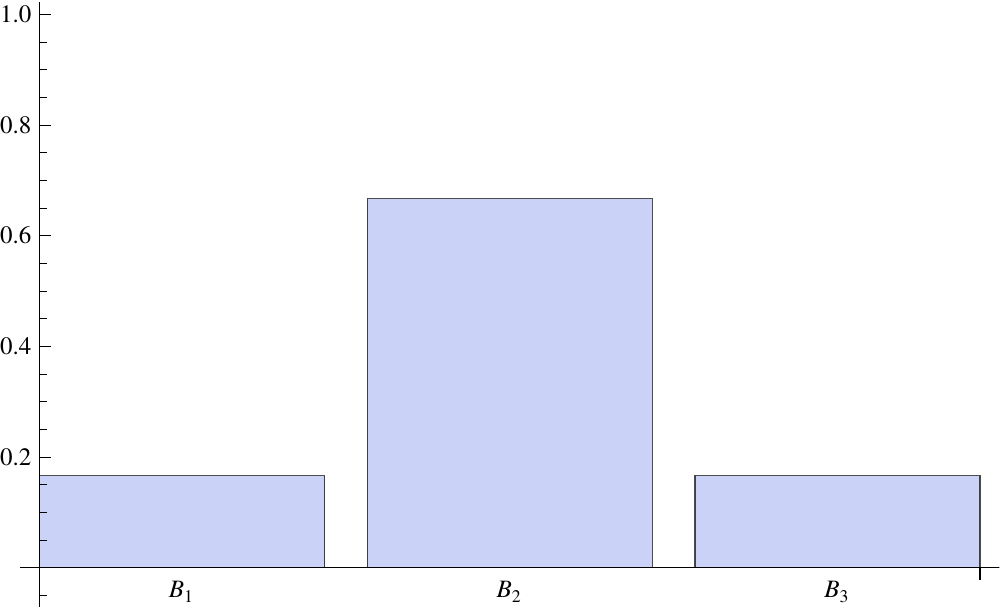} \\
\includegraphics[width=6cm]{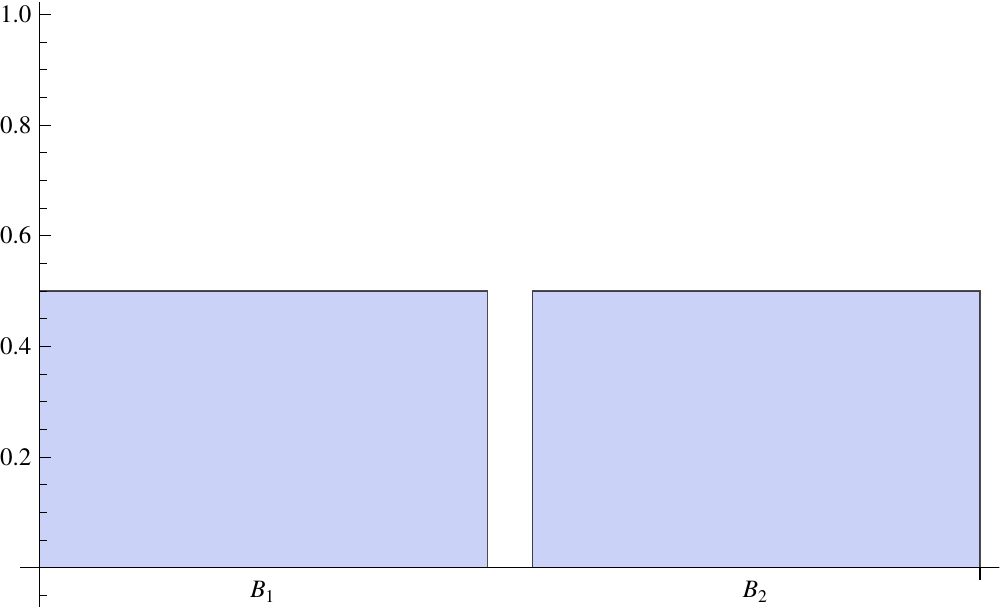} 
& \rule{1cm}{0cm} & \includegraphics[width=6cm]{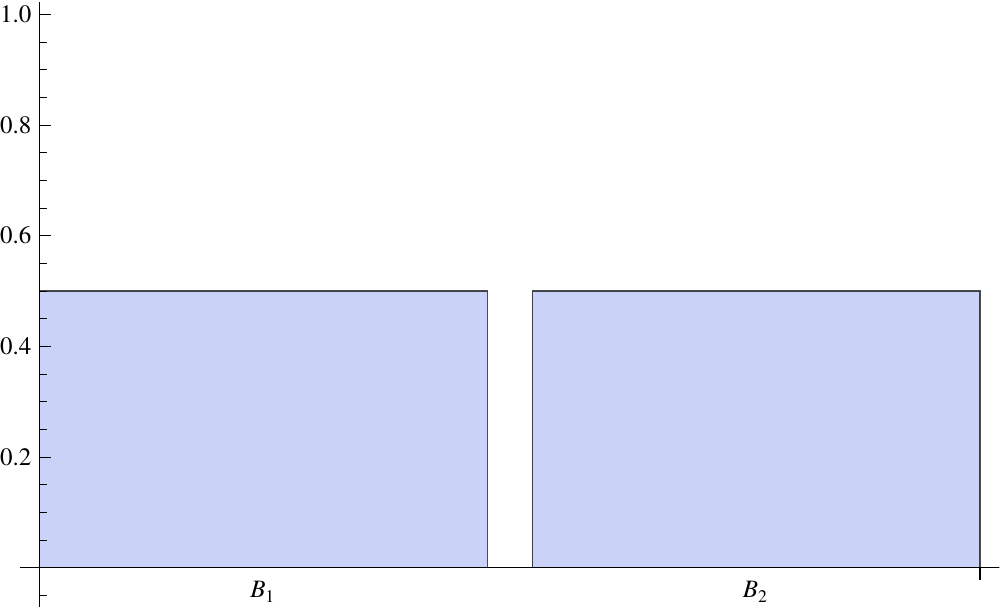}
\end{tabular}
\end{center}
\caption{\label{fig:compare_buckets}Bucketized distributions of the confidential
attribute for the whole data (left) and for a group $P_i$ (right).
Three buckets are considered in the top distributions, 
and two in the bottom ones.}
\end{figure*}

Let us now restate the bucketization process in an algorithmic way:
\begin{enumerate}
\item Let the number of records in the original data set be $N$.
\item Cluster the $N$ values of the confidential attribute in 
the original data set
into a number $b$ of buckets in such a way that:
\begin{enumerate}
\item all buckets
accumulate the same probability mass $1/b$, that is,
each bucket contains $[N/b]$ values;
\item values within a bucket are as similar as possible
({\em e.g.} for a numerical confidential value, each bucket
would contain $[N/b]$ consecutive values).
\end{enumerate}
In this way, we can 
view the bucketized distribution of the confidential
attribute in the original data set as being
uniform.
\item Partition the records in the original data set 
into a number of groups,
in such a way that every group satisfies that:
\begin{enumerate}
\item it contains $k$ (or more) records,
in view of achieving $k$-anonymity;
\item  no bucket contains a proportion
of the confidential attribute values of the group
higher than $t/b$ or lower than $1/(tb)$ (that is,
so that the bucketized distribution of the confidential attribute
in the group is at distance less than $t$ from
the bucketized distribution of the confidential 
attribute in the overall data set, according to 
Definition~\ref{def:distance}).
\end{enumerate}
\end{enumerate}

In general, the smaller the number $b$ of buckets, the easier
it is to achieve $t$-closeness, for any given $t$. 
In the extreme case $b=1$, all bucketized distributions
are 1-close ({\em e.g.} there is a single bucketized distribution).
In the other extreme case $b=N$ (no bucketization) 
it has been argued above (right after Proposition~\ref{prop1}) that
the distance between the distributions of the confidential attribute
on the global data set and on a particular group is infinity; hence,
one can only achieve $\infty$-closeness.
Hence, at most $b$ can be $k$, the number of values in each group,
and buckets should be large enough so that, when restricted to any
group, any bucket contains at least one value.

On the other hand, if the privacy requirement is $t$-closeness,
for a certain $t$, it seems reasonable to use up the allowed
distance $t$ between the global distribution of the confidential attribute
and the restriction of that distribution within each group.
Using up the allowed distance between the 
confidential attribute distributions 
enables forming groups that are more homogeneous
in terms of the quasi-identifiers, and hence decreases 
information loss.
We want each of the $k$-anonymous groups 
to emphasize a specific bucket; that is, the probability distribution
of the restriction to the partition must differ from the global distribution
by a factor of $t$ for a specific bucket, and by a factor of $1/t$
for the rest of buckets. 
Now, in the distribution of the confidential attribute for the 
original data set each bucket accumulates
probability mass $1/b$, and the total probability mass 
of the distribution restricted to a group must add to 1. Hence, we have
\[ t \times 1/b + (1/t) \times (1-1/b) = 1 \]
which yields a number of buckets $b=t+1$.

\subsection{$t$-Closeness construction}

Consider the original data set $D=\{(qi_{i},c_{i})|i=1,\ldots,N\}$,
where $qi_{i}$ refers to the quasi-identifier attributes, and $c_{i}$
to the confidential attribute. We want to generate a $k$-anonymous
$t$-close data set $D'$. 

According to Section~\ref{sub:Bucketizing}, we need to reduce the granularity
of the confidential attribute. In particular, it was proposed to group
the values of the confidential attribute in buckets of 
$[N/b]=[N/(t+1)]$
records. Assuming that the records can be ordered in terms of the confidential
attribute $c_{i}$ 
(this is possible if $c_i$ is numerical 
or ordinal) we can list the contents of the buckets as follows: 
\[
\begin{array}{c}
B_{1}=\{c_{1},\ldots,c_{\left[\frac{N}{t+1}+0.5\right]}\}\\
B_{2}=\{c_{\left[\frac{N}{t+1}+0.5\right]+1},\ldots,c_{\left[2\times\frac{N}{t+1}+0.5\right]}\}\\
\vdots\\
B_{t+1}=\{c_{\left[t\times\frac{N}{t+1}+0.5\right]+1},\ldots,c_{N}\}
\end{array}
\]
The $k$-anonymous $t$-close data set is generated as follows: 
\begin{enumerate}
\item Replace the values of the confidential attribute
in the original data set $D$ by the corresponding buckets,
and call $\bar{D}$ the resulting data set;
\item Partition $\bar{D}$ in groups of $k$ (or more) records.
\end{enumerate}

In the second step above, not all values of $k$ are
equally suitable. For instance, it must be $k \geq t+1$,
because we showed in Section~\ref{sub:Bucketizing}
that $b \leq k$ and $b=t+1$.
In fact, we can write:
\[
k=\frac{N}{(t+1)l}
\]
where $l\ge1$ is a natural number that counts the number of groups 
that emphasize each of the buckets.
In fact, if we take into account
the previous inequality $k\geq t+1$, we conclude that $l$ belongs
to the set $\{1,\ldots,\left\lfloor \frac{N}{(t+1)^{2}}\right\rfloor \}$.
Similarly to the discretization of the confidential attribute, the
value of $k$ produced by the previous formula may not be exact. In
that case we need to adjust the size $k_{i}$ of each group $P_{i}$ to
\[ k_{i}=\left[i\frac{N}{(t+1)l}\right]-\left[(i-1)\frac{N}{(t+1)l}\right] \]

\begin{table}
\caption{\label{tab:probabilities}Theoretical probability mass of the distribution
of the confidential attribute in each of the buckets corresponding
to the discretization of the confidential attribute. }

\begin{centering}
\begin{tabular}{ccccc}
 & $B_{1}$ & $B_{2}$ & $\ldots$ & $B_{t+1}$\tabularnewline
\hline 
Original data & $\nicefrac{1}{t+1}$ & $\nicefrac{1}{t+1}$ & $\ldots$ & $\nicefrac{1}{t+1}$\tabularnewline
\hline 
$P_{1}$ & $\nicefrac{t}{t+1}$ & $\nicefrac{1}{t(t+1)}$ & $\ldots$ & $\nicefrac{1}{t(t+1)}$\tabularnewline
$P_{2}$ & $\nicefrac{1}{t(t+1)}$ & $\nicefrac{t}{t+1}$ & $\ldots$ & $\nicefrac{1}{t(t+1)}$\tabularnewline
$\vdots$ & $\vdots$ & $\vdots$ &  & $\vdots$\tabularnewline
$P_{t+1}$ & $\nicefrac{1}{t(t+1)}$ & $\nicefrac{1}{t(t+1)}$ & $\ldots$ & $\nicefrac{t}{t+1}$\tabularnewline
\hline 
\end{tabular}
\par\end{centering}

\end{table}

Table~\ref{tab:probabilities} gives the theoretical probability
mass of 
each bucket of the confidential attribute for each of the groups.
We assume that $l=1$ and that group $P_{1}$
emphasizes bucket $B_{1}$, $P_{2}$ emphasizes bucket $B_{2}$, and
so on.
The exact theoretical probability masses
may not be achievable due to the discrete nature of the data. First
of all, it may not be possible to obtain a discretization of the confidential
attribute in buckets with probability mass $1/(t+1)$.
Also, when generating the $k$-anonymous partition $P_{1},\ldots,P_{t+1}$,
it may not be possible for each of the groups to contain
exactly $k$ records. Let $k_{i}$ be the number of records in $P_{i}$
and let $p_j$ be the probability that a record in the original data set
belongs to bucket $B_j$.
For $t$-closeness to be achieved, the following must 
hold for every group $P_i$: (i)
at most $\left\lfloor k_{i}p_{i}t\right\rfloor $ records must have
$B_{i}$ as the value for the confidential attribute; and (ii) at
least $\left\lceil k_{i}p_{j}/t\right\rceil $ records must have $B_{j}$
as confidential attribute. For these conditions to hold, we can start
selecting $\left\lceil k_{i}p_{j}/t\right\rceil $ records with confidential
attribute $B_{j}$, for each $j\ne i$, and complete the partition
set with $k_{i}-t\left\lceil k_{i}p_{j}/t\right\rceil $ records with
confidential attribute $B_{i}$.

\section{From $t$-closeness to $\varepsilon$-differential privacy}
\label{from}

$t$-Closeness and $\varepsilon$-differential privacy take approaches
towards disclosure limitation that are essentially different. However,
for microdata releases a link between them can be found if we make
some assumptions on the prior knowledge of intruders: 
\begin{enumerate}
\item The marginal
distribution of the confidential attribute is known to the intruder;
actually, this assumption is a requirement, because a $t$-close data 
release preserves this marginal distribution.
\item The intruder knows whether an individual's record is in the data
set; this is also a requirement, as either of the approaches proposed to
generate the $t$-close data set, Anatomy and probabilistic $k$-anonymity,
preserves the quasi-identifiers.
\item When regular $k$-anonymity is used,
another assumption on the intruder's knowledge is required: the intruder's
knowledge about the confidential attribute is limited to its marginal
distribution.
\end{enumerate}

We aim at showing that, in the case of 
a microdata release, $\exp(\varepsilon)$-closeness implies 
$\varepsilon$-differential privacy.
In other words, we
want to show that the information that
an intruder obtains from accessing the released
$\exp(\varepsilon)$-close microdata set (generated as per Section~\ref{sec:t_closeness})
satisfies the $\varepsilon$-differential privacy condition. 

Let $I$ be a specific individual in the data set. Before accessing the data
set, the intruder views the value of the confidential attribute of individual
$I$ as being distributed according to the distribution of the confidential
attribute over the whole data set. Given the assumption that limits
the prior knowledge to the marginal distribution of the confidential
attribute, that is the most precise information that the intruder
has about $I$. $\varepsilon$-Differential privacy guarantees that
the knowledge gain obtained from the response to a query that asks
for $I$'s confidential attribute is at most $\exp(\varepsilon)$;
that is, the distribution of the response must differ at most by a
factor of $\exp(\varepsilon)$ from the assumed prior knowledge. Note
that if we did not take into account the intruder's prior knowledge 
(usual $\varepsilon$-differentially private mechanisms
do not assume any prior knowledge), the $\exp(\varepsilon)$-differentially
private distribution for the confidential attribute of individual $I$'s
would be different. However, the possibility of using the available
prior knowledge exists, and thus any distribution that differs from
it by a factor of $\exp(\varepsilon)$ satisfies $\exp(\varepsilon)$-differential
privacy.

$t$-Closeness is an improvement of $k$-anonymity. As such, it
seeks to thwart record re-identification by making each record indistinguishable
from $k-1$ other records as far as the quasi-identifiers are concerned.
Apart from that, $t$-closeness requires that the sampling distribution
of the confidential attribute within each of the $k$-anonymous groups
be similar to the sampling distribution over the whole data
set. Hence, $t$-closeness effectively limits the knowledge gain that 
the intruder obtains, that is, it achieves differential privacy.

\subsection{Uninformed intruders}

Consider an uninformed intruder. By inspecting the released $\exp(\varepsilon)$-close
$k$-anonymous data set, the intruder associates a $k$-anonymous
group of records to individual $I$. In this way, 
the intruder learns the distribution
of the confidential attribute within the $k$-anonymous group 
$P$ that contains $I$. As the intruder's prior knowledge is limited
to the marginal distribution of the confidential attribute, after
accessing the data, the best the intruder can do is to associate the
distribution of the confidential attribute in $P$ to individual $I$.
As the released data set satisfies $\exp(\varepsilon)$-closeness
(generated as per Section~\ref{sec:t_closeness}), the distribution
of $P$ differs at most in a factor $\exp(\varepsilon)$ from the
distribution of the whole data set; that is, it satisfies $\varepsilon$-differential
privacy.

\subsection{Informed intruders}

For an informed intruder (whose knowledge goes beyond the distribution
of the confidential attribute over the whole data set),
in general $t$-closeness does not imply differential privacy.
To see this, consider an intruder who knows the value of the
confidential attribute for $k-1$ of the $k$ individuals in one of
the $k$-anonymous groups. Such an intruder can determine
(with certainty) the confidential attribute value for the remaining individual
in the group by simple inspection of the released data; in
differential privacy terms, access to the data set has produced
a infinite knowledge gain on the confidential attribute of that specific
individual. 

The above situation is unavoidable if, as $k$-anonymity does, we intend to
preserve the thruthfulness of the confidential attribute inside the
$k$-anonymous groups. However, as we showed in Section~\ref{sec:Optimal-generic--anonymous},
by increasing $k$, the problem is mitigated. The greatest mitigation
is attained when $k$ equals the number of records in the data set.
In such case, for the intruder to determine the confidential attribute
value of any individual with certainty, he should know the confidential
attribute for all the other individuals in the data set (strictly speaking,
it would be enough to know that none of the other individuals take
one of the values in the released data set). The problem with such
a large $k$ is that it is likely to severely damage utility. 

The destruction of data utility can be mitigated if we hide the $k$-anonymous
groups. In this way, a smaller $k$ can be used, thereby preserving the
utility of the data, and a protection equivalent to taking $k$ equal
to $N$ is attained. Hiding the $k$-anonymous groups is
feasible if instead of regular $k$-anonymity, we enforce probabilistic
$k$-anonymity~\cite{fuzzieee}.
Probabilistic $k$-anonymity can be seen as an instance
of the Anatomy method for $k$-anonymity: instead of associating a
sampling distribution to each of the groups, the values of
the confidential attribute are permuted and assigned to individual
records. This process can be viewed as taking a sample of the sampling 
distribution
for each record. When using probabilistic $k$-anonymity, 
the intruder cannot determine which records form each of the groups; 
thus, she cannot use the information about a specific
individual to increase her knowledge on the other individuals of the
group.

%JOSEP20130321. Tret future research.
\section{Conclusions}
\label{conclusion}

We have shown that the $k$-anonymity family of models 
is powerful enough to achieve $\varepsilon$-differential 
privacy in the context of data publishing. Specifically,
using a suitable construction,
we have shown that $\exp(\varepsilon)$-closeness
implies $\varepsilon$-differential privacy for uninformed
intruders and approximate $\varepsilon$-differential privacy
for informed intruders. Our $t$-closeness construction
based on bucketization is also a contribution in its own
right.

%JOSEP20130321. Tret aixo d'aqui i posat al final de la tesi, al cap
%de conclusions.
%Future research will include extending
%the proposed approach for nominal confidential
%attributes, which cannot be ordered.
%%JOSEP20130321. Afegida frase.
%We will also provide a generalization to multiple confidential
%attributes.
%Experimental work will be reported to compare
%the utility of the $\varepsilon$-differentially private 
%data sets obtained via bucketized $\exp(\varepsilon)$-closeness. 
%The very nature of our construction, based on $k$-anonymity,
%gives reasonable hopes that more utility may be preserved than
%the one offered by the Laplace noise addition typically used
%to achieve $\varepsilon$-differential privacy: for example, by design,
%our approach does not yield any off-range values, which may however appear
%in noise addition procedures.
%We will also explore the exact privacy guarantees offered
%by the approximate $\varepsilon$-differential privacy
%obtained with our construction for the case of informed intruders.

\lhead[\chaptername~\thechapter]{\rightmark}

\rhead[\leftmark]{}

\lfoot[\thepage]{}

\cfoot{}

\rfoot[]{\thepage}

\chapter{Conclusions}

\section{Contributions}

This thesis has dealt with disclosure limitation in data releases.
Among the available privacy criteria, we have focused on $k$-anonymity
and $\varepsilon$-differential privacy. The focus has primarily been
placed on improving data utility, but we have also dealt with the
inherent limitations of $k$-anonymity, and with the combination of
$k$-anonymity
%JORDI. Afegit
(or $t$-closeness)
and $\varepsilon$-differential privacy. More specifically,
our contributions are:
\begin{itemize}
\item We have reviewed $k$-anonymity and some of its limitations. In particular,
we have shown that $k$-anonymity has a suboptimal behavior in presence
of informed intruders, due to the ``curse of dimensionality''. To
improve data utility we have proposed a new 
%JOSEPTHESIS. criterion -> privacy model
privacy model, which relaxes the
requirements of $k$-anonymity by imposing only a probability of
re-identification equal to $1/k$. We have shown 
that our proposal offers equivalent
disclosure limitation guarantees to those of $k$-anonymity, and allows
for improved data utility. The improvement on data utility 
is chiefly due to the ability to use multiple partitions of the data
set. It allows us to offer improved privacy guarantees against informed
intruders and still keep the data useful.
%JOSEPTHESIS. Reescrit una mica el paragraf seguent
\item The Laplace distribution is the most commonly used data-independent
noise distribution to attain $\varepsilon$-differential privacy. We have
shown that the Laplace distribution is not optimal: another distribution
exists which satisfies the $\varepsilon$-differential privacy condition
and has its probability mass more concentrated around zero. For the
univariate case, we have determined the form of and constructed 
all optimal data-independent
distributions. For the multivariate case, we have shown that a specific
family of distributions is optimal. Regarding data utility, we 
have shown
that for the univariate case the improvement
of the optimal distribution is small (and thus the Laplace 
distribution is near-optimal),
but for the multivariate case the improvement can be significant.
\item $\varepsilon$-Differential privacy guarantees that the knowledge
gain that can be extracted from a query response is limited. As current
methods to attain $\varepsilon$-differential privacy do not let users
specify their prior knowledge, zero knowledge is implicitly 
taken as the base
to compute the knowledge gain. We propose a mechanism to let users
specify their prior knowledge on the response: each time a user sends
a query, the user's prior knowledge is also sent. We show that this
mechanism improves data utility and, despite the increased interaction
between the database and the user, we show that it preserves privacy.
%JOSEPTHESIS. Reescrit paragraf.
%JORDI. Modificat a sota
%\item Finally, a synergy
\item A synergy
between $k$-anonymity and $\varepsilon$-differential privacy has
been described for privacy-preserving data publication, 
even if both models have quite different origins.
In particular
we have shown that a specific kind of microaggregation (that results in
a $k$-anonymous data set) can be employed to reduce the sensitivity
of identity queries by a factor of $1/k$. As a result, the $\varepsilon$-differentially
private data set generated from the $k$-anonymous version offers
improved data utility.

%JORDI. Afegit item
\item We have shown that the $k$-anonymity family of models is 
powerful enough to achieve $\varepsilon$-differential privacy 
in the context of data publishing. Specifically, using a suitable
construction, we have shown that $\exp(\varepsilon)$-closeness implies $\varepsilon$-differential privacy for uninformed intruders and approximate
$\varepsilon$-differential privacy for informed intruders.
Our $t$-closeness construction based on bucketization is also a contribution in its own right. In particular, as $\varepsilon$-differential privacy
is attained through a method that provides $t$-closeness, the truthfulness
of the data inside $k$-anonymous groups is preserved. This is a remarkable
advantatge over typical methods used to attain $\varepsilon$-differential
privacy.
\end{itemize}

\section{Publications}

%JOSEPTHESIS. Uniformitzada capitalitzacio de titols.
The publications supporting this thesis are:
\begin{itemize}
\item Jordi Soria-Comas and Josep Domingo-Ferrer. Probabilistic $k$-anonymity
through microaggregation and data swapping. In: \emph{IEEE International
Conference on Fuzzy Systems} \emph{- FUZZ-IEEE 2012}, pp. 1-8, 2012.
\item Jordi Soria-Comas and Josep Domingo-Ferrer. Optimal data-independent
noise for differential privacy.%JOSEPTHESIS. Afegit aixo.
{\em Information Sciences} (To appear).
\item Jordi Soria-Comas and Josep Domingo-Ferrer. Differential privacy through
knowledge refinement. In: \emph{4th IEEE International Conference
on Privacy, Security, Risk and Trust - PASSAT 2012}, pp.
702-707, 2012.
\item Jordi Soria-Comas and Josep Domingo-Ferrer. Sensitivity-independent 
differential
privacy via prior knowledge refinement. \emph{International Journal
of Uncertainty, Fuzziness and Knowledge-based Systems} 20(6): 855-876,
2012.
\item Jordi Soria-Comas, Josep Domingo-Ferrer and David Rebollo-Monedero.
$k$-Anonimato probabilístico. In: \emph{XII Reunión Española sobre Criptología
y Seguridad de la Información - RECSI 2012}.
\item Jordi Soria-Comas and Josep Domingo-Ferrer. On differential privacy
and data utility in SDC. In 
%JOSEPTHESIS. No era 3rd, era 7th
\emph{7th Joint UN/ECE-Eurostat Work
Session on Statistical Data Confidentiality}, 2011.
%JOSEPTHESIS. Afegida URL:
\url{http://www.unece.org/fileadmin/DAM/stats/documents/ece/ces/ge.46/2011/24_Soria-Domingo.pdf}
\item Jordi Soria-Comas, Josep Domingo-Ferrer, David S\'anchez
and Sergio Mart\'{\i}nez.
Improving the utility of differentially private data releases
via $k$-anonymity. In {\em 12th IEEE International Conference
on Trust, Security and Privacy in Computing and Communications
-IEEE TrustCom 2013}, Melbourne, Australia, July 16-18, 2013 (to appear). 
%JORDI. Afegit item
\item Jordi Soria-Comas and Josep Domingo-Ferrer. Differential privacy via $t$-closeness in data publishing. {\em 11th International Conference on Privacy, Security and Trust-PST 2013}, Tarragona, July 10-12, 2013 (to appear, IEEE Digital Library). 
\end{itemize}
%JOSEPTHESIS. corregida frase.

\section{Future work}

%JOSEPTHESIS. Reescrit una mica.
The work presented in this thesis opens several avenues
for new research:
\begin{itemize}
\item In Chapter~\ref{chap:Probabilistic--anonymity} we reviewed some
of the limitations of $k$-anonymity in presence of informed intruders,
and proposed a new privacy model, probabilistic $k$-anonymity,
that offers privacy guarantees equivalent to those of $k$-anonymity.
We showed that probabilistic $k$-anonymity may offer disclosure limitation
against informed intruders and still provide useful results. The proposed
method to attain probabilistic $k$-anonymity works by generating
a different partition (the optimal one) for each confidential attributs.
As a result, the risk of attribute disclosure is increased. To deal
with this issue we proposed to increase $k$, but the enforcement
of additional criteria (\emph{e.g.} $l$-diversity, $t$-closeness)
may be a better solution.
\item The amount of noise added to attain $\varepsilon$-differential privacy
is usually large, which damages the utility of the output. One strategy
to reduce query sensitivity is based on applying some transformation
to the query. Following this strategy we have shown that for queries
returning information about specific individuals, a prior microaggregation
step can reduce sensitivity by a factor of $1/k$. It could be
interesting to determine whether microaggregation can help reducing the
sensitivity of a generic queries.
\item A common approach to the generation of $\varepsilon$-differentially
private data sets is to divide the range of possible values in fixed
buckets and then count the number of individuals within each bucket.
This approach is not suitable for dealing with sparse data: the number
of buckets with small counts is large, and therefore the added noise
may substantially change the properties of the data set. A possible
approach to make sure that the generated buckets have a similar number
of records is to use a microaggregation algorithm. Regular microaggregation
algorithms do not fit in the $\varepsilon$-differential privacy environment,
as a change in a single point may change the cluster completely; however,
the insensitive microaggregation proposed in Chapter~\ref{chap:Enhancing-Data-Utility}
guarantees a maximum change of one record per cluster. By using this approach,
the accuracy of the released data may be increased as, in practice,
we avoid considering the sparse regions.
%JORDI. Afegit item
\item In Chapter~\ref{chap:t-closeness} we presented 
a link between $t$-closeness
and $\varepsilon$-differential privacy for numeric or ordinal attributes.
%JOSEP20130321. Tret aixo i posat el que hi havia al final de cap. 7.
%The method we proposed is easily generalizable to other kinds of data,
%its only shortcoming being
%that a bucketization is needed that preserves
%as much utility as possible (this is trivial for numerical attributes). 
%Comparing the data utility of our proposal to other mechanisms
%that produce differentially private data sets is also an interesting
%topic for further research.
Future research will include extending
the proposed approach for nominal confidential
attributes, which cannot be ordered.
%JOSEP20130321. Afegida frase.
We will also provide a generalization to multiple confidential
attributes.
Experimental work will be conducted to compare
the utility of the $\varepsilon$-differentially private
data sets obtained via bucketized $\exp(\varepsilon)$-closeness.
The very nature of our construction, based on $k$-anonymity,
gives reasonable hopes that more utility may be preserved than
the one offered by the Laplace noise addition typically used
to achieve $\varepsilon$-differential privacy: for example, by design,
our approach does not yield any off-range values, which may however appear
in noise addition procedures;
%JOSEP20130321. Afegida frase.
in fact, the anonymized values we provide are truthful, even if 
coarsened by bucketization.
We will also explore the exact privacy guarantees offered
by the approximate $\varepsilon$-differential privacy
obtained with our construction for the case of informed intruders.
\end{itemize}

\cleardoublepage{}

\lhead[]{\rightmark}

\rhead[\leftmark]{}

\bibliographystyle{plain}
\bibliography{bibliography}

\end{document}